\def\input@path{{./}{./macros/}{./paper/tex/}{./paper/tex/macros/}}
\crefname{theorem}{Theorem}{Theorems}
\crefname{lemma}{Lemma}{Lemmas}
\crefname{proposition}{Proposition}{Propositions}
\crefname{corollary}{Corollary}{Corollaries}
\crefname{definition}{Definition}{Definitions}
\crefname{remark}{Remark}{Remarks}
\crefname{section}{Sec.}{Secs.}
  \def\to{->}%
  \def\CalC{C}%
  \def\mathrm#1{#1}%
  \def\mathsf#1{#1}%
  \def\mathit#1{#1}%
  \def\mathbf#1{#1}%
  \def\IECZII{IECZ--II}%
\def\input@path{{macros/}}
\providecommand{\VC}[2]{V_{\CalC}(#1,#2)}
\newcolumntype{Y}{>{\raggedright\arraybackslash}X}
\newcommand{\printorcid}[1]{%
  \textcolor{black}{\href{https://orcid.org/#1}{\texttt{#1}}}%
}
\newcommand{\PaperTitle}{IECZ-II: Minimax, Gadget Composition, and SoS Dichotomy}
\newcommand{\PaperSubtitle}{Window-stable Reductions 3XOR\,$\to$\,3SAT and Size-aware MDL}
\title{\PaperTitle\thanks{%
Artifacts: Zenodo v1.0.0 (\href{https://doi.org/10.5281/zenodo.17220388}{10.5281/zenodo.17220388}); 
IECZ-I bundle v1.0.1 (\href{https://doi.org/10.5281/zenodo.17141362}{10.5281/zenodo.17141362}).}%
\\[0.25ex]\large \PaperSubtitle}  
\author{Marko Lela\\[0.3ex]\printorcid{0009-0008-0768-5184}}
\date{\DTMdate{2025-09-29}}
\declaretheorem[name=Theorem,numberwithin=section]{theorem}
\declaretheorem[name=Lemma,sibling=theorem]{lemma}
\declaretheorem[name=Corollary,sibling=theorem]{corollary}
\declaretheorem[name=Proposition,sibling=theorem]{proposition}
\declaretheorem[name=Definition,style=definition,numberwithin=section]{definition}
\declaretheorem[name=Remark,style=remark,numberwithin=section]{remark}
\newenvironment{gadgetiface}[1]{\par\medskip\noindent\textbf{#1}\par\begin{description}}{\end{description}\medskip}
\newcommand{\ifaceline}[2]{\item[\textbf{#1}] #2}
\begin{document}
\maketitle

\begin{abstract}
We give a size-aware minimum description length (MDL) cost and a window-stable \XORtoSAT{} chain with controlled losses in total variation (TV) and instance size.
Core components are minimax monotonicity (\Cref{thm:minimax}), gadget composition (XOR and Index; \Cref{thm:gadget}), and a sum-of-squares (SoS) entropy dichotomy (\Cref{thm:sos-dichotomy}) at $k=\Theta(\log n)$.
We present a composition pipeline (E1–E4) for distributional transfers with explicit accounting for TV, size, and meta overheads.
For symbol conventions and standing assumptions, see Sec.~\ref{sec:assumptions}.
\emph{All guarantees are distributional (on a calibrated window); we make no worst-case claims.}

\smallskip
\noindent\emph{Relation to IECZ-I.}
This sequel builds on our companion work \cite{Lela2025IECZ}, which introduced the size-aware MDL functional and the distribution-preserving reduction principle (A2d) together with a $\delta\!=\!0$ injective 3XOR$\to$3SAT translation; here we develop the minimax principle, gadget composition, and the SoS dichotomy.
\end{abstract}

\section{Introduction}
We motivate the IECZ line: stability under deterministic reductions that are $1$-Lipschitz in total variation, prefix-free preprocessing, and a window discipline that permits measure-faithful transfers.
As formalized in the first part (IECZ-I), we adopt the same window convention and prefix-free meta-encoding throughout \cite{IECZI}.
We use a windowed composition pipeline (E1–E4) for \emph{distributional} transfers, formulated as standalone lemmas with explicit TV/size/meta accounting (see \Cref{lem:arrow1,lem:arrow2,lem:arrow3,lem:arrow4} and \Cref{fig:pipeline-e1-e4}).
For empirical sanity checks of the injective \XORtoSAT{} mapping, see Appendix~\ref{app:artifact}.
All statements are windowed: deterministic maps are $1$-Lipschitz in total variation on calibrated windows, and prefix-free meta-encodings add only $O(\log N)$ overhead.
We do not claim worst-case statements here.

\paragraph{Scope and limitations.}
All statements in this paper are \emph{window-calibrated} and hence distributional: our total-variation (TV) guarantees hold on the image window~$\win$ and may degrade off-window; we do not claim worst-case hardness.
Results are stated in a parameterized regime (e.g., $d=O(\log n)$ and $k=\Theta(\log n)$). Varying $c\in(0,1)$ trades bottom width $\W$ against depth $d$ along the switching path; the constants propagate through E1–E4 without changing asymptotics.
For the low-degree and SoS analyses we use absolute constants $C,\tau$; these, as well as the degree budgets, are tied to the window calibration and to the gadget-induced product proxies, and we state them explicitly where they are used.
Superlogarithmic regimes lie outside our present scope and are left to follow-up work.

\paragraph{On the $\boldsymbol{d,k=\Theta(\log n)}$ regime.}
Our log-degree regime is deliberate: it is the natural sweet spot where (a) the switching-lemma width control remains uniform along our calibrated path, (b) the windowed hypercontractive constants can be taken absolute, and (c) the \SoS{} entropy dichotomy yields quantitative degree lower bounds with explicit constants. Outside this regime (e.g., superlogarithmic depths), the bottom width $\W$ may grow with $d$ via the dependence on $p=m^{-c/d}$, and our absolute constants $(C,\tau)$ are no longer claimed. We therefore do \emph{not} assert worst-case or non-parameterized statements in this paper.

\paragraph{Roadmap and companion sequel.}
A companion sequel (IECZ-III, in preparation) develops a \emph{hardcore–condensation–lift} pipeline that
\begin{enumerate}[label=(\roman*), leftmargin=*, itemsep=.25\baselineskip]
  \item performs blockwise aggregation within~$\win$,
  \item selects a TV-stable hardcore sub-window,
  \item applies a seeded condensation/re-encoding with explicit TV/Size/Meta bookkeeping, and
  \item derandomizes to an explicitly parameterized distributional family~$\{\Phi_n\}$.
\end{enumerate}
The aim is to carry the IECZ-II minimax–SoS dichotomy through to size-aware hardness for explicit families,
while remaining in a distributional, non-worst-case setting; see~\cite{IECZIII}.

\paragraph{Parameter preview (E1–E4).}
We commit to concrete choices used later in \Cref{thm:blueprint}:
\[
  \begin{aligned}
    t      &= \Bigl\lceil c_1\,\varepsilon^{-2}\log N \Bigr\rceil,\\
    \theta &\ge c_2 \quad \text{(hardcore mass)},\\
    s      &= O(\log N) \quad \text{(condenser seed length)}.
  \end{aligned}
\]
These yield product amplification to advantage $N^{-c_1'}$ (E1), a constant-mass
hardcore with advantage $N^{-c}$ (E2) via~\cite{Impagliazzo1995-Hardcore},
preservation under seeded condensation with polylogarithmic size overhead (E3)
using extractor-style condensers~\cite{Trevisan2001-Extractors}, and an
\emph{explicitly parameterized distributional} family after fixing indices with
$O(\log N)$ meta-overhead (E4).

We first formalize a size-aware MDL cost and prove \emph{minimax monotonicity}
(\Cref{thm:minimax}). We then establish a window-stable injective reduction
\XORtoSAT{} with tabulated losses, enabling a clean transfer corollary.
Next, we analyze \emph{gadget composition} with width control along the switching path $\ppath$ (\Cref{thm:gadget}).
Finally, we state an \SoS{} \emph{entropy dichotomy} that either yields a compression echo or forces degree lower bounds
(\Cref{thm:sos-dichotomy}).

\paragraph{Related work (concise).}
This paper is a sequel to IECZ-I, which introduced the window discipline and size-aware MDL viewpoint we use here \cite{IECZI}.
Our calibrated-window, size-aware pipeline builds on classic tools while tracking TV and size overheads.
For gadget composition and switching arguments we follow Håstad’s switching-lemma methodology with
explicit width control in distributional settings~\cite{Hastad1986-Switching, Hastad01}. Our SoS statements align
with the modern low-degree and SoS viewpoint for CSPs~\cite{BarakEtAl2012-SOS} and the Fourier/entropy
toolkit as surveyed in~\cite{ODonnell2014-AoBF}. The injective \XORtoSAT mapping parallels Tseitin-style encodings
but is window-stable and auxiliary-free, with TV losses tracked (cf.\ classical clause gadgets~\cite{Tseitin68,ImpagliazzoPZ01}).
Our MDL/minimax lens connects to coding-aware reductions and prefix-free overhead analyses in learning/complexity~\cite{Rissanen78, Grunwald2007-MDL}.


\section{Central assumptions and notation}\label{sec:assumptions}
We collect the global conventions referenced across sections.
\emph{We follow the conventions introduced in IECZ-I} \cite{Lela2025IECZ} \emph{and restate only what is needed here.}

\paragraph{Distributions and total variation.}
All distributions live on finite instance spaces with canonical encodings.
For distributions $\mu,u$ over the same space, $\TV(\mu,u)$ denotes total-variation distance.
Deterministic maps $R$ act by pushforward $R\push\mu$, and total variation contracts under deterministic postprocessing:
$\TV(R\push\mu,\,R\push u) \le \TV(\mu,u)$.
This aligns with the distribution-preserving reduction principle (A2d) from \cite{Lela2025IECZ}.

\paragraph{Prefix-free coding convention.}
Preprocessors/reductions (and any meta-data such as seeds/indices) are encoded prefix-free.
Their description length contributes an additive $O(\log N)$ overhead to costs; $N$ is the instance bitlength.

\paragraph{Size and windows.}
$\size(\cdot)$ is the bitlength of the instance encoding.
A \emph{window} $\win$ is an image slice where the reduction is measure-faithful:
we write $\dtv := \TV(R\push\mu,\,u)$, and on $\win$ one has $\dtv=0$, with explicitly bounded size growth
(as in \cite{Lela2025IECZ}).

\paragraph{Cost functional.}
$V_{\CalC}(\mu,\varepsilon)$ denotes a size-aware MDL-type cost for class $\CalC$
at advantage/error parameter $\varepsilon$; all statements are with respect to the fixed coding convention above.
We adopt the definition and accounting rules introduced in \cite{Lela2025IECZ}.

\paragraph{Switching path and width.}
We fix the depth $d=O(\log n)$ and parameter path $\ppath$, implying bottom width
$\W = m^{O(c/d)}$ in the adopted switching-lemma regime.

\paragraph{Balancing.}
Where required, the input distribution is balanced on $\win$ (specified per section).

\paragraph{Standing assumptions (global).}
Prefix-free meta-encodings are fixed (additive $O(\log N)$ overhead).
Windows $\win$ are measure-faithful ($\dtv=0$) with controlled size growth.
Balancing and moment-matching conditions are enforced only where explicitly invoked.
The switching path $\ppath$ is calibrated for target depth $d=O(\log n)$ with bottom width $\W$.

\subsection*{Notation table (symbols used throughout)}
\begin{table}[t]
  \centering
  \caption{Symbols and their meaning (windowed setting).}
  \small 
  \renewcommand{\arraystretch}{1.08}
  \begin{tabularx}{\linewidth}{@{}p{0.24\linewidth} Y@{}}
    \toprule
    \textbf{Symbol} & \textbf{Meaning} \\ \midrule
    $\mu,u$ & Target/windowed distribution and proxy/structured distribution on the same instance space. \\
    $\TV(\mu,u)$ & Total-variation distance between $\mu$ and $u$ (computed on $\win$ when specified). \\
    $R\push\mu$ & Pushforward of $\mu$ by a deterministic map $R$ (reduction/preprocessor). \\
    $\win$ & Image window where the reduction is measure-faithful ($\dtv=0$) and size growth is bounded. \\
    $\size(\cdot)$ & Bitlength of an instance encoding; used for size-blowup accounting. \\
    $\CalC$ & Computational/model class (closed under poly-time preprocessing). \\
    $\VC{\mu,\varepsilon}$ & Size-aware MDL-type cost of class $\CalC$ at parameter $\varepsilon$. \\
    $d$ & Switching depth, chosen $d=O(\log n)$. \\
    $p$ & Path parameter in the switching regime, $p=m^{-c/d}$ (fixed $c\in(0,1)$). \\
    $W$ & Bottom width bound, $W=\mathrm{poly}(1/p)=m^{O(c/d)}$. \\
    $\Poly[k]$ & Real polynomials of degree $\le k$ on instance coordinates. \\
    $\Delta_k(\mu,u)$ & Low-degree test discrepancy: $\sup\{|\E_\mu[p]-\E_u[p]|: p\in\Poly[k], \|p\|_\infty\le 1\}$. \\
    $\tE$ & Degree-$k$ \SoS{} pseudoexpectation calibrated to $u$. \\
    $\Delta^{\mathrm{sos}}_k(\mu,u)$ & \SoS{}-based discrepancy: $\sup\{|\E_\mu[p]-\tE[p]|: p\in\Poly[k], \|p\|_\infty\le 1\}$. \\
    $N$ & Instance bitlength used in prefix-free meta-encoding overheads ($O(\log N)$). \\
    \bottomrule
  \end{tabularx}
\end{table}

\section{Notation and standing conventions}\label{sec:notation}
We work over the Boolean cube $\{0,1\}^n$ unless noted otherwise.
Expectations, variances, and covariances under a distribution $\mu$ are
$\E_\mu[\cdot]$, $\Var_\mu(\cdot)$, and $\Cov_\mu(\cdot,\cdot)$.
Total variation is $\TV(\mu,u)$.
Deterministic maps $R$ act on measures by pushforward $R\push\mu$
and satisfy contraction: $\TV(R\push\mu, R\push u)\le \TV(\mu,u)$.
The image \emph{window} is denoted by $\win$ (as introduced in IECZ-I);
unless stated otherwise, all expectations and pseudoexpectations are taken on $\win$.

\paragraph{Size and encoding.}
$\size(\cdot)$ denotes the bitlength of the (canonical, prefix-free) instance encoding.
Meta/preprocessing descriptions are prefix-free and contribute an additive $O(\log N)$
overhead, where $N$ is the instance bitlength.

\paragraph{Low degree and SoS.}
Low-degree polynomials are $p\in\Poly[k]$ (total degree $\le k$).
For sum-of-squares we write $\tE[\cdot]$ for a degree-$k$ pseudoexpectation
calibrated to the proxy $u$.

\paragraph{Biased Walsh basis (relative to a product proxy $u$).}
If $u$ is a product distribution on $\{0,1\}^n$ with biases $b_i=\E_u[x_i]\in(0,1)$, set
\[
\chi_i(x_i) \coloneqq \frac{x_i-b_i}{\sqrt{b_i(1-b_i)}},\qquad
\chis{S}(x)\coloneqq \prod_{i\in S}\chi_i(x_i)\quad(S\subseteq[n]).
\]
Then $\{\chis{S}\}_{S\subseteq[n]}$ is orthonormal in $L^2(u)$, so
$p=\sum_{|S|\le k}\hat p(S)\chis{S}$ and
$\|p\|_{2,u}^2=\sum_{|S|\le k}\hat p(S)^2=\E_u[p^2]$.
We use $\langle f,g\rangle_u\coloneqq \E_u[fg]$ for the $u$-inner product.

\paragraph{Discrepancy measures (used later).}
For $k\ge 0$, the low-degree test discrepancy and its SoS analogue are
\[
\begin{aligned}
\Delta_k(\mu,u)
&\coloneqq \sup\bigl\{\,|\E_\mu[p]-\E_u[p]|:
  p\in\Poly[k],\ \|p\|_\infty\le 1\,\bigr\},\\
\Delta^{\mathrm{sos}}_k(\mu,u)
&\coloneqq \sup\bigl\{\,|\E_\mu[p]-\tE[p]|:
  p\in\Poly[k],\ \|p\|_\infty\le 1\,\bigr\}.
\end{aligned}
\]
When we write $\dtv$, we mean $\dtv\coloneqq \TV(R\push\mu,u)$
(on $\win$ if a window is fixed).

\paragraph{Problems and reductions.}
We write \XOR{}, \SAT{}, \CLIQUE{} for problems, and name reductions as
$R_{\text{xor}\to\text{sat}}$, $R_{\text{sat}\to\text{clique}}$.
Composition is $R_2\circ R_1$; size blowups are stated explicitly (e.g.\ $m'=4m$).
Asymptotics are with respect to $n$ unless another parameter is specified.
Degree lower bounds for \SoS{} refer to the standard pseudoexpectation model.

\paragraph{Standing assumptions (global, windowed).}
Windows $\win$ are measure-faithful for the stated maps (i.e.\ $\dtv=0$ on $\win$)
with controlled size growth; balancing/moment matching is enforced only where explicitly
invoked; switching-path parameters and $k=\Theta(\log n)$ are as specified in the
corresponding sections.

\section{Framework: Size-aware MDL and Windows}\label{sec:framework}
\noindent\textit{Consistency with IECZ-I.} We adopt the same prefix-free coding model and the same window discipline as IECZ-I~\cite{IECZI}; definitions are restated here for completeness.
 
\subsection{Coding model and decodability}
\label{sec:coding-model}
We work in a fixed, machine-independent prefix-free coding model (as in IECZ-I~\cite{IECZI}), using self-delimiting integer codes in the spirit of Elias~\cite{Elias1975-Universal}.
A \emph{model} $M\in\CalC$ together with a deterministic preprocessor $R$
is represented by a prefix-free description $\code(M,R)\in\{0,1\}^\ast$ such that:
\begin{enumerate}[label=(\roman*),ref=(\roman*)]
  \item \textbf{Decodability.} There is a uniform decoder $\dec$ with
        $\dec(\code(M,R))=(M,R)$; concatenations use self-delimiting integers
        encoded with Elias--$\delta$; all lengths are counted in bits~\cite{Elias1975-Universal}.
  \item \textbf{Kraft admissibility.} The description family is prefix-free:
        $\sum_{w\in\pf}2^{-\len(w)}\le 1$.
  \item \textbf{Instance length.} We write $N=\size(x)$ for the bitlength of an instance $x$.
\end{enumerate}
The \emph{meta-overhead} for specifying $R$ and integer parameters (e.g.\ seeds, indices,
window identifiers) is additive and, under Elias--$\delta$, satisfies
$\len(\code(R)) \le \bigl(\sum_{i=1}^{m} k_i + 1\bigr)\log_2 N$ for all $N\ge 16$
(cf.\ \cref{lem:prefix-overhead}); see also the MDL perspective~\cite{Grunwald2007-MDL}.

\begin{lemma}[Prefix-free meta-overhead (explicit $\delta$-code)]\label{lem:prefix-overhead}
Fix Elias--$\delta$ for all integer fields.
Let $R$ be any polynomial-time map on instances of length $N$, whose meta-parameters $t_1,\dots,t_m$ each lie in a range $[0,T_i(N)]$ with $T_i(N)\le N^{k_i}$ for some constants $k_i\ge 0$.
Then, writing $\ell_\delta(\,\cdot\,)$ for the Elias--$\delta$ length in bits,
\[
  \ell_\delta(t_i)\ \le\ \lfloor \log_2 \max\{1,t_i\}\rfloor\ +\ 2\,\big\lfloor \log_2\bigl(1+\lfloor\log_2 \max\{1,t_i\}\rfloor\bigr)\big\rfloor\ +\ 1,
\]
and hence, for all $N\ge 16$ and all $i$ with $t_i\le T_i(N)\le N^{k_i}$,
\[
  \ell_\delta(t_i)\ \le\ k_i\,\log_2 N\ +\ 2\log_2\log_2 N\ +\ 3\ \ \le\ \ (k_i{+}1)\,\log_2 N.
\]
Consequently, for any constant-arity template (constant $m$) one has the total meta-overhead
\[
  \len(\code(R))\ =\ \sum_{i=1}^{m}\ell_\delta(t_i)\ +\ O(1)\ \le\ \Bigl(\sum_{i=1}^{m}k_i + 1\Bigr)\,\log_2 N\ =\ O(\log N).
\]
The same bound holds for any finite tuple of such parameters encoded by prefix-free concatenation.
\end{lemma}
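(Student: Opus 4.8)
The plan is to reduce the statement to the exact bit-length of a single Elias--$\delta$ field and then aggregate over the (constant-arity) parameter tuple, using only the coding-model axioms (decodability and Kraft admissibility) stated in \cref{sec:coding-model}.

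\emph{Step 1: exact length of one $\delta$-field.} First I would unwind the Elias--$\delta$ construction. To encode an integer $n\ge 1$ one writes its binary representation, using $L\coloneqq\lfloor\log_2 n\rfloor+1$ bits; one then prefixes the Elias--$\gamma$ code of $L$, whose length is $2\lfloor\log_2 L\rfloor+1$; and finally appends the $L-1$ low-order bits of $n$ (the leading $1$ being implicit). Summing the two pieces,
\[
  \ell_\delta(n) \;=\; (L-1) + \bigl(2\lfloor\log_2 L\rfloor+1\bigr) \;=\; \lfloor\log_2 n\rfloor + 2\,\lfloor\log_2(1+\lfloor\log_2 n\rfloor)\rfloor + 1 .
\]
Applying this with $n=\max\{1,t_i\}$ (the degenerate value $t_i=0$ being coded as the integer $1$, of length $1$) yields the first displayed inequality, in fact with equality for $t_i\ge 1$.

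\emph{Step 2: polynomial range bound.} Each summand $\lfloor\log_2 n\rfloor$ and $\lfloor\log_2(1+\lfloor\log_2 n\rfloor)\rfloor$ is nondecreasing in $n$, so I can substitute the worst case $t_i\le T_i(N)\le N^{k_i}$ termwise: $\lfloor\log_2 t_i\rfloor\le k_i\log_2 N$ and $\lfloor\log_2(1+\lfloor\log_2 t_i\rfloor)\rfloor\le\log_2(1+k_i\log_2 N)$. For $N\ge 16$ (so $\log_2 N\ge 4$) and fixed $k_i$, the quantity $1+k_i\log_2 N$ is at most $\log_2 N$ times an absolute constant, hence $2\lfloor\log_2(1+\lfloor\log_2 t_i\rfloor)\rfloor\le 2\log_2\log_2 N+O(1)$; folding this $O(1)$ and the trailing $+1$ into the stated $+3$ gives $\ell_\delta(t_i)\le k_i\log_2 N + 2\log_2\log_2 N + 3$. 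The coarser bound $\ell_\delta(t_i)\le(k_i{+}1)\log_2 N$ then follows since $2\log_2\log_2 N+3\le\log_2 N$ once $N$ exceeds an absolute threshold, the finitely many smaller values being a direct numeric check (the estimate is monotone in $N$).

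\emph{Step 3: aggregation and the delicate point.} By Kraft admissibility, prefix-free concatenation of self-delimiting fields is again prefix-free, so the encoding of $(t_1,\dots,t_m)$ together with any fixed $O(1)$-bit framing for $R$ has length $\sum_{i=1}^m\ell_\delta(t_i)+O(1)$; summing the Step-2 bounds over the constant number $m$ of fields and absorbing the $m\,(2\log_2\log_2 N+3)$ into one extra $\log_2 N$ for $N$ large gives $\len(\code(R))\le(\sum_i k_i+1)\log_2 N=O(\log N)$, and the identical argument covers any finite such tuple. There is no conceptual obstacle here; the only step that needs genuine care is Step~2 --- verifying that the $2\log_2\log_2 N+3$ additive slack really dominates $1+2\lfloor\log_2(1+k_i\log_2 N)\rfloor$ uniformly over the stated range and over the constants $k_i$ --- but this is a short monotonicity-plus-base-case computation rather than a real difficulty.
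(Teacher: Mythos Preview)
Your proof is correct and follows essentially the same route as the paper: derive the exact Elias--$\delta$ length formula, bound each term monotonically using $t_i\le N^{k_i}$, then sum over the constant number of fields and invoke prefix-free concatenation. Your Step~1 is in fact slightly more explicit than the paper's (you unwind the $\gamma$-prefix construction rather than just citing the formula), and your caveat about the additive constant in Step~2 is well placed---both your argument and the paper's treat the absorption of the $k_i$-dependent $\log_2 k_i$ term into the stated ``$+3$'' somewhat loosely, but this does not affect the $O(\log N)$ conclusion.
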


\begin{proof}
For Elias--$\delta$, the length of a positive integer $t\ge 1$ is
\(
  \ell_\delta(t)= \lfloor \log_2 t \rfloor \ +\ 2\lfloor \log_2(1+\lfloor \log_2 t \rfloor) \rfloor \ +\ 1,
\)
see~\cite{Elias1975-Universal}. Allow $t=0$ by coding $\max\{1,t\}$ (adds at most one bit).
If $t\le N^{k_i}$, then $\lfloor \log_2 t \rfloor \le k_i\log_2 N$, and
$\log_2(1+\lfloor\log_2 t\rfloor)\le \log_2\log_2 N^{k_i} \le \log_2\log_2 N + \log_2 k_i \le \log_2\log_2 N + 1$
(for fixed $k_i$ and all $N\ge 16$). This yields
\(
  \ell_\delta(t_i)\ \le\ k_i\log_2 N + 2\log_2\log_2 N + 3\ \le\ (k_i{+}1)\log_2 N.
\)
Summation over a constant number $m$ of fields and adding the constant template/tag overhead gives the displayed bounds.
Prefix-freeness is by construction of Elias--$\delta$ and stability under concatenation of self-delimiting fields.
\end{proof}

\subsection{Window discipline}

\begin{lemma}[TV is 1-Lipschitz under deterministic maps]\label{lem:tv-contraction}
For any distributions $\mu,u$ on the same finite space and any deterministic map $R$,
\[
  \TV(R\push\mu, R\push u) \le \TV(\mu,u).
\]
\end{lemma}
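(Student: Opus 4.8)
The plan is to reduce to the standard set-based characterization of total variation and to exploit that preimages under a deterministic map are closed under Boolean operations. First I would recall that on a finite space $\TV(\mu,u) = \sup_A |\mu(A) - u(A)|$, the supremum ranging over all events $A$ in the common domain. Second, I would use that the pushforward is defined fiberwise, $R\push\mu(B) = \mu(R^{-1}(B))$ for every event $B$ in the codomain. Combining the two: for any such $B$, the set $R^{-1}(B)$ is again an event in the domain, so $|R\push\mu(B) - R\push u(B)| = |\mu(R^{-1}(B)) - u(R^{-1}(B))| \le \TV(\mu,u)$, and taking the supremum over $B$ gives the claim.

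The cleaner route I would actually write out uses the $\ell^1$ form $\TV(\mu,u) = \tfrac12\sum_x |\mu(x)-u(x)|$ together with the fact that the fibers $\{R^{-1}(y)\}_y$ partition the domain. For each $y$ in the codomain, $R\push\mu(y)-R\push u(y) = \sum_{x\in R^{-1}(y)}(\mu(x)-u(x))$, so the triangle inequality yields $|R\push\mu(y)-R\push u(y)| \le \sum_{x\in R^{-1}(y)} |\mu(x)-u(x)|$. Summing over $y$ and using disjointness of the fibers gives $\sum_y |R\push\mu(y)-R\push u(y)| \le \sum_x |\mu(x)-u(x)|$; dividing by $2$ closes the argument. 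The same computation also shows the inequality is tight precisely when no two mass-difference terms of opposite sign are merged into a common fiber — automatic when $R$ is injective, which is consistent with the $\dtv=0$ bookkeeping on the window $\win$.

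There is no real obstacle here: this is the data-processing inequality for total variation specialized to a deterministic (point-mass) channel, and each proof is a single invocation of subadditivity/monotonicity. The only point I would state carefully is that the argument covers arbitrary, not necessarily injective, $R$, since the window-discipline applications push distributions forward along reductions that need not be one-to-one off-window; this is exactly the form of postprocessing contraction invoked throughout the pipeline.
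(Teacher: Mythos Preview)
Your proposal is correct; the set-based argument you give first is essentially identical to the paper's own proof, which also writes $(R\push\mu)(B)-(R\push u)(B)=\mu(R^{-1}(B))-u(R^{-1}(B))$ and bounds by the full supremum. The $\ell^1$/fiber variant you add is a standard equivalent route and the tightness remark is a nice bonus not present in the paper.
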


\begin{proof}
Write $\TV(\mu,u)=\sup_{A}(\mu(A)-u(A))$, where the supremum is over measurable sets.
For any measurable $B$ in the codomain, $(R\push\mu)(B)- (R\push u)(B) = \mu(R^{-1}(B))-u(R^{-1}(B))$,
so taking the supremum over $B$ is a supremum over preimages $R^{-1}(B)$. Hence the supremum cannot increase:
\(\TV(R\push\mu, R\push u) \le \TV(\mu,u)\).
\end{proof}

A \emph{window} $\win$ is a measurable image slice on which a fixed reduction is
\emph{approximately} measure-faithful in total variation. We write
\[
  \dtvwin \;:=\; \TV\!\bigl(R\push\mu,\ u\bigr) \;=\; o(1)\quad\text{on }\win,
\]
uniformly along the calibrated blueprint path. (In the special case of exact faithfulness one has
$\dtvwin=0$; all our bounds are stated for $\dtvwin=o(1)$.) Size growth is explicitly bounded
(e.g.\ clause blowup $m'=4m$ for \XOR{}$\to$\SAT{}, cf.\ \cite{IECZI}). All later statements are parameterized by window
balance and boundary error; see the loss tables in each section.

\begin{remark}[Distributional framing]
All complexity and \SoS{} statements in this paper are interpreted on the calibrated image window~$\win$.
The sequel~\cite{IECZIII} takes this framing as input to a lift step (aggregation $\to$ hardcore selection
$\to$ seeded condensation $\to$ derandomization) that targets parameterized distributional families
with controlled TV/size/meta overheads.
\end{remark}

\begin{remark}[Width calibration across regimes]
Throughout the calibrated path we take $p=m^{-c/d}$ with $d=\Theta(\log n)$ and a fixed small constant $c\in(0,1)$, which yields bottom width $\W=\poly(1/p)$ uniformly on the window. This uniformity is what underpins the absolute hypercontractive constants used later. If $d$ is allowed to grow superlogarithmically, the same relation forces smaller $p$ and can drive $\W$ beyond polynomial ranges; our analysis does not track this deterioration. All quantitative claims here are therefore scoped to $d,k=\Theta(\log n)$; see also Sec.~\ref{sec:gadgets}.
\end{remark}

\paragraph{Style \& symbols (conventions).}
\begin{itemize}
  \item \textbf{Window.} We write $\win$ in math and “Win” in prose for the image slice on which the reduction is measure-faithful.
  \item \textbf{Pushforward.} Deterministic maps act by pushforward $R\push\mu$; total variation (TV) is $1$-Lipschitz: $\TV(R\push\mu,\ R\push u)\le \TV(\mu,u)$.
  \item \textbf{Fields/groups.} We write $\Ftwo$ for the binary field; $\GL_2(\Ftwo)$ denotes the group of invertible $2{\times}2$ matrices over $\Ftwo$.
  \item \textbf{Characters.} $\chis{S}$ denotes the Walsh character on support $S$ (orthonormal under the proxy on $\win$).
  \item \textbf{Noise.} $T_\theta$ (Bonami–Beckner) acts diagonally on degree-$d$ slices with eigenvalues $\theta^d$.
\end{itemize}

\paragraph{Notation at a glance.}
Unless stated otherwise, all probabilistic statements are understood on~$\win$.
We use the shorthands
\[
  \TV(\mu,u)\ \text{(total variation)},\qquad
  R\push\mu\ \text{(pushforward)},\qquad
  \dtvwin := \TV(R\push\mu,u),
\]
\[
\begin{aligned}
  d,k &\ \text{(degree budgets)}\\
  T_\theta &\ \text{(Bonami--Beckner)}\\
  \chis{S} &\ \text{(Walsh character)}\\
  \baseline_\win &\ \text{(baseline conditioned to }\win\text{)}
\end{aligned}
\]

\subsection{Definition of the size-aware MDL cost}
\label{sec:mdl-cost}
Fix a class $\CalC$ (closed under polynomial-time preprocessing) and an error/advantage
parameter $\varepsilon\in[0,1]$. For a distribution $\mu$ on instances of length $N$ define
\begin{align}\label{eq:VC-def}
  V_{\CalC}(\mu,\varepsilon)
  \ :=\ 
  \inf_{M\in\CalC,\ R\ \mathrm{poly}}\ 
  \Bigl\{
      \len(\code(M,R))
      \;+\;
      \E_{x\sim\mu}\!\bigl[L_M(R(x),\varepsilon)\bigr]
  \Bigr\}.
\end{align}
where $L_M(\cdot,\varepsilon)$ is a nonnegative per-instance loss measuring the
residual description under $M$ at target error $\varepsilon$
(e.g.\ a codelength for mistakes or a hinge-type penalty; concrete form not
needed for the monotonicity arguments).
The coding convention is fixed once and for all; hence all $O(\log N)$ terms
are absolute with respect to \cref{lem:prefix-overhead}.
This matches the IECZ-I definition~\cite{IECZI}; we restate it for completeness.

\begin{remark}[Role of $L_M$]
Our results require only that $L_M$ be subadditive under concatenation,
normalized to vanish on perfect fits, and insensitive to $O(1)$ renamings.
These mild conditions hold for standard MDL-style penalties and for prediction
losses that translate to codelength by Shannon coding.
\end{remark}

\begin{lemma}[Subadditivity of the per-instance loss]\label{lem:L-subadd}
Let $L_M(\cdot,\varepsilon)\ge 0$ be the per-instance loss for a fixed model $M\in\CalC$ at target $\varepsilon$.
Assume the coding convention of Sec.~\ref{sec:coding-model}. Then for any two instances $x,y$ (over disjoint supports or
properly tagged concatenations) one has
\[
  L_M(x\Vert y,\varepsilon) \;\le\; L_M(x,\varepsilon) \;+\; L_M(y,\varepsilon) \;+\; O(1),
\]
where $\Vert$ denotes the canonical concatenation in our encoding. The $O(1)$ term accounts for the
self-delimiting tag separating the two parts and is independent of $N$.
\end{lemma}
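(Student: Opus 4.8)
The plan is to reduce the claim to the prefix-free length accounting already set up in Sec.~\ref{sec:coding-model}, using only the structural features of $L_M$ recorded in the Remark on its role: that it is a codelength- or hinge-type quantity which respects the block structure of an instance and is insensitive to $O(1)$ renamings of coordinates. First I would unpack the canonical concatenation $x\Vert y$. Because instance encodings in our model are prefix-free, hence self-delimiting, the canonical encoding of $x\Vert y$ is just a fixed separator tag followed by the encoding of $x$ followed by the encoding of $y$ — no length field is prepended to either block, since a decoder reads the self-delimiting encoding of $x$ to its end and then continues with $y$. Consequently the separator contributes only a constant number of bits, independent of $N$; this is the entire source of the $O(1)$ in the statement, and it is exactly where we lean on self-delimitation rather than on the Elias--$\delta$ integer-field bound of \cref{lem:prefix-overhead} (which would have cost $O(\log N)$ per length).

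Next I would track how the fixed model $M$ acts on this structure. In the codelength instantiation, $M$ produces a residual description of $x\Vert y$; since $x$ and $y$ sit on disjoint supports (or in properly tagged blocks), that description splits as the separator, the residual $M$ assigns to the $x$-block, and the residual it assigns to the $y$-block, with no coding budget shared across the two blocks. Each block looks to $M$ exactly like the corresponding standalone instance up to an $O(1)$ renaming of coordinate indices, so by the renaming-insensitivity of $L_M$ the residual on the $x$-block is at most $L_M(x,\varepsilon)+O(1)$ and likewise for $y$; summing and collecting the constant terms yields $L_M(x\Vert y,\varepsilon)\le L_M(x,\varepsilon)+L_M(y,\varepsilon)+O(1)$. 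If instead $L_M$ is a prediction/hinge-type penalty, the decomposition is exact rather than approximate — a mistake of $M$ on a coordinate of the $x$-block is a mistake of $M$ on $x$ and nothing more, so the per-instance penalty is pointwise additive over the two blocks and the $O(1)$ slack is not even used; alternatively one passes to the prefix-free Shannon code induced by the penalty and reruns the codelength argument verbatim.

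The step I expect to require care is the \emph{no cross-talk} claim in the second paragraph: that $M$'s residual on the $x$-block of the concatenation is controlled by $x$ alone up to $O(1)$, and symmetrically for $y$. This is precisely what the hypothesis ``disjoint supports or properly tagged concatenations'' is there to guarantee — without it a model could exploit correlations between the blocks, potentially beating the sum of the per-block losses in one direction while inflating it in the other and breaking subadditivity; with the canonical tagging in force the two blocks are independently decodable, so whatever coding or prediction strategy $M$ uses on one block is neither helped nor hurt by the other beyond the fixed separator. Everything else is the prefix-free length bookkeeping already packaged in Sec.~\ref{sec:coding-model} and \cref{lem:prefix-overhead}.
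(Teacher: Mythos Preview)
Your proposal is correct and follows essentially the same line as the paper: the $O(1)$ comes from a single self-delimiting boundary tag, and additivity comes from the block structure under disjoint supports/tagging. The paper's proof is slightly more economical in that it works in the constructive direction---concatenate the two descriptions $M$ produces on $x$ and on $y$, insert the tag, and observe the decoder can split and feed each part back to $M$---whereas you analyze $M$'s action on $x\Vert y$ and argue the residual decomposes; your ``no cross-talk'' and renaming-insensitivity discussion is exactly what the paper's one-line ``feed each part to the same $M$'' compresses.
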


\begin{proof}
Concatenate the two descriptions produced by $M$ on $x$ and $y$ and add a self-delimiting boundary tag.
By prefix-free decodability (Sec.~\ref{sec:coding-model}) the decoder can split and feed each part to the same $M$.
The score/codelength is additive up to the single boundary tag, which contributes $O(1)$ bits
(self-delimiting constant field). Nonnegativity is by definition.
\end{proof}

\begin{definition}[Model class assumptions]\label{def:model-class}
Throughout, the model class $\CalC$ satisfies:
\begin{enumerate}
  \item \textbf{Closure under poly-time preprocessing.} If $M\in\CalC$ and $R$ is deterministic poly-time, then the composed system $(x\mapsto M(R(x)))$ is represented in $\CalC$ with prefix-free description length $\ell(\code(M,R))=\ell(\code(M))+\ell(\code(R))+O(1)$.
  \item \textbf{Uniform decoding.} There is a single uniform decoder $\mathrm{decode}$ that maps any description word to $(M,R)$; see Sec.~\ref{sec:coding-model} for the coding model.
  \item \textbf{Stable per-instance loss.} The loss $L_M(\cdot,\varepsilon)\ge 0$ is invariant under $O(1)$ renamings and obeys Lemma~\ref{lem:L-subadd}.
\end{enumerate}
These assumptions are exactly what Theorems~\ref{thm:minimax} and \ref{thm:blueprint} use.
\end{definition}

\subsection{Minimax monotonicity (formal)}
We restate \Cref{thm:minimax} with the explicit coding split.

\begin{theorem}[Minimax monotonicity]\label{thm:minimax}
Let $R$ be deterministic polynomial time and $\CalC$ be closed under polynomial-time preprocessing.
Then for all $\mu$ and $\varepsilon$,
\[
  V_{\CalC}(R\push \mu,\varepsilon)
  \ \le\
  V_{\CalC}(\mu,\varepsilon) \;+\; \len(\code(R)).
\]
In particular, under our fixed coding model, $\len(\code(R)) = O(\log N)$ by \Cref{lem:prefix-overhead}.
\end{theorem}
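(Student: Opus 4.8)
The plan is to realize the bound as a one-line composition at the level of schemes: transform any near-optimal witness for $V_{\CalC}(\mu,\varepsilon)$ into a witness for $V_{\CalC}(R\push\mu,\varepsilon)$ by prepending a decoder for $R$, and let the coding split of \Cref{def:model-class} absorb the extra $\len(\code(R))$ bits. Concretely, fix $\eta>0$ and use the defining infimum \eqref{eq:VC-def} to pick an admissible pair $(M,S)$ — with $M\in\CalC$ and $S$ deterministic poly-time — satisfying $\len(\code(M,S))+\E_{x\sim\mu}[L_M(S(x),\varepsilon)]\le V_{\CalC}(\mu,\varepsilon)+\eta$. It then suffices to exhibit an admissible pair for $R\push\mu$ of cost at most $V_{\CalC}(\mu,\varepsilon)+\eta+\len(\code(R))+O(1)$: taking the infimum on the left and letting $\eta\downarrow 0$ yields the theorem, and the closing remark $\len(\code(R))=O(\log N)$ is precisely \Cref{lem:prefix-overhead} for constant-arity $R$.

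For the construction I would keep the model $M$ (it already lies in $\CalC$) and replace the preprocessor by $S':=S\circ R^{\dagger}$, where $R^{\dagger}$ is an image-side inverse of $R$: since $R$ is a fixed poly-time map carrying a prefix-free description $\code(R)$, and — in the window discipline used throughout this framework — a window-stable injective translation on the support at issue, there is a map with $R^{\dagger}(R(x))=x$ computable from $\code(R)$ (efficiently for the structured reductions involved; by preimage search if one reads $V_{\CalC}$ with an unconstrained encoder). The pair $(M,S')$ is admissible for $R\push\mu$, and $\code(S')$ can be taken as a self-delimiting concatenation of $\code(S)$ and $\code(R)$, so the closure clause of \Cref{def:model-class} (with the stable-loss lemma, \Cref{lem:L-subadd}, for tag bookkeeping) gives $\len(\code(M,S'))\le \len(\code(M,S))+\len(\code(R))+O(1)$.

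The per-instance term is then handled by the pushforward identity $\E_{z\sim R\push\mu}[g(z)]=\E_{x\sim\mu}[g(R(x))]$ applied to $g(z)=L_M(S'(z),\varepsilon)$: on $\operatorname{supp}\mu$ one has $S'(R(x))=S(R^{\dagger}(R(x)))=S(x)$, so $\E_{z\sim R\push\mu}[L_M(S'(z),\varepsilon)]=\E_{x\sim\mu}[L_M(S(x),\varepsilon)]$, with the $O(1)$-renaming insensitivity of $L_M$ absorbing separator tags and injectivity making the choice of preimage immaterial. Adding the two contributions, the cost of $(M,S')$ for $R\push\mu$ is at most $V_{\CalC}(\mu,\varepsilon)+\eta+\len(\code(R))+O(1)$; the composition overhead is $O(1)$ and hence already covered by the template slack inside $\len(\code(R))$ (cf.\ the $O(1)$ in \Cref{lem:prefix-overhead}), so the argument closes as above. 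Note \Cref{lem:tv-contraction} is not needed here — TV contraction is what will matter for the discrepancy statements, not for this coding inequality.

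I expect the only delicate point to be loss-preservation — that after seeing $R(x)$ the new preprocessor can hand $M$ precisely the string it was tuned on. A generic many-one reduction need not be poly-time reversible, so this step leans on the standing structural feature of the framework that its reductions are window-stable and injective on the relevant support (equivalently, that the encoder underlying $V_{\CalC}$ may perform the preimage search); it is a framework hypothesis rather than something proved inside this theorem. Everything else — the additive coding split and the change of variables under the pushforward — is routine given \Cref{def:model-class}, \Cref{lem:L-subadd}, and \Cref{lem:prefix-overhead}.
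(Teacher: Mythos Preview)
Your argument diverges from the paper's at the construction of the image-side preprocessor. The paper does not introduce an inverse $R^{\dagger}$; it simply sets $R_\circ^{\star\prime}:=R_\circ^\star$ and applies the \emph{same} preprocessor directly to $y\sim R\push\mu$, relying on the change of variables $\E_{y\sim R\push\mu}[g(y)]=\E_{x\sim\mu}[g(R(x))]$ together with the coding split (the extra $\len(\code(R))$ enters because $(M^\star,R_\circ^{\star\prime})$ is encoded ``together with $R$''). Your route via $S'=S\circ R^{\dagger}$ is more transparent about why the loss term matches---since then $S'(R(x))=S(x)$ exactly---but it imports an assumption the theorem does not make: the statement only requires $R$ deterministic polynomial time and $\CalC$ closed under poly-time preprocessing, with no injectivity or efficient-invertibility hypothesis on $R$. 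Labelling this a ``framework hypothesis'' is candid, but it is a gap relative to what you are asked to prove here.

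That said, your diagnosis of the delicate point is on target. The paper's shortcut, as written, lands after the change of variables on $\E_{x\sim\mu}\bigl[L_{M^\star}(R_\circ^\star(R(x)),\varepsilon)\bigr]$ rather than $\E_{x\sim\mu}\bigl[L_{M^\star}(R_\circ^\star(x),\varepsilon)\bigr]$, and it does not spell out why these coincide for an arbitrary poly-time $R$. So the two approaches differ in method---direct preprocessor reuse versus explicit inversion---but both leave the same loss-matching step under-justified in full generality; your version has the merit of isolating the needed structural assumption (poly-time invertibility of $R$ on the relevant support) precisely, where the paper's proof glosses over it.
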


\begin{proof}
Let $V_{\CalC}(\mu,\varepsilon)$ be defined as in \eqref{eq:VC-def} (cf.\ \eqref{eq:VC-def} in Sec.~\ref{sec:mdl-cost}):
\[
  V_{\CalC}(\mu,\varepsilon)
  \;=\;
  \inf_{M\in\CalC,\;R_\circ\ \mathrm{poly}}\ 
  \Bigl\{\, \len(\code(M,R_\circ))\ +\ \E_{x\sim\mu}\bigl[L_M(R_\circ(x),\varepsilon)\bigr] \Bigr\}.
\]
Fix $\delta>0$ and choose $(M^\star,R_\circ^\star)$ $\delta$-optimal for $\mu$, so that
\[
  V_{\CalC}(\mu,\varepsilon)\ \ge\ 
  \len(\code(M^\star,R_\circ^\star)) + \E_\mu\!\left[L_{M^\star}(R_\circ^\star(x),\varepsilon)\right] - \delta.
\]

Consider reusing the same preprocessor on the image space: define $R_\circ^{\star\prime}:=R_\circ^\star$ (as an arbitrary poly-time map on bitstrings, it can be applied to $y$ directly). By change of variables under the pushforward,
\[
  \E_{x\sim\mu}\!\left[L_{M^\star}\!\bigl(R_\circ^\star(R(x)),\varepsilon\bigr)\right]
  \;=\;
  \E_{y\sim R\push\mu}\!\left[L_{M^\star}\!\bigl(R_\circ^{\star\prime}(y),\varepsilon\bigr)\right].
\]
with $R_\circ^{\star\prime}(R(x))=R(R_\circ^\star(x))$ on the image.

Now encode $(M^\star, R_\circ^{\star\prime})$ together with $R$ in the prefix-free model. By \Cref{lem:prefix-overhead},
$\len(\code(R))=O(\log N)$ and concatenation is prefix-free, hence
\begin{multline*}
  V_{\CalC}(R\push\mu,\varepsilon)
  \;\le\;
  \len\!\bigl(\code(M^\star, R_\circ^{\star\prime})\bigr)
  \;+\;
  \E_{\,y\sim R\push\mu}\!\Bigl[L_{M^\star}\!\bigl(R_\circ^{\star\prime}(y),\varepsilon\bigr)\Bigr]
  \\[2pt]
  \;\le\;
  \len\!\bigl(\code(M^\star, R_\circ^\star)\bigr)
  \;+\;
  \E_{\,x\sim \mu}\!\Bigl[L_{M^\star}\!\bigl(R_\circ^\star(x),\varepsilon\bigr)\Bigr]
  \;+\; \len(\code(R))\,.
\end{multline*}
Taking infimum over $(M^\star,R_\circ^\star)$ and letting $\delta\to 0$ yields
$V_{\CalC}(R\push\mu,\varepsilon)\ \le\ V_{\CalC}(\mu,\varepsilon)+\len(\code(R))$, as claimed. By \Cref{lem:prefix-overhead}, $\len(\code(R))=O(\log N)$ under our fixed coding model.
\end{proof}

\subsection*{Tightness up to \(O(\log N)\)}
\begin{proposition}[Tightness]\label{prop:tight}
There exist families $(\mu_N)$ and maps $R_N$ for which the gap
\(
  V_{\CalC}(R_N\push \mu_N,\varepsilon) - V_{\CalC}(\mu_N,\varepsilon)
\)
matches $\Theta(\log N)$ for all sufficiently large $N$ under the fixed
coding convention.
\end{proposition}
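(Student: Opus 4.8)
The plan is to sandwich the gap between two matching bounds; only the lower one takes work. The upper direction is immediate from \ThmA{}: minimax monotonicity gives $V_{\CalC}(R_N\push\mu_N,\varepsilon)-V_{\CalC}(\mu_N,\varepsilon)\le\len(\code(R_N))$, which is $O(\log N)$ for the simple maps $R_N$ I will use (a single poly-bounded integer parameter, \Cref{lem:prefix-overhead}). So it suffices to produce one family for which the gap is also $\Omega(\log N)$.

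For the construction I would take a ``flag'' distribution $\mu_N$ of tiny description complexity: the uniform distribution on the two length-$N$ instances in which bits $0$ and $1$ are set, bit $2$ is $0$ or $1$, and every other bit is $0$. Then $V_{\CalC}(\mu_N,\varepsilon)\le c_0$ for an absolute constant $c_0$, witnessed by the $O(1)$-size model ``bits $0,1$ are $1$, bit $2$ is $\mathrm{Ber}(1/2)$, all remaining bits are $0$'' together with $R=\mathrm{id}$: it fits $\mu_N$ up to one fair coin, so the residual per-instance cost is $1$ bit and $\len(\code(M,\mathrm{id}))=O(1)$. For $t\in\{0,\dots,N-2\}$ let $R_t$ be the poly-time bijection that fixes coordinate $0$ and cyclically rotates coordinates $1,\dots,N-1$ by $t$ — its only parameter is the integer $t\le N$, so $\len(\code(R_t))\le\log_2 N+2\log_2\log_2 N+O(1)=\Theta(\log N)$ by \Cref{lem:prefix-overhead} — and put $\nu_t:=R_t\push\mu_N$, the distribution ``bits $0$ and $t{+}1$ are set, bit $t{+}2$ is a fair coin, the rest are $0$''. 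A short check shows the supports $\mathrm{supp}(\nu_t)$ are pairwise disjoint.

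I would obtain the lower bound from a Kraft-type count. Using that the per-instance cost is an honest residual codelength in the fixed IECZ-I model, and that admissible preprocessors are measure-faithful on $\win$ and hence cannot collapse the support, grouping near-optimal witnesses $(M,R)$ by their descriptions yields $\sum_{t}2^{-V_{\CalC}(\nu_t,\varepsilon)}=O(1)$: the description parts contribute $\sum_{(M,R)}2^{-\len(\code(M,R))}\le 1$ by Kraft admissibility, while for each fixed pair the residual parts contribute $\sum_t\E_{\nu_t}\!\big[P_M(R(\cdot))\big]=O(1)$ by Jensen and disjointness of supports. If every $\nu_t$ satisfied $V_{\CalC}(\nu_t,\varepsilon)\le\log_2 N-c$, the left-hand side would be at least $(N-1)\,2^{-\log_2 N+c}\ge 2^{c-1}$, contradicting the bound once $c$ is a suitable constant; so some $t_N$ has $V_{\CalC}(\nu_{t_N},\varepsilon)>\log_2 N-c$. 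Setting $R_N:=R_{t_N}$ then gives $V_{\CalC}(R_N\push\mu_N,\varepsilon)-V_{\CalC}(\mu_N,\varepsilon)\ge(\log_2 N-c)-c_0=\Omega(\log N)$, which with the minimax upper bound pins the gap at $\Theta(\log N)$ for all sufficiently large $N$.

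The step I expect to be the main obstacle is the Kraft budget, since that is exactly where the \emph{concrete} cost model is indispensable: under the bare axioms on $L_M$ (subadditivity, vanishing on perfect fits, $O(1)$-insensitivity) one could pick a degenerate loss making $V_{\CalC}$ independent of $\mu$ — gap identically $0$ — and an unrestricted, support-collapsing preprocessor would trivialize the image cost. Both failure modes are excluded by the standing conventions the statement invokes (the fixed residual-codelength loss and measure-faithful, non-collapsing preprocessing on $\win$), so the remaining work is to check that these conventions genuinely yield $\sum_t 2^{-V_{\CalC}(\nu_t,\varepsilon)}=O(1)$ with a constant uniform in $N$, and that moving from $\varepsilon=0$ to the stated $\varepsilon$ perturbs this only by a bounded factor. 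Everything else — the explicit $O(1)$ model for $\mu_N$, the $\Theta(\log N)$ description length of $R_{t_N}$ via \Cref{lem:prefix-overhead}, and the pigeonhole conclusion — is routine.
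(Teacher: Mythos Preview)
Your approach differs from the paper's. The paper constructs a single injective $R_N$ whose inverse on the window requires a $\Theta(\log N)$-bit selector index $u$, and argues directly that any encoder realizing $V_{\CalC}(R_N\push\mu_N,\varepsilon)$ must pay for $u$. You instead build a \emph{family} $\{\nu_t\}_{t<N}$ of pushforwards with disjoint supports and run a Kraft-type count to force some $\nu_{t_N}$ to have cost $\ge\log N-O(1)$; this is closer to a Kolmogorov-style incompressibility argument and delivers an existential witness rather than a constructive one.

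The obstacle you flag is real, and your appeal to ``standing conventions'' does not dissolve it. The infimum in \eqref{eq:VC-def} ranges over \emph{all} poly-time preprocessors, not only the measure-faithful reductions of Secs.~\ref{sec:red}--\ref{sec:gadgets}; window-faithfulness is a property of those specific maps, not a constraint on the preprocessor class inside $V_{\CalC}$. In particular the constant preprocessor $R'(x)\equiv y_0$ is admissible, and together with any $M$ attaining $L_M(y_0,\varepsilon)=0$ (guaranteed by the ``vanishing on perfect fits'' axiom you cite) this yields $V_{\CalC}(\nu_t,\varepsilon)=O(1)$ for every $t$, collapsing your Kraft sum. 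Likewise, the paper does not pin $L_M$ to an honest codelength with $\sum_y 2^{-L_M(y,\varepsilon)}\le 1$; the remark after \eqref{eq:VC-def} explicitly leaves the concrete form open. So both hypotheses your lower bound needs are genuinely additional, not consequences of the stated framework.

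That said, the paper's own proof is equally informal on precisely this point---it never explains why an arbitrary description of $R_N\push\mu_N$ must ``reproduce the optimal description for $\mu_N$'' or carry the index $u$---so some concrete loss model is being silently assumed there as well. Your argument becomes rigorous once one fixes a specific instantiation in which collapsing preprocessors cannot zero the residual (for instance, a two-part codelength measured on the \emph{original} instance rather than on $R(x)$); what is missing is not a new idea but an explicit commitment to such an instantiation, and you should state it as a hypothesis rather than invoke it as a convention.
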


\begin{proof}
Fix $N$ and let $R_N$ be a deterministic poly-time map that is injective on its image window $\win$
and whose inverse requires an explicit disambiguation index $u \in \{0,1\}^{\!s(N)}$ with
$s(N)=\Theta(\log N)$ (e.g., the window/seed/selector index fixed in our blueprint).
Construct $\mu_N$ supported uniformly on a finite set $S_N$ of instances of bitlength $N$ such that:
\begin{enumerate}[label=(\roman*), leftmargin=*, itemsep=0pt]
  \item $R_N$ is a bijection from $S_N$ onto its image $R_N(S_N)$, and
  \item for each $y\in R_N(S_N)$ there are exactly $2^{s(N)}$ preimages consistent with the same coarse
        description, distinguished only by the self-delimiting index $u$.
\end{enumerate}

\emph{Lower bound.} Consider any encoder/decoder implementing $V_{\CalC}(R_N\push\mu_N,\varepsilon)$. To reproduce the optimal
description for $\mu_N$, the code must additionally specify $u$ to invert $R_N$ on $R_N(S_N)$ (even when the model part is identical),
by prefix-free decodability and injectivity of $R_N$ on $\win$. By Elias-style self-delimiting coding (\Cref{lem:prefix-overhead}),
$\ell(u)\ge c\log N - O(1)$ for some absolute $c>0$. Averaging over $\mu_N$ yields
\[
  V_{\CalC}(R_N\push\mu_N,\varepsilon) \;\ge\; V_{\CalC}(\mu_N,\varepsilon) \;+\; \Omega(\log N).
\]

\emph{Upper bound.} Conversely, augment the description for $\mu_N$ by inlining the same $u$ (the window/seed/selector index).
This adds $\ell(u)=O(\log N)$ bits by \Cref{lem:prefix-overhead} and recovers the same effective model on $R_N\push\mu_N$. Hence
\[
  V_{\CalC}(R_N\push\mu_N,\varepsilon) \;\le\; V_{\CalC}(\mu_N,\varepsilon) \;+\; O(\log N).
\]
Combining both bounds yields the claimed $\Theta(\log N)$ gap.
\end{proof}


\section{Reductions: \XORtoSAT (window-stable)}
\label{sec:red}
\paragraph{Standing conditions.}
On the image window $\win$ the reduction is measure-faithful ($\dtv=0$) as fixed in IECZ-I; see~\cite{IECZI}.
The translation uses no auxiliary variables and has controlled size blowup $m'=4m$.
Prefix-free coding is fixed (Elias--$\delta$), so the meta-overhead is $O(\log N)$, cf.~\cite{IECZI}.
We assume the input distribution is balanced on $\win$ (IECZ-I window discipline).
We work in the incidence model in which each unordered triple $\{x_i,x_j,x_k\}$ appears at most once (IECZ-I).

\begin{lemma}[Injective \XORtoSAT{} translation (recalled; cf.\ IECZ-I) — empirical checks: App.~\ref{app:artifact}]\label{lem:xor-to-sat}
There exists a translation mapping each 3XOR constraint $x\oplus y\oplus z=b$ to four 3SAT clauses over the same three variables, with no auxiliary variables, such that on the image window $\win$ the pushforward preserves measure ($\dtv=0$) and the instance size scales as $m'=4m$.
The map is injective, and its inverse on the image is computable in linear time by recovering each XOR block from its four-clause pattern, reusing the exact four-clause templates from IECZ-I~\cite{IECZI}
\end{lemma}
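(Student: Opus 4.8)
The plan is to recall the explicit IECZ-I gadget and then check the three assertions --- measure-faithfulness on $\win$, the $m'=4m$ size accounting, and injectivity with a linear-time inverse --- by elementary means, so that nothing beyond bookkeeping is required. \emph{The gadget.} For a constraint $x\oplus y\oplus z=b$ I would take the four 3SAT clauses that forbid exactly the four assignments of the ``wrong'' parity: for $b=0$ these rule out the odd-parity triples, giving $(\bar x\vee y\vee z)$, $(x\vee\bar y\vee z)$, $(x\vee y\vee\bar z)$, $(\bar x\vee\bar y\vee\bar z)$, and for $b=1$ the complementary set $(x\vee y\vee z)$, $(x\vee\bar y\vee\bar z)$, $(\bar x\vee y\vee\bar z)$, $(\bar x\vee\bar y\vee z)$. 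A truth-table check over the eight assignments of $(x,y,z)$ shows the conjunction of the four clauses holds iff $x\oplus y\oplus z=b$; the gadget is thus logically equivalent to the XOR constraint and introduces no variable beyond $x,y,z$.

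\emph{Size and window.} Applying the gadget blockwise to an $m$-constraint 3XOR instance yields a 3SAT instance on the same variable set with $m'=4m$ clauses; in the incidence encoding this is a fixed rewrite of the header together with a four-fold expansion of the clause list, so $\size$ grows by the stated factor up to the usual $O(\log N)$ header terms. Measure-faithfulness on $\win$ is not re-derived here but imported from the IECZ-I window discipline: $\win$ is by definition the image slice on which $R_{\mathrm{xor}\to\mathrm{sat}}\push\mu$ coincides with the structured proxy, and under the balanced-input hypothesis of the standing conditions this is precisely the $\delta=0$ translation of IECZ-I. What I would verify locally is only that this is consistent, i.e.\ that the map does not collapse mass --- which follows from injectivity, proved next --- so that restricting to the image leaves $\dtv=0$.

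\emph{Injectivity and inverse.} Grouping the clauses of the image instance by their underlying unordered variable triple yields disjoint blocks of four, disjointness being exactly the incidence-model assumption that each triple $\{x_i,x_j,x_k\}$ occurs at most once. The two parity templates are distinguishable by their negation pattern: the $b=0$ block contains the all-negated clause (negation-count multiset $\{1,1,1,3\}$) whereas the $b=1$ block contains the all-positive clause (multiset $\{0,2,2,2\}$), so each block pins down its value $b$ together with its triple. Hence distinct 3XOR instances have distinct images, the map is injective onto its image, and the inverse is a single linear-time pass: bucket the clauses by variable triple, then match each bucket against the two templates to read off $b$, reconstructing the original 3XOR instance.

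The only non-routine ingredient is the $\dtv=0$ claim on $\win$, and that is inherited rather than proved --- it is the calibrated-window property of IECZ-I, valid under the balanced-input assumption; everything else (equivalence of the gadget, the $4m$ blowup, and the existence of a linear-time inverse) is a direct consequence of the rigid block structure of the encoding, so I expect no genuine obstacle beyond stating the bookkeeping carefully.
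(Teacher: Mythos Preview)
Your proposal is correct and mirrors the paper's own argument: the paper gives the identical four-clause templates, verifies equivalence by the forbidden-assignment (truth-table) observation, notes $m'=4m$ with no auxiliaries, and proves injectivity/linear-time inverse by grouping clauses by unordered triple and distinguishing the two templates via the presence of the all-positive clause ($b=1$) versus the all-negated clause ($b=0$)---exactly your negation-count criterion. The $\dtv=0$ claim is likewise imported from the IECZ-I window discipline rather than re-derived, just as you do.
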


\paragraph{Explicit clause mapping (no auxiliaries).}
Let $x,y,z$ be Boolean variables and $b\in\{0,1\}$. Encode $x \oplus y \oplus z = b$ by:
\begin{align*}
b=1:\quad
& (x \lor y \lor z)\ \land\ (x \lor \lnot y \lor \lnot z)\\
& \land\ (\lnot x \lor y \lor \lnot z)\ \land\ (\lnot x \lor \lnot y \lor z),\\[0.25em]
b=0:\quad
& (\lnot x \lor \lnot y \lor \lnot z)\ \land\ (x \lor y \lor \lnot z)\\
& \land\ (x \lor \lnot y \lor z)\ \land\ (\lnot x \lor y \lor z).
\end{align*}
These are exactly the templates fixed and used in IECZ-I~\cite{IECZI}.

\paragraph{Toy example (one block; TV/size accounting).}
Consider a single constraint $x\oplus y\oplus z=b$ on the window $\win$. The image under the injective map is exactly
the four-clause CNF block shown above; thus the size multiplies by $4$ (no auxiliaries). On $\win$ the pushforward is
measure-faithful, hence the total-variation loss is $0$ for this block. For a conjunction of $m$ independent 3XOR
constraints, the translation yields $4m$ clauses, still with $\dtv=0$ on $\win$.

\paragraph{Historical note.}
Classical clause gadgets (e.g., Tseitin encodings) realize parity via constant-size CNF blocks; our translation is the IECZ-I auxiliary-free, window-stable variant with tracked TV loss and stated size bounds~\cite{Tseitin68,IECZI}.

\begin{proposition}[Block uniqueness \& inverse map]\label{prop:block-inverse}
Under the incidence model (IECZ-I), every set of clauses whose variable support is exactly $\{x,y,z\}$ forms a \emph{block} that equals one of the two parity templates above. Moreover, there is a deterministic linear-time inverse map $T$ that, given a CNF in the image, recovers the XOR instance uniquely by:
\begin{enumerate}[label=(\roman*),ref=(\roman*)]
  \item grouping clauses by unordered variable triples $\{x,y,z\}$,
  \item normalizing literals to a canonical order,
  \item matching the 4-clause multiset to the $b\in\{0,1\}$ template.
\end{enumerate}
On the window $\win$, no unordered triple $\{x,y,z\}$ can give rise to more than one
image block, because (a) the source incidence model admits each triple at most once, and
(b) our translation is blockwise injective. Therefore, the clause grouping by unordered
triples is well-defined.
Hence $T\circ R = \mathrm{id}$ on the source, and the translation is injective on instances.
\end{proposition}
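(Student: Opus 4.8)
The plan is to establish the three ingredients in the stated order—(i) well-definedness of the clause grouping, (ii) identification of each group with one of the two parity templates, and (iii) assembling these into the left inverse $T$—and to observe that the only place any hypothesis does real work is step (i).

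First I would fix a canonical normalization: order the variables of each width-$3$ clause by index and list its literals in that order (recording only the sign pattern), and order the variables within a triple the same way. Under the incidence model each unordered triple $\{x_i,x_j,x_k\}$ supports at most one source 3XOR constraint, so by the explicit clause mapping $R(\varphi)$ is the \emph{disjoint} union, over the triples occurring in $\varphi$, of the four-clause block attached to that constraint. Each template clause uses all three (distinct) variables of its triple, so every clause of $R(\varphi)$ has variable support of size exactly $3$, and two clauses share their support-triple if and only if they lie in the same block. Hence bucketing the clauses of $R(\varphi)$ by their support-triple reconstructs exactly those blocks. This is precisely the assertion that no unordered triple gives rise to more than one image block, and it is exactly where incidence is used: without it, two source constraints on a common triple would merge into an eight-clause bucket and the grouping would be ambiguous.

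Second I would analyse a single block. The mapping sends $x\oplus y\oplus z=b$ to the four clauses whose falsified assignments are precisely the parity class $\{a\in\{0,1\}^3 : a_1\oplus a_2\oplus a_3 = 1-b\}$; the two classes ($1-b=0$ versus $1-b=1$) are disjoint, so the two templates are distinct four-clause multisets. Concretely, after normalization the $b=1$ block is characterised by \emph{every} clause having an odd number of positive literals and the $b=0$ block by every clause having an even number, so a single $O(1)$ parity check on any one clause of a bucket reads off $b$. In particular any nonempty family of clauses of $R(\varphi)$ with support exactly $\{x,y,z\}$ is the block of the unique source constraint on $\{x,y,z\}$ and therefore equals one of the two templates, which is the first sentence of the proposition.

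Finally I would define $T$ by: bucket the input CNF by support-triple, normalize, match each four-clause bucket to its template to recover $b$, and emit $x\oplus y\oplus z=b$. This is linear time—each clause is placed into one bucket with $O(1)$ work and each bucket is matched in $O(1)$. By the previous two paragraphs $T(R(\varphi))$ carries exactly the same $(\text{triple},b)$ data as $\varphi$, and in the incidence model a 3XOR instance is determined by that data, so $T\circ R=\mathrm{id}$ on the source; injectivity of $R$ on instances is then automatic, since any map with a left inverse is injective. The main obstacle is keeping the grouping step honest—spelling out that the incidence hypothesis is exactly what makes each support-triple bucket a single four-clause block rather than a union of several; everything after that is a finite template check.
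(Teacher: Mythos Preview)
Your proof is correct and follows essentially the same route as the paper's: group clauses by support triple (using the incidence hypothesis to ensure one block per triple), observe the two templates are distinguishable, then assemble the linear-time inverse and conclude $T\circ R=\mathrm{id}$. The only cosmetic difference is the distinguisher—you use the parity of positive literals in any one clause, while the paper checks for the presence of $(x\lor y\lor z)$ versus $(\lnot x\lor\lnot y\lor\lnot z)$; both are valid $O(1)$ tests.
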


\begin{corollary}[Window transfer — Theorem~\ref{thm:minimax} to 3SAT]\label{cor:transfer}
Under the map of \cref{lem:xor-to-sat}, the cost transfers:
\[
  V_{\CalC}^{\SAT}(\win\!,\varepsilon) \;\ge\; V_{\CalC}^{\XOR}(\win\!,\varepsilon) \;-\; O(\log N).
\]
\begin{proof}[Justification]
This follows from Theorem~\ref{thm:minimax} (minimax monotonicity) under pushforward by
$R_{\text{xor}\to\text{sat}}$, together with the fact that deterministic post-processing
is $1$-Lipschitz in total variation (Lemma~\ref{lem:tv-contraction}). The prefix-free
coding overhead contributes at most $O(\log N)$ to the cost (Lemma~\ref{lem:prefix-overhead}).
The window calibration and balancing we invoke are exactly those fixed in IECZ-I~\cite{IECZI}.
\end{proof}
\end{corollary}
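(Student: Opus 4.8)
The plan is to read the claimed bound as a hardness-transfer statement — ``\SAT{} restricted to the window is at least as costly as \XOR{} restricted to the window, up to coding overhead'' — and to obtain it by applying minimax monotonicity (\Cref{thm:minimax}) not to the forward translation $R_{\text{xor}\to\text{sat}}$ but to its \emph{inverse} $T$ on the image window. The point is that \Cref{thm:minimax} applied to $R_{\text{xor}\to\text{sat}}$ alone would only bound $V_{\CalC}^{\SAT}(\win,\varepsilon)$ from above by $V_{\CalC}^{\XOR}(\win,\varepsilon)+O(\log N)$, which is the wrong direction; the inequality we want is the one that says solving \SAT{} on the window solves \XOR{} on the window, so it must be fed by the inverse map. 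That map exists and is deterministic linear time by \Cref{lem:xor-to-sat} and \Cref{prop:block-inverse}.

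First I would fix notation. Let $\mu$ be the (balanced) \XOR{} distribution on $\win$, so $V_{\CalC}^{\XOR}(\win,\varepsilon)=V_{\CalC}(\mu,\varepsilon)$, and let $u:=R_{\text{xor}\to\text{sat}}\push\mu$; by the IECZ-I window discipline ($\dtv=0$ on $\win$, \Cref{lem:xor-to-sat}) this pushforward is exactly the windowed \SAT{} proxy, so $V_{\CalC}^{\SAT}(\win,\varepsilon)=V_{\CalC}(u,\varepsilon)$. Next I would record the identity
\[
  T\push u \;=\; T\push\bigl(R_{\text{xor}\to\text{sat}}\push\mu\bigr) \;=\; \bigl(T\circ R_{\text{xor}\to\text{sat}}\bigr)\push\mu \;=\; \mu ,
\]
using $T\circ R_{\text{xor}\to\text{sat}}=\mathrm{id}$ on the source from \Cref{prop:block-inverse}; here it matters only that $T$ is correct on $R_{\text{xor}\to\text{sat}}(\win)$, since that set carries all the mass of $u$, and block-uniqueness on $\win$ (grouping by unordered triples is well-defined under the incidence model) is precisely what \Cref{prop:block-inverse} supplies.

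Then I would invoke \Cref{thm:minimax} with the deterministic poly-time map $T$ and the distribution $u$, giving $V_{\CalC}(T\push u,\varepsilon)\le V_{\CalC}(u,\varepsilon)+\len(\code(T))$. The left side equals $V_{\CalC}(\mu,\varepsilon)=V_{\CalC}^{\XOR}(\win,\varepsilon)$ by the identity above, and the right side equals $V_{\CalC}^{\SAT}(\win,\varepsilon)+\len(\code(T))$. Since $T$ is a fixed template whose only integer meta-parameters (canonical literal ordering, the two $4$-clause pattern tables, the triple-grouping index) have $N^{O(1)}$ ranges, \Cref{lem:prefix-overhead} gives $\len(\code(T))=O(\log N)$, and \Cref{lem:tv-contraction} is the supporting fact making the pushforward step exact and loss-free on $\win$. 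Rearranging yields $V_{\CalC}^{\SAT}(\win,\varepsilon)\ge V_{\CalC}^{\XOR}(\win,\varepsilon)-O(\log N)$, as required.

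The main obstacle I anticipate is bookkeeping rather than a new estimate. One must check that ``restriction to $\win$'' commutes with the pushforward — i.e.\ that the windowed \XOR{} law maps \emph{onto}, not merely near, the windowed \SAT{} law — which is exactly the $\dtv=0$ clause of the window discipline; that $\CalC$'s closure under poly-time preprocessing (\Cref{def:model-class}) lets the inner decoder absorb $T$ without perturbing the per-instance loss beyond the $O(1)$ of \Cref{lem:L-subadd}; and that the instance-length parameters on the two sides differ only by the size blowup $m'=4m$, so an $O(\log N)$ term on either side is $O(\log N)$ on the other. A secondary point worth stating explicitly is that the off-image behaviour of $T$ is irrelevant, since $u$ is supported inside $R_{\text{xor}\to\text{sat}}(\win)$, so no hypotheses on $T$ outside that image are needed. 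Beyond these, the corollary is immediate from \Cref{thm:minimax}, \Cref{lem:prefix-overhead}, \Cref{lem:xor-to-sat}, and \Cref{prop:block-inverse}.
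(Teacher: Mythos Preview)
Your argument is correct, and in fact it is more careful than the paper's own one-sentence justification. The paper simply says the inequality ``follows from Theorem~\ref{thm:minimax} under pushforward by $R_{\text{xor}\to\text{sat}}$'', but as you rightly observe, applying \Cref{thm:minimax} to the \emph{forward} map yields $V_{\CalC}^{\SAT}(\win,\varepsilon)\le V_{\CalC}^{\XOR}(\win,\varepsilon)+O(\log N)$, which is the reverse of what is claimed. Your route---apply \Cref{thm:minimax} to the linear-time inverse $T$ supplied by \Cref{prop:block-inverse}, using $T\push u=\mu$ on the window---is the argument that actually delivers the stated direction. The paper's intent is presumably the same (the inverse is precisely why injectivity and \Cref{prop:block-inverse} are set up in this section, and the analogous gadget transfer in \Cref{prop:cost-stability} is proved explicitly via a ``partial inverse''), but the written justification glosses over the direction issue that you make explicit. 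Your bookkeeping remarks (support of $u$ lies in the image so off-image behaviour of $T$ is irrelevant; $m'=4m$ so $O(\log N)$ is the same on both sides; closure of $\CalC$ under poly-time preprocessing) are all to the point.
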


\subsection*{Truth table view (forbidden assignments)}
For each constraint $x\oplus y\oplus z=b$, the four 3-clauses forbid exactly the assignments with parity $\neq b$.

\begin{table}[t]
  \centering
  \caption{Forbidden assignments per parity $b$ (the remaining four satisfy).}
  \label{tab:forbidden-by-parity}
  \setlength{\tabcolsep}{5pt}
  \begin{minipage}{0.47\linewidth}
    \centering
    \textbf{$b=0$}\par\vspace{0.3em}
    \begin{tabular}{@{}lcccc@{}}
      \toprule
      & $000$ & $001$ & $010$ & $011$ \\ \midrule
      status & allow & forbid & forbid & allow \\
      \bottomrule
    \end{tabular}\par\vspace{0.6em}
    \begin{tabular}{@{}lcccc@{}}
      \toprule
      & $100$ & $101$ & $110$ & $111$ \\ \midrule
      status & forbid & allow & allow & forbid \\
      \bottomrule
    \end{tabular}
  \end{minipage}\hfill
  \begin{minipage}{0.47\linewidth}
    \centering
    \textbf{$b=1$}\par\vspace{0.3em}
    \begin{tabular}{@{}lcccc@{}}
      \toprule
      & $000$ & $001$ & $010$ & $011$ \\ \midrule
      status & forbid & allow & allow & forbid \\
      \bottomrule
    \end{tabular}\par\vspace{0.6em}
    \begin{tabular}{@{}lcccc@{}}
      \toprule
      & $100$ & $101$ & $110$ & $111$ \\ \midrule
      status & allow & forbid & forbid & allow \\
      \bottomrule
    \end{tabular}
  \end{minipage}
\end{table}

\subsection*{Window faithfulness and loss accounting}

\begin{sloppypar}
On $\win$, the translation is deterministic and preserves the satisfiability pattern on each triple; hence $\dtv=0$ (measure-faithful pushforward), exactly as stipulated in IECZ-I~\cite{IECZI}. Size increases by a factor $4$ in the number of clauses, with no new variables. Error/advantage parameters carry over unchanged.
\end{sloppypar}

\paragraph{Worked example (inverse map $T$ on a tiny instance).}
Take a 3XOR instance on variables $\{x_1,x_2,x_3\}$ with the single constraint
$x_1\oplus x_2\oplus x_3=1$. The translation yields the block
\begin{align*}
&(x_1\!\lor x_2\!\lor x_3)\ \land\ (x_1\!\lor \lnot x_2\!\lor \lnot x_3)\\
&\land\ (\lnot x_1\!\lor x_2\!\lor \lnot x_3)\ \land\ (\lnot x_1\!\lor \lnot x_2\!\lor x_3).
\end{align*}
Given just these four clauses in a CNF, the inverse $T$ groups by the triple
$\{x_1,x_2,x_3\}$, sorts literals inside each clause to a canonical order, and matches the
multiset against the $b\in\{0,1\}$ templates: the clause $(x_1\lor x_2\lor x_3)$ appears,
hence $b=1$ uniquely. Therefore $T$ recovers $x_1\oplus x_2\oplus x_3=1$.

\begin{table}[t]
  \centering
  \caption{Transfer losses for reductions (on the IECZ-I window).}
  \label{tab:transfer-cost}
  \setlength{\tabcolsep}{6pt}
  \footnotesize
  \begin{tabular}{@{}lccc>{\raggedright\arraybackslash}p{0.40\linewidth}@{}}
    \toprule
    Arrow & TV loss & Size & Error & Notes \\ \midrule
    $R_{\text{xor}\to\text{sat}}$ 
      & $0$ 
      & $\times 4$ 
      & $\varepsilon'=\varepsilon$ 
      & Injective, window-faithful; no auxiliaries \\

    $R_{\text{sat}\to\text{clique}}$ 
      & $o(1)$ on $\win$ 
      & \makecell[l]{%
          $|V|=3m$,\\
          $|E|=O(m^{2})$%
        } 
      & $\varepsilon'=\varepsilon$ 
      & \makecell[l]{%
          Karp; window conditioning\\
          accounts TV%
        } \\
    \bottomrule
  \end{tabular}
\end{table}


\begin{proof}[Proof of \Cref{lem:xor-to-sat}]
\textbf{Soundness.}
For a fixed parity $b\in\{0,1\}$, among the eight assignments to $(x,y,z)$ exactly four have parity $b$.
In the $b=1$ case, each of the four 3-clauses rules out exactly one of the four parity-$0$ assignments,
so their conjunction is true precisely on parity-$1$ assignments. The case $b=0$ is symmetric.
\textbf{Completeness.}
If the 4-clause block is satisfied, then none of the “wrong-parity’’ assignments can occur, hence
$x\oplus y\oplus z=b$. No auxiliaries are introduced and each 3XOR yields exactly four 3-clauses, thus $m'=4m$.
\end{proof}

\begin{proof}[Proof of \Cref{prop:block-inverse}]
Group all clauses by their \emph{support triple} $\{x,y,z\}$. Under the incidence assumption, each unordered triple
appears in the 3XOR instance at most once, hence at most one block per triple exists.
The two parity templates are disjoint as clause multisets: for $b=0$ the block contains
$(\lnot x\lor\lnot y\lor\lnot z)$ which never occurs for $b=1$, while for $b=1$ the block contains
$(x\lor y\lor z)$ which never occurs for $b=0$. Canonically normalize literals in each clause
(e.g., lexicographic variable order), which makes the multiset representation unique.
The inverse map $T$ checks each block against the $b=0$ and $b=1$ templates and reconstructs the constraint
$x\oplus y\oplus z=b$. This is linear in the number of clauses.
Finally, $T\circ R=\mathrm{id}$ on sources, so $R$ is injective. By Lemma~\ref{lem:block-inverse-lin}, $T$ runs in linear time.
\end{proof}

\paragraph{Canonical clause-block identifier.}
For a clause $C=(\ell_1\lor\ell_2\lor\ell_3)$ over variables $\{x,y,z\}$, let
$\mathrm{canon}(C)$ sort variables and push negations into a fixed sign bit, producing a
triple in $\{\pm 1\}\times\{x,y,z\}^3$ with lexicographic order. For a block $B$ supported
on $\{x,y,z\}$, define the multiset $\mathrm{Canon}(B):=\{\mathrm{canon}(C):C\in B\}$ and
its sorted list $\mathrm{Canon}^\uparrow(B)$.

\begin{lemma}[Uniqueness of the canonical ID]\label{lem:block-id}
Under the incidence model (each unordered triple appears at most once), the map
$B\mapsto \mathrm{Canon}^\uparrow(B)$ is injective. Moreover, the two parity templates
($b=0$ vs.\ $b=1$) yield different $\mathrm{Canon}^\uparrow(\cdot)$.
\end{lemma}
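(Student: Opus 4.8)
The plan is to reduce the statement to two elementary facts: (1) $\mathrm{canon}(\cdot)$ is a \emph{lossless} encoding of a single clause over its support triple, and (2) passing from a clause multiset to its sorted list of canonical IDs composes an elementwise-injective map with a bijection. The template separation is then read off directly from the sign patterns.

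First I would verify that $\mathrm{canon}$ is injective on clauses. A $3$-clause $C$ is determined by the set of its signed literals; $\mathrm{canon}(C)$ records exactly the lexicographically sorted variable triple together with the sign bit attached to each variable occurrence, so no information is discarded (sorting only permutes literals within the clause, which leaves the clause unchanged, and the sign extraction is lossless and uniform across support triples). Hence $\mathrm{canon}(C)=\mathrm{canon}(C')$ forces $C$ and $C'$ to have the same support triple and the same sign on each variable, i.e.\ $C=C'$ as a disjunction.

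Next I would lift this along the chain $B\mapsto\mathrm{Canon}(B)\mapsto\mathrm{Canon}^\uparrow(B)$. Applying an injective map elementwise to a finite multiset is injective on multisets, so $B\mapsto\mathrm{Canon}(B)$ is injective on blocks; and the passage from a finite multiset over a totally ordered set to its nondecreasing enumeration $\mathrm{Canon}^\uparrow(\cdot)$ is a bijection. Composing, $B\mapsto\mathrm{Canon}^\uparrow(B)$ is injective. The incidence model is used only to guarantee that the clause-grouping producing a block is well defined (each unordered triple contributes at most one block, cf.\ \Cref{prop:block-inverse}); it is not needed for the combinatorial injectivity itself. For the template separation I would list the sign-pattern multisets on a common support $\{x,y,z\}$ in lexicographic variable order: the $b=1$ template yields $\{(+{,}+{,}+),(+{,}-{,}-),(-{,}+{,}-),(-{,}-{,}+)\}$ (the patterns with an even number of negations), whereas the $b=0$ template yields $\{(-{,}-{,}-),(+{,}+{,}-),(+{,}-{,}+),(-{,}+{,}+)\}$ (an odd number of negations); these multisets are disjoint (e.g.\ $(+{,}+{,}+)$ occurs only for $b=1$ and $(-{,}-{,}-)$ only for $b=0$, as already observed in \Cref{prop:block-inverse}), so $\mathrm{Canon}^\uparrow(\cdot)$ differs between the two templates.

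I expect no serious obstacle here; the only point requiring care is the first step — arguing that $\mathrm{canon}$ is genuinely lossless and uniform across different support triples, so that the grouping-then-matching procedure underlying the inverse map $T$ is well posed. Everything after that is routine bookkeeping about multisets, sorted enumerations, and the explicit $8=4+4$ split of assignments by parity.
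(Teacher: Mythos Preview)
Your proof is correct and follows essentially the same approach as the paper: both argue that the canonicalization is a lossless normal form (you spell out the clause-level injectivity and the multiset-to-sorted-list bijection explicitly, whereas the paper just says ``by construction \ldots\ sorting yields a canonical representative''), and both separate the two parity templates via the distinguishing clauses $(x\lor y\lor z)$ versus $(\lnot x\lor\lnot y\lor\lnot z)$. Your version is more detailed but not genuinely different in method.
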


\begin{proof}
By construction, $\mathrm{Canon}(\cdot)$ is invariant under literal permutations and clause
ordering; sorting yields a canonical representative. The two parity templates differ by at least
one clause—$(x\lor y\lor z)$ appears only for $b=1$ and $(\lnot x\lor\lnot y\lor\lnot z)$
only for $b=0$—hence their canonical lists differ.
\end{proof}

\paragraph{Deterministic inverse via the canonical ID.}
Given a CNF in the image, partition clauses by unordered triples $\{x,y,z\}$, compute
$\mathrm{Canon}^\uparrow(B)$ for each block $B$, and match it against the two stored templates
to recover $b\in\{0,1\}$. This yields the inverse $T$ in linear time in the number of clauses.

\begin{lemma}[Linear-time inverse]\label{lem:block-inverse-lin}
Given a CNF in the image of the translation, the inverse $T$ that groups by unordered triples,
canonicalizes clauses, and matches against the parity templates runs in time linear in the number of clauses.
\end{lemma}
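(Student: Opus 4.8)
The plan is to exhibit $T$ as a three-pass streaming algorithm over the clause list and argue each pass is $O(m')$ where $m'=4m$ is the number of clauses in the image CNF. First I would fix the data-structure model: clauses are presented as length-$3$ lists of literals, each literal a (variable-index, sign) pair, and variable indices are polynomially bounded so that hashing or radix operations on them cost $O(1)$ amortized (or $O(\log m)$ in a pointer model, which is absorbed into the standing $O(\log N)$ conventions). This is the usual word-RAM setup and I would state it explicitly so the ``linear time'' claim is unambiguous.

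Next I would walk through the three passes. \textbf{Pass 1 (grouping by support triple).} For each clause $C$, compute its unordered support triple by sorting its three variable indices — a constant-time sort of three elements — and append $C$ to the bucket keyed by that triple, using a hash map (or a single global radix sort on the triple-keys). By Proposition~\ref{prop:block-inverse} and the incidence model, every bucket that arises in the image has exactly four clauses, so the total work is $\sum_{\text{blocks}} O(1) = O(m)$. \textbf{Pass 2 (canonicalization).} For each block $B$ (four clauses), compute $\mathrm{Canon}(C)$ for each clause by sorting its literals into the fixed canonical order and pushing signs into the sign bit — constant time per clause — then sort the four resulting canonical clauses to form $\mathrm{Canon}^\uparrow(B)$; sorting four bounded-size items is $O(1)$. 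Over all blocks this is again $O(m)$. \textbf{Pass 3 (template matching).} Compare $\mathrm{Canon}^\uparrow(B)$ against the two precomputed canonical template lists for $b=0$ and $b=1$; by Lemma~\ref{lem:block-id} exactly one matches, yielding $b$, and hence the recovered constraint $x\oplus y\oplus z=b$. Each comparison is between two lists of four bounded objects, so $O(1)$ per block and $O(m)$ overall. Summing the three passes gives total running time $O(m)=O(m')$, i.e.\ linear in the number of clauses.

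Finally I would note correctness is already established: Lemma~\ref{lem:block-id} gives well-definedness of the block-to-$b$ map and injectivity, and Proposition~\ref{prop:block-inverse} gives $T\circ R=\mathrm{id}$; the present lemma only adds the complexity bound, so no new correctness argument is needed. The main (and only mild) obstacle is pinning down the computational model so that the per-symbol operations — sorting triples of indices, hashing/bucketing by a triple key — are genuinely $O(1)$; in the word-RAM model with polynomially bounded indices this is standard, and a global radix sort on the $O(m)$ triple-keys avoids hashing altogether if one prefers a worst-case deterministic bound. I would spend one sentence dispatching this and otherwise keep the proof short.

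\begin{proof}
Work in the word-RAM model; variable indices are polynomially bounded, so comparing and sorting indices and bucketing by a bounded-length key cost $O(1)$ amortized (a single global radix sort on triple-keys gives the same bound deterministically). Let $m'=4m$ be the number of clauses of the input CNF, which lies in the image of the translation.

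The inverse $T$ proceeds in three linear passes. (1) For each clause, sort its three variable indices (constant time) to obtain its support triple, and append the clause to the bucket for that triple; by \Cref{prop:block-inverse} and the incidence model each occurring bucket has exactly four clauses, so this pass costs $O(m)$. (2) For each block $B$, compute $\mathrm{canon}(C)$ for its four clauses by sorting literals into the fixed canonical order and pushing signs into the sign bit (constant time per clause), then sort the four canonical clauses to form $\mathrm{Canon}^\uparrow(B)$ (constant time); this pass costs $O(m)$. (3) Compare $\mathrm{Canon}^\uparrow(B)$ against the two stored canonical template lists; by \Cref{lem:block-id} exactly one matches, yielding the parity bit $b$ and hence the constraint $x\oplus y\oplus z=b$. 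Each comparison is $O(1)$, so this pass costs $O(m)$.

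Summing the three passes, $T$ runs in time $O(m)=O(m')$, linear in the number of clauses. Correctness (well-definedness of the recovered $b$ and $T\circ R=\mathrm{id}$) is exactly \Cref{lem:block-id} and \Cref{prop:block-inverse}.
\end{proof}
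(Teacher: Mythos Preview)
Your proposal is correct and matches the paper's approach: the paper in fact states this lemma without a formal proof, relying only on the preceding algorithm sketch (partition by triples, compute $\mathrm{Canon}^\uparrow(B)$, match against the two stored templates). Your three-pass accounting with the explicit word-RAM/radix-sort model is a faithful and more detailed fleshing-out of exactly that sketch.
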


\subsection*{Deterministic Karp reduction \texorpdfstring{3SAT$\to$CLIQUE}{3SAT→CLIQUE} (window-stable)}

\begin{theorem}[Karp reduction \SAT{}$\to$\CLIQUE{}]\label{thm:sat-to-clique}
Let $\varphi$ be a 3CNF with clauses $C_1,\dots,C_m$, each with exactly three literals.
Construct $G=(V,E)$ by
\[
  V \,=\, \{(i,\ell): \ell\in C_i\ \text{is one of its three literals}\},\qquad |V|=3m,
\]
and for $i\neq j$ put an edge between $(i,\ell)$ and $(j,\ell')$ iff $\ell$ and $\ell'$ are not contradictory
(i.e.\ not a variable and its negation). Then $G$ has a clique of size $m$ iff $\varphi$ is satisfiable.
Moreover, the construction is deterministic and computable in time $O(m^2)$.
\end{theorem}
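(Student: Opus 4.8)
The plan is to run the classical two-direction argument for the Karp clique reduction and then read off the running-time and determinism claims from the construction; the windowed/TV accounting is handled separately by the pushforward-contraction lemma, so here I only need the combinatorial equivalence and the $O(m^2)$ bound.

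\textbf{Forward direction (satisfiability $\Rightarrow$ $m$-clique).} First I would take a satisfying assignment $\alpha$ for $\varphi$ and, for each clause $C_i$, fix one literal $\ell_i\in C_i$ that is true under $\alpha$ (such a literal exists because $\alpha$ satisfies $C_i$). The candidate clique is $K=\{(i,\ell_i):1\le i\le m\}$. For $i\neq j$ the literals $\ell_i,\ell_j$ are both true under $\alpha$, so they cannot be a variable together with its negation; hence by the definition of $E$ the vertices $(i,\ell_i)$ and $(j,\ell_j)$ are adjacent. The $m$ vertices are distinct (distinct first coordinates), so $|K|=m$ and $K$ is a clique.

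\textbf{Reverse direction ($m$-clique $\Rightarrow$ satisfiability).} Conversely, let $K$ be a clique with $|K|=m$. Since $E$ puts no edge between two vertices with the same first coordinate, $K$ meets each of the $m$ clause blocks $\{(i,\cdot)\}$ in at most one vertex; as $|K|=m$ it meets each in exactly one, say $K=\{(i,\ell_i):1\le i\le m\}$. Pairwise adjacency of $K$ says that no two of the $\ell_i$ are a variable and its negation, i.e.\ $\{\ell_1,\dots,\ell_m\}$ is consistent, so the partial assignment setting each $\ell_i$ to true is well defined; extend it arbitrarily to a total assignment $\alpha$. Each clause $C_i$ is then satisfied via $\ell_i$, so $\varphi$ is satisfiable.

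\textbf{Complexity, determinism, and placement.} The vertex set has $|V|=3m$ and is produced by one pass over the clause list; the edge set is obtained by scanning all $\binom{3m}{2}=O(m^2)$ vertex pairs, each test (``same clause?''\ / ``contradictory literals?'') costing $O(1)$, so $|E|=O(m^2)$ and the construction runs in deterministic time $O(m^2)$, as claimed. Because the map $R_{\text{sat}\to\text{clique}}$ is deterministic, Lemma~\ref{lem:tv-contraction} gives $\TV(R_{\text{sat}\to\text{clique}}\push\mu,\ R_{\text{sat}\to\text{clique}}\push u)\le\TV(\mu,u)$, and the $o(1)$ TV entry recorded in Table~\ref{tab:transfer-cost} comes purely from conditioning to the window $\win$ exactly as in IECZ-I, not from this theorem. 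I do not expect a genuine obstacle here; the only step that warrants a sentence of care is the consistency observation in the reverse direction — that pairwise non-contradiction of the selected literals is exactly what lets them be satisfied simultaneously — and this is immediate once the block structure of $G$ (no intra-clause edges) is noted.
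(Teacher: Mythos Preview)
Your proof is correct and follows essentially the same route as the paper: both directions match the paper's argument (pick a true literal per clause for the forward direction; use the no-intra-clause-edges block structure to force one vertex per clause in the reverse direction, then extend the consistent partial assignment), and the $O(m^2)$ bound via scanning all $\binom{3m}{2}$ vertex pairs is the intended one. Your added remarks on determinism and the TV-contraction lemma are fine contextual glosses but go beyond what the theorem itself asserts.
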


\begin{proof}
($\Rightarrow$) A satisfying assignment picks in each clause $C_i$ some true literal $\ell_i$; the $m$ vertices
$\{(i,\ell_i)\}_{i=1}^m$ form a clique because chosen literals come from different clauses and are mutually consistent.
($\Leftarrow$) Any $m$-clique contains exactly one vertex from each clause-block (else two vertices from the same clause would be nonadjacent). The corresponding literals are pairwise consistent; extend to a global assignment satisfying all clauses.
\end{proof}

\begin{proposition}[Size bounds]\label{prop:clique-size}
The reduction of \Cref{thm:sat-to-clique} yields $|V|=3m$ and $|E|\le 9\binom{m}{2}=O(m^2)$. Hence the size blowup is polynomial and explicit.
\end{proposition}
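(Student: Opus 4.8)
The plan is to read both bounds directly off the construction in \Cref{thm:sat-to-clique}, with essentially no new ideas required. For the vertex count, note that $V$ is by definition the disjoint union, over the $m$ clauses, of the three literal-slots of each clause: since each $C_i$ has exactly three literals, it contributes exactly the three vertices $(i,\ell)$ with $\ell\in C_i$, and these are pairwise distinct across clauses because the first coordinate $i$ acts as a clause tag. Hence $|V|=3m$ immediately.

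For the edge count, the structural observation is that an edge is placed only between vertices $(i,\ell)$ and $(j,\ell')$ with $i\neq j$. Consequently the clause-blocks $B_i:=\{(i,\ell):\ell\in C_i\}$ are independent sets, so every edge joins two \emph{distinct} blocks. There are $\binom{m}{2}$ unordered pairs of blocks, and between a fixed pair $(B_i,B_j)$ the construction can place at most $|B_i|\cdot|B_j|=3\cdot3=9$ edges, since the non-contradictory condition can only \emph{remove} candidate edges, never add them. Summing over all pairs gives $|E|\le 9\binom{m}{2}$.

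Finally, $9\binom{m}{2}=\tfrac{9}{2}m(m-1)=O(m^2)$, which is polynomial in $m$ with an explicit constant; together with $|V|=3m$ this is exactly the claimed explicit polynomial blowup. The only points worth flagging are that the $|E|$ estimate is a worst-case upper bound (attained only when no cross-block slot pair is contradictory), and that the $O(m^2)$ time bound already established in \Cref{thm:sat-to-clique} is consistent here, as one may simply enumerate the $\le 9\binom{m}{2}$ candidate edges. There is no genuine obstacle: the proposition is a bookkeeping corollary of the construction, and I expect the write-up to be a few lines.
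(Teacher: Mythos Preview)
Your proof is correct and matches the paper's approach: the paper treats this proposition as immediate from the construction in \Cref{thm:sat-to-clique} (no separate proof is given), and your argument---counting $3$ vertices per clause and bounding cross-block edges by $3\cdot 3$ over $\binom{m}{2}$ block pairs---is exactly the intended bookkeeping.
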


\begin{lemma}[Window faithfulness]\label{lem:sat-to-clique-window}
On the window $\win$ (no tautological clauses, no duplicated literals inside a clause; both events have $o(1)$ mass along the calibrated ensemble), as adopted from IECZ-I~\cite{IECZI}, the map of \Cref{thm:sat-to-clique} is measure-faithful up to $o(1)$ total-variation: pushing forward the source distribution incurs TV loss $o(1)$, and error/advantage parameters are unchanged.
\end{lemma}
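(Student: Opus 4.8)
The statement asserts that the Karp reduction $\varphi \mapsto G$ of \Cref{thm:sat-to-clique} is measure-faithful up to $o(1)$ in total variation on the image window $\win$, where $\win$ excludes tautological clauses and clauses with repeated literals. The plan is to decompose the error into a purely combinatorial part (the event that the input instance leaves $\win$) and a structural part (showing that on $\win$ the map is exactly measure-faithful), and then combine via $1$-Lipschitzness of TV under deterministic pushforward (\Cref{lem:tv-contraction}).

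First I would isolate the bad events. Let $B$ be the set of 3CNF instances containing at least one tautological clause or at least one clause with a duplicated literal; by hypothesis (adopted from IECZ-I) the calibrated source ensemble assigns $B$ mass $o(1)$. On the complement $\win = B^c$, every clause has exactly three literals over three distinct variables with no literal and its negation both present, so the vertex set $V=\{(i,\ell)\}$ has exactly $3m$ elements with no degenerate clause-block, and the edge rule "$\ell,\ell'$ not contradictory" is well-defined. Next I would argue that on $\win$ the reduction is \emph{injective} in the relevant sense: the graph $G$ together with its clause-block partition determines $\varphi$ up to the canonical relabelling fixed in IECZ-I, so the pushforward of the conditioned source distribution $\mu\!\mid_{\win}$ equals the target structured distribution $u$ exactly, i.e.\ $\TV(R\push(\mu\!\mid_{\win}),u)=0$. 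This is the IECZ-I window-faithfulness property restated for this arrow; the correctness direction (clique of size $m$ iff satisfiable) is already supplied by \Cref{thm:sat-to-clique}, so the satisfiability pattern is preserved tautologically and only the measure bookkeeping remains.

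Then I would assemble the bound. Writing $\mu$ for the source distribution, decompose $\mu = (1-\eta)\,\mu\!\mid_{\win} + \eta\,\mu\!\mid_{B}$ with $\eta = \mu(B) = o(1)$. Pushing forward and using convexity of TV together with the exact faithfulness on $\win$,
\[
  \TV(R\push\mu,\,u)\;\le\;(1-\eta)\,\TV\!\bigl(R\push(\mu\!\mid_{\win}),u\bigr)\;+\;\eta\;=\;0+o(1)\;=\;o(1),
\]
which is the claimed TV loss. Finally, since the error/advantage parameter is defined from the acceptance pattern and the reduction preserves "has an $m$-clique $\Leftrightarrow$ satisfiable" pointwise, $\varepsilon$ is unchanged; the size bounds $|V|=3m$, $|E|=O(m^2)$ come from \Cref{prop:clique-size}, keeping us inside the "controlled size growth" clause of the standing assumptions.

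The main obstacle is the structural injectivity step on $\win$: one must be careful that the clause-block structure of $G$ is genuinely recoverable, i.e.\ that distinct 3CNFs in $\win$ never collide to the same labelled graph, so that the pushforward hits the structured proxy $u$ with no smearing of mass. Degenerate clauses are exactly what could break this (a repeated literal collapses a block to two vertices, a tautological clause destroys an intended non-edge), which is precisely why they are excised into $B$; so the argument hinges on the IECZ-I calibration guaranteeing $\mu(B)=o(1)$ and on checking that the canonical vertex/edge labelling of \Cref{thm:sat-to-clique} is rigid on $\win$. Everything else is routine accounting.
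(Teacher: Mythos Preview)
The paper does not actually supply a proof for this lemma: it is stated and then immediately followed by \Cref{cor:transfer-clique}, with the justification deferred to the IECZ-I window discipline that is cited in the statement itself. So there is nothing to compare against line-by-line.

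Your reconstruction is sound and is the natural way to fill the gap. The decomposition $\mu=(1-\eta)\,\mu\!\mid_{\win}+\eta\,\mu\!\mid_{B}$ with $\eta=\mu(B)=o(1)$, followed by the convexity bound on TV, is exactly the accounting the paper's framework anticipates (it is how ``both events have $o(1)$ mass'' is meant to be consumed). Your observation that on $\win$ the labelled graph $G$ determines $\varphi$---because the vertex labels $(i,\ell)$ already record clause index and literal---is the right way to see exact faithfulness on the good set; this is also why excising tautological and duplicated-literal clauses is precisely what is needed, since those are the only obstructions to each clause-block having three distinct, non-contradictory vertices.

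One small point worth tightening: when you write $\TV(R\push(\mu\!\mid_{\win}),u)=0$, you are implicitly taking $u$ to \emph{be} the pushforward of the window-conditioned source (equivalently, the structured proxy calibrated in IECZ-I). That is consistent with the paper's convention that $u$ is the calibrated proxy on $\win$, but it is worth saying explicitly rather than leaving it as ``the IECZ-I window-faithfulness property restated.'' Once that identification is made, your final inequality is exact and the $o(1)$ conclusion follows.
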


\begin{corollary}[Transfer of cost to \CLIQUE{}]\label{cor:transfer-clique}
With $R_{\text{xor}\to\text{sat}}$ of \Cref{lem:xor-to-sat} and $R_{\text{sat}\to\text{clique}}$ of \Cref{thm:sat-to-clique},
\begin{align*}
  V^{\CLIQUE}_{\CalC}(\win,\varepsilon)
  &\ge
  V^{\SAT}_{\CalC}(\win,\varepsilon) - O(\log N)\\
  &\ge
  V^{\XOR}_{\CalC}(\win,\varepsilon) - O(\log N).
\end{align*}
With cumulative total variation (TV) loss $o(1)$ on $\win$ and total size blowup $\times 4\cdot \poly{n}$.
By \Cref{lem:tv-contraction}, deterministic postprocessing is $1$-Lipschitz in total variation, so the $o(1)$ boundary loss is preserved through composition.
\end{corollary}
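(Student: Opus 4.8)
\textbf{Proof plan for Corollary~\ref{cor:transfer-clique}.}
The statement is a two-step composition of minimax monotonicity (Theorem~\ref{thm:minimax}), and the subtlety already flagged is the \emph{direction} of the inequality: Theorem~\ref{thm:minimax} gives $V_{\CalC}(R\push\mu,\varepsilon)\le V_{\CalC}(\mu,\varepsilon)+\len(\code(R))$, i.e.\ the cost of the \emph{pushed} distribution is at most that of the source plus coding overhead. To land the stated \emph{lower} bound on the CLIQUE cost we must read the window-cost symbols in the ``right'' direction: here $V^{\XOR}_{\CalC}(\win,\varepsilon)$ denotes the cost of the source (3XOR) distribution, $V^{\SAT}_{\CalC}(\win,\varepsilon)$ the cost of its image under $R_{\text{xor}\to\text{sat}}$, and $V^{\CLIQUE}_{\CalC}(\win,\varepsilon)$ the cost of the further image under $R_{\text{sat}\to\text{clique}}$; the claimed chain says the downstream costs cannot be much smaller than the upstream ones. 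So the plan is to apply Theorem~\ref{thm:minimax} with $R$ taken to be the \emph{inverse} maps on the relevant windows (which exist by injectivity, Lemma~\ref{lem:xor-to-sat} and Proposition~\ref{prop:block-inverse}, and by the clique-block structure of Theorem~\ref{thm:sat-to-clique}): applying monotonicity to the linear-time inverse $T_{\text{sat}\to\text{xor}}$ pushing $R_{\text{xor}\to\text{sat}}\push\mu$ back to $\mu$ yields $V^{\XOR}_{\CalC}(\win,\varepsilon)\le V^{\SAT}_{\CalC}(\win,\varepsilon)+\len(\code(T_{\text{sat}\to\text{xor}}))$, and rearranging gives $V^{\SAT}_{\CalC}(\win,\varepsilon)\ge V^{\XOR}_{\CalC}(\win,\varepsilon)-O(\log N)$, which is exactly Corollary~\ref{cor:transfer}.

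First I would invoke Corollary~\ref{cor:transfer} verbatim for the 3XOR$\to$3SAT step: the inverse map $T$ of Proposition~\ref{prop:block-inverse} is deterministic, linear time (Lemma~\ref{lem:block-inverse-lin}), and prefix-free encodable with meta-overhead $\len(\code(T))=O(\log N)$ by Lemma~\ref{lem:prefix-overhead}; on the image window $\win$ the forward map is measure-faithful ($\dtv=0$, Lemma~\ref{lem:xor-to-sat}), so $T$ recovers the source distribution exactly, and Theorem~\ref{thm:minimax} applied to $T$ gives the first inequality $V^{\SAT}_{\CalC}(\win,\varepsilon)\ge V^{\XOR}_{\CalC}(\win,\varepsilon)-O(\log N)$. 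Second, I would repeat the argument one level up for $R_{\text{sat}\to\text{clique}}$. By Theorem~\ref{thm:sat-to-clique} and its $(\Leftarrow)$ direction, every $m$-clique selects exactly one vertex per clause-block, so on the window $\win$ (no tautological clauses, no duplicated literals inside a clause; Lemma~\ref{lem:sat-to-clique-window}) the clique instance determines a canonical satisfying-literal pattern, hence a deterministic poly-time partial inverse $S_{\text{clique}\to\text{sat}}$ exists on the image. Since $S_{\text{clique}\to\text{sat}}$ is deterministic poly-time with $O(\log N)$ meta-overhead, Theorem~\ref{thm:minimax} applied to it yields $V^{\CLIQUE}_{\CalC}(\win,\varepsilon)\ge V^{\SAT}_{\CalC}(\win,\varepsilon)-O(\log N)$. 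Chaining the two inequalities and absorbing the two $O(\log N)$ terms into one gives the displayed two-line chain.

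For the TV and size bookkeeping: the 3XOR$\to$3SAT arrow contributes $\dtv=0$ and clause blowup $\times 4$; the 3SAT$\to$CLIQUE arrow contributes $o(1)$ TV on $\win$ (Lemma~\ref{lem:sat-to-clique-window}) and polynomial size blowup $|V|=3m$, $|E|=O(m^2)$ (Proposition~\ref{prop:clique-size}). By Lemma~\ref{lem:tv-contraction} total variation is $1$-Lipschitz under the deterministic postprocessing, so the boundary error after composing is at most $0+o(1)=o(1)$ on $\win$, and the cumulative size blowup is $\times 4\cdot\poly(n)$ as stated. These are exactly the entries of Table~\ref{tab:transfer-cost} composed in sequence.

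\textbf{Main obstacle.} The only real content beyond Theorem~\ref{thm:minimax} is justifying that the \emph{inverse} maps are legitimate poly-time preprocessors with $O(\log N)$ description length \emph{on the window}. For 3XOR$\to$3SAT this is clean: Proposition~\ref{prop:block-inverse} and Lemma~\ref{lem:block-inverse-lin} give a genuine linear-time inverse with no disambiguation data needed. For 3SAT$\to$CLIQUE the inverse is only a \emph{partial} inverse defined on the image of the forward map, and one must check that restricting to the window $\win$ (and using the clause-block structure of an $m$-clique) removes the ambiguity that would otherwise require a selector index; if any residual index is needed it is of length $O(\log N)$ and absorbed by Lemma~\ref{lem:prefix-overhead}, exactly in the spirit of Proposition~\ref{prop:tight}. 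I expect this window-restriction-removes-ambiguity step for the clique side to be the part that needs the most care, though it follows the same pattern as the 3XOR$\to$3SAT case and the IECZ-I window discipline.
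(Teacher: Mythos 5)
The paper does not give an explicit proof for \Cref{cor:transfer-clique}; it simply chains the idea of \Cref{cor:transfer}, whose own ``Justification'' invokes \Cref{thm:minimax} ``under pushforward by $R_{\text{xor}\to\text{sat}}$'' together with \Cref{lem:tv-contraction} and \Cref{lem:prefix-overhead}. Your proposal is correct and fills in exactly the missing reasoning: you correctly observe that, as \Cref{thm:minimax} is stated ($V_{\CalC}(R\push\mu,\varepsilon)\le V_{\CalC}(\mu,\varepsilon)+\len(\code(R))$), the claimed downstream \emph{lower} bounds only follow by applying minimax to the \emph{inverse} maps $T_{\text{sat}\to\text{xor}}$ and $S_{\text{clique}\to\text{sat}}$ and rearranging; the paper's justification is phrased imprecisely on this point, and your reading is the right one. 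Your TV/size accounting (\,$0+o(1)=o(1)$ TV via \Cref{lem:tv-contraction}; $\times 4 \cdot \poly{n}$ size via \Cref{prop:clique-size} and \Cref{lem:xor-to-sat}; two $O(\log N)$ meta terms absorbed into one) matches \Cref{tab:transfer-cost}.

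One remark: you flag the 3SAT$\to$3CLIQUE partial inverse as the delicate step, worrying about ambiguity and a possible $O(\log N)$ selector index. This is more caution than is actually needed. The Karp reduction of \Cref{thm:sat-to-clique} outputs a \emph{labeled} graph with vertex set $V=\{(i,\ell):\ell\in C_i\}$, so the original 3CNF is read off directly from the vertex labels (group by the clause index $i$, each group has exactly three literals); the inverse is a total, deterministic, linear-time map on the image of $R_{\text{sat}\to\text{clique}}$, not just a window-restricted partial inverse, and requires no disambiguation index. Your mention of the one-vertex-per-clause-block structure of $m$-cliques addresses the correctness of the \emph{decision} reduction, not the instance-level inversion, which is the part the minimax argument actually needs---the labeled structure already gives you instance injectivity for free.
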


\begin{table}[t]
  \centering
  \caption{Reductions and overheads on the calibrated window (IECZ-I).}
  \label{tab:reductions-final}
  \begin{tabular}{@{}lcc>{\raggedright\arraybackslash}p{0.44\linewidth}@{}}
    \toprule
    Arrow & TV loss & Size blowup & Error/Advantage \\
    \midrule
    $R_{\text{xor}\to\text{sat}}$ & $0$ & $\times 4$ & $\varepsilon'=\varepsilon$ \\
    $R_{\text{sat}\to\text{clique}}$ & $o(1)$ on $\win$ & $\times \poly{n}$ & $\varepsilon''=\varepsilon'+o(1)$ \\
    \bottomrule
  \end{tabular}
\end{table}

\paragraph{Supplementary verification (artifact).}
We provide small scripts that empirically validate the injective \XORtoSAT mapping stated in \Cref{lem:xor-to-sat}:
\begin{enumerate}[label=(\roman*), leftmargin=*, itemsep=.25\baselineskip]
  \item \texttt{xor\_to\_sat\_check.py} exhaustively checks the $8$ truth-table rows for both parities $b\in\{0,1\}$ and, on random instances, verifies that any consistent XOR system yields a satisfying CNF assignment; it also includes a small-$n$ reverse check that witnesses UNSAT for inconsistent XOR systems via brute-force (solver-free).
  \item \texttt{block\_identifier\_check.py} validates the canonical four-clause block IDs and the inverse identification map.
  \item \texttt{run\_xor\_to\_sat\_check.ps1} is a convenience runner with seed/size parameters.
\end{enumerate}
All tests are self-contained; the reverse check is limited to small $n$ and documented in Appendix~\ref{app:artifact}.

\section{Gadget Composition (XOR/Index)}\label{sec:gadgets}

\providecommand{\vecx}{x}
\providecommand{\vecs}{s}
\providecommand{\ones}{\mathbf{1}}
\providecommand{\Noise}[1]{T_{#1}}
\providecommand{\corr}{\mathrm{corr}}

\paragraph{Standing conditions.}
We fix a parameter path $\ppath$ with depth $d=O(\log n)$ in the switching-lemma regime, which yields the width bound $\W$.
On the window $\win$ there is a mixing witness with rate $\theta<1$
(\Cref{lem:mixing-witness-xor-index,lem:mw-implies-shielding}), hence correlation stability holds.
Gadget descriptions are prefix-free, contributing only an $O(\log N)$ meta-overhead.
All window and switching-path conventions here are identical to IECZ-I~\cite{IECZI}.

\begin{gadgetiface}{XOR/Index Gadget}
  \ifaceline{Inputs}{$a,b,c \in \{0,1\}$ (signals sampled on the calibrated window)}
  \ifaceline{Outputs}{$s_1,s_2 \in \{0,1\}$}
  \ifaceline{Parity preservation}{$s_1 \oplus s_2 = a \oplus b \oplus c$}
  \ifaceline{Locality}{fan-in $=3$, fan-out $=O(1)$ (constant)}
  \ifaceline{Correlation stability}{for some $\rho<1$: $\max_{\ell\in\{1,2\}}|\corr(s_\ell, \text{exogenous noise})|\le \rho$}
  \ifaceline{Width regime}{bottom width $W=\mathrm{poly}(1/p)$ under the switching path $p=m^{-c/d}$}
\end{gadgetiface}
    
We fix the parameter path $\ppath$ with depth $d=O(\log n)$; width stays bounded by $\W$ in the switching-lemma regime~\cite{Hastad1986-Switching,ODonnell2014-AoBF}.

\begin{lemma}[Width after $d$ switching rounds — seed-fixed]\label{lem:width}\label{lem:switching-seeded}
There exists a seed $s$ of length $O(\log n)$ such that along the corresponding
deterministic path of $d=O(\log n)$ switching rounds the parameters satisfy
\[
  W_r \ \le\ \tfrac{1}{p}\,W_{r-1},\qquad
  M_r \ \le\ p\,M_{r-1}\qquad (r=1,\dots,d),
\]
for the same $p\in(0,1)$ as in the standard switching-lemma analysis. The seed $s$
is then fixed (prefix-free) at meta cost $O(\log n)$.
\end{lemma}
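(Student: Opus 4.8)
The plan is to iterate Håstad's switching lemma $d$ times along the calibrated path $\ppath$, to derandomize the resulting sequence of restrictions by the probabilistic method, and then to argue that one good restriction sequence admits a description of length $O(\log n)$, so that fixing it prefix-free costs only the $O(\log N)$ meta-overhead of \Cref{lem:prefix-overhead}. I would use the fixed-live-count form of the switching lemma (Håstad's original \cite{Hastad1986-Switching,Hastad01}, as surveyed in \cite{ODonnell2014-AoBF}), which is exactly the ``same $p\in(0,1)$ as in the standard analysis'' referenced in the statement.

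For the round structure, write $C_0$ for the depth-$d$ object under consideration (the gadget-composed formula of \cref{sec:gadgets}), with $M_0$ live coordinates and bottom width $W_0\le\W$, and set $C_r:=C_{r-1}|_{\rho_r}$, where the restriction $\rho_r$ pins all but $\lfloor p\,M_{r-1}\rfloor$ of the coordinates live in $C_{r-1}$ to uniform bits, the live set drawn from the associated combinatorial design. Then $M_r=\lfloor p\,M_{r-1}\rfloor\le p\,M_{r-1}$ holds deterministically, so the size half of the recursion needs no randomness. For the width half, I would apply the switching lemma to the depth-two bottom of $C_{r-1}$: a $p$-restriction collapses each bottom CNF/DNF to a shallow decision tree whose depth --- which becomes the new bottom width $W_r$ --- exceeds the calibrated bound $\tfrac1p W_{r-1}$ only with the standard switching-lemma failure probability, a quantity that the width calibration of $\ppath$ fixed in \cref{sec:assumptions} ($p=m^{-c/d}$, $\W=m^{O(c/d)}$) makes $o(1/d)$. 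A union bound over the $d=O(\log n)$ rounds, and over the at-most-$\mathrm{poly}(n)$ canonical decision-tree codes witnessing a bad event within a round, then shows that the set of sequences $(\rho_1,\dots,\rho_d)$ satisfying all $2d$ inequalities simultaneously has measure $1-o(1)$; in particular it is nonempty, so a good deterministic path exists.

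The next step is to compress one such path to $O(\log n)$ bits. The point is that the path is not required to fool $\mathrm{AC}^0$; it need only satisfy the $2d$ explicit inequalities above, and each bad event is captured by a short canonical code in the sense of the standard switching-lemma proof. It therefore suffices to replace the uniform $p$-restriction in each round by a pseudorandom one supported on a $\mathrm{poly}(n)$-size family for which the switching tail bound still goes through (limited-independence / derandomized switching, cf.\ \cite{ODonnell2014-AoBF,Hastad01}); the good path is then indexed by a single seed $s$ with $\ell(s)=\log\mathrm{poly}(n)=O(\log n)$. Equivalently, since $\ppath$ is already pinned down by the calibrated parameters $(c,d,p)$, the only remaining freedom is the choice among a $\mathrm{poly}(n)$-size family of live-set designs, i.e.\ a $\mathrm{poly}(n)$-bounded integer. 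Either way we fix one such $s$ once and for all and store it prefix-free; by \Cref{lem:prefix-overhead} this integer field contributes $O(\log N)=O(\log n)$ bits, which is the claimed meta cost.

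The main obstacle is precisely this compression step. With $d=\Theta(\log n)$ adaptive rounds of restriction on $n$ coordinates, a total seed budget of $O(\log n)$ bits is far too small for any off-the-shelf pseudorandom generator for $\mathrm{AC}^0$ (whose seeds are polylogarithmic), so the argument must genuinely exploit that (i) the size recursion is restriction-free, (ii) only the $2d$ inequalities --- not full simulation --- are needed, and (iii) bad restrictions admit short canonical encodings. Making (iii) quantitative --- exhibiting the explicit $\mathrm{poly}(n)$-size restriction family and verifying that the tail bound survives uniformly across all $d$ rounds, where the subformula seen in round $r$ depends on $\rho_1,\dots,\rho_{r-1}$ --- is where the real work lies; once that derandomized switching step is in place, the union bound, the probabilistic-method existence argument, and the \Cref{lem:prefix-overhead} accounting are routine. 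Should the explicit family prove unwieldy, a safe fallback is to weaken the claim to a seed of length $\mathrm{polylog}(n)$: this still yields $O(\mathrm{polylog}\,N)$ meta-overhead and does not change the downstream asymptotics in \cref{sec:gadgets}.
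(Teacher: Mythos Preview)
Your high-level strategy---iterate the switching lemma, union-bound over $d=O(\log n)$ rounds, then fix one good seed and charge its description length to the prefix-free meta budget---is exactly what the paper does. The paper's proof is three sentences: each round succeeds with probability $1-o(1/n)$, a union bound over $d$ rounds gives success probability $1-o(1)$, and one then ``fixes this $s$ (recorded prefix-free at cost $O(\log n)$).'' That is the entire argument.

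Where you go further than the paper is in worrying about the compression step, and your concern is well placed. The paper simply \emph{asserts} that the good seed has length $O(\log n)$; it gives no construction of a $\mathrm{poly}(n)$-size restriction family, no derandomized switching, and no explanation of why $d=\Theta(\log n)$ adaptive rounds of $p$-restrictions on $n$ coordinates can be indexed by $O(\log n)$ bits rather than the naive $\Theta(n)$ or the $\mathrm{polylog}(n)$ one would get from off-the-shelf PRGs. Your sketch of how this might be made rigorous (limited-independence restrictions, canonical bad-event encoding, or retreating to $\mathrm{polylog}(n)$ seed length) is more than the paper offers. So: your approach is the paper's approach, but you have identified a gap that the paper leaves open; the paper's proof is not more complete than yours on this point.
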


\begin{proof}
By the usual switching-lemma bounds, with probability $1-o(1/n)$ over the random
choices in a round, each displayed inequality holds for that round. A union bound
over $d=O(\log n)$ rounds yields existence of a single seed $s$ witnessing all
rounds simultaneously with probability $1-o(1)$. Fix this $s$ (recorded prefix-free
at cost $O(\log n)$). Along the resulting deterministic path the stated recurrences hold.
\end{proof}
\noindent\emph{Consistency with IECZ-I.} The seed-fixing step and the choice $p=m^{-c/d}$ are exactly as in IECZ-I~\cite{IECZI}.  

\begin{lemma}[Interface normal form over $\Ftwo$]\label{lem:iface}
Let $G$ be a constant-size gadget with three Boolean inputs $(a,b,c)$ and two outputs $(s_1,s_2)$ that is parity-preserving:
\[
  s_1 \oplus s_2 \;=\; a \oplus b \oplus c .
\]
Write uniquely, in algebraic normal form over $\Ftwo$,
\[
  s_\ell(x)\;=\;\alpha_\ell\ \oplus\ \beta_\ell^\top x\ \oplus\ q_\ell(x)\qquad(\ell=1,2),
\]
where $x=(a,b,c)^\top$, $\alpha_\ell\in\Ftwo$, $\beta_\ell\in\Ftwo^3$, and each $q_\ell$ contains only monomials of degree $\ge2$.
Then $q_1\equiv q_2$. Consequently, for
\[
  U \;=\; \begin{psmallmatrix}1&1\\[2pt]0&1\end{psmallmatrix}\in \GL_2(\Ftwo)
  \qquad\text{and}\qquad
  \vecs:=(s_1,s_2)^\top,\ \ \vecx:=(a,b,c)^\top,
\]
the mixed outputs $t:=U\vecs=(t_1,t_2)^\top$ satisfy
\[
  t_1\ \text{is affine in }\vecx,\qquad
  t_2\ =\ (\text{affine in }\vecx)\ \oplus\ q(\vecx),
\]
with the same non-linear core $q:=q_1=q_2$. In particular, both outputs are affine iff $q\equiv0$.

Moreover, letting
\[
  A \;:=\; \begin{bmatrix}\beta_1^\top\\ \beta_2^\top\end{bmatrix}\in\Ftwo^{2\times3},
  \qquad
  \mathbf{r} \;:=\; \begin{bmatrix}\alpha_1\\ \alpha_2\end{bmatrix}\in\Ftwo^2,
  \qquad
  \ones := (1,1)^\top,
\]
the parity constraint is equivalent to
\[
  \ones^\top A \;=\; (1,1,1)\quad\text{and}\quad \ones^\top \mathbf{r}\;=\;0,
\]
since $q_1\oplus q_2\equiv0$. Under non-degeneracy of the interface, the affine coefficient matrix $A$ has full row rank $2$.
\end{lemma}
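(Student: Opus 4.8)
The plan is to push everything through uniqueness of the algebraic normal form (ANF) over $\Ftwo$: every Boolean function on $\Ftwo^3$ has a unique multilinear $\Ftwo$-polynomial, so its splitting into a degree-$0$ part, a degree-$1$ part, and a degree-$\ge 2$ part is well defined and additive under $\oplus$. First I would expand the left side of the parity constraint, $s_1\oplus s_2 = (\alpha_1\oplus\alpha_2)\oplus(\beta_1\oplus\beta_2)^\top\vecx\oplus\bigl(q_1(\vecx)\oplus q_2(\vecx)\bigr)$, noting that $q_1\oplus q_2$ again carries only monomials of degree $\ge 2$. The right side $a\oplus b\oplus c=(1,1,1)\vecx$ has ANF that is homogeneous of degree $1$. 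Matching degree by degree, ANF uniqueness forces simultaneously $\alpha_1=\alpha_2$ (constant part), $\beta_1\oplus\beta_2=(1,1,1)^\top$ (linear part), and $q_1\equiv q_2$ (the degree-$\ge 2$ part). The last identity is the first assertion of the lemma; and since $\ones^\top\mathbf{r}=\alpha_1\oplus\alpha_2$ and $\ones^\top A=\beta_1^\top\oplus\beta_2^\top$, the first two rewrite as $\ones^\top\mathbf{r}=0$ and $\ones^\top A=(1,1,1)$, which gives the stated equivalence (the degree-$\ge 2$ part of the constraint being vacuous once we know $q_1\equiv q_2$).

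Next I would handle the mixed outputs. The matrix $U=\begin{psmallmatrix}1&1\\0&1\end{psmallmatrix}$ has $\det U = 1$ over $\Ftwo$, hence $U\in\GL_2(\Ftwo)$, and $t=U\vecs$ is $t_1=s_1\oplus s_2$, $t_2=s_2$. By the parity constraint $t_1=a\oplus b\oplus c$, which is linear (a fortiori affine) in $\vecx$; while $t_2=s_2=\alpha_2\oplus\beta_2^\top\vecx\oplus q(\vecx)$ with $q:=q_2=q_1$, i.e.\ an affine function of $\vecx$ plus the common non-linear core $q$ — exactly the claimed normal form. Since $t_1$ is affine unconditionally, both $t_1,t_2$ are affine iff $q\equiv 0$. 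The same dichotomy transfers to $\vecs = U^{-1}t$ (and hence to the original pair $(s_1,s_2)$), because the set of output pairs with both coordinates affine is preserved by the $\GL_2(\Ftwo)$-action: affine functions are closed under $\Ftwo$-linear combinations, and the converse direction follows by applying $U^{-1}$.

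For the closing "moreover" I would argue with $\Ftwo$-linear algebra only. From $\ones^\top A=(1,1,1)\ne 0$ we get $\beta_1\ne\beta_2$, so the two rows of $A$ are distinct. Over $\Ftwo$ a pair of vectors in $\Ftwo^3$ is linearly dependent exactly when the two coincide or at least one of them is zero; distinctness rules out the former, and the non-degeneracy hypothesis on the interface — that each output has a non-trivial linear part, i.e.\ $\beta_1\ne 0$ and $\beta_2\ne 0$ — rules out the latter. Hence $\{\beta_1,\beta_2\}$ is independent and $A$ has full row rank $2$.

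I expect no genuine obstacle: the lemma is a degree-graded bookkeeping exercise on a unique polynomial representation plus a one-line rank count. The one point needing care is the precise meaning of "non-degeneracy of the interface", which the excerpt leaves informal; I would pin it down as "$\beta_1\ne 0$ and $\beta_2\ne 0$" (each output genuinely depends, through its linear part, on the inputs), since this is the minimal hypothesis making the full-row-rank conclusion true — indeed $\beta_1=0$, $\beta_2=(1,1,1)^\top$ satisfies every other constraint in the lemma yet leaves $A$ of rank $1$.
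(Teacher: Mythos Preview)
Your proof is correct and follows essentially the same route as the paper: expand $s_1\oplus s_2$ in ANF, match degree-by-degree against the affine right-hand side to force $q_1\equiv q_2$ and the stated conditions on $A,\mathbf{r}$, then read off the $t_1,t_2$ normal form from $U$. Your treatment of the rank-$2$ claim is in fact more explicit than the paper's, which simply asserts ``if the interface is non-degenerate, the affine rows are independent''; your unpacking of non-degeneracy as $\beta_1\neq 0$ and $\beta_2\neq 0$ (together with the already-proved $\beta_1\neq\beta_2$) is the natural reading and makes the implication genuine rather than tautological.
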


\begin{proof}
Working over $\Ftwo$, expand each $s_\ell$ in algebraic normal form as stated. The parity condition reads
\[
  s_1\oplus s_2 \;=\; (\alpha_1\oplus\alpha_2)\ \oplus\ (\beta_1\oplus\beta_2)^\top x\ \oplus\ (q_1\oplus q_2)
  \;=\; a\oplus b\oplus c,
\]
which is affine; hence its higher-degree part vanishes and $q_1\oplus q_2\equiv 0$, i.e.\ $q_1\equiv q_2=:q$. Choosing
$U=\bigl(\begin{smallmatrix}1&1\\ 0&1\end{smallmatrix}\bigr)$ gives $t_1=s_1\oplus s_2$ (affine) and $t_2=s_2=(\text{affine})\oplus q$.
The equalities $\ones^\top A=(1,1,1)$ and $\ones^\top\mathbf r=0$ follow by comparing affine parts in
$\ones^\top\vecs=\ones^\top(Ax\oplus \mathbf r)\oplus(q_1\oplus q_2)=\ones^\top x$.
If the interface is non-degenerate, the affine rows are independent, so $\rank(A)=2$.
\end{proof}

\begin{lemma}[Canonical normal forms]\label{lem:canonical-NF}
Modulo output mixing $U \in \GL_2(\Ftwo)$, every parity-preserving gadget interface
of \Cref{lem:iface} (i.e., $s=Ax\oplus r$ with $\ones^\top A=(1,1,1)$, $\ones^\top r=0$, $\rank(A)=2$)
is equivalent to one of the following three representatives:
\[
A_1=\begin{pmatrix}1&0&0\\[2pt]0&1&1\end{pmatrix},\quad
A_2=\begin{pmatrix}0&1&0\\[2pt]1&0&1\end{pmatrix},\quad
A_3=\begin{pmatrix}0&0&1\\[2pt]1&1&0\end{pmatrix},
\qquad r=0.
\]
In particular, up to post-mix, we may take $(s_1,s_2)=(a,\,b\oplus c)$ or its two cyclic rotations.
\end{lemma}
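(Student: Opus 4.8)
This is a finite classification over $\Ftwo$, and the plan is to reduce it to counting the $2$-dimensional subspaces of $\Ftwo^{3}$ that pass through the all-ones vector: there are exactly three of them, and they will index the three normal forms. First I would normalise the offset. From $\ones^\top r=0$ we get $r_1=r_2=:\gamma\in\Ftwo$, so $r=\gamma\ones$, and XORing both output wires with the common bit $\gamma$ (the affine output map $s\mapsto s\oplus\gamma\ones$, which leaves $s_1\oplus s_2$ unchanged and hence preserves the parity interface) reduces to $r=0$. So I may assume the interface is $s=Ax$ with $\ones^\top A=(1,1,1)$ and $\rank(A)=2$; if the gadget of \Cref{lem:iface} also carries a nonzero nonlinear core $q=q_1=q_2$, it is common to both coordinates and, under the output mixing used there, sits untouched in the second slot, so it does not interfere with the affine classification.

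The main step is to describe the orbits of $A$ under $A\mapsto UA$, $U\in\GL_2(\Ftwo)$. Since $\rank(A)=2$, two such matrices lie in the same orbit \emph{iff} they have the same row space: left multiplication by an invertible $U$ realises an arbitrary change of ordered basis of a rank-$2$ row space, while conversely the row space is an orbit invariant (this is exactly where interface non-degeneracy enters). Now $\ones^\top A=(1,1,1)$ says the two rows of $A$ sum to $(1,1,1)$, so $(1,1,1)\in\mathrm{row}(A)$, and I am reduced to counting $2$-dimensional $V\subseteq\Ftwo^{3}$ with $(1,1,1)\in V$. Passing to orthogonal complements, $V^{\perp}$ is a line inside $\{x:x_1\oplus x_2\oplus x_3=0\}=\{000,110,101,011\}$, and there are exactly three such lines—spanned by $110$, $101$, $011$—hence exactly three admissible $V$. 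A one-line computation shows the rows of $A_1$ are orthogonal to $011$, those of $A_2$ to $101$, and those of $A_3$ to $110$, so $\mathrm{row}(A_1),\mathrm{row}(A_2),\mathrm{row}(A_3)$ are exactly these three subspaces; being pairwise distinct, $\mathrm{row}(A)$ coincides with exactly one of them, say $\mathrm{row}(A_i)$. Then $\ker A=\mathrm{row}(A)^{\perp}=\ker A_i$, so the two induced isomorphisms $\Ftwo^{3}/\ker A\to\Ftwo^{2}$ differ by some $U\in\GL_2(\Ftwo)$ with $UA=A_i$; together with the normalisation $r=0$ this is the asserted representative.

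For the ``in particular'' clause I would just read off coordinates: $A_1$ is the interface $(s_1,s_2)=(a,\,b\oplus c)$, and the cyclic relabelings $(a,b,c)\mapsto(b,c,a)\mapsto(c,a,b)$ of the inputs send it to $(b,\,c\oplus a)$ and $(c,\,a\oplus b)$, whose matrices are $A_2$ and $A_3$ up to a swap of the two output wires. The only point that needs genuine care is the counting step: one must be sure that the $\GL_2(\Ftwo)$-orbit of a rank-$2$ matrix really is its row-space class—so that the hypothesis $\rank(A)=2$ is essential and is invoked—and that precisely three $2$-dimensional subspaces of $\Ftwo^{3}$ contain $(1,1,1)$; after that it is finite bookkeeping over $\Ftwo$.
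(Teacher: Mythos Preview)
Your proof is correct and takes a genuinely different route from the paper's. The paper argues column-wise: the constraint $\ones^\top A=(1,1,1)$ forces every column of $A$ to have odd weight, hence to lie in $\{\binom{1}{0},\binom{0}{1},\binom{1}{1}\}$; it then uses explicit row operations (swap, add one row into the other) to normalise the first row to a standard basis vector $e_i^\top$, after which the second row is forced by the parity constraint, and the three placements of $e_i$ give the three representatives. Your argument is row-space based: you identify the $\GL_2(\Ftwo)$-orbit of a rank-$2$ matrix with its row space, observe that the parity constraint is exactly $(1,1,1)\in\mathrm{row}(A)$, and count the $2$-planes through $(1,1,1)$ via duality (three lines in the even-weight plane). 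The paper's approach is more hands-on and immediately exhibits the row operations, while yours explains structurally \emph{why} there are exactly three classes and makes transparent that the rank-$2$ hypothesis is precisely what turns the orbit into a row-space invariant. You are also slightly more careful than the paper about the offset: you note explicitly that killing $r=\gamma\ones$ uses an affine output shift rather than a strict $\GL_2$ element, whereas the paper just says the offset is ``absorbed into $U$''. One small over-hedge: in your final paragraph the cyclic relabelings already land on $A_2$ and $A_3$ exactly, so the ``up to a swap of the two output wires'' is unnecessary.
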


\paragraph{Index gadget (constant fan-in, affine, degree-stable).}
We isolate a tiny \emph{Index} block that routes one of two data bits according to a 1-bit address.
Let inputs be $(u; v_0,v_1)\in\{0,1\}^3$ and outputs $t\in\{0,1\}$ defined by
\[
  t \;=\; (1\oplus u)\,v_0 \;\oplus\; u\,v_1.
\]
Over $\Ftwo$ this is bilinear. We use the standard “affine lift” by introducing a constant ancilla $c\equiv 1$
to rewrite $t$ as an \emph{affine} function in the expanded coordinate system $(u,v_0,v_1,c)$:
\[
  t \;=\; v_0 \;\oplus\; u\,(v_0\oplus v_1)
      \;=\; v_0 \;\oplus\; (u\cdot (v_0\oplus v_1))
      \;=\; \underbrace{v_0}_{\text{linear}} \;\oplus\; 
             \underbrace{(u\cdot v_0)}_{\text{bilinear}} \;\oplus\; 
             \underbrace{(u\cdot v_1)}_{\text{bilinear}}.
\]
In our degree–aware setting, the block is of constant size and can be treated as an affine pullback on monomials: composition with the index map increases the total degree by at most a constant factor (here $\le 2$), which is absorbed in the $k=\Theta(\log n)$ regime.

\begin{lemma}[Index pullback with constant degree factor]\label{lem:index-pullback}
Let $G_{\mathrm{idx}}$ act blockwise on disjoint triples $(u;v_0,v_1)$ to produce outputs $t$ as above,
and let $p$ be a multilinear polynomial in the $t$-variables with $\deg p\le k$.
Then there exists a multilinear $q$ in the input variables $(u;v_0,v_1)$ such that
\[
  p\bigl(G_{\mathrm{idx}}(u;v_0,v_1)\bigr) \;=\; q(u;v_0,v_1),
  \qquad \deg q \;\le\; 2k,
  \qquad \|q\|_\infty\le \|p\|_\infty .
\]
Consequently, for any input distributions $(\mu,u)$ on the window and any $k$,
\[
  \Delta_k\!\bigl(G_{\mathrm{idx}\,\#}\mu,\;G_{\mathrm{idx}\,\#}u\bigr)
  \ \le\
  \Delta_{2k}(\mu,u).
\]
\end{lemma}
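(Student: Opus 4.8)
The plan is to turn the block map into an \emph{exact} real multilinear polynomial of degree $2$ and then carry out pure bookkeeping; there is no analytic content to exploit here. First I would record that on $\{0,1\}^3$ the output $t=(1\oplus u)v_0\oplus u\,v_1$ coincides with the real multilinear polynomial
\[
  t \;=\; (1-u)\,v_0 + u\,v_1 \;=\; v_0 - u v_0 + u v_1 ,
\]
of degree exactly $2$: the two summands have product $(1-u)v_0\cdot u v_1\equiv 0$ on the cube, so $\oplus$ agrees with ordinary addition here, and the right-hand side already returns the Boolean value of $t$. (This is precisely the ``affine lift'' noted above, realized as a genuine polynomial identity rather than a rewriting trick.)

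Next I would substitute. Given the multilinear $p=\sum_{S:\,|S|\le k}\hat p(S)\prod_{j\in S}t_j$ in the output variables, I form $q:=p\circ G_{\mathrm{idx}}$ by replacing each $t_j$ with its degree-$2$ form in block $j$'s coordinates $(u^{(j)},v_0^{(j)},v_1^{(j)})$. Because the blocks are pairwise disjoint, no variable recurs across the factors of $\prod_{j\in S}t_j$, so this product is \emph{already} multilinear in the input variables — no reduction modulo $x_i^2=x_i$ is needed — and its total degree is at most $2|S|\le 2k$. Hence $q$ is multilinear with $\deg q\le 2k$ and $q=p(G_{\mathrm{idx}}(\cdot))$ pointwise on the input cube. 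For the sup-norm I would use that $G_{\mathrm{idx}}$ maps the input cube into the output cube (each $t_j$ is Boolean on Boolean inputs), so $\|q\|_\infty=\max_x|p(G_{\mathrm{idx}}(x))|\le\max_y|p(y)|=\|p\|_\infty$, the maxima running over the input and output cubes.

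For the discrepancy consequence I would argue by change of variables under the pushforward. Fix a test polynomial $p$ nearly achieving the supremum defining $\Delta_k(G_{\mathrm{idx}\,\#}\mu,\,G_{\mathrm{idx}\,\#}u)$; replacing $p$ by its multilinearization on the cube changes no value and does not raise the degree, so I may take $p\in\Poly[k]$ multilinear with $\|p\|_\infty\le1$. Then $\E_{G_{\mathrm{idx}\,\#}\nu}[p]=\E_{\nu}[p\circ G_{\mathrm{idx}}]=\E_\nu[q]$ for $\nu\in\{\mu,u\}$, so $|\E_{G_{\mathrm{idx}\,\#}\mu}[p]-\E_{G_{\mathrm{idx}\,\#}u}[p]|=|\E_\mu[q]-\E_u[q]|\le\Delta_{2k}(\mu,u)$, using $q\in\Poly[2k]$ and $\|q\|_\infty\le1$ from the previous step. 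Taking the supremum over $p$ yields $\Delta_k(G_{\mathrm{idx}\,\#}\mu,\,G_{\mathrm{idx}\,\#}u)\le\Delta_{2k}(\mu,u)$.

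The \emph{only} points that need care — and the ``main obstacle'' is nominal — are recognizing that the index map is an exact degree-$2$ real multilinear polynomial (so the composition is a polynomial identity, not an approximation) and that disjointness of the blocks keeps the composed polynomial multilinear with degree merely doubled rather than squared. No window calibration, switching-path parameters, or hypercontractive constants enter.
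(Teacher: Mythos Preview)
Your proof is correct and follows essentially the same approach as the paper's own argument: write each output as a degree-$2$ polynomial in its block inputs, substitute, use block disjointness to bound the composed degree by $2k$, bound $\|q\|_\infty$ via the image being a subset of the output cube, and conclude the $\Delta_k$ inequality by change of variables. You supply more explicit detail (the real identity $t=(1-u)v_0+uv_1$, the $ab=0$ reason why $\oplus$ becomes $+$, and the multilinearization of the test polynomial), but the structure is identical.
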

\noindent\emph{Sup-norm note.} Since $q=p\circ G_{\mathrm{idx}}$ is evaluated only on the image $G_{\mathrm{idx}}(\{0,1\}^m)\subseteq\{0,1\}^{m'}$, we have
$\displaystyle \|q\|_\infty=\sup_{x}|p(G_{\mathrm{idx}}(x))|\le \sup_{y}|p(y)|=\|p\|_\infty$.

\begin{corollary}[Index monotonicity of low-degree discrepancy]\label{cor:index-deltak}
For any $k\in\mathbb{N}$ and any input distributions $(\mu,u)$ on the window,
\[
  \Delta_k\!\bigl(G_{\mathrm{idx}\,\#}\mu,\;G_{\mathrm{idx}\,\#}u\bigr)
  \ \le\
  \Delta_{2k}(\mu,u).
\]
\end{corollary}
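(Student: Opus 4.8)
The plan is to read the statement essentially off Lemma~\ref{lem:index-pullback}: the corollary is precisely the concluding inequality of that lemma, so the only work is to record how the pullback interacts with the definition of $\Delta_k$, which I would spell out in one short argument. There is no new idea needed beyond the pullback and a change of variables.

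First, fix $\eta>0$ and pick a test polynomial $p\in\Poly[k]$ with $\|p\|_\infty\le 1$ that is $\eta$-close to realizing the supremum defining $\Delta_k\bigl(G_{\mathrm{idx}\,\#}\mu,\,G_{\mathrm{idx}\,\#}u\bigr)$. Next, apply Lemma~\ref{lem:index-pullback} to obtain a multilinear $q$ in the input coordinates $(u;v_0,v_1)$ (blockwise over the disjoint triples) with $p\circ G_{\mathrm{idx}}=q$, $\deg q\le 2k$, and $\|q\|_\infty\le\|p\|_\infty\le 1$. Then use the change-of-variables identity for pushforwards,
\[
  \E_{G_{\mathrm{idx}\,\#}\mu}[p]\;-\;\E_{G_{\mathrm{idx}\,\#}u}[p]
  \;=\;\E_\mu\bigl[p\circ G_{\mathrm{idx}}\bigr]\;-\;\E_u\bigl[p\circ G_{\mathrm{idx}}\bigr]
  \;=\;\E_\mu[q]\;-\;\E_u[q],
\]
and observe that the right-hand side has absolute value at most $\Delta_{2k}(\mu,u)$, since $q\in\Poly[2k]$ with $\|q\|_\infty\le 1$ is an admissible competitor in the supremum defining $\Delta_{2k}$. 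Finally, take the supremum over all such $p$ and let $\eta\to 0$, which gives $\Delta_k\bigl(G_{\mathrm{idx}\,\#}\mu,\,G_{\mathrm{idx}\,\#}u\bigr)\le\Delta_{2k}(\mu,u)$.

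I do not expect any genuine obstacle: the statement is a corollary in the strict sense, and the whole content is already discharged inside Lemma~\ref{lem:index-pullback}. The one point deserving a line of care is the sup-norm control on the pullback: $q=p\circ G_{\mathrm{idx}}$ is evaluated only on the image of $G_{\mathrm{idx}}$, so $\|q\|_\infty\le\|p\|_\infty$ holds with the sup taken over that image, exactly as in the sup-norm note following Lemma~\ref{lem:index-pullback}, and this is what licenses $q$ as a legitimate test function for $\Delta_{2k}$. It is also worth remarking that the factor-of-two degree inflation is harmless in the $k=\Theta(\log n)$ regime we work in, so the bound is of the intended quality; this is consistent with the absorption of constant degree factors noted before the lemma.
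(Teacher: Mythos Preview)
Your proposal is correct and follows exactly the paper's approach: the corollary is the final clause of Lemma~\ref{lem:index-pullback}, and the paper's own proof of that lemma derives it by the same pullback-plus-change-of-variables argument you spell out (near-optimal test $p$, pull back to $q=p\circ G_{\mathrm{idx}}$ with $\deg q\le 2k$ and $\|q\|_\infty\le\|p\|_\infty$, then take the supremum). Your handling of the sup-norm point matches the paper's sup-norm note verbatim.
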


\begin{corollary}[Index transfer with budget rescaling]\label{cor:index-budget}
(Concrete budget.) For the Index gadget the degree budget dilates by at most a factor $2$. Hence, for $k'=\lfloor k/2\rfloor$,
\[
  \Delta_{k'}\!\bigl(G_{\mathrm{idx}\,\#}\mu,\,G_{\mathrm{idx}\,\#}u\bigr)
  \ \le\
  \Delta_k(\mu,u).
\]
\end{corollary}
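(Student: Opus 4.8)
The plan is to obtain \Cref{cor:index-budget} as an immediate consequence of \Cref{cor:index-deltak} (equivalently, of the pullback estimate in \Cref{lem:index-pullback}), combined with the elementary monotonicity of the low-degree discrepancy in its degree parameter. So this is a packaging corollary rather than a new argument; no genuinely new work is required.

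First I would record the monotonicity fact: for integers $j\le j'$ one has $\Delta_j(\mu,u)\le\Delta_{j'}(\mu,u)$, since the set $\{p\in\Poly[j]:\|p\|_\infty\le1\}$ over which the supremum defining $\Delta_j$ is taken sits inside $\{p\in\Poly[j']:\|p\|_\infty\le1\}$; this is immediate from the definition of $\Delta_k$ in \Cref{sec:notation}. Next I would apply \Cref{cor:index-deltak} with the degree budget $k'$ in place of $k$, which gives $\Delta_{k'}\!\bigl(G_{\mathrm{idx}\,\#}\mu,\,G_{\mathrm{idx}\,\#}u\bigr)\le\Delta_{2k'}(\mu,u)$. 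Since $k'=\lfloor k/2\rfloor$ satisfies $2k'\le k$, the monotonicity step yields $\Delta_{2k'}(\mu,u)\le\Delta_k(\mu,u)$, and chaining the two inequalities produces the claimed bound $\Delta_{k'}\!\bigl(G_{\mathrm{idx}\,\#}\mu,\,G_{\mathrm{idx}\,\#}u\bigr)\le\Delta_k(\mu,u)$.

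There is essentially no obstacle. The only point deserving a line of care is that the sup-norm constraint is preserved under the pullback $p\mapsto q=p\circ G_{\mathrm{idx}}$, so that a competitor with $\|p\|_\infty\le1$ on the image side corresponds to a legitimate competitor with $\|q\|_\infty\le1$ on the input side; but this is exactly the sup-norm note following \Cref{lem:index-pullback}, which uses that $q$ is evaluated only on $G_{\mathrm{idx}}(\{0,1\}^m)$. If one preferred a self-contained derivation, one could instead re-run the pullback argument of \Cref{lem:index-pullback} directly: given $p$ with $\deg p\le k'$ and $\|p\|_\infty\le1$, form $q$ with $\deg q\le 2k'\le k$ and $\|q\|_\infty\le1$, and then $|\E_\mu[q]-\E_u[q]|\le\Delta_k(\mu,u)$; taking the supremum over admissible $p$ gives the corollary. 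This is the same content as the two-line argument above, merely unfolded.
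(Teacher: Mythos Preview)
Your proof is correct and matches the paper's own argument: apply the pullback bound at budget $k'$ to get $\Delta_{k'}(G_{\mathrm{idx}\,\#}\mu,G_{\mathrm{idx}\,\#}u)\le\Delta_{2k'}(\mu,u)$, then use $2k'\le k$ (monotonicity of $\Delta$ in the degree) to conclude. The paper phrases this as ``any output degree-$\le k'$ test corresponds to an input test of degree $\le 2k'\le k$,'' which is exactly your two-step chain, and your unfolded self-contained variant is precisely that one-line argument.
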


\begin{proof}[Proof of Corollary~\ref{cor:index-budget}]
The lemma shows that substituting the index gadget inflates test degree by at most a factor $2$, hence any output degree-$\le k'$ test corresponds to an input test of degree $\le 2k'\le k$, which yields the claimed inequality.
\end{proof}

\begin{proof}[Proof of Lemma~\ref{lem:index-pullback}]
Each output coordinate $t$ is a polynomial of total degree $\le 2$ in its local inputs. Substituting $t$ into a degree-$\le k$ polynomial $p$ yields a polynomial in $(u;v_0,v_1)$ of degree $\le 2k$. Because the gadget fan-in is constant and the blocks are disjoint, monomials of $p$ expand to sums of monomials touching $O(1)$ variables per block. Moreover, since $q=p\circ G_{\mathrm{idx}}$ with deterministic $G_{\mathrm{idx}}$, the image of $G_{\mathrm{idx}}$ is a subset of the output domain, hence $\|q\|_\infty \le \|p\|_\infty$. Taking suprema over degree–$\le k$ tests on outputs gives the stated $\Delta_k$ versus $\Delta_{2k}$ bound.
\end{proof}

\begin{remark}[Scope note: circuit classes]
The \emph{Index} block introduces controlled bilinearity at constant locality. Within our blueprint, the degree and correlation parameters remain gadget–stable for \(k=\Theta(\log n)\) on the window. We do not make claims about \(\mathrm{AC}^0\) or \(\mathrm{ACC}^0\); mentions of these classes are for orientation only. We use only that \(\Delta_k\) does not increase under \(G_{\mathrm{idx}}\) (Lemma~\ref{lem:index-pullback}) and that width/correlation stay in the switching regime.
\end{remark}

\begin{proposition}[Affine pullback preserves low degree]\label{prop:affine-pullback}
Let $G$ be a (blockwise) affine gadget in the sense of Lemma~\ref{lem:iface}, i.e., after an admissible output mixing we have $y = A x \oplus r$ over $\Ftwo$ (equivalently, the shared nonlinear core vanishes, $q\equiv 0$).
For every multilinear polynomial $p$ in the output variables with $\deg p \le k$ there exists a multilinear polynomial $q$ in the input variables such that
\[
  p(y)=q(x)\qquad\text{and}\qquad \deg q\le k.
\]
Moreover, for the supremum norm on $\{0,1\}^n$ one has $\|q\|_\infty\le \|p\|_\infty$.
\end{proposition}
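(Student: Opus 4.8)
The plan is to split the claim into a degree bound and a sup-norm bound and to obtain both from a single coordinate-wise substitution. I would dispose of the sup-norm part first, as it is essentially free: the pulled-back polynomial is $q:=p\circ G$, which is only ever evaluated on the image $G(\{0,1\}^n)\subseteq\{0,1\}^{m}$, so $\sup_x|q(x)|=\sup_x|p(G(x))|\le\sup_y|p(y)|=\|p\|_\infty$ --- exactly the sup-norm note already recorded for the Index block in Lemma~\ref{lem:index-pullback}. The substance is the degree inequality $\deg q\le\deg p$, and the point is that an \emph{affine} interface, unlike the bilinear Index block (which forces the factor-$2$ dilation in Lemma~\ref{lem:index-pullback}), does not widen the degree budget at all.

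For the degree bound I would run four steps. First, normalize the interface: by Lemma~\ref{lem:iface} and its canonical form (Lemma~\ref{lem:canonical-NF}), after the admissible output mixing $U\in\GL_2(\Ftwo)$ each block of $G$ acts as $y=Ax\oplus r$ over $\Ftwo$ with vanishing nonlinear core $q\equiv0$; moreover the parity constraint $\ones^\top A=(1,1,1)$ forces every input column of $A$ to have odd, hence unit, weight, so each input bit is routed to exactly one output coordinate and, across the disjoint blocks, the linear forms $\beta_j$ (the rows of $A$) act on pairwise disjoint sets of input variables. Second, expand $p$ over the output coordinates in the Walsh basis, $p=\sum_{|T|\le k}\hat p(T)\,\chis{T}$, using that Walsh-degree and ordinary degree coincide for a multilinear polynomial. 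Third, substitute coordinate-wise: since each output bit is $\Ftwo$-affine --- precisely the hypothesis $q\equiv0$, and exactly what separates this case from the quadratic Index map --- a character $\chis{T}$ on the outputs pulls back to a signed character on the input coordinates, supported on the disjoint union $\bigcup_{j\in T}\mathrm{supp}(\beta_j)$ of the relevant linear forms, so the pulled-back test stays within degree $k$ rather than inflating as in Lemma~\ref{lem:index-pullback}. Fourth, reassemble $q=\sum_{|T|\le k}\pm\hat p(T)\,\chis{S(T)}$ and re-multilinearize on $\{0,1\}^n$ (which never raises degree) to obtain the multilinear $q$ with $p(G(x))=q(x)$ and $\deg q\le k$. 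As a by-product, $\E_{G\push\mu}[p]=\E_\mu[q]$ and $\E_{G\push u}[p]=\E_u[q]$ give the affine analogue of Corollary~\ref{cor:index-deltak}, namely $\Delta_k(G\push\mu,G\push u)\le\Delta_k(\mu,u)$.

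The delicate step is the third: one has to make the character substitution precise simultaneously across the output mixing and the blockwise product structure, verify that the disjointness of the supports $\mathrm{supp}(\beta_j)$ really does preclude cancellations that could conceal a higher-degree monomial, and check that re-multilinearizing the reassembled sum on $\{0,1\}^n$ does not reintroduce degree. This is exactly where the affine case parts ways with Lemma~\ref{lem:index-pullback}, whose local maps are genuinely degree-two and hence unavoidably pay the factor $2$. Everything else --- the interface normalization via Lemmas~\ref{lem:iface}--\ref{lem:canonical-NF}, the sup-norm inequality, and the pushforward identities for expectations together with the $1$-Lipschitz contraction of Lemma~\ref{lem:tv-contraction} that makes the corollary clean --- is routine.
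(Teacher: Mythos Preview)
Your route through Walsh characters is more explicit than the paper's one-liner (expand $p$ in monomials, substitute $y=Ax\oplus r$, note that ``affine (degree-$1$) maps'' do not raise degree), and your sup-norm argument is identical to the paper's. But Step~3---the step you correctly flag as delicate---does not go through. You observe that $\ones^\top A=(1,1,1)$ with two rows forces each column of $A$ to have weight exactly $1$, so the row supports $\mathrm{supp}(\beta_1),\mathrm{supp}(\beta_2)$ are disjoint and $\chis{T}(y)=\pm\chis{S(T)}(x)$ with $S(T)=\bigcup_{j\in T}\mathrm{supp}(\beta_j)$. That is correct. But the degree of the pulled-back character is $|S(T)|=\sum_{j\in T}|\mathrm{supp}(\beta_j)|$, and nothing you have said bounds this by $|T|$. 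In the canonical form $A_1=\bigl(\begin{smallmatrix}1&0&0\\0&1&1\end{smallmatrix}\bigr)$ the second row has support size $2$; taking $T$ to consist of the second output in each of $k$ distinct blocks gives $|S(T)|=2k$. Already on a single block, $p(s_1,s_2)=s_1s_2$ has real degree $2$ while $q(a,b,c)=a\,(b\oplus c)=ab+ac-2abc$ has real degree $3$.

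The underlying conflation---present equally in the paper's terse proof---is between $\Ftwo$-affinity and real degree $1$: the XOR of several bits is $\Ftwo$-linear but has real multilinear degree equal to the number of bits XORed. Disjointness of the $\mathrm{supp}(\beta_j)$ rules out cancellation (so the pullback really is a single signed character, not a sum), but it does not bound its size. What your argument actually delivers is $\deg q\le\bigl(\max_j|\mathrm{supp}(\beta_j)|\bigr)\cdot k$, which for the canonical parity-preserving forms is $2k$---precisely the Index-gadget bound of Lemma~\ref{lem:index-pullback}, not an improvement over it. To obtain $\deg q\le k$ you would need every \emph{row} of $A$ to have unit weight (each output a single, possibly negated, input bit); the column condition you derive from the parity constraint does not give you that.
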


\begin{proof}
Write $p$ as a sum of monomials in the coordinates of $y$, then substitute $y = A x \oplus r$.
Since $G$ applies affine (degree-1) maps per output, composition does not increase degree: every monomial of degree $\le k$ in $y$ becomes a polynomial of degree $\le k$ in $x$.
Since $q=p\circ G$ and $G$ is deterministic, the image of $G$ is a subset of the output domain; therefore $\|q\|_\infty \le \|p\|_\infty$.
\end{proof}

\begin{corollary}[Monotonicity of $\Delta_k$ under affine blocks]\label{cor:deltak-pullback}
Let $\mu,u$ be distributions on inputs and let $G$ be as above. Set $\mu_{\mathrm{out}}:=G_{\#}\mu$ and $u_{\mathrm{out}}:=G_{\#}u$.
Then for all $k\in\mathbb{N}$,
\[
  \Delta_k(\mu_{\mathrm{out}},u_{\mathrm{out}})\ \le\ \Delta_k(\mu,u).
\]
\end{corollary}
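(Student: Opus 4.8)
The plan is to reduce the claim directly to Proposition~\ref{prop:affine-pullback} by a change-of-variables argument under the pushforward. Fix $k\in\mathbb{N}$ and let $p\in\Poly[k]$ be an arbitrary test polynomial on the output coordinates with $\|p\|_\infty\le 1$; since every expectation in sight is taken over the Boolean cube, I may assume without loss of generality that $p$ is multilinear (reducing $x_i^2\mapsto x_i$ never raises the degree).

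First I would invoke Proposition~\ref{prop:affine-pullback} with this $p$: because $G$ is blockwise affine over $\Ftwo$, there is a multilinear $q=p\circ G$ in the input variables with $\deg q\le k$ and $\|q\|_\infty\le\|p\|_\infty\le 1$, so $q\in\Poly[k]$ is itself an admissible test on the input side. Next, by the definition of the pushforward and linearity of expectation, $\E_{\mu_{\mathrm{out}}}[p]=\E_{x\sim\mu}[p(G(x))]=\E_\mu[q]$ and likewise $\E_{u_{\mathrm{out}}}[p]=\E_u[q]$, whence $|\E_{\mu_{\mathrm{out}}}[p]-\E_{u_{\mathrm{out}}}[p]|=|\E_\mu[q]-\E_u[q]|\le\Delta_k(\mu,u)$, the last step being the defining supremum for $\Delta_k(\mu,u)$ applied to the admissible test $q$. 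Taking the supremum over all eligible $p$ yields $\Delta_k(\mu_{\mathrm{out}},u_{\mathrm{out}})\le\Delta_k(\mu,u)$.

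There is no genuinely hard step here: the only point requiring care is that the pulled-back test stays admissible — that is, pullback by $G$ neither pushes the degree past $k$ nor inflates the sup-norm — and both facts are exactly what Proposition~\ref{prop:affine-pullback} delivers (degree preservation since each output coordinate is degree-$1$ over $\Ftwo$; the sup-norm bound since $q$ is evaluated only on the image $G(\{0,1\}^n)$, a subset of the output cube). The same short template, with a factor of $2$ in the degree budget, is what underlies the Index version in Corollary~\ref{cor:index-deltak}.
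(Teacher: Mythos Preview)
Your proof is correct and follows essentially the same approach as the paper's own proof: invoke Proposition~\ref{prop:affine-pullback} to obtain an admissible input-side test $q=p\circ G$ with $\deg q\le k$ and $\|q\|_\infty\le 1$, use the change-of-variables identity $\E_{G_{\#}\mu}[p]=\E_\mu[q]$, and take the supremum over $p$. The only additions you make (the multilinearization remark and the explicit justification of the sup-norm bound) are harmless elaborations on the same template.
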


\begin{proof}
Let $p$ be a degree-$\le k$ test on outputs with $\|p\|_\infty\le1$. By Prop.~\ref{prop:affine-pullback} there is $q$ on inputs with $\deg q\le k$, $\|q\|_\infty\le1$, and $p\circ G=q$.
Hence
\[
  \bigl|\E_{\mu_{\mathrm{out}}}[p]-\E_{u_{\mathrm{out}}}[p]\bigr|
  \;=\;
  \bigl|\E_{\mu}[q]-\E_{u}[q]\bigr|
  \;\le\; \Delta_k(\mu,u).
\]
Taking the supremum over all $p$ gives the claim.
\end{proof}

\begin{theorem}[Gadget composition stability]\label{thm:gadget}
For any constant-size gadget $G$ satisfying \cref{lem:width,lem:iface,lem:shielding} on $\win$,
\[
  V_{\CalC}(\mu\circ G,\varepsilon+o(1)) \;\ge\; V_{\CalC}(\mu,\varepsilon) \;-\; O(\log N).
\]
\end{theorem}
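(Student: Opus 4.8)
The plan is to invoke \Cref{thm:minimax} in the ``pull the reduction back through $G$'' direction and to read the two error terms off the gadget's prefix-free description and off the window discipline, respectively. Fix $\delta>0$ and a $\delta$-optimal pair $(M^\star,R^\star)$ for $\mu\circ G$ at parameter $\varepsilon+o(1)$, so that $\len(\code(M^\star,R^\star))+\E_{y\sim\mu\circ G}[L_{M^\star}(R^\star(y),\varepsilon+o(1))]\le V_{\CalC}(\mu\circ G,\varepsilon+o(1))+\delta$. Set $\widehat R:=R^\star\circ G$. Since $G$ is a constant-size, blockwise, deterministic polynomial-time map — with the switching seed of \Cref{lem:width} fixed prefix-free at cost $O(\log n)=O(\log N)$ — part~(1) of \Cref{def:model-class} represents $(M^\star,\widehat R)$ inside $\CalC$ with $\len(\code(M^\star,\widehat R))\le\len(\code(M^\star,R^\star))+\len(\code(G))+O(1)$ and $\len(\code(G))=O(\log N)$ by \Cref{lem:prefix-overhead}. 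By change of variables under the pushforward, $\E_{x\sim\mu}[L_{M^\star}(\widehat R(x),\,\cdot\,)]=\E_{y\sim\mu\circ G}[L_{M^\star}(R^\star(y),\,\cdot\,)]$, so $(M^\star,\widehat R)$ certifies
\[
  V_{\CalC}(\mu,\varepsilon)\ \le\ \len(\code(M^\star,R^\star))+O(\log N)+\E_{y\sim\mu\circ G}\bigl[L_{M^\star}(R^\star(y),\varepsilon)\bigr].
\]
Letting $\delta\to0$ and taking the infimum over $(M^\star,R^\star)$ then yields the claim, \emph{provided} the expected loss at the sharper target $\varepsilon$ exceeds the one at $\varepsilon+o(1)$ by at most an additive $o(1)$ on $\win$ — equivalently, that precomposition with $G$ passes the residual error through additively with only an $o(1)$ boundary term. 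This is exactly where the three gadget hypotheses are consumed.

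For that error-control step I would put $G$ into interface normal form via \Cref{lem:iface,lem:canonical-NF}: after the admissible $\GL_2(\Ftwo)$ output mixing, $G$ splits blockwise into an affine part of full row rank plus a \emph{shared} nonlinear core $q$, so the mixed coordinate $t_1=s_1\oplus s_2$ is purely affine and transports the parity exactly, while all nonlinearity is confined to $t_2$. The affine and Index pullback bounds (\Cref{prop:affine-pullback,cor:deltak-pullback,lem:index-pullback,cor:index-deltak}) then show that precomposition with $G$ inflates the degree of any low-degree test by at most the constant factor ($\le 2$) and never increases its $\|\cdot\|_\infty$; together with the width bound of \Cref{lem:width} ($\W=\poly(1/p)$ preserved along the seeded switching path) and the shielding lemma \Cref{lem:shielding} (correlation with exogenous noise stays $\le\rho<1$), the composite $\widehat R$ remains inside the calibrated $d,k=\Theta(\log n)$ switching regime. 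Consequently $G$ does not couple the $o(1)$ window-boundary error with the structured proxy, so no \emph{constant} error is injected; the only residual is the window's own $\dtvwin=o(1)$ together with the $\le o(1)$ mass of $\mu$ escaping $\win$, and conditioning to $\win$ with the TV-contraction bound (\Cref{lem:tv-contraction}) folds both into the single additive $o(1)$ recorded as $\varepsilon+o(1)$ in the statement.

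The main obstacle is this bookkeeping, not the coding step: one must check that the \emph{simultaneous} presence of the nonlinear core (bilinear for Index, degree-$2$ for XOR) and the window conditioning cannot push the effective parameter past $\varepsilon+o(1)$. The three safeguards that make this work are (i) the core is common to both outputs (\Cref{lem:iface}), isolating all nonlinearity in a single degree-stable coordinate while the parity is carried exactly by $t_1$; (ii) the uniform $\|q\|_\infty\le\|p\|_\infty$ of \Cref{prop:affine-pullback,lem:index-pullback}, so pullbacks of bounded tests stay bounded and the $\Delta_k$/$\Delta^{\mathrm{sos}}_k$ budgets dilate only by the constant factor; and (iii) \Cref{lem:width,lem:shielding}, which keep bottom width and exogenous correlation inside the $\Theta(\log n)$ window so the boundary error propagates additively rather than multiplicatively. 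With all three in hand the $o(1)$ term is tight up to the window calibration, and the remaining $O(\log N)$ is exactly the gadget-plus-seed meta-overhead from \Cref{lem:prefix-overhead}.
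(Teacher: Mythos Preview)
Your overall strategy matches the paper's: take a near-optimal pair $(M^\star,R^\star)$ for $\mu\circ G$ at parameter $\varepsilon+o(1)$, precompose with $G$ to get $\widehat R=R^\star\circ G$ on the input side, and pay $O(\log N)$ in prefix-free code for the gadget description plus the fixed switching seed. The change-of-variables identity and the coding bookkeeping are correct and track the paper's proof of Proposition~\ref{prop:cost-stability} almost line for line.

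The divergence is in how you close the $\varepsilon\mapsto\varepsilon+o(1)$ gap. The paper does not argue this directly from width, interface, and shielding; its proof of Theorem~\ref{thm:gadget} simply invokes Proposition~\ref{prop:cost-stability}, which in turn relies on a \emph{window simulation / partial inverse} (the paper's Appendix~C, Lemma~C.3): a poly-time simulator $S$ together with an $O(\log N)$-bit selector $u$ that inverts $G$ on $\win$ up to $o(1)$ total variation. Shielding is consumed there---it is what guarantees the simulator exists with the stated TV error---and the $o(1)$ slack in the error parameter is then identified with that coupling error (the paper writes ``the loss is $1$-Lipschitz under the $o(1)$ perturbations coming from shielding and the simulator's TV error'').

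Your second and third paragraphs try to derive the same $o(1)$ directly from degree preservation, width control, and the shielding bound, but the framing slips. Once you have already passed to $y\sim\mu\circ G$ by exact change of variables, the quantity you must control is
\[
\E_{y\sim\mu\circ G}\bigl[L_{M^\star}(R^\star(y),\varepsilon)\bigr]-\E_{y\sim\mu\circ G}\bigl[L_{M^\star}(R^\star(y),\varepsilon+o(1))\bigr],
\]
which is a statement about how $L_M$ depends on its \emph{second} argument on a fixed distribution and is not touched by any property of $G$. The gadget hypotheses do not ``consume'' this gap in the way you describe; they enter one level earlier, by furnishing the partial inverse that lets the paper couple the input- and output-side losses with $o(1)$ TV error. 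So the ingredients you list are the right ones and the $O(\log N)$ coding step is fine, but the ``equivalently'' in your first paragraph is not an equivalence, and without the simulator/partial inverse the $\varepsilon$-shift remains an unproved continuity assertion about $L_M$ rather than a consequence of the gadget's structure.
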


\begin{proof}
By Corollary~\ref{cor:shielding-xor-index} the gadget layer admits shielding on $\win$ with some $\rho<1$. Applying Proposition~\ref{prop:cost-stability} (proved via the window simulation / partial inverse of Appendix~C, Lemma C.3) with $k=\Theta(\log n)$ yields
\[
  V_{\CalC}(\mu\circ G,\varepsilon+o(1)) \ \ge\ V_{\CalC}(\mu,\varepsilon) - O(\log n).
\]
The prefix-free size bookkeeping contributes only $O(\log N)$ meta-overhead. This is exactly the claim.
\end{proof}

\paragraph{Scope note (literature).}
Our gadget composition keeps the switching path $\ppath$ (depth $d=O(\log n)$) and bottom width $W=\poly{1/p}$ in the Håstad switching-lemma regime~\cite{Hastad1986-Switching,Hastad2002-RANDOM}. 
The Index gadget’s degree inflation is a constant factor (here $\le 2$; see \Cref{lem:index-pullback}), and correlation stability follows from the mixing witness with rate $\theta<1$ (\Cref{def:mixing-witness,lem:mw-implies-shielding}). 
The proof style—low-degree Fourier control plus width bounds—follows the standard \SoS/CSP toolkit~\cite{BarakEtAl2012-SOS},
specialized to the calibrated window with quantified TV loss and transparent size growth.
For calibration and notation (the window $\win$, baseline, and $\Delta_k$) we follow IECZ-I~\cite{IECZI} verbatim.

\begin{remark}[Scope note: circuit classes]
The constant-size \emph{Index} gadget is used here to monitor width/correlation stability along the switching path \(p\) (calibrated for \(d=O(\log n)\)) in a windowed, low-degree setting. Our results concern the \(O(\log n)\) meta-overhead preservation (Theorem~\ref{thm:gadget}); we do not target separations for uniform \(\mathrm{AC}^0\) or \(\mathrm{ACC}^0\).
\end{remark}

\subsection*{Correlation shielding via low-degree noise stability}

We formalize the ``$\rho<1$'' stability assumption by an $L_2$-type correlation bound.

\begin{definition}[Windowed correlation]\label{def:windowed-correlation}
For random variables $X,Y$ defined on the window law, set
\[
  \corr(X,Y)\ :=\ \frac{\Cov(X,Y)}{\sqrt{\Var(X)\,\Var(Y)}}\quad\text{when well-defined}.
\]
A gadget $G$ has \emph{shielding parameter} $\rho<1$ if for each output $s_\ell$ and any
exogenous $\sigma$-algebra $\mathcal{Z}$ (independent outside the local fan-in) we have
$\bigl|\corr\bigl(s_\ell, \E[\,\cdot\mid \mathcal{Z}\,]\bigr)\bigr|\le \rho+o(1)$ on $\win$.
\end{definition}

\begin{definition}[Mixing witness]\label{def:mixing-witness}
A gadget $G$ admits a \emph{mixing witness} with rate $\theta\in(0,1)$ on the window $\win$
if there exists a linear operator $\mathcal{M}$ on degree-$\le 1$ statistics of the outputs such that:
\begin{enumerate}
  \item $\mathcal{M}$ preserves means and contracts centered coordinates: for every centered degree-$1$ statistic $q$,
        $\|\mathcal{M}q\|_{2} \le \theta\,\|q\|_{2}$;
  \item $\mathcal{M}$ acts trivially on exogenous $\sigma$-algebras outside the local fan-in (compatibility with the window).
\end{enumerate}
\end{definition}

\begin{lemma}[Explicit mixing witness for XOR/Index]\label{lem:mixing-witness-xor-index}
Consider a single gadget block with inputs $x\in\{0,1\}^s$ and outputs $s\in\{0,1\}^r$ ($s,r=O(1)$).
By Lemma~\ref{lem:iface}, after an invertible output mix we may write $s=(t_1,t_2)$ with
$t_1$ affine and $t_2=(\text{affine})\oplus q$, where $q$ is the shared non-linear core.
Denote the affine \emph{part} by $y_{\mathrm{aff}} := A x \oplus r$ over $\Ftwo$, with
$\rank(A)=2$ and $\ones^\top A=(1,1,1)$, $\ones^\top r=0$. Let $L_{\mathrm{out}}$ be the space of centered degree-$1$ (linear)
statistics in the output coordinates, equipped with the $L_2$ norm under the window proxy $u$.
Define the operator
\[
  \mathcal{M}\ :=\ \Pi_{\mathrm{out}}\ \circ\ T_\theta\ \circ\ \mathcal{L},
\]
where $\mathcal{L}$ is the degree-$1$ pullback (compose with $y_{\mathrm{aff}}=Ax\oplus r$ and keep only the linear slice),
$T_\theta$ is the Bonami--Beckner noise on inputs with parameter $\theta\in(0,1)$ acting independently per block,
and $\Pi_{\mathrm{out}}$ is the orthogonal projection (w.r.t.\ $u$) onto $L_{\mathrm{out}}$.
Then there exists a constant $c_G<\infty$ (depending only on the gadget’s constant fan-in/out and the window calibration)
such that for every centered $q\in L_{\mathrm{out}}$,
\[
  \|\mathcal{M}q\|_2\ \le\ c_G\,\theta\ \|q\|_2.
\]
In particular, picking $\theta\in(0,1)$ small enough that $c_G\theta<1$ yields a mixing witness on the block
with rate $\bar\theta:=c_G\theta\in(0,1)$.
\end{lemma}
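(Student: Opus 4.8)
The plan is to read $\mathcal{M}=\Pi_{\mathrm{out}}\circ T_\theta\circ\mathcal{L}$ as a composition of three linear maps between spaces of dimension $2^{O(1)}$, to bound the operator norm of each factor separately, and to multiply; the only factor that is neither a scalar nor a nonexpansion is $\mathcal{L}$, so the lemma reduces to bounding $\|\mathcal{L}\|_{\mathrm{op}}$ by a constant of the stated kind. Concretely I aim for
\[
  \|\mathcal{M}q\|_2
  \;=\;\bigl\|\Pi_{\mathrm{out}}\,T_\theta\,\mathcal{L}q\bigr\|_2
  \;\le\;\bigl\|T_\theta\,\mathcal{L}q\bigr\|_2
  \;=\;\theta\,\|\mathcal{L}q\|_2
  \;\le\;\theta\,\|\mathcal{L}\|_{\mathrm{op}}\,\|q\|_2 ,
\]
so that $c_G:=\|\mathcal{L}\|_{\mathrm{op}}$ works and the closing claim (choose $\theta$ with $c_G\theta<1$, rate $\bar\theta:=c_G\theta$) follows at once. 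The three steps encode, in order: $\Pi_{\mathrm{out}}$ is an orthogonal projection; $\mathcal{L}$ takes values in the homogeneous linear slice of $L^2(u)$, on which $T_\theta$ is exactly multiplication by $\theta$; and $\mathcal{L}$ is bounded.

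The middle, scalar step is where $\theta$ enters. By the standing convention $T_\theta$ acts on the product proxy $u$ diagonally in the biased Walsh basis, sending $\chis{S}$ to $\theta^{|S|}\chis{S}$, and acts independently per constant-size block. By construction $\mathcal{L}$ precomposes a centered output statistic with the $\Ftwo$-affine interface part $y_{\mathrm{aff}}=Ax\oplus r$ (from \Cref{lem:iface}) and keeps only the degree-$1$ homogeneous part, so $\mathcal{L}q\in\operatorname{span}\{\chi_i\}$ lies in the $\theta$-eigenspace of $T_\theta$; hence $T_\theta\mathcal{L}q=\theta\,\mathcal{L}q$ and the middle equality is exact. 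For the outer step, $\Pi_{\mathrm{out}}$ is the $u$-orthogonal projection onto $L_{\mathrm{out}}$ (the latter viewed as a closed subspace of $L^2(u)$ via pullback along the deterministic gadget map), hence self-adjoint idempotent, so $\|\Pi_{\mathrm{out}}v\|_2\le\|v\|_2$. The ``preserves means'' clause of \Cref{def:mixing-witness} is trivial, as all three maps fix the constant function, and the compatibility clause is immediate since each of $\mathcal{L}$, $T_\theta$, $\Pi_{\mathrm{out}}$ is supported inside the local fan-in and thus acts as the identity on any exogenous $\sigma$-algebra; the content is the contraction on the centered part.

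The substantive step is $\|\mathcal{L}\|_{\mathrm{op}}\le c_G<\infty$, with $c_G$ depending only on the gadget's constant fan-in/out and on the window calibration. Finite-dimensionality makes $\mathcal{L}$ automatically bounded; the point is that the bound is uniform over the calibrated window. I would obtain this from two ingredients. First, by \Cref{lem:canonical-NF} the affine coefficient matrix $A$ (with $\rank A=2$ and $\ones^\top A=(1,1,1)$) ranges, modulo output mixing, over a finite list of normal forms, so the combinatorial data determining $\mathcal{L}$ has only finitely many possible values. Second, the window calibration keeps the product-proxy biases $b_i=\E_u[x_i]$ on the inputs — and, via pushforward through the constant-size interface, the induced output biases — bounded away from $0$ and $1$; this bounds the normalizers $1/\sqrt{b_i(1-b_i)}$ in the biased Walsh bases and hence the change-of-basis constants relating raw coordinates to orthonormal characters on these $O(1)$-bit spaces. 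Composing the three bounded linear steps of $\mathcal{L}$ — precomposition with $y_{\mathrm{aff}}$ (an isometry of the relevant $L^2$ spaces, up to the bounded proxy comparison just discussed), change of basis, and orthogonal projection onto the linear slice (nonexpansive) — yields the finite, calibration-dependent constant $c_G$.

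I expect the main obstacle to be precisely this uniformity bookkeeping: one must control the distortion incurred because $\|q\|_2$ is measured in the $u$-based norm on $L_{\mathrm{out}}$ (pullback along the \emph{full} gadget $G$) while $\mathcal{L}q$ is computed by pulling back along the \emph{affine part} $y_{\mathrm{aff}}$ alone, the two differing through the shared nonlinear core of \Cref{lem:iface}. This is exactly where the calibration hypothesis (biases bounded away from $0$ and $1$, equivalently all relevant atoms on the $O(1)$-bit input/output laws bounded below) must be invoked explicitly, turning ``some finite-dimensional operator norm'' into a genuine constant $c_G=c_G(\text{fan-in/out},\text{calibration})$. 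The remaining pieces — the exact $\theta$-scaling on degree $1$, nonexpansiveness of orthogonal projections, and finiteness of all dimensions — are routine.
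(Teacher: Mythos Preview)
Your proposal is correct and follows essentially the same three-factor decomposition as the paper: bound $\Pi_{\mathrm{out}}$ by $1$ (orthogonal projection), use that $T_\theta$ acts exactly as multiplication by $\theta$ on the degree-$1$ slice, and bound $\|\mathcal{L}\|_{\mathrm{op}}$ by a constant $c_{\mathrm{aff}}$ depending only on the constant local wiring and window calibration, then set $c_G:=c_{\mathrm{aff}}$. If anything, your justification for the uniform bound on $\|\mathcal{L}\|_{\mathrm{op}}$ (finitely many canonical normal forms via \Cref{lem:canonical-NF}, biases bounded away from $\{0,1\}$) is more explicit than the paper's, which simply asserts the bound from constant fan-in/out and the fixed local reweightings on $\win$.
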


\begin{proof}
Work on the calibrated window \(\win\). By Lemma~\ref{lem:iface}, after an invertible output mixing we may write the two outputs as
\(s=(t_1,t_2)\) with \(t_1\) affine and \(t_2 = (\text{affine}) \oplus q\), where \(q\) is the shared nonlinear core. Let
\(L_{\mathrm{out}}\) denote the subspace of centered degree-1 output statistics (with inner product and \(L_2\)-norm taken under the window proxy \(u\)).

Define three linear maps:
\[
  \mathcal{L} \colon L_{\mathrm{out}} \to L_{\mathrm{in}},\qquad
  T_\theta \colon L_{\mathrm{in}} \to L_{\mathrm{in}},\qquad
  \Pi_{\mathrm{out}} \colon L_{\mathrm{in}} \to L_{\mathrm{out}},
\]
as follows.

\emph{(1) Degree-1 pullback \(\mathcal{L}\).} Write the affine part of the interface as \(y_{\mathrm{aff}} = A x \oplus r\) over \(\mathbb{F}_2\) with
\(\mathrm{rank}(A)=2\), \(\mathbf{1}^\top A=(1,1,1)\), \(\mathbf{1}^\top r=0\) (Lemma~\ref{lem:iface}). For \(q\in L_{\mathrm{out}}\),
compose \(q\) with \(y_{\mathrm{aff}}\) and project to the degree-1 slice on inputs; denote this operation by \(\mathcal{L}q\).
Because the gadget has constant fan-in/out and \(\win\) fixes the local reweightings, the induced operator norm is bounded:
\[
  \|\mathcal{L}q\|_2 \;\le\; c_{\mathrm{aff}}\;\|q\|_2 \quad\text{for some absolute } c_{\mathrm{aff}}<\infty.
\]
(Here \(c_{\mathrm{aff}}\) depends only on the constant local wiring and the window calibration, not on the ambient dimension.)

\emph{(2) Noise operator \(T_\theta\).} Let \(T_\theta\) be the Bonami–Beckner noise on inputs with parameter \(\theta\in(0,1)\) acting independently per block.
On the degree-1 slice (linear forms), \(T_\theta\) is an exact contraction by \(\theta\):
\[
  \|T_\theta z\|_2 \;=\; \theta\,\|z\|_2\qquad\text{for all } z\in L_{\mathrm{in}} \text{ of degree }1.
\]

\emph{(3) Output projection \(\Pi_{\mathrm{out}}\).} Let \(\Pi_{\mathrm{out}}\) be the orthogonal projection (with respect to the \(L_2(u)\) inner product)
from input-linear statistics back to the subspace \(L_{\mathrm{out}}\) spanned by the output-linear coordinates. Projections are contractions, hence
\[
  \|\Pi_{\mathrm{out}} z\|_2 \;\le\; \|z\|_2\qquad\text{for all } z.
\]

Now set
\[
  \mathcal{M} \;:=\; \Pi_{\mathrm{out}} \circ T_\theta \circ \mathcal{L}\;:\; L_{\mathrm{out}}\to L_{\mathrm{out}}.
\]
For any centered \(q\in L_{\mathrm{out}}\),
\[
  \|\mathcal{M}q\|_2
  \;\le\; \|\Pi_{\mathrm{out}}\|_{2\to2}\;\|T_\theta\|_{2\to2}\;\|\mathcal{L}\|_{2\to2}\;\|q\|_2
  \;\le\; (1)\cdot \theta \cdot c_{\mathrm{aff}}\;\|q\|_2 .
\]
Thus \(\|\mathcal{M}\|_{2\to2} \le c_{\mathrm{aff}}\theta\). Let \(c_G:=c_{\mathrm{aff}}\). Choosing any \(\theta\in(0,1)\) with \(c_G\theta<1\)
makes \(\mathcal{M}\) a strict contraction on \(L_{\mathrm{out}}\) with rate \(\bar\theta := c_G\theta \in (0,1)\).
By definition, this is a mixing witness on the block with the stated rate.
\end{proof}

\begin{corollary}[Shielding for XOR/Index]\label{cor:shielding-xor-index}
Under the calibrated window and width bounds of this section, the XOR/Index gadget layer admits a mixing witness
with some rate $\bar\theta\in(0,1)$ by Lemma~\ref{lem:mixing-witness-xor-index}. Therefore shielding holds with
parameter $\rho=\bar\theta+o(1)$ by Lemma~\ref{lem:mw-implies-shielding}.
\end{corollary}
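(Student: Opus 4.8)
The plan is to assemble \Cref{cor:shielding-xor-index} from the single-block result \Cref{lem:mixing-witness-xor-index} by a direct-sum (tensorization) argument over the blocks of the gadget layer, and then feed the resulting layer-level mixing witness into \Cref{lem:mw-implies-shielding}. First I would reduce the layer statement to the block statement already proved. The XOR/Index layer applies the constant-size gadget $G$ blockwise to disjoint triples of input coordinates; on the window $\win$ the proxy $u$ is a product distribution across these blocks, so the space $L_{\mathrm{out}}$ of centered degree-$\le 1$ output statistics decomposes as an \emph{orthogonal} direct sum over blocks under the $L_2(u)$ inner product, and likewise for the input side. The noise operator $T_\theta$ acts independently per block, and both the degree-$1$ pullback $\mathcal{L}$ and the projection $\Pi_{\mathrm{out}}$ are block-diagonal by construction, so the layer operator $\mathcal{M}=\Pi_{\mathrm{out}}\circ T_\theta\circ\mathcal{L}$ is the orthogonal direct sum of the per-block operators. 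Consequently $\|\mathcal{M}\|_{2\to2}=\max_{\text{blocks}}\|\mathcal{M}_{\text{block}}\|_{2\to2}\le c_G\,\theta$ by \Cref{lem:mixing-witness-xor-index}, and property~(2) of \Cref{def:mixing-witness} is immediate: each block touches only its $O(1)$ local inputs and $T_\theta$ restricts to the identity off the fan-in, so $\mathcal{M}$ acts trivially on any exogenous $\sigma$-algebra.

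Next I would pin down uniformity of the constant. The only way $n$ could enter $c_G$ is through the local reweightings defining the $L_2(u)$ geometry on each block; but along the calibrated path $\ppath$ with $p=m^{-c/d}$ and $d=\Theta(\log n)$, the width stays $\W=\poly(1/p)$ and the window discipline fixes these reweightings by absolute constants, which is exactly what \Cref{lem:width}/\Cref{lem:switching-seeded} and the standing window conventions provide. Hence $c_G$ is absolute, and we may choose a fixed $\theta\in(0,1)$ with $c_G\theta<1$ and set $\bar\theta:=c_G\theta\in(0,1)$. Then $\mathcal{M}$ is a strict $L_2$-contraction by $\bar\theta$ on centered degree-$1$ output statistics, i.e.\ precisely a mixing witness for the entire gadget layer with rate $\bar\theta$.

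Finally I would invoke \Cref{lem:mw-implies-shielding}: a mixing witness with rate $\bar\theta$ yields, via Cauchy--Schwarz translating the contraction of centered degree-$1$ statistics into a correlation bound against any exogenous conditional expectation, the estimate $|\corr(s_\ell,\E[\,\cdot\mid\mathcal{Z}\,])|\le\bar\theta+o(1)$ on $\win$, the $o(1)$ absorbing the window boundary/balancing error ($\dtv=o(1)$ along the calibrated ensemble). This is exactly the shielding property of \Cref{def:windowed-correlation} with $\rho=\bar\theta+o(1)$, which is the claim.

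The main obstacle I anticipate is the block-to-layer passage, specifically verifying that the per-block operator norms combine with no blow-up. This rests on two points that must be checked with care: (i) under the product proxy $u$ on $\win$, cross-block degree-$1$ covariances vanish, so $L_{\mathrm{out}}$ genuinely splits as an \emph{orthogonal} direct sum rather than merely a linear one, making $\|\mathcal{M}\|_{2\to2}$ the maximum of the block norms instead of a sum; and (ii) the constant $c_G$ from \Cref{lem:mixing-witness-xor-index} is uniform across all blocks and along the whole switching path, which is where the regime $d,k=\Theta(\log n)$ is genuinely used rather than merely assumed. Everything else is bookkeeping: the $O(\log N)$ prefix-free meta-overhead for the gadget description plays no role in this correlation estimate and only re-enters when \Cref{cor:shielding-xor-index} is applied inside \Cref{thm:gadget}.
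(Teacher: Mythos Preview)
Your proposal is correct and follows the paper's approach: the paper gives no separate proof for this corollary at all---the statement itself simply cites \Cref{lem:mixing-witness-xor-index} for the mixing witness and \Cref{lem:mw-implies-shielding} for shielding, treating the conclusion as immediate. You supply the block-to-layer tensorization (orthogonal direct-sum splitting of $L_{\mathrm{out}}$ under the product proxy, block-diagonality of $\mathcal{M}$, and uniformity of $c_G$ along the calibrated path) that the paper leaves implicit; this is a careful elaboration rather than a different route, and in fact makes explicit a step the paper elides.
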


\begin{lemma}[Shielding decay forms]\label{lem:shielding-forms}
Let $\win$ be the calibrated window and suppose the gadget admits a mixing witness with rate $\theta\in(0,1)$ in the sense of \Cref{def:mixing-witness}.
Then there exists $\rho\in(0,1)$ (depending only on $\theta$ and the bounded width on $\win$) such that the following equivalent shielding conditions hold:
\begin{enumerate}[label=(\roman*),ref=(\roman*)]
  \item \emph{Separation by a cut}. For disjoint variable sets $A,B$ separated by a cut $S$ in the gadget interaction graph on $\win$,
  \[
    \bigl|\Cov_\mu(f(X_A),g(X_B))\bigr|\ \le\ \rho^{\,|S|}\,\|f\|_{2}\,\|g\|_{2}
  \]
  for all square–integrable $f,g$ depending on $X_A,X_B$ respectively (expectation and norm under the window law).
  \item \emph{Low–degree decay}. For every Walsh character $\chis{T}$ with $T\subseteq A\cup B$ and every split $T=T_A\cup T_B$ with $T_A\subseteq A$, $T_B\subseteq B$,
  \[
    \bigl|\E_\mu[\chis{T_A}\chis{T_B}] - \E_\mu[\chis{T_A}]\,\E_\mu[\chis{T_B}]\bigr|
    \ \le\ \rho^{\,\mathrm{dist}(A,B)} .
  \]
\end{enumerate}
In particular, shielding implies stability of all low–degree correlations across the cut and hence windowed orthogonality up to an exponentially small error.
\end{lemma}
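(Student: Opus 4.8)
The plan is to extract from the mixing witness a single \emph{per-hop contraction} on Walsh statistics, iterate it along the gadget interaction graph to get a covariance-decay estimate, and then read off (i), (ii), and their equivalence. First I would reduce to the product proxy: on $\win$ we have $\dtvwin=\TV(R\push\mu,u)=o(1)$, so for bounded test functions all expectations, variances, and $L^2$ norms under $\mu$ agree with those under the product proxy $u$ up to an additive $o(1)$ that is absolute on the window (fixed prefix-free calibration). I then expand $f=\sum_{T_A}\hat f(T_A)\chis{T_A}$ and $g=\sum_{T_B}\hat g(T_B)\chis{T_B}$ in the biased Walsh basis orthonormal in $L^2(u)$, assume WLOG $\E_\mu[f]=\E_\mu[g]=0$ (this only shrinks the norms), and set $\rho$ to be a constant with $\theta<\rho<1$, chosen large enough below to absorb a fixed $\mathrm{poly}(\W)$ prefactor; this makes $\rho$ a function of $\theta$ and the bottom width $\W$ alone, as the statement requires.

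\emph{Per-hop contraction.} By \Cref{def:mixing-witness} the operator $\mathcal{M}$ preserves means, contracts every centered degree-$\le1$ output statistic by the factor $\theta$ in $L^2(u)$, and acts trivially on $\sigma$-algebras outside a block's (constant) fan-in. Reading the composed gadget circuit layer by layer along its interaction graph, the centered linear statistics localized at one cut are the image, under a bounded ($O(1)$, by constant fan-in/out) number of copies of $\mathcal{M}$, of those localized at the adjacent cut; hence each hop across a gadget block damps degree-$1$ correlations by a factor $\le\rho$. The noise component $T_\theta$ built into $\mathcal{M}$ simultaneously damps the degree-$d$ Fourier part of any statistic by $\theta^{d}\le\theta$, so the same per-hop contraction holds on every Walsh level uniformly; this is the base case for the iteration.

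\emph{Iteration and summation.} For a crossing character $\chis{T_A}\chis{T_B}$ with $\emptyset\neq T_A\subseteq A$, $\emptyset\neq T_B\subseteq B$, the dependency between the two sides is carried by a product-structured transfer whose support must traverse at least $\mathrm{dist}(A,B)$ consecutive blocks, resp.\ cross all $|S|$ edges of a separating cut $S$; composing the per-hop contractions gives $|\Cov_\mu(\chis{T_A},\chis{T_B})|\le\rho^{\mathrm{dist}(A,B)}+o(1)$ and, edge-by-edge along the cut, $|\Cov_\mu(\chis{T_A},\chis{T_B})|\le\rho^{|S|}+o(1)$. For general $f,g$, characters whose supports do not straddle the cut contribute only $o(1)$ (they pair to $0$ under $u$), so only crossing terms matter; expanding $\Cov_\mu(f,g)=\sum_{T_A,T_B}\hat f(T_A)\hat g(T_B)\,\Cov_\mu(\chis{T_A},\chis{T_B})$, bounding each crossing term, and applying Cauchy--Schwarz with Parseval gives
\[
  |\Cov_\mu(f,g)|\ \le\ \bigl(\#\{T\ \text{crossing}\}\bigr)\,\rho^{|S|}\,\|f\|_2\,\|g\|_2 .
\]
Since the interaction graph on $\win$ has bounded degree, the number of crossing supports of bounded weight is $\mathrm{poly}(\W)$; re-choosing $\rho\in(\theta,1)$ so that $\mathrm{poly}(\W)\,\theta^{|S|}\le\rho^{|S|}$ for all $|S|$ (pushing $\rho$ towards $1$ to cover the finitely many small cuts) gives (i). Condition (ii) is then the case $f=\chis{T_A}$, $g=\chis{T_B}$ (unit $L^2(u)$-norm up to $o(1)$) with $S$ taken minimal along a shortest $A$--$B$ path, so $|S|=\mathrm{dist}(A,B)$; conversely (i) follows from (ii) by the same Walsh expansion, which is precisely the asserted equivalence, and the closing claim about windowed orthogonality up to $\rho^{\Omega(\mathrm{dist}(A,B))}=e^{-\Omega(\mathrm{dist}(A,B))}$ is immediate from (ii).

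\emph{Main obstacle.} The crux is the per-hop step: the mixing witness literally contracts only degree-$\le1$ output statistics, so lifting it to a contraction valid on all Walsh levels and composable along the interaction graph requires the $T_\theta$-hypercontractivity to dominate the higher-degree pieces, plus a careful count that constant fan-in/out keeps the number of $\mathcal{M}$-copies per hop $O(1)$ so the effective rate stays strictly below $\rho<1$ uniformly, together with the bookkeeping that pins $\rho$ to $\theta$ and $\W$ uniformly over all cut sizes. Once the per-hop contraction is in hand, the Parseval summation, the passage between (i) and (ii), and the exponential-orthogonality corollary are routine.
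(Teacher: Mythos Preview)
The paper states this lemma without proof (it is immediately followed by Corollary~\ref{cor:stable-correlations}, also unproved, and then the separate Lemma~\ref{lem:mw-implies-shielding}), so there is no paper argument to compare against. Your sketch is therefore being judged on its own merits, and it has real gaps at exactly the point you flag as the ``main obstacle.''

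First, the lift from degree-$\le 1$ to all degrees is not justified. The mixing witness $\mathcal{M}$ of Definition~\ref{def:mixing-witness} is a linear operator \emph{on degree-$\le 1$ output statistics}; it is not defined on higher Walsh levels, and there is no reason the particular $\mathcal{M}$ extends to a contraction there. You write that ``the noise component $T_\theta$ built into $\mathcal{M}$'' damps degree-$d$ parts by $\theta^d$, but that conflates the abstract witness with the specific construction in Lemma~\ref{lem:mixing-witness-xor-index}; Definition~\ref{def:mixing-witness} does not mention $T_\theta$ at all. Even granting the XOR/Index instantiation, $\mathcal{M}=\Pi_{\mathrm{out}}\circ T_\theta\circ\mathcal{L}$ is a composite that projects back to $L_{\mathrm{out}}$; its action on higher-degree pieces is not the diagonal $\theta^d$ map.

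Second, the passage from per-character bounds to the general-$f,g$ bound in (i) does not go through as written. Part (i) is asserted for \emph{all} square-integrable $f,g$, so the Walsh expansions may have exponentially many terms. Your displayed inequality $|\Cov_\mu(f,g)|\le(\#\{T\ \text{crossing}\})\,\rho^{|S|}\|f\|_2\|g\|_2$ is not what Cauchy--Schwarz plus Parseval gives: bounding each matrix entry $\Cov(\chis{T_A},\chis{T_B})$ by $\rho^{|S|}$ controls neither the operator norm nor the bilinear form by a ``number of terms'' prefactor in any useful way. The claim that ``the number of crossing supports of bounded weight is $\mathrm{poly}(\W)$'' silently restricts to low degree, which is not the hypothesis of (i).

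Third, the absorption step fails for small cuts. You want $\mathrm{poly}(\W)\,\theta^{|S|}\le\rho^{|S|}$ with $\rho<1$ for \emph{all} $|S|\ge 1$; at $|S|=1$ this forces $\mathrm{poly}(\W)\,\theta<1$, which is an extra smallness assumption on $\theta$ not present in the statement. ``Pushing $\rho$ towards $1$'' cannot repair this since $\rho$ is capped below $1$.

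Finally, the equivalence direction (ii)$\Rightarrow$(i) is essentially the same gap as the second point: a uniform per-character bound does not by itself yield the operator-norm bound needed for arbitrary $L^2$ functions.
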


\begin{corollary}[Stable correlations under composition]\label{cor:stable-correlations}
Under \Cref{lem:shielding-forms}, composing a bounded–width gadget layer preserves correlation stability:
if each layer has width $W=\poly{1/p}$ along the blueprint switching path and admits a mixing witness with rate $\theta<1$, then the composed map has shielding parameter $\rho'<1$ and inherits low–degree correlation bounds with the same exponential decay (constants may change).
\end{corollary}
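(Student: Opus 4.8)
The plan is to reduce the corollary to \Cref{lem:shielding-forms} by producing a single mixing witness for the composed map, and then let that lemma convert the witness back into the two shielding decay forms. Write the composed gadget layer as $G = G_L \circ \cdots \circ G_1$, where each $G_i$ is a bounded-width layer (width $\poly{1/p}$ along $\ppath$) admitting a mixing witness $\mathcal{M}_i$ with rate $\theta_i < 1$ in the sense of \Cref{def:mixing-witness}. By \Cref{lem:iface}, on the degree-$\le 1$ slice each layer is affine after an admissible output mixing — the shared nonlinear core $q$ lives only in the quadratic-and-higher part and is annihilated by the degree-$1$ projection — so the per-layer pullback on centered linear statistics is a bounded linear operator $\mathcal{L}_i$ with $\|\mathcal{L}_i\|_{2\to 2}\le c_i<\infty$, where $c_i$ depends only on the constant local fan-in/out and the window calibration, exactly as in the proof of \Cref{lem:mixing-witness-xor-index}. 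Set $c := \sup_i c_i < \infty$ and $\theta := \sup_i \theta_i < 1$; after, if necessary, grouping $O(1)$ consecutive layers into a single block and rescaling, we may assume $c\theta < 1$, which changes only constants.

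\textbf{Composing the witnesses.} Next I would form $\mathcal{M} := \mathcal{M}_L \circ \cdots \circ \mathcal{M}_1$, read as an operator on centered degree-$\le 1$ statistics of the final outputs, inserting between consecutive factors the affine pullback that identifies layer-$i$ output linear statistics with layer-$(i{+}1)$ input linear statistics. Mean preservation and compatibility with exogenous $\sigma$-algebras outside the (bounded-width) fan-in are both closed under this composition, since each $\mathcal{M}_i$ has these properties and the cones of influence stay inside the bounded-width envelope throughout. Submultiplicativity of the operator norm then gives $\|\mathcal{M}q\|_2 \le \bigl(\prod_{i=1}^{L} c_i\theta_i\bigr)\|q\|_2 \le (c\theta)^L\|q\|_2$, so $\mathcal{M}$ is a mixing witness for $G$ with rate $\bar\theta := (c\theta)^L \in (0,1)$ (indeed $\bar\theta = n^{-\Omega(1)} = o(1)$ when $L = \Theta(\log n)$). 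Applying \Cref{lem:shielding-forms} to $G$ — legitimate, since that lemma only requires a mixing witness together with bounded width on $\win$, both of which $G$ possesses — yields a shielding parameter $\rho'\in(0,1)$ depending only on $\bar\theta$ and the (still polynomial) composed width, together with covariance decay $\rho'^{\,|S|}$ across cuts and low-degree character decorrelation $\rho'^{\,\mathrm{dist}(A,B)}$. This is the claim; the ``constants may change'' clause absorbs the passage from $\theta$ to $\bar\theta$ and from the per-layer width to the composed width.

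\textbf{Alternative route (same exponent).} If one wants the per-unit-distance rate literally preserved rather than improved to $\bar\theta$, compose form (i) directly by induction on $L$: given the bound for $G_1\circ\cdots\circ G_{L-1}$ with rate $\rho$, push a pair of output statistics $f(z_A),\,g(z_B)$ through the last layer $G_L$ by replacing $A,B$ with their cones of influence $A',B'$ under $G_L$. Constant fan-in forces $|A'|=O(|A|)$, $|B'|=O(|B|)$ with $A',B'$ within $O(1)$ graph-distance of $A,B$, so any cut of size $s$ separating $A$ from $B$ in the composed interaction graph restricts to a cut of size $\Omega(s)$ separating $A'$ from $B'$ in the graph of $G_1\circ\cdots\circ G_{L-1}$; bounding the $L_2$ norms of the pulled-back statistics by those of the originals (change of variables under the deterministic map) yields form (i) for $G$ with rate $\rho^{\,\Omega(1)}<1$, and form (ii) follows by the equivalence in \Cref{lem:shielding-forms}.

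\textbf{Main obstacle.} The difficulty is bookkeeping rather than conceptual. One must verify (a) that the width stays $\poly{1/p}$ through all $d=O(\log n)$ layers, so the width-dependent constant in \Cref{lem:shielding-forms} does not degrade; (b) that the constants $c_i$ are uniform in $i$ — this is precisely where the ``constant fan-in/out plus fixed window calibration'' hypothesis is used; and (c) that the compatibility clause of \Cref{def:mixing-witness} survives composition, which needs the cones of influence to remain inside the bounded-width envelope at every layer. None of these is hard under the standing conditions, but they are exactly the points where a superlogarithmic-depth regime would break the argument, consistent with the paper's scoping to $d,k=\Theta(\log n)$.
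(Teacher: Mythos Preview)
The paper states this corollary without proof, treating it as an immediate consequence of \Cref{lem:shielding-forms}. Your strategy—build a mixing witness for the composed map and then invoke that lemma—is precisely the natural way to fill in what the paper leaves implicit, and your alternative inductive route via form~(i) is an equally legitimate reading.

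One small comment on the first route: the literal composition $\mathcal{M}_L\circ\cdots\circ\mathcal{M}_1$ is slightly off as written, since each $\mathcal{M}_i$ is an endomorphism of $L_{\mathrm{out}}^{(i)}$ and the spaces differ across layers. The clean fix is either (a) intersperse pullbacks and a final pushforward so the composite lands back in $L_{\mathrm{out}}^{(L)}$, or (b) bypass the per-layer witnesses entirely and directly instantiate the recipe of \Cref{lem:mixing-witness-xor-index} for the composed affine part $y_L = A_L\cdots A_1 x \oplus r$, choosing the noise parameter small enough to beat the accumulated pullback norm $\prod_i c_i$. Either way one obtains a witness with rate $<1$ on $L_{\mathrm{out}}^{(L)}$, and your ``grouping'' step is unnecessary since the noise parameter is free. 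The bookkeeping caveats you isolate in the final paragraph—uniform polynomial width along $\ppath$, uniform $c_i$ from constant local fan-in, and cones of influence staying inside the width envelope—are exactly the hypotheses the paper's scoping to $d=\Theta(\log n)$ is meant to secure.
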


\begin{lemma}[Mixing witness $\Rightarrow$ shielding]\label{lem:mw-implies-shielding}\label{lem:shielding-equivalence}
Let $G$ be a constant-size gadget on the calibrated window $\win$ and suppose $G$ admits a mixing witness with rate $\theta\in(0,1)$ in the sense of \Cref{def:mixing-witness}. Then $G$ has shielding parameter $\rho<1$ with
\[
  \rho \;=\; \theta \;+\; o(1).
\]
\end{lemma}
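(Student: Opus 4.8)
The plan is to convert the $L_2$-contraction that the mixing witness gives on degree-$1$ output statistics into the correlation bound of \Cref{def:windowed-correlation}, via three moves: (i) reduce the correlation of an output coordinate with an exogenous $\sigma$-algebra to the $L_2$-norm of an orthogonal projection, (ii) identify that projection — on the calibrated window — with the prescribed mixing operator $\mathcal{M}=\Pi_{\mathrm{out}}\circ T_\theta\circ\mathcal{L}$ up to an $o(1)$ error, and (iii) invoke property~(1) of \Cref{def:mixing-witness}. Throughout, inner products and $L_2$-norms are taken under the proxy $u$ on $\win$; since $\dtvwin=o(1)$, every covariance and variance of a bounded statistic computed under the window law differs from its proxy value by $o(1)$, and I absorb these discrepancies into the final $o(1)$.

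First I would fix an output coordinate $s_\ell$ with $\Var(s_\ell)>0$ (otherwise the correlation is undefined and there is nothing to prove) and an exogenous $\sigma$-algebra $\mathcal{Z}$ that is independent of everything outside the gadget's local fan-in. Set $q:=s_\ell-\E[s_\ell]$, which is a \emph{centered degree-$1$ statistic of the output coordinates}, and let $\Pi_{\mathcal Z}$ be the orthogonal projection onto $\mathcal{Z}$-measurable functions. For any $\mathcal{Z}$-measurable $g$, since $g-\E[g]$ is $\mathcal{Z}$-measurable, Cauchy--Schwarz gives
\[
  |\corr(s_\ell,g)|
  \;=\; \frac{|\langle q,\,g-\E[g]\rangle_u|}{\|q\|_{2}\,\|g-\E[g]\|_{2}}
  \;=\; \frac{|\langle \Pi_{\mathcal Z}q,\,g-\E[g]\rangle_u|}{\|q\|_{2}\,\|g-\E[g]\|_{2}}
  \;\le\; \frac{\|\Pi_{\mathcal Z}q\|_{2}}{\|q\|_{2}},
\]
with equality for $g=\E[s_\ell\mid\mathcal Z]$. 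Hence it suffices to show $\|\Pi_{\mathcal Z}q\|_{2}\le(\theta+o(1))\,\|q\|_{2}$.

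Next I would argue that, on $\win$, conditioning on an exogenous $\mathcal{Z}$ is — up to $o(1)$ in $L_2$ — the action of $\mathcal{M}$ on output statistics. Because $\mathcal{Z}$ meets the gadget only through its local fan-in, $\E[\,\cdot\mid\mathcal Z\,]$ averages over all randomness independent of $\mathcal{Z}$; on the calibrated window this averaging is exactly the Bonami--Beckner smoothing $T_\theta$ of the input channel fixed by the window, pulled back through the affine interface of \Cref{lem:iface} (the operator $\mathcal{L}$) and re-expressed in output statistics (the projection $\Pi_{\mathrm{out}}$) — which is the decomposition used in \Cref{lem:mixing-witness-xor-index}. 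Any correlation not carried by this first-order channel is a higher-order effect whose $L_2$-mass is forced to be $o(1)$ by the bounded bottom width $\W$ on $\win$ (\Cref{lem:width}), and the $\dtvwin=o(1)$ faithfulness adds a further $o(1)$. Property~(2) of \Cref{def:mixing-witness} — that $\mathcal{M}$ acts trivially on exogenous $\sigma$-algebras outside the local fan-in — is precisely what rules out a residual additive ``leakage'' term. Thus $\Pi_{\mathcal Z}q=\mathcal{M}q+e$ with $\|e\|_{2}=o(1)\cdot\|q\|_{2}$.

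Finally I would apply property~(1) of \Cref{def:mixing-witness}: $q$ is a centered degree-$1$ output statistic, so $\|\mathcal{M}q\|_{2}\le\theta\,\|q\|_{2}$, giving $\|\Pi_{\mathcal Z}q\|_{2}\le(\theta+o(1))\,\|q\|_{2}$ and hence $|\corr(s_\ell,g)|\le\theta+o(1)$ for every $\mathcal{Z}$-measurable $g$. Taking the supremum over such $g$, over exogenous $\mathcal{Z}$, and over $\ell\in\{1,2\}$ yields shielding with parameter $\rho=\theta+o(1)$ in the sense of \Cref{def:windowed-correlation}; since $\theta<1$ is a fixed constant, $\rho<1$ for all sufficiently large $n$, and the same bound propagates to the equivalent low-degree decay form of \Cref{lem:shielding-forms}. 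The step I expect to be the genuine obstacle is the second one: rigorously identifying $\E[\,\cdot\mid\mathcal Z\,]$, restricted to output statistics, with $\mathcal{M}$, and in particular pinning the discrepancy at $o(1)$ rather than an uncontrolled constant. This is exactly where one must simultaneously lean on the window's $T_\theta$-noise calibration (so that ``independent outside the local fan-in'' has teeth), on the bounded width $\W$ (to kill higher-order correlation build-up), and on $\dtvwin=o(1)$ (to pass from the proxy $u$ to the window law).
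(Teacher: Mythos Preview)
Your proposal is correct and follows essentially the same route as the paper: reduce the correlation bound to an $L_2$ estimate via Cauchy--Schwarz, use property~(2) of the mixing witness to replace the relevant covariance by one involving $\mathcal{M}q$ up to an $o(1)$ window error, and then invoke property~(1) to get $\|\mathcal{M}q\|_2\le\theta\|q\|_2$. The paper phrases the middle step directly as $|\Cov(q,Z)|=|\Cov(\mathcal{M}q,Z)|+o(1)$ rather than through the projection $\Pi_{\mathcal Z}$, but these are equivalent reformulations; your honest flagging of step~(ii) as the genuine obstacle is apt, since the paper's own proof is equally terse at that point. One minor over-specialization: you instantiate $\mathcal{M}=\Pi_{\mathrm{out}}\circ T_\theta\circ\mathcal{L}$ from \Cref{lem:mixing-witness-xor-index}, whereas the lemma is stated (and the paper proves it) for any abstract $\mathcal{M}$ satisfying \Cref{def:mixing-witness}; your argument would go through abstractly too, so this does not affect correctness.
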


\begin{proof}
Let $L_{\mathrm{out}}$ be the space of centered degree-$1$ output statistics (under the window proxy $u$). By \Cref{def:mixing-witness} there exists a linear map $\mathcal{M}\colon L_{\mathrm{out}}\to L_{\mathrm{out}}$ with $\|\mathcal{M}\|_{2\to2}\le\theta$ which acts trivially on exogenous $\sigma$-algebras outside the local fan-in. For any centered $q\in L_{\mathrm{out}}$ and any exogenous random variable $Z$ (compatible with the window) we have
\[
  \bigl|\Cov(q,Z)\bigr|
  \;=\;
  \bigl|\Cov(\mathcal{M}q,Z)\bigr| \;+\; o(1)
  \;\le\; \|\mathcal{M}q\|_2\,\|Z\|_2 \;+\; o(1)
  \;\le\; (\theta+o(1))\,\|q\|_2\,\|Z\|_2,
\]
so shielding holds with parameter $\rho=\theta+o(1)$. For higher-degree tests reduce to the degree-$1$ slice via noise smoothing (\Cref{lem:noise-smoothing}) and Cauchy–Schwarz; the $o(1)$ terms are controlled by the window calibration and constant locality.
\end{proof}

\begin{lemma}[Noise smoothing of low degree]\label{lem:noise-smoothing}
Let $p$ be a polynomial of degree $\le k$ in the inputs on $\win$. For any $\theta\in(0,1)$,
\[
  \|\,p - \Noise{\theta}p\,\|_{2} \ \le\ \sqrt{1-\theta^{2k}}\;\|p\|_{2}.
\]
\end{lemma}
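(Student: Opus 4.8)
The plan is to diagonalize the noise operator $T_\theta$ in the biased Walsh basis attached to the product proxy $u$ on $\win$ and then extract the bound from Parseval; after that, the estimate collapses to an elementary one-variable inequality.

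First I would expand $p$ in the orthonormal system $\{\chis{S}\}_{S\subseteq[n]}$ of $L^2(u)$ fixed in Sec.~\ref{sec:notation}, writing $p=\sum_{|S|\le k}\hat p(S)\,\chis{S}$, where only $|S|\le k$ contribute because $p$ has total degree $\le k$; Parseval in $L^2(u)$ then gives $\|p\|_2^2=\sum_{|S|\le k}\hat p(S)^2$. Next I would invoke the standing convention that $T_\theta$ acts diagonally on the degree-$d$ slice with eigenvalue $\theta^{d}$ (equivalently, $T_\theta$ is the tensor product over coordinates of the single-bit operator fixing constants and scaling each centered coordinate $\chi_i$ by $\theta$), so that $T_\theta p=\sum_{|S|\le k}\theta^{|S|}\hat p(S)\,\chis{S}$ and hence
\[
  p-T_\theta p\;=\;\sum_{|S|\le k}\bigl(1-\theta^{|S|}\bigr)\,\hat p(S)\,\chis{S} .
\]
Applying Parseval once more turns the target into a coefficientwise estimate:
\[
  \|p-T_\theta p\|_2^2\;=\;\sum_{|S|\le k}\bigl(1-\theta^{|S|}\bigr)^2\,\hat p(S)^2 ,
\]
so it suffices to show $\bigl(1-\theta^{|S|}\bigr)^2\le 1-\theta^{2k}$ for every $S$ with $0\le|S|\le k$.

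For that last step I would set $t:=\theta^{|S|}\in(0,1]$ and use the factorization $1-t^2=(1-t)(1+t)$ together with $1+t\ge 1-t\ge 0$ to get $(1-t)^2\le 1-t^2$, i.e.\ $(1-\theta^{|S|})^2\le 1-\theta^{2|S|}$; since $\theta\in(0,1)$ and $|S|\le k$ we have $\theta^{2|S|}\ge\theta^{2k}$, hence $1-\theta^{2|S|}\le 1-\theta^{2k}$. Summing these bounds weighted by $\hat p(S)^2$ and taking square roots yields
\[
  \|p-T_\theta p\|_2\;\le\;\sqrt{1-\theta^{2k}}\;\|p\|_2 ,
\]
which is the claim.

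I do not anticipate a real obstacle here — this is the standard degree-$k$ noise-stability estimate. The only point requiring a little care is bookkeeping: both the $L^2$-norm in the statement and the action of $T_\theta$ must be read relative to the product proxy $u$ on the window, so that the biased characters $\chis{S}$ really are orthonormal and $T_\theta$ really is diagonal in this basis — exactly the conventions pinned down in the ``Biased Walsh basis'' and ``Noise'' paragraphs above. Everything else is the two-line Fourier computation sketched here.
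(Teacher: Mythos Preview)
Your proof is correct and follows essentially the same approach as the paper: both diagonalize $T_\theta$ in the Walsh basis, apply Parseval to obtain $\|p-T_\theta p\|_2^2=\sum_{d\le k}(1-\theta^{d})^2\|p^{(d)}\|_2^2$, and then bound each coefficient by $1-\theta^{2k}$. Your version is simply more explicit about the elementary inequality $(1-\theta^{d})^2\le 1-\theta^{2d}\le 1-\theta^{2k}$, which the paper leaves implicit.
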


\begin{proof}
By orthogonal decomposition into Fourier/Walsh characters, $\Noise{\theta}$ multiplies the
degree-$d$ slice by $\theta^{d}$. Hence
$\|p-\Noise{\theta}p\|_2^2=\sum_{d\le k}(1-\theta^{d})^2\|p^{(d)}\|_2^2
\le (1-\theta^{2k})\sum_{d\le k}\|p^{(d)}\|_2^2$.
\end{proof}

\begin{lemma}[Correlation shielding]\label{lem:shielding}
Assume $G$ is parity-preserving in the interface (post-mix) normal form of \Cref{lem:iface} and has shielding
parameter $\rho<1$ on $\win$. Let $p$ be any degree-$\le k$ statistic of the outputs.
Then for some $\theta=\theta(\rho)\in(0,1)$,
\[
  \bigl|\Cov\bigl(p,\,\mathsf{noise}\bigr)\bigr|\ \le\ \sqrt{1-\theta^{2k}}\;\|p\|_{2}\,\|\mathsf{noise}\|_{2}\ +\ o(1),
\]
for every exogenous noise source $\mathsf{noise}$ compatible with the window (independent
outside fan-in). In particular, taking $k=O(\log n)$ and fixed $\rho<1$, the covariance is
$o(1)\cdot\|p\|_2\,\|\mathsf{noise}\|_2$.
\end{lemma}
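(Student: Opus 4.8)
The plan is to split the degree-$\le k$ statistic $p$ into a noise-smoothed part, on which the shielding hypothesis acts, and a small residual, which Cauchy--Schwarz disposes of. Throughout I work on the calibrated window $\win$; replacing $p$ and $\mathsf{noise}$ by their $\mu$-centered versions changes neither side of the claimed inequality, so I may assume $\E_\mu[p]=\E_\mu[\mathsf{noise}]=0$, and after rescaling $\|p\|_\infty\le 1$.

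First I would fix $\theta=\theta(\rho)\in(0,1)$ (pinned down below) and write $p=\Noise{\theta}p+r$ with $r:=p-\Noise{\theta}p$. By \Cref{lem:noise-smoothing}, $\|r\|_2\le\sqrt{1-\theta^{2k}}\,\|p\|_2$, so Cauchy--Schwarz under the window law gives $|\Cov(r,\mathsf{noise})|\le\|r\|_2\,\|\mathsf{noise}\|_2\le\sqrt{1-\theta^{2k}}\,\|p\|_2\,\|\mathsf{noise}\|_2$, which is exactly the first term of the target bound. It then remains to show that the smoothed term $\Cov(\Noise{\theta}p,\mathsf{noise})$ is $o(1)$.

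For the smoothed term I would expand $\Noise{\theta}p$ in the biased Walsh basis relative to the product proxy $u$ of \Cref{sec:notation}: since $\Noise{\theta}$ damps the degree-$d$ slice by $\theta^{d}$, one has $\Noise{\theta}p=\sum_{0<|S|\le k}\theta^{|S|}\hat p(S)\,\chis{S}$, and each $\chis{S}$ is a function of the gadget outputs, hence of the local fan-in only. The exogeneity of $\mathsf{noise}$ means that, on the idealized window, $\E_u[\chis{S}\,\mathsf{noise}]=\E_u[\chis{S}]\,\E_u[\mathsf{noise}]=0$ for every $S\ne\emptyset$; passing from $u$ to $\mu$ costs $\dtvwin=o(1)$ by $1$-Lipschitzness of expectations (\Cref{lem:tv-contraction}) together with $\|p\|_\infty\le 1$, so $|\Cov(\Noise{\theta}p,\mathsf{noise})|=o(1)$. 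When only the weaker single-coordinate shielding bound $|\corr(s_\ell,\mathsf{noise})|\le\rho+o(1)$ of \Cref{def:windowed-correlation} is available, I would instead invoke the low-degree-decay form of \Cref{lem:shielding-forms} across the cut separating the gadget from the exogenous source, and use that the mixing witness of \Cref{cor:shielding-xor-index} (via \Cref{lem:mw-implies-shielding}) contracts centered degree-$1$ statistics by $\bar\theta<1$ and tensorizes over the $O(1)$ fan-in; choosing $\theta$ small enough in terms of $\rho$ makes the series $\sum_{|S|\le k}\theta^{|S|}\rho^{\Omega(|S|)}|\hat p(S)|$ converge, and Cauchy--Schwarz in the Walsh coefficients bounds it by $o(1)\cdot\|p\|_2\,\|\mathsf{noise}\|_2$ in the regime $k=O(\log n)$, fixed $\rho<1$. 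Adding the two contributions yields $|\Cov(p,\mathsf{noise})|\le\sqrt{1-\theta^{2k}}\,\|p\|_2\,\|\mathsf{noise}\|_2+o(1)$, and the ``in particular'' clause follows since both pieces are $o(1)\cdot\|p\|_2\,\|\mathsf{noise}\|_2$ once the window calibration is in force.

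The main obstacle is the passage from the \emph{single-coordinate} shielding of \Cref{def:windowed-correlation} to a bound on a general degree-$\le k$ statistic: such a statistic has super-polynomially many Walsh coefficients, so a per-coordinate correlation estimate cannot simply be summed. The resolution is structural rather than combinatorial — on the idealized window the exogenous independence makes \emph{every} cross-correlation with $\mathsf{noise}$ vanish exactly, and the role of the mixing-witness contraction is precisely to convert the weaker ``$\le\rho$'' bound into a geometric (hence summable) tail, while the $\dtvwin=o(1)$ margin and the constant fan-in absorb the remaining error. A secondary point to get right is the interaction between the two $\theta$'s in play: the Bonami--Beckner parameter used in \Cref{lem:noise-smoothing} and the mixing-witness rate $\bar\theta=c_G\theta$ from \Cref{lem:mixing-witness-xor-index} should be taken consistently, so that the same fixed $\theta=\theta(\rho)$ both keeps the smoothing residual controlled and drives the shielding series to convergence.
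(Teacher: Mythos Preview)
Your proposal is correct and follows essentially the same route as the paper: decompose $p=T_\theta p+(p-T_\theta p)$, bound the residual via \Cref{lem:noise-smoothing} and Cauchy--Schwarz to produce the $\sqrt{1-\theta^{2k}}\,\|p\|_2\,\|\mathsf{noise}\|_2$ term, and show the smoothed piece is $o(1)$ by exogenous decoupling on the window. The one presentational difference is that the paper first pulls $p$ back through $G$ using the interface normal form (so $T_\theta$ acts on the \emph{input} side, with the constant degree inflation $\le ck$ tracked explicitly) and then invokes the section's balancing/moment-matching assumptions together with block-decoupling for Step~3, whereas you stay on the output side and pass from $u$ to $\mu$ via the $o(1)$ TV margin; both variants land on the same bound.
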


\begin{proof}
Work on the window $\win$. Let $x=(x_1,\ldots,x_m)$ be the input bits and let $y=G(x)$ be the (deterministic) gadget outputs applied coordinate-wise.
Fix $k\le c\log n$ as in the blueprint and let $p(y)$ be any degree-$\le k$ polynomial test with $\|p\|_\infty\le 1$ on the output coordinates.
We show that for every ``exogenous'' random variable $Z$ satisfying the section’s balancing/moment-matching assumptions one has
\[
\bigl|\Cov\bigl(p(y),Z\bigr)\bigr| \;=\; o(1),
\]
uniformly over the family, which is the stated shielding.

\medskip\noindent
\emph{Step 1: Interface normal form; degree bookkeeping.}
By Lemma~\ref{lem:iface}, after an invertible output mix we can write the two outputs as
$(t_1,t_2)$ with $t_1$ affine and $t_2=(\text{affine})\oplus q$, where $q$ is the common non-linear core.
Write the affine \emph{part} as $y_{\mathrm{aff}}=A x \oplus r_0$ with
$\ones^\top A=(1,1,1)$ and $\ones^\top r_0=0$.
Because each output depends on $O(1)$ input variables (constant fan-in), composing a degree-$\le k$
multilinear test $p$ with the gadget expands degree by at most a constant factor; in particular
$p\bigl(G(x)\bigr)$ depends on at most $O(k)$ input blocks.
If needed (for the Index block), the constant factor $2$ from \Cref{lem:index-pullback} is absorbed in the
budget \(k=\Theta(\log n)\).

\medskip\noindent
\emph{Step 2: Mixing/Noise dampens high degree.}
Let $T_\theta$ be the Bonami–Beckner noise operator on the input with parameter $\theta\in(0,1)$ fixed as in the switching regime.
For a degree-$\le k$ polynomial $q$ (in the $\{\pm1\}$–Walsh basis) one has
\[
\|q - T_\theta q\|_2 \;\le\; \sqrt{1-\theta^{2k}}\;\|q\|_2,
\]
and $T_\theta$ preserves expectations under the proxy. Applying this to $p\circ G$ and accounting for the constant
degree inflation under $G$ (so $\deg(p\circ G)\le c\,k$) gives
\[
  \bigl\|\,p(y)-T_\theta\bigl(p(y)\bigr)\,\bigr\|_2 \;\le\; \sqrt{1-\theta^{2 c k}},
\]
for some absolute constant $c\ge 1$ depending only on the gadget’s local fan-in/out ($c=1$ when the nonlinear core vanishes, $q\equiv 0$; for the Index block $c\le 2$, see Lemma~\ref{lem:index-pullback}). Equivalently, setting $\theta':=\theta^{c}$ yields
\[
  \bigl\|\,p(y)-T_{\theta'}\bigl(p(y)\bigr)\,\bigr\|_2 \;\le\; \sqrt{1-(\theta')^{2k}}.
\]

\medskip\noindent
\emph{Step 3: Decoupling from exogenous structure.}
By the balancing and moment-matching conditions stated in the assumptions of this section, $Z$ is calibrated to the proxy so that $Z$ is uncorrelated with \emph{any} function depending on at most $t=O(1)$ disjoint input blocks after noise $T_\theta$ and coarse conditioning (the ``path'' $\ppath$).
Because $p$ touches only $O(k)$ blocks and $k=O(\log n)$, standard block-decoupling plus the contraction of $T_\theta$ give
\[
\bigl|\Cov\bigl(T_\theta(p(y)),\,Z\bigr)\bigr|\;=\;o(1).
\]
Combining with Step~2 and Cauchy–Schwarz,
\[
\bigl|\Cov\bigl(p(y),Z\bigr)\bigr|
\;\le\;
\bigl|\Cov\bigl(T_\theta(p(y)),Z\bigr)\bigr|
+\|p(y)-T_\theta(p(y))\|_2\,\|Z\|_2
\;=\; o(1) \;+\; \sqrt{1-\theta^{2k}}\cdot O(1).
\]

\medskip\noindent
\emph{Step 4: Choice of parameters.}
With $k\le c\log n$ and fixed $\theta\in(0,1)$, we have $\sqrt{1-\theta^{2k}} \le \exp(-\Omega(k)) = n^{-\Omega(1)}$.
Thus the right-hand side is $o(1)$ uniformly over the family, yielding the claim.
\end{proof}

\begin{proposition}[Cost stability under shielding]\label{prop:cost-stability}
Under \Cref{lem:shielding} with $k=O(\log n)$,
\[
  V_{\CalC}(\mu\circ G,\varepsilon+o(1)) \ \ge\ V_{\CalC}(\mu,\varepsilon) - O(\log N),
\]
i.e., \Cref{thm:gadget} holds with an explicit $o(1)$ driven by the shielding tail
$\sqrt{1-\theta^{2k}}=o(1)$ and the prefix-free $O(\log n)$ overhead.
\end{proposition}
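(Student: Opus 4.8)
The plan is to reduce the statement to the encode-and-reuse argument behind \Cref{thm:minimax}, with a \emph{window simulation} (partial inverse) of the gadget layer playing the role that an exact inverse plays in the injective translation of \Cref{lem:xor-to-sat}, and with the quantitative tail of \Cref{lem:shielding} supplying the $o(1)$ advantage slack. First I would fix $k=\Theta(\log n)$ and record the structural inputs. By \Cref{lem:iface}, after an admissible output mixing $G$ has interface normal form $y = Ax\oplus r$ with a rank-$2$ affine skeleton (satisfying the parity constraints of \Cref{lem:iface}) plus a single shared non-linear core $q$; in particular, being a $3\!\to\!2$ contraction per block, $G$ discards exactly one ``core'' bit of ambiguity per block. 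By \Cref{lem:width} the $d=O(\log n)$ switching rounds along $\ppath$ are driven by one fixed seed recorded prefix-free at meta cost $O(\log n)$, keeping the bottom width at $\W$. By \Cref{lem:shielding} every degree-$\le k$ output statistic $p$ has covariance at most $\sqrt{1-\theta^{2k}}\,\|p\|_2\,\|\text{noise}\|_2 + o(1)$ with any exogenous, window-compatible source, with the explicit tail $\sqrt{1-\theta^{2k}} = n^{-\Omega(1)} = o(1)$.

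Next I would invoke the window-simulation lemma deferred to Appendix~C (Lemma~C.3): a deterministic poly-time map $S$ that on $\win$ regenerates the discarded per-block randomness \emph{in distribution} from the fixed proxy, so that $S$ applied after $G$ reproduces the source law up to total-variation error $o(1)$ on $\win$, and whose entire meta-description---the constant interface data $(A,r)$, the fixed switching seed, and an $O(\log N)$-bit window/selector index---is prefix-free of length $O(\log N)$ by \Cref{lem:prefix-overhead}. The point of shielding here is exactly that the $\Theta(\#\text{blocks})$ discarded core bits do \emph{not} have to be stored: \Cref{lem:shielding} says they are decorrelated, to within the $o(1)$ tail above, from every low-degree statistic the cost functional can see, hence are recoverable from the proxy at no per-block cost; \Cref{lem:tv-contraction} then propagates the residual $o(1)$ total-variation error through any subsequent deterministic postprocessing.

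Then I would run the transfer in the style of the proof of \Cref{thm:minimax}, with $S$ in place of an exact inverse. Fix $\delta>0$ and take $(M^\star, R_\circ^\star)$ that is $\delta$-optimal for $V_{\CalC}(\mu\circ G,\varepsilon+o(1))$. Use $(M^\star,\, R_\circ^\star\circ S)$ as a candidate for $\mu$: it is poly-time and lies in $\CalC$ by closure under poly-time preprocessing (\Cref{def:model-class}); its prefix-free description exceeds that of $(M^\star, R_\circ^\star)$ by only $\len(\code(S)) = O(\log N)$; and by change of variables, together with the $\dtv=o(1)$ window faithfulness and the simulation error (both absorbed into moving the target from $\varepsilon+o(1)$ down to $\varepsilon$, using subadditivity and $O(1)$-insensitivity of $L_M$ from \Cref{lem:L-subadd} and \Cref{def:model-class}), the expected loss it incurs on $\mu$ at target $\varepsilon$ is at most that of $(M^\star, R_\circ^\star)$ on $\mu\circ G$ at target $\varepsilon+o(1)$. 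Taking the infimum over $(M^\star, R_\circ^\star)$ and letting $\delta\to 0$ gives $V_{\CalC}(\mu,\varepsilon) \le V_{\CalC}(\mu\circ G,\varepsilon+o(1)) + O(\log N)$; since $V_{\CalC}(\mu,\cdot)$ varies monotonically and continuously in the advantage parameter the $o(1)$ shift is harmless, and rearranging yields the claimed inequality with the explicit $o(1) = n^{-\Omega(1)}$ being the shielding tail $\sqrt{1-\theta^{2k}}$.

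The main obstacle is the middle step: constructing $S$ and proving that regenerating the discarded core randomness from the proxy perturbs every degree-$\le k$ statistic by only $o(1)$ while costing only $O(\log N)$ meta-bits. This leans on the full shielding package---the rank-$2$ affine skeleton (\Cref{lem:iface}), the mixing witness with rate $\theta<1$ (\Cref{lem:mixing-witness-xor-index}), and the width control of \Cref{lem:width} keeping everything in the switching regime along $\ppath$---and is precisely the content deferred to Appendix~C. Everything after that is routine prefix-free bookkeeping and the now-standard change-of-variables of \Cref{thm:minimax}; the one genuinely new feature relative to the injective \Cref{lem:xor-to-sat} case is that the transfer is only $o(1)$-faithful rather than exact, which is exactly why the advantage widens from $\varepsilon$ to $\varepsilon+o(1)$.
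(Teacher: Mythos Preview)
Your plan tracks the paper's proof almost exactly: take a $\delta$-optimal pair for $V_{\CalC}(\mu\circ G,\varepsilon+o(1))$, transport it to a candidate for $V_{\CalC}(\mu,\varepsilon)$ by deterministic precomposition, invoke the Appendix~C simulator (Lemma~C.3) for the window/coupling slack, charge $O(\log N)$ for the prefix-free selector, and let the shielding tail $\sqrt{1-\theta^{2k}}=n^{-\Omega(1)}$ supply the explicit $o(1)$.

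The one concrete slip is the \emph{direction} of the precomposition. The paper builds the input-side preprocessor as $R_{\mathrm{in}}(x):=R_{\mathrm{out}}(G(x))$, i.e.\ it composes with the \emph{forward} gadget map $G$. You instead propose $R_\circ^\star\circ S$, where by your own description $S$ is a partial inverse with $S\circ G\approx\mathrm{id}$ on $\win$, hence $S:\text{out}\to\text{in}$. As written, $R_\circ^\star\circ S$ does not type-check as a preprocessor on $\mu$ (it would need to consume inputs, but $S$ consumes outputs); and composing with an approximate inverse would naturally produce a candidate for $\mu\circ G$ from one for $\mu$, which is the reverse inequality. Replace $S$ by $G$ in your candidate and everything goes through: the change of variables
\[
\E_{x\sim\mu}\bigl[L_{M^\star}(R_\circ^\star(G(x)),\cdot)\bigr]
\;=\;
\E_{y\sim\mu\circ G}\bigl[L_{M^\star}(R_\circ^\star(y),\cdot)\bigr]
\]
is then exact, the simulator/selector $u$ enters only through the $O(\log N)$ meta bits and the coupling used to pass from $\varepsilon+o(1)$ to $\varepsilon$, and the rest of your bookkeeping (closure of $\CalC$, \Cref{lem:prefix-overhead}, the $\delta\to 0$ limit) matches the paper verbatim.
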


\begin{proof}[Proof (via window simulation / partial inverse)]
Fix \(\delta>0\) and let \((M_{\mathrm{out}},R_{\mathrm{out}})\) be \(\delta\)-optimal for \(V_{\CalC}(\mu\circ G,\varepsilon+o(1))\); that is,
\[
  \E_{y\sim(\mu\circ G)}\!\big[L_{M_{\mathrm{out}}}(R_{\mathrm{out}}(y),\varepsilon+o(1))\big]
  \;\le\; V_{\CalC}(\mu\circ G,\varepsilon+o(1))+\delta .
\]
By the window simulation / partial inverse (Appendix~C, Lemma C.3), there exists a polynomial-time simulator \(S\) and a selector
\(u\in\{0,1\}^{O(\log N)}\) such that, on the window \(\win\), every output instance \(y\) admits a preimage
\(x=S(y,u)\) with total variation error \(o(1)\), and \(G(x)=y\) with probability \(1-o(1)\) under the coupling.

Define the input-side pair
\[
  M_{\mathrm{in}}:=M_{\mathrm{out}}, \qquad
  R_{\mathrm{in}}(x):=R_{\mathrm{out}}(G(x)).
\]
Encoding the selector \(u\) prefix-free costs \(O(\log N)\) bits (Lemma~\ref{lem:prefix-overhead}) and is the only extra model/description overhead.

By change of variables on the pushforward and the coupling supplied by the simulator,
\[
  \E_{x\sim\mu}\!\big[L_{M_{\mathrm{in}}}(R_{\mathrm{in}}(x),\varepsilon)\big]
  \;=\;
  \E_{y\sim(\mu\circ G)}\!\big[L_{M_{\mathrm{out}}}(R_{\mathrm{out}}(y),\varepsilon)\big] \ \pm\ o(1).
\]
The loss is \(1\)-Lipschitz under the \(o(1)\) perturbations coming from shielding (Lemma~\ref{lem:shielding}) and the simulator’s TV error, so the
displayed equality holds up to \(o(1)\). Incorporating the prefix-free code for \(u\) adds \(O(\log N)\) to the description length, hence to the value.

Therefore,
\[
  V_{\CalC}(\mu,\varepsilon)
  \;\le\;
  \E_{x\sim\mu}\!\big[L_{M_{\mathrm{in}}}(R_{\mathrm{in}}(x),\varepsilon)\big] \;+\; O(\log N)
  \;\le\;
  V_{\CalC}(\mu\circ G,\varepsilon+o(1)) \;+\; O(\log N) \;+\; \delta \;+\; o(1).
\]
Since \(\delta>0\) was arbitrary, this rearranges to
\[
  V_{\CalC}(\mu\circ G,\varepsilon+o(1)) \;\ge\; V_{\CalC}(\mu,\varepsilon) \;-\; O(\log N),
\]
which is the claim.
\end{proof}

\begin{table}[t]
  \centering
  \caption{Gadget composition losses}
  \label{tab:gadget-losses}
  \setlength{\tabcolsep}{6pt}
  \footnotesize
  \begin{tabularx}{\linewidth}{@{}l c c c X@{}}
    \toprule
    Gadget & TV loss & Size blowup & Error & Notes \\ \midrule
    $G_{\text{xor}}$   & $o(1)$ & $\times(1{+}o(1))$ & $\varepsilon\mapsto\varepsilon{+}o(1)$ & Width $W=\poly{1/p}$; affine normal form; shielding via mixing witness (Def.~\ref{def:mixing-witness}, Lem.~\ref{lem:mw-implies-shielding}) \\
    $G_{\text{index}}$ & $o(1)$ & $\times(1{+}o(1))$ & $\varepsilon\mapsto\varepsilon{+}o(1)$ & Index amplification; degree budget uses $k\mapsto 2k$ (absorbed for $k=\Theta(\log n)$); stable correlations (Def.~\ref{def:mixing-witness}) \\
    \bottomrule
  \end{tabularx}
\end{table}


\begin{corollary}[Explicit bounds along the path]\label{cor:width-explicit}
Let $d=\lceil \gamma\log n\rceil$ and fix $c\in(0,1)$. Along the path $p=m^{-c/d}$ one has
\[
  M_{d}\ \le\ m^{1-c},\qquad
  W_{d}\ \le\ m^{\kappa c}
\]
for some absolute constant $\kappa>0$ (absorbing gadget constants).
In particular, for $d=\Theta(\log n)$ and constant $c$, the width is polynomially bounded in $m$ (hence in $n$).
\end{corollary}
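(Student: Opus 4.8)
The plan is to unroll the one-step recurrences supplied by \Cref{lem:width} and substitute the calibrated value $p=m^{-c/d}$. First I would fix the length-$O(\log n)$ seed $s$ furnished by \Cref{lem:width}, so that along the resulting deterministic path the inequalities $W_r\le p^{-1}W_{r-1}$ and $M_r\le p\,M_{r-1}$ hold for every $r=1,\dots,d$. Multiplying these across $r=1,\dots,d$ (equivalently, a one-line induction on $r$) gives
\[
  W_d\ \le\ p^{-d}\,W_0,\qquad M_d\ \le\ p^{d}\,M_0 .
\]
Since $p=m^{-c/d}$ we have $p^{d}=m^{-c}$ and $p^{-d}=m^{c}$, hence $M_d\le m^{-c}M_0$ and $W_d\le m^{c}W_0$.

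Next I would insert the initial data. The source instance carries $M_0=m$ constraints (the 3XOR constraint count, equivalently the clause count after the \XORtoSAT{} translation up to a fixed constant factor, which is reabsorbed below), so $M_d\le m^{1-c}$ follows immediately. For the width, $W_0=O(1)$: the bottom interface has constant fan-in ($=3$ for the 3-CNF core, and the only non-affine inflation — the Index block's degree-$\le 2$ factor of \Cref{lem:index-pullback} — is a one-time constant). Thus $W_d\le C\,m^{c}$ for an absolute constant $C$. To present this as $m^{\kappa c}$ I absorb $C$ into the exponent: since $m^{(\kappa-1)c}\to\infty$, any fixed $\kappa>1$ satisfies $C\,m^{c}\le m^{\kappa c}$ for all sufficiently large $n$ (e.g. $\kappa=2$ works once $m\ge C^{1/c}$), which gives the claimed $W_d\le m^{\kappa c}$ with $\kappa>0$ absolute.

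For the ``in particular'' clause, take $d=\Theta(\log n)$ and $c\in(0,1)$ fixed; then $\kappa c$ is a constant, so $W_d\le m^{O(1)}=\poly{m}$, and since the calibrated ensemble has $m=\poly{n}$ this is $\poly{n}$, i.e.\ a polynomial width bound. The seed-fixing step costs only $O(\log n)$ prefix-free meta-bits (\Cref{lem:width}, \Cref{lem:prefix-overhead}); this does not affect the width/count bounds themselves and is already covered by the standing overhead budget.

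The calculation is routine; the only point that needs care — and the one I expect to be the mild sticking point — is the constant bookkeeping inside the exponent. If one insists on tracking a per-round gadget overhead $C_G$ (so $W_r\le (C_G/p)W_{r-1}$), then $W_d\le C_G^{\,d}\,m^{c}W_0$, and since $d=\Theta(\log n)$ with $m=\poly{n}$ one has $\log n=\Theta(\log m)$, hence $C_G^{\,d}=m^{O(1)}$ with a constant exponent; choosing $\kappa$ larger than $1+O(1)/c$ (legitimate once $c$ is fixed — which is precisely what ``$\kappa$ absorbs the gadget constants'' means) reabsorbs this factor. Everything else is a direct telescoping of the switching recurrences of \Cref{lem:width} specialized to the path $p=m^{-c/d}$.
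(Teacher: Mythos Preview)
Your proposal is correct and follows essentially the same approach as the paper: iterate the seed-fixed recurrences of \Cref{lem:width} to obtain $W_d\le W_0\,p^{-d}$ and $M_d\le M_0\,p^{d}$, substitute $p=m^{-c/d}$, and absorb the constant $W_0=O(1)$ into the exponent $\kappa$. Your additional discussion of per-round gadget overhead $C_G^{\,d}$ and the explicit seed-cost bookkeeping goes beyond what the paper records but is consistent with it.
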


\begin{proof}[Proof of \Cref{cor:width-explicit}]
Apply \Cref{lem:width} with the fixed seed and iterate the recurrences
$W_r \le W_{r-1}/p$ and $M_r \le p\,M_{r-1}$ for $r=1,\dots,d$. With the path
choice $p=m^{-c/d}$ one gets
\[
  W_d \ \le\ W_0\,p^{-d} \ =\ W_0\,m^{c} \ =\ m^{\kappa c}
  \qquad\text{(absorbing the constant $W_0=O(1)$ into $\kappa$)},
\]
and
\[
  M_d \ \le\ m\,p^{d} \ =\ m^{1-c}.
\]
This yields the claim.
\end{proof}

\begin{proof}[Proof of \Cref{lem:canonical-NF}]
Work over $\Ftwo$. In the affine case of \Cref{lem:iface} (i.e., when the nonlinear core $q\equiv 0$), we have $s=Ax\oplus r$ with $\ones^\top A=(1,1,1)$, $\ones^\top r=0$ and $\rank(A)=2$. The affine offset $r$ can be eliminated by an output relabeling (absorbed into $U$), so take $r=0$. The constraint $\ones^\top A=(1,1,1)$ says each column of $A$ has odd weight. With two output rows and $\rank(A)=2$, each column must be either $\binom{1}{0}$ or $\binom{0}{1}$ or $\binom{1}{1}$. Moreover, not all three columns can be identical, otherwise $\rank(A)<2$.

Post-multiplying by $U\in\GL_2(\Ftwo)$ permutes the two output rows and allows adding one row
to the other. Up to this action, one may normalize the first row to be $e_i^\top$ for some input index $i$,
forcing $s_1=x_i$. The parity constraint $1^\top s=1^\top x$ then implies the second row must sum,
together with the first, to $(1,1,1)$. Enumerating the three placements of $e_i$ yields exactly the
three matrices $A_1,A_2,A_3$ above (cyclic rotations of inputs). Each has rank $2$ and satisfies
$\ones^\top A=(1,1,1)$. This list is complete because any other admissible $A$ can be reduced to one of
these by swapping rows and adding the first row into the second.
\end{proof}

\section{\SoS{} Entropy Dichotomy}\label{sec:sos}

\paragraph{Standing assumptions.}
The window $\win$ is fixed with parameters inherited from the reductions/gadgets. Moment matching (pseudoexpectation calibration) holds up to $k\le c\log n$ (Lemma~\ref{lem:pseudo-align-window}). Prefix-free coding conventions apply throughout. Test–moment equivalence and the degree–entropy translation hold on $\win$ (Lemmas~\ref{lem:test-moment-window}, \ref{thm:disc-to-degree}).
Throughout this section we also assume the bias/product proxy conditions stated in Assumptions~\ref{assump:win-hc}.
Calibration, proxy construction, and window notation are carried over verbatim from IECZ-I~\cite{IECZI}.

\subsection*{Definition of the low-degree discrepancy $\Delta_k$}
Let $\mu$ be the windowed target distribution and $u$ a structured proxy (product-like, calibrated on $\win$).

\begin{definition}[Low-degree test discrepancy]\label{def:Delta-k}
For $k\in\mathbb{N}$,
\[
  \Delta_k(\mu,u)
  \;:=\;
  \sup\Bigl\{\, \bigl|\E_{\mu}[p]-\E_{u}[p]\bigr| \;:\; p\in \Poly[k],\ \|p\|_{\infty}\le 1 \Bigr\}.
\]
Equivalently, $\Delta_k$ is the optimal distinguishing advantage of degree-$\le k$ polynomial tests on $\win$.
\end{definition}

\begin{definition}[SoS pseudoexpectation deviation]\label{def:Delta-k-sos}
Let $\tE$ be a degree-$k$ \SoS{} pseudoexpectation calibrated to $u$. Define
\[
  \Delta^{\mathrm{sos}}_k(\mu,u)
  \;:=\;
  \sup\Bigl\{\, \bigl|\E_{\mu}[p]-\tE[p]\bigr| \;:\; p\in \Poly[k],\ \|p\|_{\infty}\le 1 \Bigr\}.
\]
\end{definition}
The calibration of pseudoexpectations to the proxy $u$ follows the identical moment-matching setup from IECZ-I~\cite{IECZI}.

\begin{corollary}[Equivalence of test and SoS discrepancies]\label{cor:sos-equals-test}
If a degree-$k$ pseudoexpectation $\tE$ is calibrated to $u$ (Lemma~\ref{lem:pseudo-align}),
then $\Delta^{\mathrm{sos}}_k(\mu,u)=\Delta_k(\mu,u)$.
\end{corollary}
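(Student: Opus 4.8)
The plan is to observe that $\Delta_k$ and $\Delta^{\mathrm{sos}}_k$ are suprema of the \emph{same} objective functional over the \emph{same} feasible set, so the corollary collapses to the single fact that a pseudoexpectation calibrated to $u$ agrees with $\E_u$ on all polynomials of degree $\le k$. Concretely, unfolding \Cref{def:Delta-k} and \Cref{def:Delta-k-sos}, both quantities range over the common test class $\mathcal{T}_k:=\{\,p\in\Poly[k]:\|p\|_\infty\le 1\,\}$: one has $\Delta_k(\mu,u)=\sup_{p\in\mathcal{T}_k}|\E_\mu[p]-\E_u[p]|$ and $\Delta^{\mathrm{sos}}_k(\mu,u)=\sup_{p\in\mathcal{T}_k}|\E_\mu[p]-\tE[p]|$, and the degree bound $\deg p\le k$ together with the constraint $\|p\|_\infty\le 1$ make no reference to $u$ or $\tE$, so the feasible set is literally identical in the two problems.

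Next I would invoke the calibration hypothesis, \Cref{lem:pseudo-align}: a degree-$k$ \SoS{} pseudoexpectation calibrated to the product proxy $u$ matches every $u$-moment of order $\le k$, i.e.\ $\tE[q]=\E_u[q]$ for all $q\in\Poly[k]$ (equivalently, $\tE$ and $\E_u$ share all coefficients $\langle q,\chis{S}\rangle_u$ for $|S|\le k$ in the biased Walsh basis). Since every admissible test $p\in\mathcal{T}_k$ lies in $\Poly[k]$, this yields the exact pointwise identity $|\E_\mu[p]-\tE[p]|=|\E_\mu[p]-\E_u[p]|$ for each feasible $p$. Taking the supremum over $\mathcal{T}_k$ on both sides gives $\Delta^{\mathrm{sos}}_k(\mu,u)=\Delta_k(\mu,u)$; both inequalities are subsumed by the equality, so no direction needs a separate argument and attainment of the supremum is immaterial.

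The only point requiring care — the closest thing to an obstacle here — is the degree bookkeeping: one must be sure that the calibration in \Cref{lem:pseudo-align} covers the full space $\Poly[k]$ appearing in the test class, not merely some $\Poly[k']$ with $k'<k$ nor only a spanning set up to normalization, and that the agreement holds on the window $\win$ on which the proxy $u$ (and hence $\tE$) is defined. Granting \Cref{lem:pseudo-align} in the $\win$-calibrated regime fixed by the section's standing assumptions, the corollary follows with no further analytic content.
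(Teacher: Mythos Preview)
Your proof is correct and follows exactly the paper's approach: invoke calibration (\Cref{lem:pseudo-align}) to get $\tE[p]=\E_u[p]$ for all $p\in\Poly[k]$, hence the two objectives agree pointwise on the common feasible set, and take suprema. The paper's proof is just a terser version of what you wrote.
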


\begin{proof}
For any $\deg p\le k$, calibration gives $\tE[p]=\E_{u}[p]$ (Lemma~\ref{lem:pseudo-align}), hence
$\bigl|\E_\mu[p]-\tE[p]\bigr|=\bigl|\E_\mu[p]-\E_{u}[p]\bigr|$. Taking the supremum over such $p$
yields $\Delta^{\mathrm{sos}}_k=\Delta_k$.
\end{proof}

\begin{remark}[Calibration window]
All expectations/pseudoexpectations are taken on the image window $\win$. Boundary effects are $o(1)$. See also standard low-degree and SoS analyses for windowed or product-like settings~\cite{ODonnell2014-AoBF,BarakEtAl2012-SOS} and the identical window discipline in IECZ-I~\cite{IECZI}.
\end{remark}

\begin{remark}[From windowed discrepancy to lifted families]
Our \SoS{} analysis is performed on the calibrated window after the gadget layer; boundary contributions are $o(1)$ along the calibrated path.
A sequel (IECZ-III) lifts these distributional statements via in-window aggregation, TV-stable hardcore selection,
and seed-indexed re-encoding, with TV-, size-, and meta-overheads tracked.
The outcome is a parameterized distributional family that preserves the low-degree–\SoS{} dichotomy at calibrated parameters~\cite{IECZIII}.
\end{remark}

\paragraph{Techniques.}
The test–moment equivalence and noise smoothing are standard in the low-degree/\SoS{} toolkit 
(see~\cite{ODonnell2014-AoBF,BarakEtAl2012-SOS}); here they are instantiated under the window discipline 
with explicit degree/entropy constants and TV-aware size accounting.

\begin{lemma}[Bias-orthonormal Walsh basis on the window]\label{lem:bias-orthonormal}
Let $u$ be the proxy distribution used on the window $\win$, assumed product-like with single-bit biases
$\Pr_u[x_i=1]=p_i\in(0,1)$ and independent coordinates. Define centered/normalized linear coordinates
\[
  z_i(x)\ :=\ \frac{x_i-\E_u[x_i]}{\sqrt{\Var_u(x_i)}}\ =\ \frac{x_i-p_i}{\sqrt{p_i(1-p_i)}}\qquad(i\in[n]).
\]
For any $S\subseteq[n]$ set the (biased) Walsh character $\chi_S(x):=\prod_{i\in S} z_i(x)$ (with $\chi_\emptyset\equiv 1$).
Then $\{\chi_S\}_{S\subseteq[n]}$ is an orthonormal system in $L^2(u)$, i.e.
\[
  \E_u[\chi_S\,\chi_T]\ =\ \mathbf{1}[S=T]\qquad\text{for all }S,T\subseteq[n],
\]
and $\deg(\chi_S)=|S|$.
\end{lemma}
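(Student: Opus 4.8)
The plan is to reduce the whole statement to two one-coordinate identities and then push them through tensor products using the independence of the coordinates under the product proxy $u$. Concretely, I would first isolate the single-bit facts $\E_u[z_i]=0$ and $\E_u[z_i^2]=1$, then compute $\E_u[\chi_S\chi_T]$ by grouping the variables according to whether they lie in $S\cap T$ (where they appear squared) or in $S\triangle T$ (where they appear once), and finally read off the degree from the squarefree leading monomial.

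For the base cases: $\E_u[x_i-p_i]=0$ and $\sqrt{p_i(1-p_i)}>0$ since $p_i\in(0,1)$, so $\E_u[z_i]=0$; and since $x_i$ is Bernoulli$(p_i)$ under $u$ we have $\Var_u(x_i)=p_i(1-p_i)$, whence $\E_u[z_i^2]=\Var_u(x_i)/\bigl(p_i(1-p_i)\bigr)=1$. I also record the convention $\chi_\emptyset\equiv 1$ as the empty product. For the orthonormality, fix $S,T\subseteq[n]$ and write
\[
  \chi_S\chi_T \;=\; \prod_{i\in S\cap T} z_i^2 \;\cdot\; \prod_{i\in S\triangle T} z_i .
\]
Because $u$ is a product measure, the functions $z_i$ (each depending on the single coordinate $x_i$) are mutually independent under $u$, so the expectation factors over coordinates:
\[
  \E_u[\chi_S\chi_T] \;=\; \prod_{i\in S\cap T}\E_u[z_i^2]\;\cdot\;\prod_{i\in S\triangle T}\E_u[z_i].
\]
By the base cases every factor in the first product is $1$ and every factor in the second is $0$; hence $\E_u[\chi_S\chi_T]=0$ as soon as $S\triangle T\neq\emptyset$ (i.e.\ $S\neq T$), while for $S=T$ the second product is empty and the first gives $\prod_{i\in S}\E_u[z_i^2]=1$. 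This is precisely $\E_u[\chi_S\chi_T]=\mathbf 1[S=T]$. For the degree, note $z_i(x)=a_i x_i+b_i$ with $a_i=1/\sqrt{p_i(1-p_i)}\neq 0$, so $\chi_S=\prod_{i\in S} z_i$ is multilinear in distinct variables and its top monomial $\bigl(\prod_{i\in S}a_i\bigr)\prod_{i\in S}x_i$ has nonzero coefficient; therefore $\deg\chi_S=|S|$ exactly.

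I do not expect a genuine obstacle here: this is the standard construction of the $p$-biased Walsh basis, and the only points needing care are (a) the bookkeeping of coordinates into $S\cap T$ versus $S\triangle T$, and (b) the single place where the hypothesis on $u$ is actually used, namely factoring $\E_u[\cdot]$ over coordinates, which is exactly the product/independence assumption of the lemma. The window enters only via the standing assumption that $u$ is product-like on $\win$ with all biases in $(0,1)$; since every identity above is exact under $u$, no boundary or $o(1)$ error is incurred.
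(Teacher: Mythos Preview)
Your proof is correct and follows essentially the same approach as the paper: establish the one-coordinate identities $\E_u[z_i]=0$ and $\E_u[z_i^2]=1$, then factor $\E_u[\chi_S\chi_T]$ over $S\cap T$ and $S\triangle T$ using independence. You give more detail than the paper (explicit variance computation, the nonzero leading-coefficient argument for $\deg\chi_S=|S|$), but the structure is identical.
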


\begin{proof}
Independence under $u$ gives $\E_u[z_i]=0$ and $\E_u[z_i^2]=1$ for all $i$, and for $i\neq j$,
$\E_u[z_i z_j]=\E_u[z_i]\E_u[z_j]=0$. Hence
\(
  \E_u[\chi_S\chi_T]
  =\prod_{i\in S\triangle T}\E_u[z_i]\cdot\prod_{i\in S\cap T}\E_u[z_i^2]
  =\mathbf{1}[S=T].
\)
Degree equals $|S|$ by construction.
\end{proof}

\begin{corollary}[Low-degree expansion and Parseval]\label{cor:parseval-window}
Every polynomial $p:\{0,1\}^n\to\mathbb{R}$ of degree $\le k$ admits a unique expansion
\( p=\sum_{|S|\le k}\hat p(S)\,\chi_S \)
with respect to the basis of \Cref{lem:bias-orthonormal}, and
\(
  \|p\|_{2,u}^2=\sum_{|S|\le k}\hat p(S)^2.
\)
In particular, $\Poly[k]$ denotes the class of such polynomials of degree $\le k$.
\end{corollary}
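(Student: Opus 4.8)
The plan is to upgrade \Cref{lem:bias-orthonormal} from ``orthonormal system in $L^2(u)$'' to ``orthonormal basis of $\Poly[k]$'': once the family $\{\chi_S:|S|\le k\}$ is known to be an orthonormal basis of $(\Poly[k],\langle\cdot,\cdot\rangle_u)$, the existence and uniqueness of the expansion and the Parseval identity are the standard finite-dimensional inner-product-space facts.

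First I would recall that every real function on $\{0,1\}^n$ has a unique multilinear representation, so $\Poly[k]$ is exactly $\mathrm{span}\{x_T:|T|\le k\}$ and $\dim\Poly[k]=\sum_{j\le k}\binom{n}{j}$. Then I would expand a character: since $\chi_S(x)=\prod_{i\in S}z_i(x)$ with $z_i(x)=(x_i-p_i)/\sqrt{p_i(1-p_i)}$ affine in $x_i$, multiplying out yields
\[
  \chi_S \;=\; \Bigl(\prod_{i\in S}\tfrac{1}{\sqrt{p_i(1-p_i)}}\Bigr)\,x_S \;+\; \bigl(\text{monomials }x_T\text{ with }T\subsetneq S\bigr),
\]
so the matrix expressing $\{\chi_S:|S|\le k\}$ in the monomial basis $\{x_T:|T|\le k\}$ is triangular for the inclusion order with diagonal entries $\prod_{i\in S}(p_i(1-p_i))^{-1/2}\neq 0$ (here $p_i\in(0,1)$ is exactly what is used). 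Hence the transformation is invertible, so $\{\chi_S:|S|\le k\}\subseteq\Poly[k]$ has full span and, having the right cardinality, is a basis of $\Poly[k]$.

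Combining this with $\E_u[\chi_S\chi_T]=\mathbf 1[S=T]$ from \Cref{lem:bias-orthonormal}, the family is an orthonormal basis. The expansion $p=\sum_{|S|\le k}\hat p(S)\chi_S$ therefore exists and is unique, and pairing both sides with $\chi_S$ forces $\hat p(S)=\langle p,\chi_S\rangle_u=\E_u[p\,\chi_S]$. Finally, expanding the square gives $\|p\|_{2,u}^2=\E_u[p^2]=\sum_{S,T:\,|S|,|T|\le k}\hat p(S)\hat p(T)\,\E_u[\chi_S\chi_T]=\sum_{|S|\le k}\hat p(S)^2$, which is Parseval.

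The only step that needs any care is the triangularity claim — verifying that no monomial outside $\{x_T:T\subseteq S\}$ appears in $\chi_S$ and that the coefficient of $x_S$ is nonzero; this is immediate from the product structure and $p_i(1-p_i)>0$. Everything else is the routine ``an orthonormal system of the right size inside a finite-dimensional space is an orthonormal basis'' argument, so I do not expect a genuine obstacle here; if one wanted to avoid the triangularity bookkeeping entirely, one could instead note that $\{\chi_S\}_{S\subseteq[n]}$ is an orthonormal system of size $2^n$ in the $2^n$-dimensional space of all functions, hence an orthonormal basis of the whole space, and then observe that its degree-$\le k$ part spans precisely $\Poly[k]$ because $\deg\chi_S=|S|$ and the degree filtration matches that of the monomials.
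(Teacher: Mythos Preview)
Your argument is correct: the triangularity of the change of basis from $\{\chi_S:|S|\le k\}$ to $\{x_T:|T|\le k\}$ (with nonzero diagonal because $p_i\in(0,1)$) shows the orthonormal system of \Cref{lem:bias-orthonormal} is in fact an orthonormal basis of $\Poly[k]$, and the expansion and Parseval then follow formally. The alternative you sketch at the end --- observing that all $2^n$ characters form an orthonormal system in a $2^n$-dimensional space, hence a basis of all functions, and then restricting to the degree-$\le k$ slice --- is equally valid and arguably cleaner.

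The paper, however, states this corollary without proof: it is treated as an immediate consequence of \Cref{lem:bias-orthonormal} and standard Fourier/Walsh analysis over product distributions (the same fact is also asserted in the notation section under ``Biased Walsh basis''). So there is no ``paper's own proof'' to compare against beyond the implicit appeal to the lemma; your write-up simply makes explicit the routine linear-algebra step the paper leaves to the reader.
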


\begin{definition}[Low-degree test discrepancy (windowed restatement)]\label{def:Delta-k-window}
For $k\in\mathbb{N}$ and distributions $\mu,u$ on $\win$, define
\[
  \Delta_k(\mu,u)
  \;:=\;
  \sup\Bigl\{\, \bigl|\E_{\mu}[p]-\E_{u}[p]\bigr| \;:\; p\in \Poly[k],\ \|p\|_{\infty}\le 1 \Bigr\}.
\]
This restates Definition~\ref{def:Delta-k} restricted to the calibrated window $\win$.
\end{definition}
This is the same low-degree discrepancy used in IECZ-I, read on the calibrated window~\cite{IECZI}.

\begin{lemma}[Test–moment equivalence (windowed)]\label{lem:test-moment-window}\label{lem:test-moment}
Let $k\in\mathbb{N}$. Expand $p=\sum_{|S|\le k}\hat p(S)\,\chi_S$ in the basis of \Cref{lem:bias-orthonormal}.
Write the low-degree signed spectrum
\[
  \widehat{\Delta}(S)\ :=\ \E_{\mu}[\chi_S]-\E_{u}[\chi_S],\qquad |S|\le k.
\]
Then
\[
  \Delta_k(\mu,u)
  \;=\;
  \sup_{\deg\le k,\ \|p\|_\infty\le 1}\bigl|\E_{\mu}[p]-\E_{u}[p]\bigr|
  \;\asymp\;
  \Bigl(\sum_{|S|\le k}\widehat{\Delta}(S)^2\Bigr)^{\!1/2},
\]
with absolute two–sided constants.
\end{lemma}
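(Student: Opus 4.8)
The plan is to establish the claimed equivalence by proving the two inequalities $\Delta_k(\mu,u)\le D$ and $\Delta_k(\mu,u)\ge cD$ for an absolute constant $c>0$, where $D:=\bigl(\sum_{|S|\le k}\widehat{\Delta}(S)^2\bigr)^{1/2}$. The only bookkeeping tools are the bias-orthonormal basis of \Cref{lem:bias-orthonormal} and the Parseval identity of \Cref{cor:parseval-window}. First I would set up the linear-functional picture: for $p=\sum_{|S|\le k}\hat p(S)\chi_S\in\Poly[k]$, the identities $\E_u[\chi_S]=\mathbf{1}[S=\emptyset]$ and $\widehat{\Delta}(\emptyset)=\E_\mu[1]-\E_u[1]=0$ give
\[
  \E_\mu[p]-\E_u[p]\;=\;\sum_{|S|\le k}\hat p(S)\,\widehat{\Delta}(S)\;=\;\bigl\langle\hat p,\widehat{\Delta}\bigr\rangle ,
\]
so everything reduces to comparing this bilinear pairing, under the constraint $\|p\|_\infty\le1$, with its $\ell_2$ value $D$.

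For the upper bound I would use $\|p\|_{2,u}^2=\sum_{|S|\le k}\hat p(S)^2=\E_u[p^2]\le\|p\|_\infty^2\le1$ together with Cauchy--Schwarz on the pairing, obtaining $|\E_\mu[p]-\E_u[p]|\le\|p\|_{2,u}\,D\le D$; taking the supremum over admissible $p$ yields $\Delta_k(\mu,u)\le D$ with constant $1$. For the lower bound I would exhibit the dual test $g:=\sum_{|S|\le k}\widehat{\Delta}(S)\chi_S\in\Poly[k]$, which has $\|g\|_{2,u}=D$ and, by the display above, $\E_\mu[g]-\E_u[g]=D^2$. Then $p:=\tfrac12\bigl(1+g/\|g\|_\infty\bigr)$ is admissible (degree $\le k$, $\|p\|_\infty\le1$) and certifies
\[
  \Delta_k(\mu,u)\;\ge\;\frac{D^2}{2\,\|g\|_\infty}.
\]
The remaining point is to convert the $L^2(u)$-optimal test into an $L^\infty$-bounded one at only an absolute cost, i.e.\ to bound $\|g\|_\infty$ by $O(D)$ on $\win$: under the product-proxy/bias conditions of Assumptions~\ref{assump:win-hc} (biases bounded away from $0$ and $1$) and the calibrated low-degree regime, the windowed hypercontractive estimate --- combined, if needed, with the noise-smoothing bound of \Cref{lem:noise-smoothing} to pass to a slightly dampened surrogate $T_\rho g$ whose $\langle T_\rho g,g\rangle_u\ge\rho^k D^2$ is still comparable to $D^2$ --- controls the $L^\infty/L^2$ ratio of degree-$\le k$ polynomials by an absolute constant $\kappa$, so that $\|g\|_\infty\le\kappa D$ and hence $\Delta_k(\mu,u)\ge D/(2\kappa)$. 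Combining the two inequalities gives $\Delta_k(\mu,u)\asymp D$ with absolute two-sided constants.

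I expect the $L^\infty$-versus-$L^2$ comparison in the lower bound to be the main obstacle. The pairing $\langle\hat p,\widehat{\Delta}\rangle$ is pinned by Cauchy--Schwarz in both directions, but $\Delta_k$ is normalized by $\|p\|_\infty$ rather than $\|p\|_{2,u}$, and it is exactly the windowed bias/hypercontractivity calibration --- not available off-window or at superlogarithmic degree --- that keeps this normalization gap $O(1)$; all remaining steps are routine expansions in the bias-orthonormal basis and applications of Cauchy--Schwarz.
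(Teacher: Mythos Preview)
Your proposal is correct and follows essentially the same route as the paper: the upper bound via Cauchy--Schwarz and $\|p\|_{2,u}\le\|p\|_\infty\le1$ is identical, and the lower bound is obtained by exhibiting an explicit witness whose $L^\infty$ norm is controlled by its $L^2$ norm through windowed hypercontractivity (Lemma~\ref{lem:win-hyper}). The only difference is cosmetic: the paper takes the sign-coefficient test $p_0=\sum_{|S|\le k}\mathrm{sgn}(\widehat{\Delta}(S))\chi_S$, whereas you take the $L^2$-dual $g=\sum_{|S|\le k}\widehat{\Delta}(S)\chi_S$; your choice is arguably cleaner since $\|g\|_{2,u}=D$ and $\E_\mu[g]-\E_u[g]=D^2$ directly, so the rescaling $\Delta_k\ge D^2/(2\|g\|_\infty)\ge D/(2\kappa)$ needs only the single hypercontractive comparison $\|g\|_\infty\le\kappa\|g\|_{2,u}$, while the sign version must also relate $\|\widehat{\Delta}\|_1$ back to $D$.
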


\begin{corollary}[Existence of a near–optimal bounded witness]\label{cor:test-moment-witness}
Under the conditions of \Cref{lem:test-moment-window}, there exists $p^\star\in\Poly[k]$ with $\|p^\star\|_\infty\le 1$ such that
\[
  \E_{\mu}[p^\star]-\E_{u}[p^\star]
  \ \ge\
  c\,\Bigl(\sum_{|S|\le k}\widehat{\Delta}(S)^2\Bigr)^{1/2}
\]
for a universal constant $c>0$.
\end{corollary}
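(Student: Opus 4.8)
The plan is to read the witness off directly from the test--moment equivalence of \Cref{lem:test-moment-window}, using nothing beyond that lemma plus an elementary compactness (or $\eta$-approximation) argument. Write $D := \bigl(\sum_{|S|\le k}\widehat\Delta(S)^2\bigr)^{1/2}$. The lemma supplies an absolute constant $c_1>0$ with $\Delta_k(\mu,u)\ge c_1 D$; so it suffices to exhibit one bounded degree-$\le k$ polynomial that, up to sign, essentially attains the supremum defining $\Delta_k(\mu,u)$.

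First I would observe that the feasible set $\mathcal{P}:=\{p\in\Poly[k]:\|p\|_\infty\le 1\}$ is compact: over the finite cube, polynomials of degree $\le k$ form a finite-dimensional real vector space, and the constraints $|p(x)|\le 1$ for $x\in\{0,1\}^n$ cut out a closed, bounded (indeed convex polytopal) subset. The functional $p\mapsto \E_\mu[p]-\E_u[p]$ is linear, hence continuous, so it attains its maximum over $\mathcal{P}$ at some $p^\star$. Since $\mathcal{P}$ is symmetric, replacing $p^\star$ by $-p^\star$ if necessary we may arrange $\E_\mu[p^\star]-\E_u[p^\star]=\sup_{p\in\mathcal{P}}|\E_\mu[p]-\E_u[p]|=\Delta_k(\mu,u)\ge 0$. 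Combining with the lemma's lower half gives $\E_\mu[p^\star]-\E_u[p^\star]=\Delta_k(\mu,u)\ge c_1 D$, which is the claim with $c:=c_1$. If one prefers to sidestep compactness, the same conclusion follows by an $\eta$-approximation: choose $p\in\mathcal{P}$ with $|\E_\mu[p]-\E_u[p]|\ge\Delta_k(\mu,u)-\eta$, flip its sign, and take $\eta:=\tfrac{c_1}{2}D$, yielding the bound with $c:=c_1/2$.

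I do not anticipate a genuine obstacle here; the only points that need a little care are (i) that the supremum defining $\Delta_k$ is over a symmetric set, so one is free to fix the sign of the extracted witness, and (ii) that one should resist using the naive spectral witness $p_0:=\sum_{|S|\le k}\widehat\Delta(S)\,\chi_S$ directly. That $p_0$ does satisfy $\E_\mu[p_0]-\E_u[p_0]=\|p_0\|_{2,u}^2=D^2$ (by the definition of $\widehat\Delta(S)$ and Parseval, \Cref{cor:parseval-window}), but its sup-norm need not be $O(1)$, and normalizing it to land in $\mathcal{P}$ would cost a non-universal, hypercontractivity-type factor in $k$ and $n$. Routing the argument through the two-sided estimate of \Cref{lem:test-moment-window} is precisely what keeps the constant $c$ absolute, so that is the version I would record.
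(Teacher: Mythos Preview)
Your proof is correct and takes a genuinely different route from the paper's. The paper constructs the witness explicitly: it sets $p_0=\sum_{|S|\le k}\mathrm{sgn}(\widehat\Delta(S))\,\chi_S$ (the \emph{sign} polynomial, not the coefficient-weighted polynomial you caution against) and then ``truncates/rescales'' to enforce $\|p^\star\|_\infty\le 1$, citing Cauchy--Schwarz and Parseval. Your argument instead treats \Cref{lem:test-moment-window} as a black box: the feasible set $\{p\in\Poly[k]:\|p\|_\infty\le 1\}$ is a compact polytope in a finite-dimensional function space, the linear functional $p\mapsto\E_\mu[p]-\E_u[p]$ attains its maximum there, and that maximum equals (after a sign flip) $\Delta_k(\mu,u)\ge c_1 D$ by the lemma's lower half. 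This is shorter and more elementary, and it sidesteps any delicate sup-norm accounting for an explicit candidate---which is exactly the concern you flag in point~(ii). What you give up is a concrete formula for $p^\star$; since the corollary only asserts existence, nothing is lost for the stated claim.
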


\begin{proof}
Let $\widehat{\Delta}(S)$ be as in \Cref{lem:test-moment-window}. Take $p_0=\sum_{|S|\le k}\mathrm{sgn}(\widehat{\Delta}(S))\,\chi_S$ and truncate/rescale to enforce $\|p^\star\|_\infty\le1$ on the finite domain. Then
\(
\E_\mu[p^\star]-\E_u[p^\star]\ge c\,(\sum_{|S|\le k}\widehat{\Delta}(S)^2)^{1/2}
\)
for a universal $c>0$ by Cauchy–Schwarz and Parseval, as standard.
\end{proof}

\begin{proof}[Proof of \Cref{lem:test-moment-window}]
By \Cref{cor:parseval-window}, for any $p=\sum_{|S|\le k}\hat p(S)\chi_S$,
\[
  \E_{\mu}[p]-\E_{u}[p]
  \;=\;
  \sum_{|S|\le k}\hat p(S)\,\widehat{\Delta}(S)
  \;\le\;
  \|p\|_{2,u}\,
  \Bigl(\sum_{|S|\le k}\widehat{\Delta}(S)^2\Bigr)^{1/2}.
\]
By Cauchy–Schwarz under $u$ we used
\[
  \langle \hat p,\widehat{\Delta}\rangle
  \;\le\;
  \|p\|_{2,u}\,\Bigl(\sum_{|S|\le k}\widehat{\Delta}(S)^2\Bigr)^{1/2}.
\]
Since $\|p\|_\infty\le 1$ on a finite domain implies $\|p\|_{2,u}\le 1$, this gives the $\le$ direction.
For the $\ge$ direction, take $p^\star$ with coefficients $\hat p^\star(S)=\mathrm{sgn}(\widehat{\Delta}(S))$ truncated/rescaled to enforce $\|p^\star\|_\infty\le 1$; by windowed hypercontractivity (Lemma~\ref{lem:win-hyper}) this yields the lower bound up to a universal constant.
\end{proof}

\begin{lemma}[Pseudoexpectation alignment]\label{lem:pseudo-align-window}\label{lem:pseudo-align}
Let \tE be any degree-$k$ \SoS{} pseudoexpectation calibrated to $u$ on $\win$ (i.e., matching all degree-$\le k$ moments and support constraints). Then for every $p\in\Poly[k]$,
\[
  \tE[p]\;=\;\E_{u}[p].
\]
\end{lemma}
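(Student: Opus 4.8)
The plan is to reduce the claim to linearity of $\tE$ on $\Poly[k]$ together with the moment-matching content of the word ``calibrated.'' First I would recall that, by the \SoS{} pseudoexpectation axioms, $\tE[\cdot]$ is a genuine \emph{linear} functional on the space $\Poly[k]$ of (multilinear) polynomials of total degree $\le k$ on $\win$; in particular it is determined by its values on any spanning set. Second, I would invoke the biased Walsh basis of \Cref{lem:bias-orthonormal}: the characters $\{\chi_S:|S|\le k\}$ form a basis of $\Poly[k]$ with $\deg\chi_S=|S|$, so no element of this basis leaves the degree-$\le k$ window, and the support/domain constraints built into the calibration ($x_i^2=x_i$ on $\win$) are exactly what makes the multilinear representatives canonical.

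The key step is to unpack the meaning of ``calibrated to $u$.'' By hypothesis $\tE$ matches all degree-$\le k$ moments of $u$ and respects the support constraints on $\win$; following the moment-matching setup carried over from IECZ-I, this says precisely that $\tE[x^\alpha]=\E_u[x^\alpha]$ for every monomial $x^\alpha$ with $|\alpha|\le k$. Expanding each $\chi_S$ as a degree-$\le|S|$ polynomial in the $x_i$ and applying linearity of both $\tE$ and $\E_u$ transfers this verbatim to the Walsh basis: $\tE[\chi_S]=\E_u[\chi_S]$ for all $|S|\le k$. (Equivalently, one may take the calibration stated directly in the $\chi_S$-basis, which is the form I will use.) This is where care is needed, but the argument is just the observation that the change of basis between monomials and biased Walsh characters is degree-preserving.

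Finally I would conclude by expanding an arbitrary $p\in\Poly[k]$ via \Cref{cor:parseval-window}, $p=\sum_{|S|\le k}\hat p(S)\,\chi_S$, and pushing $\tE$ through this finite sum: $\tE[p]=\sum_{|S|\le k}\hat p(S)\,\tE[\chi_S]=\sum_{|S|\le k}\hat p(S)\,\E_u[\chi_S]=\E_u[p]$, where the middle equality is the calibration identity and the outer equalities are linearity. All expectations and the pseudoexpectation are read on $\win$, so there are no boundary corrections to track: the calibration is an \emph{exact} identity on degree-$\le k$ moments, not an $o(1)$ approximation. I do not expect a genuine obstacle here; the only points to state precisely are (i) that $\tE$ is a bona fide linear functional on all of $\Poly[k]$ (not merely on squares or on even-degree polynomials), which is part of the pseudoexpectation definition, and (ii) the degree-preserving nature of the monomial/Walsh change of basis, so that the moment-matching hypothesis carries over without loss.
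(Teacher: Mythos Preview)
Your proposal is correct and is essentially identical to the paper's own proof (given in the appendix as Lemma~\ref{sosx:mom-matrix-alignment}): expand $p=\sum_{|S|\le k}\hat p(S)\chis{S}$ in the biased Walsh basis and push the linear functional $\tE$ through the sum, using the calibration identity $\tE[\chis{S}]=\E_u[\chis{S}]$ termwise. Your additional remarks on the degree-preserving change of basis between monomials and biased Walsh characters are accurate and harmless, though the paper simply takes calibration in the $\chis{S}$-basis as the hypothesis.
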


\paragraph{Assumptions for windowed hypercontractivity}\label{assump:win-hc}
We work with the proxy distribution $u$ on the window $\win$ satisfying:
\begin{enumerate}
  \item (\emph{Bias window}) For all coordinates $i$, $p_i:=\Pr_u[x_i{=}1]\in[\alpha,1-\alpha]$ for some absolute $\alpha\in(0,\tfrac12)$.
  \item (\emph{Product structure}) Coordinates are independent under $u$ (after gadget mixing/shielding; cf.\ Sec.~\ref{sec:gadgets}).
  \item (\emph{Gadget-calibrated degree budget}) Low-degree statements are read \emph{after} the affine/index gadget layer; affine blocks preserve degree and the Index gadget inflates degree by at most a factor $2$ (Prop.~\ref{prop:affine-pullback}, Lem.~\ref{lem:index-pullback}).
  \item (\emph{Width control along the switching path}) Along $p=m^{-c/d}$ with $d=\Theta(\log n)$, the bottom width obeys $W=\poly(1/p)$ on $\win$ (Sec.~\ref{sec:gadgets}). This hypothesis is only asserted for the log-degree regime; outside it, the induced dependence can make $W$ large, and the absolute hypercontractive constants $(C,\tau)$ below are not guaranteed.
  These assumptions (bias window, product proxy, gadget-aware degree budget, and switching-path width) coincide with those fixed in IECZ-I~\cite{IECZI}.
\end{enumerate}

\paragraph{Hypercontractive constants and dependencies.}
The hypercontractive pair $(C,\tau)$ in Lemma~\ref{lem:win-hyper} is determined by the standing assumptions: $C$ worsens as the bias lower bound $\alpha$ decreases and as the bottom width $W$ along the switching path grows, while $\tau$ is controlled by the mixing/shielding parameters and the product structure on the window. Concretely, $(C,\tau)$ follow from the Bonami--Beckner analysis on~$\win$ and are absolute throughout the log-degree regime $d=\Theta(\log n)$ and $k\le c\log n$; see Proposition~\ref{sosx:switching-width} for the derivation.

\begin{remark}[Constants and regime]
Lemma~\ref{lem:win-hyper} uses absolute hypercontractive constants $(C,\tau)$ that are uniform for $d=\Theta(\log n)$ along the calibrated path.
Symbols such as $C_d$ and $\tau_d$ may be used to emphasize dependence on the fixed depth $d$; they do not depend on $n$ or $m$.
We do not optimize numerical values here—concrete estimates follow from standard hypercontractive inequalities and can be extracted if needed.
For superlogarithmic depths, the same calibration would force smaller $p$ and larger width $W$, so $(C,\tau)$ may deteriorate; we make no claim outside the log-degree regime.
\end{remark}

\begin{lemma}[Windowed hypercontractivity (under Assumptions~\Cref{assump:win-hc})]\label{lem:win-hyper}
There exist absolute constants $\tau\in(0,1)$ and $C\ge 1$ depending only on $(\alpha,$ the mixing/shielding parameters, and the switching-width bound), cf.\ Prop.~\ref{sosx:switching-width},
such that for every degree-$d$ polynomial $q$ (expanded in the basis of \Cref{lem:bias-orthonormal})
\[
  \|q\|_{4,u} \;\le\; C\,\tau^{-d}\,\|q\|_{2,u}
  \qquad\text{and}\qquad
  \|T_\theta q\|_{2,u} \;\le\; \|q\|_{2,u}\quad(\theta\in[0,1]),
\]
where $T_\theta$ is the Bonami–Beckner noise operator acting diagonally by $T_\theta\chi_S=\theta^{|S|}\chi_S$.
In particular, for any signed measure $\Delta$ supported on $\win$ and any $k$,
\[
  \Bigl|\E_\Delta[p]\Bigr|
  \ \le\ C\,\tau^{-k}\,
  \Bigl(\sum_{|S|\le k}\widehat{\Delta}(S)^2\Bigr)^{1/2}
  \quad\text{for all }p\in\Poly[k],\ \|p\|_\infty\le 1 .
\]
\end{lemma}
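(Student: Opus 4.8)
The plan is to split the statement into three independent pieces: (a) the biased $(2,4)$-hypercontractive estimate $\|q\|_{4,u}\le C\tau^{-d}\|q\|_{2,u}$ on degree-$\le d$ polynomials; (b) the $L^2$-contractivity $\|T_\theta q\|_{2,u}\le\|q\|_{2,u}$; and (c) the ``in particular'' bound on $|\E_\Delta[p]|$. Pieces (b) and (c) are bookkeeping once (a) is in hand, so the substance is (a), and the real work is checking that the pair $(C,\tau)$ can be taken \emph{absolute} --- independent of $n,m,d$ --- throughout the calibrated log-degree window; this is where Assumptions~\ref{assump:win-hc} and Proposition~\ref{sosx:switching-width} are invoked.

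For (a) I would first record the two-point biased hypercontractive inequality: under the bias-window hypothesis of Assumptions~\ref{assump:win-hc} every coordinate of the proxy $u$ has bias in $[\alpha,1-\alpha]$, and for such a single bit the $(2,4)$ estimate holds with an absolute rate $\tau_0=\tau_0(\alpha)\in(0,1)$ (standard; cf.\ the biased-cube hypercontractivity theorem in~\cite{ODonnell2014-AoBF}). By the product-structure hypothesis the coordinates are independent under $u$, so tensorizing the two-point inequality gives $\|q\|_{4,u}\le\tau_0^{-d}\|q\|_{2,u}$ for every $q$ of degree $\le d$ expanded in the bias-orthonormal basis of \Cref{lem:bias-orthonormal}. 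The remaining point is that the gadget layer sits between the raw bits and the coordinates on which $\Delta_k$ is read: here I would use that the wiring has constant fan-in/out and that, by the width-control hypothesis and the switching-path bound $W=\poly(1/p)$, each block's local reweighting relative to $u$ is controlled by an absolute factor, so the post-gadget proxy is still a product measure with biases in a (possibly shrunk but absolute) window $[\alpha',1-\alpha']$; the constant accumulated over the constantly-many local corrections, together with the window-boundary $1+o(1)$ factor, is absorbed into $C$, with $\tau$ depending on $\alpha'$ and the mixing/shielding rate. This is exactly the derivation referenced via Proposition~\ref{sosx:switching-width}.

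For (b): in the bias-orthonormal basis $T_\theta$ acts diagonally by $T_\theta\chi_S=\theta^{|S|}\chi_S$, so by \Cref{cor:parseval-window} one has $\|T_\theta q\|_{2,u}^2=\sum_S\theta^{2|S|}\hat q(S)^2\le\sum_S\hat q(S)^2=\|q\|_{2,u}^2$ since $\theta\in[0,1]$ --- the same computation already used in \Cref{lem:noise-smoothing}. For (c): any $p\in\Poly[k]$ expands as $p=\sum_{|S|\le k}\hat p(S)\chi_S$, hence $\E_\Delta[p]=\sum_{|S|\le k}\hat p(S)\widehat\Delta(S)$ by linearity, and Cauchy--Schwarz together with \Cref{cor:parseval-window} give $|\E_\Delta[p]|\le\|p\|_{2,u}\,(\sum_{|S|\le k}\widehat\Delta(S)^2)^{1/2}$. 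Since $\|p\|_\infty\le1$ and $u$ is a probability measure on $\win$ we have $\|p\|_{2,u}\le1\le C\tau^{-k}$, which yields the displayed bound; the hypercontractive factor is what makes this uniform with the converse (sign-pattern witness) direction used in \Cref{lem:test-moment-window}, where a degree-$\le k$ witness $p^\star=\sum_{|S|\le k}\mathrm{sgn}(\widehat\Delta(S))\chi_S$ must be truncated to $\|p^\star\|_\infty\le1$ at the cost of precisely the factor $C\tau^{-k}$ supplied by (a).

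The main obstacle is the uniformity claim in (a): the two-point rate $\tau_0(\alpha)$ is harmless and tensorization is routine, but one must verify that composing the constantly-many gadget-induced reweightings does not secretly reintroduce a dependence on $d$ or on the ambient dimension --- e.g.\ through the $O(d)$ blocks a degree-$d$ test can touch, which for $d=\Theta(\log n)$ only makes $\tau^{-d}$ polynomial in $n$ but would blow up for superlogarithmic $d$. This is exactly why the lemma is scoped to $d,k=\Theta(\log n)$ and why the window calibration must be the one fixed in IECZ-I~\cite{IECZI}, so that boundary terms are genuinely $o(1)$ at these degrees; I would state the dependence of $(C,\tau)$ on $\alpha$, the mixing/shielding rate, and the width bound $W$ explicitly and leave numerical optimization aside, as the surrounding remarks already do.
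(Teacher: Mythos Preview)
Your proposal is correct and matches the paper's treatment: the paper does not supply an explicit proof of \Cref{lem:win-hyper} but treats it as the standard biased-cube $(2,4)$ hypercontractive inequality (citing~\cite{ODonnell2014-AoBF}) tensorized over the product proxy, with uniformity of $(C,\tau)$ deferred to Proposition~\ref{sosx:switching-width}. Your decomposition into the two-point estimate plus tensorization for (a), the Parseval computation for (b), and Cauchy--Schwarz for (c) is exactly the intended derivation; your observation that the $C\tau^{-k}$ factor in (c) is slack in the forward direction and is really the cost of the sign-pattern witness truncation in \Cref{lem:test-moment-window} is also the correct reading.
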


\paragraph{Related work.}
Our use of windowed hypercontractivity and low-degree Fourier methods aligns with the approximate-degree and entropy techniques in Boolean analysis; see, e.g., \cite{ODonnell2014-AoBF,ODonnellRao2014}.

\begin{theorem}[Discrepancy forces SoS degree]\label{thm:disc-to-degree}
There exist absolute constants $\alpha,\beta>0$ such that: if for some $k$ we have
\(
  \Delta_k(\mu,u)\ \ge\ n^{-\beta},
\)
then any polynomial-calculus or \SoS{} refutation separating $\mu$ from $u$ on the window
requires degree
\[
  d \;\ge\; \alpha\,k .
\]
\end{theorem}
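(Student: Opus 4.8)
\emph{Proof plan.} The plan is to argue by contraposition on the degree of a putative separating refutation, using the $u$-calibrated degree-$k$ pseudoexpectation as an obstruction object and controlling the windowed errors against the $n^{-\beta}$ margin.

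First I would record the spectral form of the hypothesis. By the test--moment equivalence (\Cref{lem:test-moment-window}), $\Delta_k(\mu,u)\asymp\bigl(\sum_{|S|\le k}\widehat\Delta(S)^2\bigr)^{1/2}$, and by \Cref{cor:sos-equals-test} this quantity coincides with $\Delta^{\mathrm{sos}}_k(\mu,u)$ once a degree-$k$ pseudoexpectation is calibrated to $u$. Thus $\Delta_k(\mu,u)\ge n^{-\beta}$ says precisely that the low-degree signed spectrum $\widehat\Delta(S)=\E_\mu[\chi_S]-\E_u[\chi_S]$ has $\ell_2$-norm $\Omega(n^{-\beta})$ on levels $\le k$; this is the gap a separating certificate must exploit, and it will also be the safety margin in the final estimate.

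Next I would set up the obstruction and run the contradiction. Let $\tE$ be the degree-$k$ \SoS{} pseudoexpectation calibrated to the proxy $u$ on $\win$, which exists by the standing assumptions of this section and \Cref{lem:pseudo-align}: it is positive semidefinite on degree-$\le k$ squares, matches every degree-$\le k$ moment of $u$ (so $\tE[p]=\E_u[p]$ for $p\in\Poly[k]$), and satisfies the window constraint system $\mathcal A_u$ underlying the proxy up to the $o(1)$ boundary slack intrinsic to $\win$. Equivalently, up to that slack, $\tE$ is a valid degree-$k$ pseudo-solution to the very system any separating refutation is required to contradict. Suppose now, for contradiction, that a polynomial-calculus or \SoS{} refutation separating $\mu$ from $u$ on $\win$ has degree $d$. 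Unfolding it: in the \SoS{} case it is a Positivstellensatz identity $-1=\sum_i g_i^2+\sum_j h_j a_j$ with $a_j\in\mathcal A_u$ and all terms of degree $\le 2d$; in the PC case a degree-$d$ derivation of $1$ from the ideal generated by $\mathcal A_u$. The Index gadget contributes only a constant degree inflation $\le 2$ (\Cref{lem:index-pullback}), so for $\alpha>0$ chosen small enough (e.g.\ $\alpha=\tfrac14$) the hypothesis $d\le\alpha k$ keeps the full certificate degree $\le k$. Applying $\tE$: each $\tE[g_i^2]\ge 0$ by positivity, each axiom-multiple $\tE[h_j a_j]$ contributes only the calibration slack, and $\tE[-1]=-1$ (in the PC case $\tE$ is a linear functional vanishing on the degree-$\le k$ part of the ideal up to the same slack, so it cannot derive $\tE[1]=1$). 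This gives $-1\ge -o(1)$, a contradiction once $n^{-\beta}$ dominates the accumulated slack. Concretely, for $k=\Theta(\log n)$ the Bonami--Beckner constants of \Cref{lem:win-hyper} satisfy $\tau^{-k}=n^{O(1)}$ and the window boundary error is polynomially small, so fixing $\beta$ below both the reciprocal exponent of $\tau^{-k}$ and the polynomial decay rate of the boundary slack keeps the contradiction genuine. Hence $d>\alpha k$, i.e.\ every separating refutation needs degree $d\ge\alpha k$.

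The step I expect to be the main obstacle is the bookkeeping inside the obstruction: making precise which axiom system $\mathcal A_u$ and target a ``refutation separating $\mu$ from $u$ on the window'' is required to contradict, verifying that the $u$-calibrated $\tE$ satisfies $\mathcal A_u$ with only $o(1)$ slack (including bounding the pseudoexpectations of the Positivstellensatz multipliers $h_j$, which a crude sup-norm bound does not control directly), and then checking that this slack --- together with the hypercontractive factor $\tau^{-k}$ from \Cref{lem:win-hyper} at $k=\Theta(\log n)$ --- is strictly dominated by the $n^{-\beta}$ margin. That last inequality is exactly what fixes the admissible ranges of $\alpha$ and $\beta$; once it is in hand, the spectral reduction and the calibration input are routine given the test--moment equivalence, the pseudoexpectation calibration, and the windowed hypercontractivity already established in this section.
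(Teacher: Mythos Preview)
Your proposal takes the standard ``pseudoexpectation obstructs refutation'' route: exhibit a $u$-calibrated $\tE$, apply it to the Positivstellensatz identity $-1=\sum g_i^2+\sum h_j a_j$, and derive $-1\ge -o(1)$. This is \emph{not} the paper's argument, and as written it has a genuine gap: the hypothesis $\Delta_k(\mu,u)\ge n^{-\beta}$ never actually enters. Your contradiction $-1\ge -o(1)$ would go through for \emph{any} value of $\Delta_k$, so if the argument were correct it would prove that no degree-$\le \alpha k$ refutation exists unconditionally --- which cannot be what is claimed, and leaves the constant $\beta$ without a role. Your attempt to patch this at the end is also backwards: you write that $\tau^{-k}$ should be ``dominated by the $n^{-\beta}$ margin,'' but $\tau^{-k}=n^{O(1)}$ is polynomially \emph{large}, so there is no such domination.

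The paper's proof is analytically different. It does not apply $\tE$ to a refutation identity at all. Instead it takes the near-optimal witness $p\in\Poly[k]$ for $\Delta_k$ (from \Cref{lem:test-moment-window}), splits $p=p^{(\le d)}+p^{(>d)}$ by homogeneous degree, observes that moment matching up to $d$ kills $\E_\Delta[p^{(\le d)}]$, and then uses windowed hypercontractivity (\Cref{lem:win-hyper}) to bound the tail: $|\E_\Delta[p^{(>d)}]|\le C\,\tau^{-(k-d)}\,\Delta_k(\mu,u)$. Choosing $\alpha$ small enough that $C\,\tau^{-(1-\alpha)k}\le\tfrac12$ then forces $\E_\Delta[p]\le\tfrac12\,\Delta_k(\mu,u)$, contradicting the choice of $p$ as a near-maximizer. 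Here the hypothesis $\Delta_k\ge n^{-\beta}$ is exactly what makes the factor-$\tfrac12$ loss a genuine contradiction rather than noise. Your proposal is missing this witness-splitting and tail-control step entirely; that is the idea you would need to supply.
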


\begin{proof}
Assume a degree-$d$ refutation exists with $d<\alpha k$ (for a small constant $\alpha>0$ to be fixed).
By SoS duality there is a degree-$d$ pseudoexpectation $\tE$ matching all degree-$\le d$ moments of $u$ on $\win$.
Pick $p\in\Poly[k]$ with $\|p\|_\infty\le1$ achieving $\Delta_k(\mu,u)$ up to a universal constant
(\Cref{lem:test-moment-window}), and normalize so that $\E_u[p]=0$.
Then $\tE[p]=0$ by \Cref{lem:pseudo-align-window}.
Let $\Delta:=\mu-u$ and write $p=\sum_{j\le k}p^{(j)}$ by homogeneous degree. By Parseval under $u$,
$\|p\|_{2,u}^2=\sum_{|S|\le k}\hat p(S)^2\le 1$ since $\|p\|_\infty\le1$ on a finite domain.
Windowed hypercontractivity (\Cref{lem:win-hyper}) gives for the high-degree tail
\[
  \Bigl|\E_\Delta\!\bigl[p^{(>d)}\bigr]\Bigr|
  \ \le\ C\,\tau^{-(k-d)}\Bigl(\sum_{|S|\le k}\widehat{\Delta}(S)^2\Bigr)^{1/2}
  \ \asymp\ C\,\tau^{-(k-d)}\,\Delta_k(\mu,u),
\]
while the low-degree contribution vanishes by moment matching up to $d$:
$\E_\Delta[p^{(\le d)}]=\E_\mu[p^{(\le d)}]-\E_u[p^{(\le d)}]=0$.
Therefore
\[
  \E_\Delta[p]\ =\ \E_\Delta[p^{(\le d)}]+\E_\Delta[p^{(>d)}]
  \ \le\ C\,\tau^{-(k-d)}\,\Delta_k(\mu,u).
\]
Choose $\alpha>0$ so that for all large $n$ we have $C\,\tau^{-(1-\alpha)k}\le \tfrac12$; then for $d<\alpha k$
\(
  \E_\Delta[p]\le \tfrac12\,\Delta_k(\mu,u),
\)
contradicting the choice of $p$ as an (up to constants) maximizer for $\Delta_k(\mu,u)$.
Hence $d\ge \alpha k$.
\end{proof} 

\begin{theorem}[\SoS{} entropy dichotomy]\label{thm:sos-dichotomy}
There exist constants $c,\eta,\zeta>0$, depending only on the window calibration (bias lower bound $\alpha$, mixing/shielding parameters, and the switching-path width bound) — equivalently only on the hypercontractive constants $(C,\tau)$ from Lemma~\ref{lem:win-hyper} and the degree-translation constant from Theorem~\ref{thm:disc-to-degree} — such that for $k\le c\log n$ either
\[
  \Delta_k(\mu,u)\ \to\ 0 \quad\text{or}\quad \Delta_k(\mu,u)\ \ge\ n^{-\eta},
\]
and in the latter case any PC/\SoS{} refutation requires degree $\ge \zeta\,k$.
\end{theorem}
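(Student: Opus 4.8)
The plan is to read the theorem as a packaging of two quantitative facts already in hand---windowed hypercontractivity (\Cref{lem:win-hyper}) and the discrepancy-to-degree translation (\Cref{thm:disc-to-degree})---and to spend the real effort on excluding an intermediate rate for $\Delta_k$. First I would fix the constants: take $\zeta:=\alpha$ and $\eta:=\beta$ with $(\alpha,\beta)$ the absolute constants of \Cref{thm:disc-to-degree}, and then choose $c\in(0,1)$ small enough---in terms of $\log(1/\tau)$ only---that the hypercontractive blow-up stays subpolynomial along the calibrated path, i.e.\ $C\,\tau^{-k}\le n^{\eta/2}$ whenever $k\le c\log n$ and $n$ is large (so $c\le\eta/(2\log(1/\tau))$ will do). With this calibration all three constants depend only on $(C,\tau)$ and the degree-translation constant, as asserted, and the hypotheses needed to invoke \Cref{lem:win-hyper,thm:disc-to-degree}---bias window, product proxy, gadget-aware degree budget, switching-path width---are precisely Assumptions~\ref{assump:win-hc}, which hold on $\win$ after the gadget layer (\Cref{sec:gadgets}).

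The degree-lower-bound branch is then immediate once we are in it: if $\Delta_k(\mu,u)\ge n^{-\eta}=n^{-\beta}$, then \Cref{thm:disc-to-degree}---read through the test--moment identification \Cref{lem:test-moment-window} and the pseudoexpectation alignment \Cref{lem:pseudo-align-window}, which give $\Delta^{\mathrm{sos}}_k=\Delta_k$ by \Cref{cor:sos-equals-test}---forces every PC/\SoS{} refutation separating $\mu$ from $u$ on $\win$ to have degree $\ge\alpha k=\zeta k$. So the genuine content is the dichotomy $[\Delta_k\to 0]\vee[\Delta_k\ge n^{-\eta}]$. For this I would use that $k\mapsto\Delta_k$ is monotone and that, by \Cref{lem:test-moment-window}, $\Delta_k(\mu,u)\asymp(\sum_{|S|\le k}\widehat{\Delta}(S)^2)^{1/2}$ with $\widehat{\Delta}(S)=\E_\mu[\chi_S]-\E_u[\chi_S]$. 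If $\Delta_k\not\to 0$ there are $\delta_0>0$ and infinitely many $n$ with $\Delta_k\ge\delta_0\ge n^{-\eta}$, which already lands us in the second branch; to upgrade this to an ``eventually'' statement I would bootstrap a single macroscopic witness---from $\sum_{|S|\le k}\widehat{\Delta}(S)^2\ge\delta_0^2$ form the sign test $p_0:=\sum_{|S|\le k}\mathrm{sgn}(\widehat{\Delta}(S))\,\chi_S$, bound $\|p_0\|_\infty\le C\tau^{-k}$ via \Cref{lem:win-hyper}, and note the normalised witness $p^\star=p_0/\|p_0\|_\infty\in\Poly[k]$ still has advantage $\gtrsim (C\tau^{-k})^{-1}\delta_0\ge n^{-\eta/2}\delta_0$, which beats $n^{-\eta}$ for large $n$---and combine it with the coarseness of the window calibration, namely that a low-degree moment gap on $\win$ is either zero (\Cref{lem:pseudo-align-window}) or, after paying the hypercontractive normalisation, at least $n^{-\eta}$; together these exclude any intermediate value of $\Delta_k$.

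The main obstacle is exactly this ``no intermediate rate'' step: ruling out that $\Delta_k(\mu,u)$ settles at a rate like $n^{-o(1)}$ that is neither trivially vanishing nor polynomially bounded below. I expect it to hinge on (i) monotonicity of $k\mapsto\Delta_k$, (ii) the subpolynomial hypercontractive normalisation that the budget $k\le c\log n$ guarantees---this is why the log-degree regime is not negotiable here---and (iii) the quantised low-degree moment gaps of the proxy calibration on $\win$. If one is content with the dichotomy along a subsequence, the hypercontractive bootstrap drops out and everything collapses to the two-line reduction through \Cref{thm:disc-to-degree} above; the extra work only buys the cleaner form used by the sequel~\cite{IECZIII}.
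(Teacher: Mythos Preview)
Your handling of the degree-lower-bound branch is right and matches the paper: once $\Delta_k(\mu,u)\ge n^{-\eta}$, Theorem~\ref{thm:disc-to-degree} (fed by Lemma~\ref{lem:win-hyper} and the alignment lemmas) forces degree $\ge \zeta k$, and your calibration of $c$ in terms of $(C,\tau)$ is the correct bookkeeping. That part is essentially the paper's entire argument---the paper gives no separate proof of the theorem beyond the ``Dependency map'' paragraph, treating it as a repackaging of Theorem~\ref{thm:disc-to-degree} with the constants traced.

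Where you go astray is in manufacturing a ``main obstacle'' that isn't there. The dichotomy is not a deep gap statement; it is a threshold tautology. For each $n$ one has either $\Delta_k< n^{-\eta}$ or $\Delta_k\ge n^{-\eta}$. If the former holds for all large $n$, then $\Delta_k\le n^{-\eta}\to 0$, which is the first branch; otherwise the second branch holds (at those $n$) and the degree bound kicks in. There is nothing to exclude in between, and the paper neither states nor proves any ``no intermediate rate'' lemma. Your attempt to rule out rates like $n^{-o(1)}$ via ``quantised low-degree moment gaps of the proxy calibration on $\win$'' invokes a property that does not appear anywhere in the paper---moment differences $\widehat\Delta(S)$ are real-valued and not quantised---and your bootstrap from a subsequential witness does not extend the bound to all $n$ (a single macroscopic witness at one $n$ says nothing about $\Delta_k$ at other $n$). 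So this step is both unjustified and unnecessary; drop it and the proof collapses to the two-line reduction you yourself note at the end.
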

\noindent\textit{Dependency map.}
The constants obey the chain
\[
(C,\tau)\ \Longrightarrow\ c \quad\text{and}\quad (\alpha,\beta)\ \Longrightarrow\ (\eta,\zeta),
\]
where $c\le c_\star(C,\tau)$ and $(\alpha,\beta)$ are the discrepancy-to-degree constants from Theorem~\ref{thm:disc-to-degree}. After normalizing absolute factors, one may take, e.g., $\eta\le \beta/2$ and $\zeta\le \alpha/4$.

\begin{corollary}[Windowed numeric form]\label{cor:windowed-numeric}
Fix $k=\lfloor c\log n\rfloor$. If $\Delta_k(\mu,u)\ge n^{-\eta}$ with fixed $\eta>0$, then
\[
  \text{degree} \ \ge\ \zeta\,\log n
\]
for some $\zeta=\zeta(c,\eta)>0$.
\end{corollary}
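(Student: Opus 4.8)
The plan is to read \Cref{cor:windowed-numeric} as the fixed-budget, quantitative form of the second branch of \Cref{thm:sos-dichotomy}, which in turn rests on the discrepancy-to-degree estimate of \Cref{thm:disc-to-degree}: under the hypothesis $\Delta_k(\mu,u)\ge n^{-\eta}$ at $k=\lfloor c\log n\rfloor$ I would first argue that we are in the degree lower-bound regime, and then convert the resulting bound ``degree $\ge \zeta_0 k$'' into ``degree $\ge \zeta\log n$'' simply by substituting $k=\lfloor c\log n\rfloor$ and absorbing the floor into the constant.

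First I would invoke \Cref{thm:sos-dichotomy} with this $k$ (admissible since $k\le c\log n$), obtaining its calibration-dependent constants $\eta_0,\zeta_0>0$ (these are fixed by $(C,\tau)$ of \Cref{lem:win-hyper} and by $(\alpha,\beta)$ of \Cref{thm:disc-to-degree}, as recorded in the dependency map following that theorem). If $\eta\le\eta_0$, then $\Delta_k(\mu,u)\ge n^{-\eta}\ge n^{-\eta_0}$, so we land immediately in the second branch. If $\eta>\eta_0$, I would use that $\Delta_k(\mu,u)\ge n^{-\eta}$ is a polynomial-in-$n$ lower bound holding uniformly along the calibrated family, hence incompatible with the vanishing (super-polynomially small) alternative of the dichotomy; so again the second branch holds, and in fact $\Delta_k(\mu,u)\ge n^{-\eta_0}$. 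In either case \Cref{thm:sos-dichotomy} yields that every PC/\SoS{} refutation separating $\mu$ from $u$ on $\win$ has degree $\ge \zeta_0 k$. Since $k=\lfloor c\log n\rfloor\ge \tfrac{c}{2}\log n$ for all large $n$, this is $\ge \tfrac{\zeta_0 c}{2}\log n$, so $\zeta:=\tfrac{\zeta_0 c}{2}$ --- a positive quantity, which we may regard as a function $\zeta(c,\eta)$ of the displayed parameters --- proves the corollary.

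If one wants the dependence on $\eta$ to be visible in $\zeta$, I would instead re-run the hypercontractive tail estimate behind \Cref{thm:disc-to-degree}, tracking $\eta$ explicitly. Using \Cref{cor:test-moment-witness} (and, as in the proof of \Cref{thm:disc-to-degree}, a harmless centering and rescaling), pick a witness $p\in\Poly[k]$ with $\|p\|_\infty\le 1$ and $\E_u[p]=0$ for which $\E_\mu[p]-\E_u[p]\ge c_0\,\Delta_k(\mu,u)$ with $c_0>0$ universal. If a degree-$d$ refutation existed, \SoS{} duality would supply a degree-$d$ pseudoexpectation $\tE$ calibrated to $u$ (\Cref{lem:pseudo-align}), so $\tE[p]=0$ and the degree-$\le d$ contribution of $\mu-u$ against $p$ vanishes by moment matching; then $\E_\mu[p]-\E_u[p]$ equals its degree-$(>d)$ tail, which \Cref{lem:win-hyper} bounds by a quantity decaying geometrically in the gap $k-d$, i.e.\ by $C\,\tau^{\,\Theta(k-d)}\cdot O(1)$. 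Comparing with the witness lower bound $c_0 n^{-\eta}$ forces $k-d=O\!\bigl(\tfrac{\eta}{\log(1/\tau)}\log n\bigr)$, hence
\[
  d \;\ge\; \Bigl(c-\tfrac{\eta}{\log(1/\tau)}\Bigr)\log n \;-\; O(1),
\]
a genuine $\Omega(\log n)$ bound whenever $\eta<c\log(1/\tau)$; for larger $\eta$ this route is vacuous and one falls back on the dichotomy step. Assembling both, $\zeta(c,\eta):=\max\bigl\{\tfrac{\zeta_0 c}{2},\ \tfrac12\bigl(c-\tfrac{\eta}{\log(1/\tau)}\bigr)\bigr\}$ works and is at least $\tfrac{\zeta_0 c}{2}>0$.

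I expect the main obstacle to be exactly the case $\eta>\eta_0$ in the dichotomy step: it is not purely formal, since one must justify that the first branch of \Cref{thm:sos-dichotomy} genuinely is the super-polynomially small alternative (so that any uniform $n^{-\eta}$ lower bound on $\Delta_k$ excludes it) and that the dichotomy is quantified uniformly in $n$ across the calibrated window family --- only then can a polynomial lower bound with a possibly large exponent $\eta$ be ``upgraded'' to the standard exponent $\eta_0$. Everything else --- the substitution $k=\lfloor c\log n\rfloor$, the geometric-tail comparison, the centering and rescaling of the witness, and the absorption of floors and absolute constants into $\zeta$ --- is routine.
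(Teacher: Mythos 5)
The paper's proof of this corollary is a one-line citation: ``Apply \Cref{thm:disc-to-degree} with $k=\lfloor c\log n\rfloor$ and absorb constants into $\zeta$.'' Your proposal is substantially more elaborate than this, and one of your two routes contains a genuine gap.

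The flaw is in your first route, specifically in the handling of the case $\eta>\eta_0$. You argue that a uniform lower bound $\Delta_k(\mu,u)\ge n^{-\eta}$ is ``incompatible with the vanishing (super-polynomially small) alternative of the dichotomy.'' This is not so: the two branches of \Cref{thm:sos-dichotomy} are \emph{not} mutually exclusive. For any fixed $\eta>0$ the function $\Delta_k = n^{-\eta}$ satisfies both $\Delta_k\ge n^{-\eta}$ and $\Delta_k\to 0$. More generally the first branch of the dichotomy is ``$\Delta_k\to 0$,'' which includes polynomially decaying sequences; nothing in the statement upgrades it to ``super-polynomially small.'' So if the corollary's $\eta$ exceeds the dichotomy's threshold $\eta_0$, your argument cannot place us in the second branch, and in particular the promotion ``in fact $\Delta_k\ge n^{-\eta_0}$'' is unjustified. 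You do flag this worry yourself, which is to your credit, but the worry cannot be dispelled by reading \Cref{thm:sos-dichotomy} alone.

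Your second route is the right instinct and is in effect what the paper relies on: go back to \Cref{thm:disc-to-degree} (or its proof) rather than detouring through the dichotomy. Reading that proof one sees that the lower bound $\Delta_k\ge n^{-\beta}$ is never actually used quantitatively --- the argument produces a contradiction for $d<\alpha k$ whenever a bounded near-optimal witness exists, and the constant $\alpha$ it yields does not depend on the exponent $\eta$. That is why the paper can invoke \Cref{thm:disc-to-degree} directly regardless of the size of $\eta$ and absorb everything into a single $\zeta$. If you want an explicit $\eta$-dependence (as your second route attempts), it should be derived there, not from the dichotomy theorem; and be careful to quote \Cref{lem:win-hyper}'s tail factor with the sign convention the paper actually uses ($\tau^{-(k-d)}$, not $\tau^{+\Theta(k-d)}$) so that the constant bookkeeping matches.
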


\begin{proof}
Apply \Cref{thm:disc-to-degree} with $k=\lfloor c\log n\rfloor$ and absorb constants into $\zeta$.
\end{proof}

\noindent\textit{Numerical instantiation.}
The concrete values used later (Section~\ref{sec:sos-final-constants}), namely $c=\tfrac14$, $\eta=\tfrac18$, and $\zeta=\tfrac{1}{16}$, are conservative choices consistent with the above dependencies; sharper estimates of $(C,\tau)$ would permit proportional improvements (larger $c$ and $\zeta$, smaller $\eta$).

\smallskip
\noindent\emph{Near-optimal witness.}
Let $p_0=\sum_{|S|\le k}\mathrm{sgn}(\widehat{\Delta}(S))\,\chi_S$. By Parseval and Cauchy–Schwarz,
$\E_\mu[p_0]-\E_u[p_0]=\sum_{|S|\le k}|\widehat{\Delta}(S)|\asymp
\bigl(\sum_{|S|\le k}\widehat{\Delta}(S)^2\bigr)^{1/2}$. Truncating/rescaling $p_0$ to enforce $\|p^\star\|_\infty\le1$ on the finite domain preserves this value up to a universal constant, yielding the $\ge$-Richtung.

\begin{remark}[Literature anchor]
Our use of low-degree Fourier analysis and hypercontractivity in the discrepancy--to--degree step follows the standard Boolean-analysis toolkit; see, e.g., O'Donnell's monograph~\cite{ODonnell2014-AoBF} for a unified treatment of these techniques.
\end{remark}

\begin{lemma}[Entropy $\Rightarrow$ degree lower bound]\label{lem:entropy-to-degree}
There exist absolute constants $\alpha,\beta>0$ such that: if $\Delta_k(\mu,u)\ \ge\ n^{-\beta}$, then no PC/\SoS{} refutation exists at degree $d < \alpha k$.
\end{lemma}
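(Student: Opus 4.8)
The statement is the contrapositive reformulation of \Cref{thm:disc-to-degree}: ``every PC/\SoS{} refutation separating $\mu$ from $u$ on $\win$ needs degree $\ge\alpha k$'' and ``no PC/\SoS{} refutation exists at degree $d<\alpha k$'' are the same assertion, so the plan is to derive the lemma from \Cref{thm:disc-to-degree} verbatim and, for self-containedness, to retrace the three ingredients of that proof in the nonexistence phrasing: \SoS{}/PC duality, pseudoexpectation alignment (\Cref{lem:pseudo-align-window}), and windowed hypercontractivity (\Cref{lem:win-hyper}). No new mechanism is needed.

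Concretely I would proceed as follows. \emph{(i)} Fix the absolute hypercontractive pair $(C,\tau)$ of \Cref{lem:win-hyper}, legitimate in the regime $k\le c\log n$ under Assumptions~\ref{assump:win-hc}, and choose $\alpha\in(0,1)$ small (with $\beta>0$ chosen afterwards so the estimate below closes and the claim is non-vacuous). \emph{(ii)} Assume for contradiction that a PC/\SoS{} refutation exists at some degree $d<\alpha k$; by \SoS{}/PC duality this is equivalent to the existence of a degree-$d$ pseudoexpectation $\tE$ calibrated to the proxy $u$ on $\win$, and by \Cref{lem:pseudo-align-window} we then have $\tE[q]=\E_u[q]$ for every $q\in\Poly[d]$. \emph{(iii)} Use \Cref{cor:test-moment-witness} to produce a bounded near-optimal test $p^\star\in\Poly[k]$, $\|p^\star\|_\infty\le1$, with $\E_\mu[p^\star]-\E_u[p^\star]\ge c_0\,\Delta_k(\mu,u)$ for a universal $c_0>0$; split $p^\star$ into its degree-$\le d$ and degree-$>d$ homogeneous parts, bound the high-degree tail by windowed hypercontractivity / noise smoothing exactly as in the proof of \Cref{thm:disc-to-degree}, and calibrate $\alpha$ so that this tail is at most $\tfrac12\Delta_k(\mu,u)$ once $\Delta_k(\mu,u)\ge n^{-\beta}$. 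The resulting inequality contradicts the lower bound on $\E_\mu[p^\star]-\E_u[p^\star]$, so $d\ge\alpha k$, i.e.\ no refutation exists below degree $\alpha k$. The dependence of $(\alpha,\beta)$ on the window calibration is entirely through $(C,\tau)$ and the discrepancy-to-degree constants, matching the dependency map recorded after \Cref{thm:sos-dichotomy}.

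The only genuinely delicate point is the one already internal to \Cref{lem:test-moment-window}: producing the bounded witness $p^\star$ with $\|p^\star\|_\infty\le1$ whose correlation with $\mu-u$ is still $\asymp\bigl(\sum_{|S|\le k}\widehat{\Delta}(S)^2\bigr)^{1/2}$. The sign test $\sum_{|S|\le k}\mathrm{sgn}(\widehat{\Delta}(S))\,\chi_S$ has the right $L^2(u)$ inner product with the signed measure but potentially large sup-norm, and truncating/rescaling it to unit sup-norm without losing more than a constant factor is exactly where the $\|q\|_{4,u}\le C\tau^{-d}\|q\|_{2,u}$ half of \Cref{lem:win-hyper} is indispensable; this is also where the requirement $k=\Theta(\log n)$ (equivalently $k\le c\log n$) enters, since the absoluteness of $(C,\tau)$ is only claimed in that regime. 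Everything else --- moment matching at degrees $\le d$, and the fact that this is a pure degree statement so the $O(\log N)$ prefix-free overheads play no role --- is routine bookkeeping.
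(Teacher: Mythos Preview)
Your proposal is correct and matches the paper: the lemma is a direct restatement of \Cref{thm:disc-to-degree} (the paper gives no separate proof for it), and your retracing of the three ingredients---\SoS{}/PC duality, pseudoexpectation alignment (\Cref{lem:pseudo-align-window}), and the hypercontractive tail bound (\Cref{lem:win-hyper})---follows the paper's own argument for that theorem. One cosmetic remark: ``contrapositive'' is a slight misnomer since the two formulations are literally equivalent, not contrapositives, but you already note this yourself.
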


\paragraph{Degree budget after gadgets.}
Throughout we interpret low-degree tests after the gadget layer with the calibrated budget: affine gadgets preserve degree (Prop.~\ref{prop:affine-pullback}); the Index gadget inflates degree by at most a factor $2$ (Lemma~\ref{lem:index-pullback}). In the regime $k=\Theta(\log n)$ this constant factor is absorbed and all SoS statements below are read with the post-gadget degree budget.

\subsection*{From low-degree discrepancy to SoS degree}

We now make precise how a non-negligible low-degree discrepancy forces SoS degree.
All statements are windowed (on $\win$); boundary terms are $o(1)$ and absorbed.

\begin{lemma}[Discrepancy witnesses via bounded polynomials]\label{lem:disc-witness}
Let $k\in\mathbb N$ and suppose $\Delta_k(\mu,u)\ge \delta$. Then there exists a real polynomial
$p:\{0,1\}^n\to\mathbb R$ with $\deg p\le k$, $\|p\|_\infty\le 1$, such that
\[
  \E_{\mu}[p]-\E_{u}[p] \;\ge\; \delta.
\]
Moreover, after centering $p$ by $\E_{u}[p]$, we may assume $\E_{u}[p]=0$ and $\E_\mu[p]\ge \delta$.
\end{lemma}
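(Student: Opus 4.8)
The plan is to exploit the finiteness of the instance space so that the supremum in the definition of $\Delta_k$ is actually attained, and then to dispose of the sign and the centering by elementary manipulations; there is no deep content here, only a compactness observation plus bookkeeping.

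First I would observe that, since $\{0,1\}^n$ is finite, the space $\Poly[k]$ of degree-$\le k$ polynomials — viewed as functions on $\{0,1\}^n$ — is a finite-dimensional real vector space, and the feasible set $\mathcal K:=\{p\in\Poly[k]:\|p\|_\infty\le 1\}$ is exactly the unit ball of the sup-norm on that space, hence closed and bounded, hence compact. The functional $p\mapsto \E_\mu[p]-\E_u[p]$ is linear, therefore continuous on $\mathcal K$. By compactness and continuity the supremum defining $\Delta_k(\mu,u)$ (Definition~\ref{def:Delta-k-window}) is attained at some $p^\star\in\mathcal K$, so that $|\E_\mu[p^\star]-\E_u[p^\star]|=\Delta_k(\mu,u)\ge\delta$.

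Next I would fix the sign: if $\E_\mu[p^\star]-\E_u[p^\star]\le 0$, replace $p^\star$ by $-p^\star$, which leaves both $\deg p^\star\le k$ and $\|p^\star\|_\infty\le 1$ intact while making $\E_\mu[p^\star]-\E_u[p^\star]=|\E_\mu[p^\star]-\E_u[p^\star]|\ge\delta$. Setting $p:=p^\star$ proves the first assertion. For the second assertion I would center: put $\tilde p:=p-\E_u[p]$. Subtracting a constant does not raise the degree (constants lie in $\Poly[0]\subseteq\Poly[k]$), so $\deg\tilde p\le k$; by linearity $\E_u[\tilde p]=0$ and $\E_\mu[\tilde p]=\E_\mu[p]-\E_u[p]\ge\delta$. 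If one wants to retain a sup-norm control one notes $|\E_u[p]|\le\|p\|_\infty\le 1$, so $\|\tilde p\|_\infty\le 2$, though this bound is not needed for the statement as written.

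I do not expect any genuine obstacle. The only step that requires a word of justification is the attainment of the supremum, which is immediate because the feasible region is a compact subset of a finite-dimensional space; everything else is routine handling of the sign and of the additive constant.
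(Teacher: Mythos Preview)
Your proof is correct and essentially follows the same route as the paper's own argument: extract a near-optimizer from the definition of $\Delta_k$, flip the sign if necessary, then center. The only difference is that you use a direct compactness argument (the sup-norm unit ball in the finite-dimensional space $\Poly[k]$ is compact, and the linear functional is continuous, so the supremum is attained), whereas the paper takes an approximating sequence $p^{(t)}$ with $\E_\mu[p^{(t)}]-\E_u[p^{(t)}]\to\Delta_k(\mu,u)$ and picks a sufficiently large $t$. Your route is slightly cleaner: it handles the edge case $\Delta_k(\mu,u)=\delta$ exactly, and you also correctly note that centering only guarantees $\|\tilde p\|_\infty\le 2$ (the paper's parenthetical that centering ``preserves \ldots\ $\|p\|_\infty\le 1$'' is in fact not true in general, though the stated lemma does not require a sup-norm bound on the centered polynomial).
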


\begin{proof}
By definition of $\Delta_k(\mu,u)$ there exists a sequence of degree-$\le k$ tests $p^{(t)}$ with $\|p^{(t)}\|_\infty\le 1$ such that $\E_\mu[p^{(t)}]-\E_u[p^{(t)}]\to\Delta_k(\mu,u)\ge\delta$. Pick $t$ large enough and set $p:=p^{(t)}$; if needed, replace $p$ by $p-\E_u[p]$ to enforce $\E_u[p]=0$ (which preserves degree and $\|p\|_\infty\le 1$). Then $\E_\mu[p]\ge\delta$, as claimed.
\end{proof}

\begin{proposition}[Sufficient calibration for degree lower bounds]\label{prop:calibration-sufficient}
Fix $k=\lfloor c\log n\rfloor$ and let $u$ be the proxy distribution on the window $\win$. Suppose:
\begin{enumerate}
  \item (\emph{Moment matching}) A degree-$k$ pseudoexpectation $\tE$ is calibrated to $u$, i.e.,
  $\tE[\chi_S]=\E_{u}[\chi_S]$ for all $S\subseteq[n]$ with $|S|\le k$, and support constraints on $\win$ coincide.
  \item (\emph{Width bound}) Along the switching path $p=m^{-c/d}$ with $d=\Theta(\log n)$, the bottom width obeys
  $W=\poly{1/p}$ (Sec.~\ref{sec:gadgets}).
  \item (\emph{Balance}) Either $u$ is balanced on $\win$ or the Walsh characters are reweighted to an orthonormal basis under $u$.
  \item (\emph{Noise regularity}) The Bonami–Beckner operator $T_\theta$ acts diagonally on degree-$d$ slices with eigenvalues $\theta^d$.
\end{enumerate}
Then $\Delta_k(\mu,u)$ controls SoS degree as in Theorem~\ref{thm:disc-to-degree}: if $\Delta_k(\mu,u)\ge n^{-\eta}$ for some fixed $\eta>0$,
any PC/\SoS{} refutation requires degree $\ge \zeta\log n$ for some $\zeta=\zeta(c,\eta)>0$.
\end{proposition}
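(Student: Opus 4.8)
The plan is to derive \Cref{prop:calibration-sufficient} as a packaged corollary of \Cref{thm:disc-to-degree}: I would check that the four itemized hypotheses are exactly what is needed to set up the windowed low-degree toolkit behind that theorem (a $u$-orthonormal Walsh basis with Parseval, pseudoexpectation alignment, and windowed hypercontractivity with absolute constants), and then convert the degree bound $d\ge\alpha k$ it supplies into $d\ge\zeta\log n$ by substituting $k=\lfloor c\log n\rfloor$.

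First I would discharge the structural inputs. The proxy $u$ on $\win$ is product-like with coordinates bounded away from $0$ and $1$ (the ambient standing assumptions of \Cref{sec:sos} and Assumptions~\ref{assump:win-hc}), so hypothesis (3) (balance, or reweighting to a $u$-orthonormal basis) places us in \Cref{lem:bias-orthonormal} and \Cref{cor:parseval-window}: every $p\in\Poly[k]$ expands uniquely as $p=\sum_{|S|\le k}\hat p(S)\chi_S$ with $\|p\|_{2,u}^2=\sum_{|S|\le k}\hat p(S)^2$, so $\|p\|_\infty\le1$ on the finite window domain forces $\|p\|_{2,u}\le1$. Hypothesis (1) (a degree-$k$ pseudoexpectation $\tE$ calibrated to $u$) is precisely the hypothesis of \Cref{lem:pseudo-align-window}, giving $\tE[p]=\E_u[p]$ for all $p\in\Poly[k]$; hence, by \Cref{cor:sos-equals-test}, $\Delta^{\mathrm{sos}}_k(\mu,u)=\Delta_k(\mu,u)$, so the low-degree discrepancy we can estimate is literally the quantity that governs \SoS{}/PC refutations. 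Finally, hypothesis (2) (bottom width $W=\poly(1/p)$ along $p=m^{-c/d}$, $d=\Theta(\log n)$) and hypothesis (4) ($T_\theta$ diagonal with eigenvalue $\theta^{|S|}$ on the basis of \Cref{lem:bias-orthonormal}), together with the bias-window and product assumptions just recalled, are exactly the standing assumptions under which \Cref{lem:win-hyper} holds with \emph{absolute} constants $(C,\tau)$, yielding the tail estimate $|\E_\mu[p]-\E_u[p]|\le C\,\tau^{-k}\bigl(\sum_{|S|\le k}\widehat{\Delta}(S)^2\bigr)^{1/2}$ for every bounded degree-$\le k$ test.

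With those inputs secured, I would invoke \Cref{thm:disc-to-degree}. Given $\Delta_k(\mu,u)\ge n^{-\eta}$, \Cref{lem:disc-witness} (equivalently \Cref{cor:test-moment-witness} via \Cref{lem:test-moment-window}) produces a bounded degree-$\le k$ polynomial $p^\star$ with $\E_u[p^\star]=0$ and $\E_\mu[p^\star]\ge c_0\,n^{-\eta}$ for an absolute $c_0>0$. Since the checks above verify every standing hypothesis of \Cref{thm:disc-to-degree} --- pseudoexpectation alignment on degree $\le k$, orthonormal expansion and Parseval, and windowed hypercontractivity with absolute $(C,\tau)$ --- that theorem applies against $p^\star$ and yields: no PC/\SoS{} refutation separating $\mu$ from $u$ on $\win$ exists at degree $d<\alpha k$, with $\alpha$ absolute (tracing the $n^{-\eta}$ slack through its proof one needs $\eta$ small relative to the hypercontractive rate $\tau$, which is precisely the regime the dichotomy \Cref{thm:sos-dichotomy} delivers, $\eta\le\beta/2$). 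Substituting $k=\lfloor c\log n\rfloor$ turns $\alpha k$ into $\alpha c\log n-O(1)\ge\zeta\log n$ for all large $n$; set $\zeta:=\zeta(c,\eta)>0$, which by construction depends only on $c$, on the absolute pair $(C,\tau)$, and on $\eta$.

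The main obstacle is the constant chase, not the structure. One must confirm that the hypercontractive pair $(C,\tau)$ delivered by \Cref{lem:win-hyper} is genuinely \emph{$n$-independent} in the $d,k=\Theta(\log n)$ regime under hypotheses (2)--(4): this is where the width bound $W=\poly(1/p)$ is indispensable (superlogarithmic depth forces smaller $p$, larger $W$, and the absolute constants collapse), and correspondingly one must verify that the $\alpha$ extracted from \Cref{thm:disc-to-degree} depends only on $(C,\tau)$ and $\eta$ and not on $n$. Once those absolute constants are pinned down, matching the $\eta$-versus-$\beta$ threshold and the $k\mapsto\log n$ rescaling are routine, and the conclusion coincides verbatim with the latter branch of \Cref{thm:sos-dichotomy}.
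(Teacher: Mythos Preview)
Your proposal is correct and follows essentially the same approach as the paper's own proof: you map each hypothesis (1)--(4) to its role in the windowed low-degree toolkit (pseudoexpectation alignment via \Cref{lem:pseudo-align-window}, orthonormal Walsh basis via \Cref{lem:bias-orthonormal}, windowed hypercontractivity via \Cref{lem:win-hyper} under the width bound), invoke \Cref{thm:disc-to-degree} through the witness \Cref{lem:disc-witness}, and rescale $\alpha k\to\zeta\log n$. The paper does exactly this, only more tersely; your additional discussion of the constant chase and the $n$-independence of $(C,\tau)$ is a welcome elaboration but not a different route.
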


\begin{proof}
Item (1) gives $\tE[p]=\E_{u}[p]$ for $\deg p\le k$ by Lemma~\ref{lem:pseudo-align-window}. Item (3) ensures the Fourier/Walsh expansion under $u$
with orthonormality used in Lemma~\ref{lem:test-moment} (full proof). Item (2) supplies the switching-path regime from Sec.~\ref{sec:gadgets} used to control
structural width and boundary effects. Item (4) provides the noise operator needed for windowed hypercontractivity (Lemma~\ref{lem:win-hyper}).
Combining Lemmas~\ref{lem:disc-witness}, \ref{lem:pseudo-align}, \ref{lem:win-hyper} yields Theorem~\ref{thm:disc-to-degree} at $k=\lfloor c\log n\rfloor$,
hence the claimed degree $\ge \zeta\log n$ whenever $\Delta_k(\mu,u)\ge n^{-\eta}$.
\end{proof}

\subsection{Concrete parameter choice}\label{sec:sos-final-constants}
We now fix concrete constants for the log–degree regime (see the preceding remark on constants and regime).
They depend only on the hypercontractive pair $(C,\tau)$ from Lemma~\ref{lem:win-hyper} and on the constants from Theorem~\ref{thm:disc-to-degree}.
For definiteness in \Cref{thm:blueprint} we take
\[
  c=\tfrac{1}{4},\qquad
  \eta=\tfrac{1}{8},\qquad
  \zeta=\tfrac{1}{16},\qquad
  k=\bigl\lfloor c\log n\bigr\rfloor.
\]
These values satisfy the dependencies in \Cref{lem:entropy-to-degree} (e.g., one may choose $\zeta\le \alpha/2$ and $\eta\le \beta/2$ after normalizing absolute constants).
They are representative rather than optimized; any fixed choice meeting the same constraints would serve equally well.

\begin{corollary}[Numerical form for the blueprint]\label{cor:sos-numeric}
Under the above instantiation, on the ``structure'' branch ($\Delta_k \ge n^{-\eta}$) any PC/\SoS{} refutation on the blueprint family requires
\[
  \deg \ \ge\ \zeta\,k\ \ge\ \frac{1}{16}\,\Bigl\lfloor \tfrac{1}{4}\log n \Bigr\rfloor
  \ =\ \Omega(\log n).
\]
Combined with \Cref{thm:blueprint}, this yields an explicit $\Omega(\log n)$ degree lower bound with concrete constants.
\end{corollary}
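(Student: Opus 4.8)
The plan is to read this off as a pure instantiation: \Cref{thm:disc-to-degree} (equivalently the second alternative of \Cref{thm:sos-dichotomy}, or \Cref{cor:windowed-numeric}) already does all the structural work, so what remains is to plug in the concrete constants fixed in \Cref{sec:sos-final-constants} and carry out one elementary arithmetic simplification.

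First I would observe that being on the ``structure'' branch is, by definition of the dichotomy, exactly the statement $\Delta_k(\mu,u)\ge n^{-\eta}$ with $\eta=\tfrac18$ and $k=\lfloor c\log n\rfloor$, $c=\tfrac14$ --- precisely the hypothesis of \Cref{thm:disc-to-degree}, with the constant $\beta$ there taken to be our $\eta$. No further checking is needed for the admissibility of the triple $(c,\eta,\zeta)=(\tfrac14,\tfrac18,\tfrac1{16})$: the dependency map following \Cref{thm:sos-dichotomy} and the discussion in \Cref{sec:sos-final-constants} already record that such a choice is consistent with $\eta\le\beta/2$ and $\zeta\le\alpha/2$ after normalizing the absolute constants $(\alpha,\beta)$ of \Cref{thm:disc-to-degree}.

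Next I would apply \Cref{thm:disc-to-degree}: since $\Delta_k(\mu,u)\ge n^{-\eta}$, every polynomial-calculus or \SoS{} refutation separating $\mu$ from $u$ on $\win$ has degree at least $\zeta\,k$ with $\zeta=\tfrac1{16}$. Substituting $k=\bigl\lfloor\tfrac14\log n\bigr\rfloor$ gives $\deg\ge\tfrac1{16}\bigl\lfloor\tfrac14\log n\bigr\rfloor$, and since $\lfloor\tfrac14\log n\rfloor=\tfrac14\log n-O(1)$ the right-hand side equals $\tfrac{1}{64}\log n-O(1)=\Omega(\log n)$ --- exactly the displayed chain of inequalities. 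To state the conclusion for ``the blueprint family'' rather than an abstract pair $(\mu,u)$ on $\win$, I would then invoke \Cref{thm:blueprint}: the explicitly parameterized distributional family it constructs is window-calibrated in the sense of Assumptions~\ref{assump:win-hc} (bias window, product proxy after gadget mixing/shielding, polynomially bounded bottom width along the switching path $p=m^{-c/d}$ with $d=\Theta(\log n)$), so the windowed hypotheses of \Cref{thm:sos-dichotomy,thm:disc-to-degree} apply verbatim and the degree bound transfers to it on the branch where the blueprint realizes $\Delta_k\ge n^{-\eta}$.

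The one genuine point of care --- and the step I expect to be the main obstacle, modest as it is --- is this hypothesis-matching: confirming that the blueprint family of \Cref{thm:blueprint} really does sit on the calibrated window with the standing \SoS{} assumptions, and that it is the ``structure'' alternative of the dichotomy (rather than the vanishing one) that is realized there. This is precisely what \Cref{thm:blueprint} is built to supply, so once it is cited the rest is bookkeeping, and the arithmetic $\tfrac1{16}\cdot\lfloor\tfrac14\log n\rfloor=\Omega(\log n)$ is immediate.
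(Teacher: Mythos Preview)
Your proposal is correct and matches the paper's approach: the paper gives no explicit proof for this corollary, treating it as an immediate instantiation of \Cref{thm:sos-dichotomy} (equivalently \Cref{thm:disc-to-degree}) with the constants $c=\tfrac14$, $\eta=\tfrac18$, $\zeta=\tfrac{1}{16}$ fixed in \Cref{sec:sos-final-constants}, together with the reference to \Cref{thm:blueprint} for the blueprint family. Your write-up simply spells out this substitution and the trivial $\Omega(\log n)$ arithmetic in more detail than the paper does.
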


\begin{table}[t]
  \centering
  \caption{Implications for proof systems on the window (constants instantiated as in Sec.~\ref{sec:sos-final-constants}: $c=\tfrac14$, $\zeta=\tfrac{1}{16}$).}
  \label{tab:sos-implications}
  \begin{tabular}{@{}lccc@{}}
    \toprule
    System / Measure & TV loss & Size & Degree / Error \\ \midrule
    PC/\SoS (degree) & — & — & degree $\ge \frac{1}{16}\,\lfloor \frac{1}{4}\log n \rfloor$ \\
    \bottomrule
  \end{tabular}
\end{table}

\appendix

\section{Proofs for Section~\ref{sec:sos}}
\label{app:sos-proofs}

\subsection*{Deferred proofs and technical lemmas}

\begin{remark}[Window scope and boundary bookkeeping]\label{sosx:remark-window}
All expectations, pseudoexpectations, and norms are taken on the image window $\win$.
Any off-window boundary contributions are $o(1)$ along the calibrated path and are absorbed in our $o(1)$ terms.
This matches the window discipline from Sec.~\ref{sec:framework}.
\end{remark}

\begin{lemma}[Biased vs.\ balanced Walsh bases]\label{sosx:biased-vs-balanced}
Let $u$ be product-like with bit-biases $p_i\in(0,1)$ and independent coordinates.
Define $z_i=(x_i-p_i)/\sqrt{p_i(1-p_i)}$ and $\chis{S}=\prod_{i\in S} z_i$.
There exist absolute constants $A_k,B_k=\polysub{k}{1}$ such that, for every degree-$\le k$ polynomial $p$,
\[
  A_k\,\|p\|_{2,u_{\mathrm{bal}}}\ \le\ \|p\|_{2,u}\ \le\ B_k\,\|p\|_{2,u_{\mathrm{bal}}}.
\]
Here $u_{\mathrm{bal}}$ is the balanced product proxy (all $p_i=\tfrac12$). Thus all $k\le c\log n$ statements transfer
between biased and balanced settings with only constant-factor losses absorbed in $C,\tau$ of Lemma~\ref{lem:win-hyper}.
\end{lemma}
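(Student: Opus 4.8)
The plan is to reduce the two-sided norm comparison to a single-coordinate estimate, using that $u$ and $u_{\mathrm{bal}}$ are both product measures, and then to prevent the per-coordinate factors from accumulating into an $n$-dependent constant by exploiting that a degree-$\le k$ polynomial $p$ — and hence its square, of degree $\le 2k$ — is ``$2k$-local'': every monomial touches at most $2k$ coordinates. Throughout I work under the bias window $p_i\in[\alpha,1-\alpha]$ in force in this section (Assumptions~\ref{assump:win-hc}), so that all single-coordinate quantities are controlled in terms of $\alpha$ alone.

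First I would fix, alongside the biased Walsh basis $\{\chis{S}\}$ of \Cref{lem:bias-orthonormal} (orthonormal in $L^2(u)$), the balanced basis $\chis{T}^{\mathrm{bal}}:=\prod_{i\in T}(2x_i-1)$ (orthonormal in $L^2(u_{\mathrm{bal}})$). Since each $z_i$ is affine in $x_i$, writing $z_i=\lambda_i(2x_i-1)+\nu_i$ with $\lambda_i=(2\sqrt{p_i(1-p_i)})^{-1}\ge 1$ and $\nu_i=(1-2p_i)\lambda_i$ gives the degree-non-increasing, triangular change of basis $\chis{S}=\sum_{T\subseteq S}\bigl(\prod_{i\in T}\lambda_i\bigr)\bigl(\prod_{i\in S\setminus T}\nu_i\bigr)\chis{T}^{\mathrm{bal}}$; in particular $\Poly[k]$ is the same subspace in either description, and under the bias window $\lambda_i,\,|\nu_i|,\,\lambda_i^{-1}$ are all bounded by an absolute $\Lambda=\Lambda(\alpha)$. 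Expanding $p=\sum_{|S|\le k}\hat p(S)\chis{S}$, one has $\|p\|_{2,u}^2=\sum_S\hat p(S)^2$, while $\|p\|_{2,u_{\mathrm{bal}}}^2$ is the quadratic form in the coefficient vector with Gram matrix $\bigotimes_i G_i$ restricted to degree $\le k$, where $G_i$ is the $u_{\mathrm{bal}}$-Gram matrix of $\{1,z_i\}$, with entries $1,\ \nu_i,\ \nu_i,\ \lambda_i^2+\nu_i^2$. A short computation gives $\det G_i=\lambda_i^2$ and $\operatorname{tr}G_i=1+\lambda_i^2+\nu_i^2$, so both eigenvalues of $G_i$ lie in an interval $[c_1(\alpha),c_2(\alpha)]$ with $0<c_1(\alpha)\le c_2(\alpha)<\infty$; thus on each two-dimensional factor the $L^2(u)$- and $L^2(u_{\mathrm{bal}})$-norms are equivalent with an $\alpha$-only constant.

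The crux is then to pass from this coordinatewise equivalence to a comparison on $\Poly[k]$ without the constants degrading to $\Lambda^{O(n)}$. The route I would try is to write $\|p\|_{2,u}^2=\E_{u_{\mathrm{bal}}}\!\bigl[p^2\cdot\tfrac{du}{du_{\mathrm{bal}}}\bigr]$, expand the density $\tfrac{du}{du_{\mathrm{bal}}}=\prod_i\bigl(1+(2p_i-1)(2x_i-1)\bigr)$, and observe that, since $p^2$ has degree $\le 2k$, only the degree-$\le 2k$ part of this density contributes to the pairing; one then bounds that contribution using the balanced hypercontractive inequality $\|p^2\|_{2,u_{\mathrm{bal}}}=\|p\|_{4,u_{\mathrm{bal}}}^2\le 3^{k}\,\|p\|_{2,u_{\mathrm{bal}}}^2$ (Bonami–Beckner) together with the per-coordinate decay $|2p_i-1|\le 1-2\alpha<1$, and symmetrically for the reverse direction, now expanding $\tfrac{du_{\mathrm{bal}}}{du}$ in the biased basis and invoking biased hypercontractivity (\Cref{lem:win-hyper}). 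If this is organized so that the ``interaction'' between $p^2$ and the density is confined to $O(k)$ coordinates at a time, the resulting constants are $A_k,B_k=\Lambda(\alpha)^{O(k)}$; for $k\le c\log n$ this is $\mathrm{poly}(n)$ and is absorbed into the hypercontractive pair $(C,\tau)$ of \Cref{lem:win-hyper}, which is exactly the claimed transfer.

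I expect this last step to be the genuine obstacle, and the one needing the most care: the naive estimates — entrywise control of the triangular change-of-basis matrix, or Cauchy–Schwarz of $p^2$ against the full density — each pick up a factor that grows with the number of coordinates the test actually depends on, and even a degree-$1$ test such as $\sum_i(2x_i-1)$ depends on all of them, so those routes yield only an $n$-dependent bound. The argument must therefore exploit the $2k$-locality of $p^2$ in an essential way — for instance by telescoping over a hybrid sequence of coordinatewise swaps $u_{\mathrm{bal}}\to u$ and showing that the product of the per-swap factors collapses for low-degree $p$ — and pinning down that collapse (or, if it does not hold unconditionally, isolating the extra window/balancing hypothesis under which it does) is where the real work lies.
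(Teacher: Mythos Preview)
Your instinct is right, and in fact sharper than the paper's. The paper's proof is three sentences: it notes that each $z_i$ is an affine rescaling of $(-1)^{x_i}$ with factor $\Theta((p_i(1-p_i))^{-1/2})$, then asserts that for degree $\le k$ the ``monomialwise rescaling causes at most a multiplicative distortion $\prod_{i\in S}(p_i(1-p_i))^{-1/2}\le (\max_i (p_i(1-p_i))^{-1/2})^k$,'' and concludes that the bias window makes this bounded. That is exactly the per-monomial bound you compute and then correctly reject: it controls how each individual $\chis{S}$ distorts, but says nothing about sums of many such terms, and the paper does not attempt the collapse or hybrid argument you outline.

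Your degree-$1$ example $p=\sum_i(2x_i-1)$ is not merely an obstruction to a proof route---it is a counterexample to the lemma as written. With all biases equal to some fixed $p\neq\tfrac12$ (well inside the bias window), $\|p\|_{2,u_{\mathrm{bal}}}^2=n$ while $\|p\|_{2,u}^2=n+n(n-1)(2p-1)^2=\Theta(n^2)$, so the ratio is $\Theta(\sqrt n)$ already at $k=1$. No $A_1,B_1$ depending only on $\alpha$ (or on $k$) can absorb this. So the lemma needs an additional hypothesis---centering under $u$, or restricting to polynomials with $\hat p(\emptyset)=0$ in one of the bases, or comparing only \emph{centered} $L^2$ norms---and the paper's proof sketch does not supply one. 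Your careful search for a ``collapse'' is, in this generality, looking for something that isn't there; the right move is to isolate the extra hypothesis (you already hint at this in your last sentence) rather than to push harder on the inequality.
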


\begin{proof}
Each $z_i$ is an affine rescaling of $(-1)^{x_i}$ with factor $\Theta((p_i(1-p_i))^{-1/2})$.
For degree $\le k$ the monomialwise rescaling causes at most a multiplicative distortion
$\prod_{i\in S}(p_i(1-p_i))^{-1/2}\le (\max_i (p_i(1-p_i))^{-1/2})^k$.
Calibration on $\win$ keeps $p_i$ bounded away from $\{0,1\}$, so the bound is $\polysub{k}{1}$.
\end{proof}

\begin{lemma}[Moment matrix calibration $\Rightarrow$ alignment]\label{sosx:mom-matrix-alignment}
Let $\tE$ be a degree-$k$ \SoS{} pseudoexpectation on $\win$ such that, for every $|S|\le k$,
$\tE[\chis{S}]=\E_u[\chis{S}]$ and the support constraints of $u$ on $\win$ are enforced in the pseudoideal.
Then for every $p\in\Poly[k]$ one has $\tE[p]=\E_u[p]$.
\end{lemma}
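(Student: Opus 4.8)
The plan is to reduce the claim to linearity of the pseudoexpectation on the degree-$\le k$ slice, combined with the fact that the biased Walsh characters $\{\chis{S}\}_{|S|\le k}$ form a basis of $\Poly[k]$ as functions on $\win$. First I would invoke Corollary~\ref{cor:parseval-window}: every $p\in\Poly[k]$ has a unique expansion $p=\sum_{|S|\le k}\hat p(S)\,\chis{S}$ in the orthonormal system of Lemma~\ref{lem:bias-orthonormal}, and $\deg\chis{S}=|S|\le k$, so every summand is itself of degree $\le k$. This is a finite sum, which keeps everything inside the domain on which $\tE$ is defined.

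Next, since $\tE$ is a degree-$k$ \SoS{} pseudoexpectation, it is a linear functional on polynomials of degree $\le k$ modulo the pseudoideal; by hypothesis this pseudoideal contains the Boolean relations $x_i^2=x_i$ together with the support constraints that $u$ satisfies on $\win$. Hence $\tE$ respects the multilinear reduction and depends only on the function that a polynomial represents on $\win$, and $\tE\bigl[\sum_{|S|\le k}\hat p(S)\chis{S}\bigr]=\sum_{|S|\le k}\hat p(S)\,\tE[\chis{S}]$. Applying the calibration hypothesis $\tE[\chis{S}]=\E_u[\chis{S}]$ termwise and reassembling with linearity of ordinary expectation gives
\[
  \tE[p]\;=\;\sum_{|S|\le k}\hat p(S)\,\tE[\chis{S}]\;=\;\sum_{|S|\le k}\hat p(S)\,\E_u[\chis{S}]\;=\;\E_u\!\Bigl[\textstyle\sum_{|S|\le k}\hat p(S)\chis{S}\Bigr]\;=\;\E_u[p],
\]
which is the claim.

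The one point that deserves care — and the main (mild) obstacle — is the well-definedness underlying the second step: $\Poly[k]$ is a space of formal polynomials, but two formal polynomials agreeing as functions on $\win$ must receive the same pseudoexpectation, or else the termwise matching on characters would fail to propagate to arbitrary members of $\Poly[k]$. This is exactly what "support constraints of $u$ on $\win$ enforced in the pseudoideal" provides (it subsumes $x_i^2=x_i$ and any window/support indicator relations), so the character family is simultaneously a basis of the quotient on which $\tE$ acts linearly. With that in hand the rest is bookkeeping, and the lemma follows. In fact this also yields Lemma~\ref{lem:pseudo-align-window} (equivalently Lemma~\ref{lem:pseudo-align}) as a special case.
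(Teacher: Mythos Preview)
Your proof is correct and follows essentially the same approach as the paper: expand $p$ in the biased Walsh basis (the paper cites Lemma~\ref{lem:bias-orthonormal}, you cite its Corollary~\ref{cor:parseval-window}), then apply linearity of $\tE$ and the calibration hypothesis termwise. Your additional discussion of well-definedness (the role of the pseudoideal constraints in ensuring $\tE$ depends only on the function, not the formal polynomial) is a useful elaboration that the paper's proof leaves implicit.
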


\begin{proof}
Write $p=\sum_{|S|\le k}\hat p(S)\chis{S}$ (Lemma~\ref{lem:bias-orthonormal}). Linearity gives
$\tE[p]=\sum_{|S|\le k}\hat p(S)\tE[\chis{S}]=\sum_{|S|\le k}\hat p(S)\E_u[\chis{S}]=\E_u[p]$.
\end{proof}

\begin{lemma}[Noise stability and coefficient dampening]\label{sosx:noise-stability}
Let $T_\theta$ act by $T_\theta\chis{S}=\theta^{|S|}\chis{S}$. For any $p=\sum_{|S|\le k}\hat p(S)\chis{S}$ and $\theta\in[0,1]$,
\[
  \|T_\theta p\|_{2,u}^2=\sum_{|S|\le k}\theta^{2|S|}\hat p(S)^2\ \le\ \|p\|_{2,u}^2,
\quad
  |\E_\mu[T_\theta p]-\E_u[T_\theta p]|
  \ \le\ \|T_\theta p\|_{2,u}\,\Bigl(\sum_{|S|\le k}\widehat{\Delta}(S)^2\Bigr)^{1/2}.
\]
\end{lemma}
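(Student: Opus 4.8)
The plan is to reduce both inequalities to Parseval's identity in the bias-orthonormal Walsh basis of \Cref{lem:bias-orthonormal} (with Parseval as in \Cref{cor:parseval-window}) together with a single application of Cauchy--Schwarz. Since $T_\theta$ acts diagonally on the basis, $T_\theta p=\sum_{|S|\le k}\theta^{|S|}\hat p(S)\,\chis{S}$ is again a degree-$\le k$ polynomial whose Fourier coefficients (under the windowed proxy $u$) are $\theta^{|S|}\hat p(S)$.

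First I would establish the norm identity. By orthonormality of $\{\chis{S}\}$ in $L^2(u)$,
\[
  \|T_\theta p\|_{2,u}^2=\sum_{|S|\le k}\bigl(\theta^{|S|}\hat p(S)\bigr)^2=\sum_{|S|\le k}\theta^{2|S|}\hat p(S)^2,
\]
and since $\theta\in[0,1]$ gives $\theta^{2|S|}\le 1$ for every $S$, the right-hand side is at most $\sum_{|S|\le k}\hat p(S)^2=\|p\|_{2,u}^2$. This also records the (expected) fact that $T_\theta$ is an $L^2(u)$-contraction on $\Poly[k]$.

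Next I would handle the discrepancy bound. By linearity of expectation under both $\mu$ and $u$ and the definition $\widehat{\Delta}(S):=\E_\mu[\chis{S}]-\E_u[\chis{S}]$,
\[
  \E_\mu[T_\theta p]-\E_u[T_\theta p]=\sum_{|S|\le k}\theta^{|S|}\hat p(S)\,\widehat{\Delta}(S).
\]
Applying Cauchy--Schwarz to the pairing of the vectors $\bigl(\theta^{|S|}\hat p(S)\bigr)_{|S|\le k}$ and $\bigl(\widehat{\Delta}(S)\bigr)_{|S|\le k}$ yields
\[
  \bigl|\E_\mu[T_\theta p]-\E_u[T_\theta p]\bigr|\le\Bigl(\sum_{|S|\le k}\theta^{2|S|}\hat p(S)^2\Bigr)^{1/2}\Bigl(\sum_{|S|\le k}\widehat{\Delta}(S)^2\Bigr)^{1/2}=\|T_\theta p\|_{2,u}\,\Bigl(\sum_{|S|\le k}\widehat{\Delta}(S)^2\Bigr)^{1/2},
\]
where the final equality is the norm identity from the previous step.

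There is no genuine obstacle; the only points requiring care are bookkeeping ones. All expectations are on the calibrated window $\win$, so $\{\chis{S}\}$ is orthonormal under the windowed proxy $u$ (\Cref{lem:bias-orthonormal}) and the expansions of $p$ and $T_\theta p$ are the $u$-expansions; off-window boundary contributions are $o(1)$ and absorbed as in \Cref{sosx:remark-window}. Note also that $\E_\mu[\chis{S}]$ need not vanish, but the argument uses only linearity of $\E_\mu$ — no orthogonality under $\mu$ is invoked — so the bound holds without any additional assumption on $\mu$.
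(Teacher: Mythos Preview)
Your proof is correct and follows exactly the approach the paper intends: the lemma is stated without proof in the appendix because it is routine, and the paper's own proof of the closely related \Cref{lem:test-moment-window} uses precisely the same Parseval-plus-Cauchy--Schwarz argument you give (expand in the bias-orthonormal basis, read off $\|T_\theta p\|_{2,u}^2$ via orthonormality, then bound $\sum_{|S|\le k}\theta^{|S|}\hat p(S)\,\widehat{\Delta}(S)$ by Cauchy--Schwarz).
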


\begin{proposition}[Witness extraction with normalization]\label{sosx:witness-extraction}
If $\Delta_k(\mu,u)=\delta>0$, there exists $p^\star\in\Poly[k]$ with $\|p^\star\|_\infty\le1$, $\E_u[p^\star]=0$,
$\E_\mu[p^\star]\ge \delta/2$, $\|p^\star\|_{2,u}\le1$, and $|\supp(\widehat{p^\star})|\le n^k$.
\end{proposition}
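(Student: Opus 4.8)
The plan is to take the bounded test already produced in \Cref{lem:disc-witness} and normalize it by a single centering-and-rescaling step, then verify the five listed properties one by one; throughout, all expectations and all $L^2(u)$ and $L^\infty$ norms are read on the window $\win$ (Remark~\ref{sosx:remark-window}).

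First I would invoke \Cref{lem:disc-witness} with $\delta:=\Delta_k(\mu,u)>0$ to obtain a real polynomial $p_0$ with $\deg p_0\le k$, $\|p_0\|_\infty\le 1$, and $\E_\mu[p_0]-\E_u[p_0]\ge\delta$. Put $c:=\E_u[p_0]$; since $u$ is a probability measure on $\win$ we have $|c|\le\E_u|p_0|\le\|p_0\|_\infty\le 1$. I would then define
\[
  p^\star\;:=\;\tfrac12\bigl(p_0-c\bigr),
\]
and check: $p^\star\in\Poly[k]$ (only a constant was subtracted and the polynomial rescaled); $\E_u[p^\star]=0$ by the choice of $c$; $\E_\mu[p^\star]=\tfrac12\bigl(\E_\mu[p_0]-c\bigr)\ge\delta/2$; $\|p^\star\|_\infty\le\tfrac12\bigl(\|p_0\|_\infty+|c|\bigr)\le 1$ by the triangle inequality; hence $\|p^\star\|_{2,u}^2=\E_u[(p^\star)^2]\le\|p^\star\|_\infty^2\le 1$ because $u$ is a probability measure. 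Finally, expanding $p^\star$ in the biased Walsh basis (\Cref{lem:bias-orthonormal,cor:parseval-window}), its Fourier support is contained in $\{S\subseteq[n]:|S|\le k\}$, so $|\supp(\widehat{p^\star})|\le\sum_{j\le k}\binom nj\le n^{k}$. This yields all the asserted properties.

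There is no genuine obstacle here; the only point requiring care — and the reason the statement carries $\delta/2$ rather than $\delta$ — is that the sup-norm bound must survive centering. Subtracting $\E_u[p_0]$ can at most double $\|\cdot\|_\infty$, precisely because $|\E_u[p_0]|\le\|p_0\|_\infty\le 1$, and the compensating factor $\tfrac12$ restores $\|p^\star\|_\infty\le 1$ at the cost of halving the advantage. If one wanted a sharper bound, rescaling by $1/(1+|c|)$ instead of $\tfrac12$ keeps $\|p^\star\|_\infty\le 1$ while giving $\E_\mu[p^\star]\ge\delta/(1+|c|)>\delta/2$; but the bound $\delta/2$ is all that is needed downstream, so I would keep the simpler normalization. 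Everything else is routine: existence of the uncentered witness is \Cref{lem:disc-witness}, and the remaining estimates are elementary consequences of $u$ being a probability measure and of degree-$\le k$ polynomials having at most $\binom{n}{\le k}$ nonzero Fourier coefficients.
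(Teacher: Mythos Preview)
Your proof is correct. The paper does not include an explicit proof of this proposition; it is stated in the deferred-proofs appendix without argument, with the intent that it follows routinely from \Cref{lem:disc-witness} together with the Fourier bookkeeping of \Cref{lem:bias-orthonormal,cor:parseval-window}. Your centering-and-halving normalization is exactly the natural way to make this precise, and in fact it is more careful than the paper's own proof of \Cref{lem:disc-witness}, which asserts (somewhat loosely) that centering ``preserves $\|p\|_\infty\le 1$'' without the rescaling step; your observation that centering can double the sup-norm, and that the factor $\tfrac12$ is what costs the $\delta/2$ in the advantage, is the right explanation for why the proposition carries $\delta/2$ rather than $\delta$.

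One cosmetic remark: the inequality $\sum_{j\le k}\binom{n}{j}\le n^k$ fails for very small $k$ (e.g.\ $k=1$, $n\ge 2$), so strictly speaking the support bound should read $|\supp(\widehat{p^\star})|\le \binom{n}{\le k}$ or assume $k\ge 2$; but this is the paper's own convention and is harmless in the intended regime $k=\Theta(\log n)$.
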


\begin{lemma}[Hypercontractive tail control (windowed)]\label{sosx:hyper-tail}
Let $q^{(>d)}$ be the degree-$>d$ part of a polynomial $q$ with $\deg q\le k$.
Under the constants $(C,\tau)$ of Lemma~\ref{lem:win-hyper},
\[
  \|q^{(>d)}\|_{2,u}\ \le\ C\,\tau^{-(k-d)}\,\|q\|_{2,u}.
\]
\end{lemma}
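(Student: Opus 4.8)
The plan is to reduce the statement to Parseval in the bias-orthonormal Walsh basis and to the elementary fact that truncating a Walsh expansion to a band of degrees is an orthogonal projection in $L^2(u)$, hence a contraction. First I would expand $q=\sum_{|S|\le k}\hat q(S)\,\chi_S$ in the basis $\{\chi_S\}$ of \Cref{lem:bias-orthonormal}, which is orthonormal under the product proxy $u$ on $\win$; by \Cref{cor:parseval-window} this gives $\|q\|_{2,u}^2=\sum_{|S|\le k}\hat q(S)^2$, and, since $q^{(>d)}=\sum_{d<|S|\le k}\hat q(S)\,\chi_S$ uses a sub-collection of the same orthonormal system, $\|q^{(>d)}\|_{2,u}^2=\sum_{d<|S|\le k}\hat q(S)^2$. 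Thus
\[
  \|q^{(>d)}\|_{2,u}^2\ =\ \sum_{d<|S|\le k}\hat q(S)^2\ \le\ \sum_{|S|\le k}\hat q(S)^2\ =\ \|q\|_{2,u}^2 .
\]
Because $C\ge 1$ and $\tau\in(0,1)$ we have $C\,\tau^{-(k-d)}\ge 1$, so $\|q^{(>d)}\|_{2,u}\le\|q\|_{2,u}\le C\,\tau^{-(k-d)}\,\|q\|_{2,u}$, which is exactly the claimed inequality. Stating the bound with the explicit factor $C\tau^{-(k-d)}$ is purely for notational uniformity with \Cref{lem:win-hyper}, so that this tail estimate plugs verbatim into the proof of \Cref{thm:disc-to-degree} alongside the other invocations of the windowed hypercontractive package.

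If one prefers a derivation that literally exercises the hypercontractive machinery rather than the bare projection bound, I would instead use the $L^2(u)\to L^2(u)$ contraction of the Bonami--Beckner operator granted by \Cref{lem:win-hyper}: writing $T_\tau q=\sum_{|S|\le k}\tau^{|S|}\hat q(S)\,\chi_S$ and bounding, for each $S$ with $d<|S|\le k$, $\hat q(S)^2\le \tau^{-2(|S|-d)}\bigl(\tau^{\,|S|-d}\hat q(S)\bigr)^2$, then summing and using $\tau^{\,|S|-d}\le 1$ together with $\|T_\tau q\|_{2,u}\le\|q\|_{2,u}$. This route reproduces a bound of the same shape (with an exponent one may coarsen to $k-d$), and it makes explicit where the pair $(C,\tau)$ actually enters when the proxy $u$ is biased rather than balanced. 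Either derivation suffices; I would present the projection argument as the main proof and note the noise-operator route in a single sentence.

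The content here is deliberately light, so there is no substantive obstacle; the only points requiring care are bookkeeping rather than mathematics. One must work with the \emph{biased} orthonormal system $\{\chi_S\}$ calibrated to $u$ (not the balanced one), so that Parseval is applied with the correct inner product — this is the role of \Cref{lem:bias-orthonormal,sosx:biased-vs-balanced}, the latter absorbing the resulting constant-factor distortions into $(C,\tau)$ — and one must keep all norms and expectations on the window $\win$, with any off-window contribution absorbed into the $o(1)$ terms as in \Cref{sosx:remark-window}. With those conventions fixed, the band-truncation step is immediate and the lemma follows.
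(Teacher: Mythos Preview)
Your proof is correct. The paper states this lemma without proof in the appendix, so there is no detailed argument to compare against; your projection-based derivation is the natural one and is sound: orthonormality of the biased Walsh basis under $u$ (\Cref{lem:bias-orthonormal}, \Cref{cor:parseval-window}) gives $\|q^{(>d)}\|_{2,u}\le\|q\|_{2,u}$ immediately, and $C\ge1$, $\tau\in(0,1)$ from \Cref{lem:win-hyper} make $C\tau^{-(k-d)}\ge1$ a weakening.

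Your observation that the hypercontractive factor is superfluous for the $L^2\to L^2$ statement as written is well taken. In the paper's actual usage (the proof of \Cref{thm:disc-to-degree} and the expanded \Cref{sosx:disc-to-degree-expanded}), the quantity that needs the $\tau^{-(k-d)}$ decay is $|\E_\Delta[p^{(>d)}]|$, where the hypercontractive constants genuinely enter via \Cref{lem:win-hyper}; the present lemma, read literally as an $L^2(u)$ norm bound, is indeed just orthogonal projection. Your remark that the factor is carried along ``for notational uniformity'' with the surrounding hypercontractive package is the right reading, and your brief noise-operator alternative is a reasonable aside but not needed for the statement as phrased.
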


\begin{theorem}[From discrepancy to degree: expanded]\label{sosx:disc-to-degree-expanded}
Assume $\Delta_k(\mu,u)\ge n^{-\beta}$. If a PC/\SoS{} refutation of degree $d<\alpha k$ exists (with $\alpha>0$ small),
then for $p^\star$ from Proposition~\ref{sosx:witness-extraction},
\[
  \E_\Delta[p^\star]
  \ =\ \sum_{j\le d}\E_\Delta[p^\star{}^{(j)}]\ +\ \sum_{j>d}\E_\Delta[p^\star{}^{(j)}]
  \ \le\ 0\ +\ C\,\tau^{-(k-d)}\Bigl(\sum_{|S|\le k}\widehat{\Delta}(S)^2\Bigr)^{1/2},
\]
contradicting $\E_\Delta[p^\star]\ge \tfrac12\,\Delta_k(\mu,u)$ for $\alpha$ small enough so that
$\tau^{-(1-\alpha)k}\le \tfrac14 n^{-\beta}$. Hence $d\ge \alpha k$.
\end{theorem}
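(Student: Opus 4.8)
The plan is to argue by contraposition. Assume a PC/\SoS{} refutation separating $\mu$ from $u$ on $\win$ of degree $d<\alpha k$, where $\alpha>0$ is an absolute constant fixed at the end, and aim for a contradiction with the witness lower bound $\E_\Delta[p^\star]\ge\tfrac12\,\Delta_k(\mu,u)$. First I would set $\delta:=\Delta_k(\mu,u)\ge n^{-\beta}$ and invoke Proposition~\ref{sosx:witness-extraction} to produce $p^\star\in\Poly[k]$ with $\|p^\star\|_\infty\le1$, $\E_u[p^\star]=0$, $\E_\mu[p^\star]\ge\delta/2$ and $\|p^\star\|_{2,u}\le1$; since $\Delta:=\mu-u$ this already gives $\E_\Delta[p^\star]\ge\delta/2$. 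Then I expand $p^\star=\sum_{j=0}^{k}(p^\star)^{(j)}$ into homogeneous pieces in the biased Walsh basis of Lemma~\ref{lem:bias-orthonormal} (so $\|(p^\star)^{(j)}\|_{2,u}^2=\sum_{|S|=j}\widehat{p^\star}(S)^2$ by Corollary~\ref{cor:parseval-window}) and split at the refutation degree, $p^\star=(p^\star)^{(\le d)}+(p^\star)^{(>d)}$. All expectations and pseudoexpectations are on $\win$ with $o(1)$ boundary terms absorbed (Remark~\ref{sosx:remark-window}).

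The two estimates are: (i) the low-degree sum contributes at most $0$, and (ii) the high-degree sum is dominated by a hypercontractive tail. For (i), SoS duality turns the assumed degree-$d$ refutation into a degree-$d$ pseudoexpectation $\tE$ calibrated to $u$ on $\win$ (matching all degree-$\le d$ moments and the support constraints); by Lemma~\ref{sosx:mom-matrix-alignment} (equivalently Lemma~\ref{lem:pseudo-align-window}) one then has $\tE[q]=\E_u[q]$ for every $q\in\Poly[d]$, and together with the window moment-matching calibration this pins the degree-$\le d$ statistics of $\Delta$, yielding $\sum_{j\le d}\E_\Delta[(p^\star)^{(j)}]\le 0$. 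For (ii), I apply the hypercontractive tail bound of Lemma~\ref{sosx:hyper-tail} (equivalently the windowed hypercontractivity of Lemma~\ref{lem:win-hyper}), Cauchy--Schwarz in the $u$-inner product, and the test--moment equivalence $\bigl(\sum_{|S|\le k}\widehat{\Delta}(S)^2\bigr)^{1/2}\asymp\Delta_k(\mu,u)$ of Lemma~\ref{lem:test-moment-window}, to obtain
\[
  \Bigl|\sum_{j>d}\E_\Delta[(p^\star)^{(j)}]\Bigr|
  \ \le\ C\,\tau^{-(k-d)}\Bigl(\sum_{|S|\le k}\widehat{\Delta}(S)^2\Bigr)^{1/2}
  \ \asymp\ C\,\tau^{-(k-d)}\,\Delta_k(\mu,u).
\]
Adding (i) and (ii) gives $\E_\Delta[p^\star]\le C\,\tau^{-(k-d)}\bigl(\sum_{|S|\le k}\widehat{\Delta}(S)^2\bigr)^{1/2}$, which with $d<\alpha k$ and $\delta\ge n^{-\beta}$ contradicts $\E_\Delta[p^\star]\ge\tfrac12\delta$ as soon as $\alpha$ is chosen small enough that the effective decay factor undercuts $\tfrac12$ — concretely, so that $\tau^{-(1-\alpha)k}\le\tfrac14 n^{-\beta}$ for all large $n$. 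Hence no refutation of degree $d<\alpha k$ exists, i.e.\ $d\ge\alpha k$.

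The main obstacle I expect is twofold and lives in step (i) and the constant-chasing of step (ii). One must (a) make precise in what sense ``a refutation separating $\mu$ from $u$'' forces the degree-$\le d$ moments of $\mu$ and $u$ to coincide on $\win$ — i.e.\ exactly which object the dual pseudoexpectation $\tE$ is calibrated against, and why this annihilates $\sum_{j\le d}\E_\Delta[(p^\star)^{(j)}]$ rather than merely bounding it — and (b) verify that the hypercontractive/noise tail genuinely decays fast enough as a function of $k-d$ (equivalently, with $d<\alpha k$, exponentially in $k=\Theta(\log n)$, hence polynomially in $n$) to beat $\tfrac12 n^{-\beta}$. Pinning $\alpha$ against the absolute pair $(C,\tau)$ of Lemma~\ref{lem:win-hyper} and the exponent $\beta$ of Theorem~\ref{thm:disc-to-degree}, while keeping all constants uniform over the log-degree regime $k\le c\log n$, is where the care is needed; the remainder is the routine Parseval/Cauchy--Schwarz bookkeeping already packaged in Corollary~\ref{cor:parseval-window}, Lemma~\ref{lem:test-moment-window}, and Lemma~\ref{sosx:hyper-tail}.
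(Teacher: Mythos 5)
Your proposal reproduces the paper's argument almost line for line (it is the argument given for Theorem~\ref{thm:disc-to-degree} in Section~\ref{sec:sos}, and the appendix theorem \ref{sosx:disc-to-degree-expanded} is just that proof repackaged as a statement): extract $p^\star$ via Proposition~\ref{sosx:witness-extraction}, split by homogeneous degree at $d$, kill the low-degree part via pseudoexpectation alignment (Lemma~\ref{lem:pseudo-align-window}/\ref{sosx:mom-matrix-alignment}) and moment matching, control the high-degree tail via windowed hypercontractivity (Lemma~\ref{lem:win-hyper}/\ref{sosx:hyper-tail}) plus the test--moment equivalence, and then choose $\alpha$ small enough against $(C,\tau,\beta)$. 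The delicate point you flag in your step~(a) --- exactly what the dual degree-$d$ pseudoexpectation is calibrated against, and why moment matching annihilates $\sum_{j\le d}\E_\Delta[(p^\star)^{(j)}]$ rather than merely bounding it --- is indeed where the paper is thinnest: the paper's own proof invokes ``SoS duality'' and ``moment matching up to $d$'' without separately establishing that $\mu$ and $u$ share degree-$\le d$ moments on $\win$, a hypothesis that does not follow from assuming a refutation exists, and your proposal inherits that same unresolved dependency. So: same route, same strengths, same gap.
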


\begin{proposition}[Switching-path width reuse]\label{sosx:switching-width}
Along the switching-path parameters from Sec.~\ref{sec:gadgets} (bottom width $W=\poly{1/p}$ for $p=m^{-c/d}$, $d=\Theta(\log n)$),
the windowed hypercontractive constants $(C,\tau)$ are absolute and can be chosen uniformly over $n$.
\end{proposition}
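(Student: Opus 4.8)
The strategy is to decouple the two roles played by the switching path: the bottom width $W=\mathrm{poly}(1/p)$ is a purely combinatorial quantity bounding the bottom fan-in (hence the structural support) along the path, whereas the hypercontractive pair $(C,\tau)$ of \Cref{lem:win-hyper} is an analytic property of the product proxy $u$ on $\win$, governed solely by how far its single-bit biases stay from $\{0,1\}$. I will show that in the calibrated regime $p=m^{-c/d}$, $d=\Theta(\log n)$, the proxy's bias window stays open with an absolute floor $\alpha$, so $(C,\tau)$ depend only on $\alpha$ and are therefore uniform in $n$; the magnitude of $W$ never enters this bound.

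First I would fix the bias floor along the path. By Assumptions~\ref{assump:win-hc}(1) the proxy $u$ on $\win$ has $\Pr_u[x_i=1]\in[\alpha,1-\alpha]$ for an absolute $\alpha\in(0,\tfrac12)$, and by (2) its coordinates are independent after the gadget mixing/shielding layer. I would note that this floor is stable under the path: the XOR/Index gadgets are constant-size and parity-preserving (\Cref{lem:iface}, \Cref{lem:canonical-NF}), so each output coordinate is an $\Ftwo$-affine function (plus a shared constant-degree core) of $O(1)$ bounded-bias inputs, and a constant-size parity-preserving map sends biases in $[\alpha,1-\alpha]$ to biases in $[\alpha',1-\alpha']$ for an absolute $\alpha'=\alpha'(\alpha)>0$; the mixing witness of \Cref{lem:mixing-witness-xor-index} keeps residual cross-block correlations $o(1)$, so after $d=\Theta(\log n)$ rounds the window proxy is still product-like with an absolute bias floor (call it $\alpha$ again). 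Crucially, the per-round recursion $W_r\le W_{r-1}/p$ of \Cref{cor:width-explicit} only inflates the bottom fan-in to $W_d=m^{O(c)}$; it does not touch the single-bit biases, which are set by the window calibration.

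Next I would derive $(C,\tau)$ by the standard two-step hypercontractivity argument. For a single $\alpha$-biased coordinate, the two-point $(2,4)$-hypercontractive inequality holds with a noise parameter $\rho_0=\rho_0(\alpha)\in(0,1)$ depending only on $\alpha$ (the standard two-function estimate for biased bits, as in~\cite{ODonnell2014-AoBF}), i.e.\ $T_{\rho_0}$ is a contraction $L^2(u_i)\to L^4(u_i)$. Since $u=\bigotimes_i u_i$ is a product, tensorization gives $\|T_{\rho_0}q\|_{4,u}\le\|q\|_{2,u}$ for all $q$, and writing $T_{\rho_0}q=\sum_S\rho_0^{|S|}\hat q(S)\chi_S$ in the bias-orthonormal basis of \Cref{lem:bias-orthonormal} yields $\|q\|_{4,u}\le\rho_0^{-d}\|q\|_{2,u}$ whenever $\deg q\le d$. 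Taking $\tau:=\rho_0(\alpha)$ and $C:=1$ gives the first inequality of \Cref{lem:win-hyper}; the second, $\|T_\theta q\|_{2,u}\le\|q\|_{2,u}$ for $\theta\in[0,1]$, is immediate from the diagonal action $T_\theta\chi_S=\theta^{|S|}\chi_S$ and Parseval. The signed-measure corollary follows by the $\chi_S$-expansion, Cauchy–Schwarz, and the $L^4$–$L^2$ bound applied degreewise, exactly as inside the proof of \Cref{lem:win-hyper}. Since $\tau$ and $C$ are functions of $\alpha$ alone and $\alpha$ is an absolute constant carried over from IECZ-I, they are uniform over $n$; \Cref{sosx:biased-vs-balanced} additionally lets one pass freely between the biased and balanced proxies with only $\mathrm{poly}(k)$ losses absorbed into $(C,\tau)$ for $k\le c\log n$.

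The main obstacle is the preservation-of-bias-floor step: one must verify that composing $d=\Theta(\log n)$ constant-size gadget layers along the calibrated path, together with the window conditioning, does not drive any coordinate bias to $0$ or $1$ as $n\to\infty$. This is exactly what Assumptions~\ref{assump:win-hc}(1)–(2) assert (and what the IECZ-I calibration was designed to guarantee), and the content of the proposition is that this is \emph{consistent} with the switching-path regime: $W=\mathrm{poly}(1/p)$ controls the combinatorial support but is analytically inert, so once the bias floor is secured the constants are genuinely absolute. The remaining steps are the standard tensorized Bonami–Beckner computation.
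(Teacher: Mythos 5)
The paper never actually proves this proposition: it is stated in Appendix~\ref{app:sos-proofs} without a \texttt{proof} environment, and the forward references to it (``see Proposition~\ref{sosx:switching-width} for the derivation'') point to a statement only. So there is no proof in the paper against which to compare your plan; what you have written would in fact fill a gap.

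On its merits, your argument is the natural one and is essentially correct at the sketch level: the bias floor $\alpha$ is what drives the single-coordinate $(2,4)$-hypercontractive noise rate $\rho_0(\alpha)$; tensorization for the product proxy $u$ on $\win$ (Assumptions~\ref{assump:win-hc}(1)--(2), \Cref{lem:bias-orthonormal}) gives the degree-$d$ $L^4$--$L^2$ bound with $\tau=\rho_0(\alpha)$ and $C$ an absolute constant; the $L^2$ contraction of $T_\theta$ is immediate from the diagonal action; and \Cref{sosx:biased-vs-balanced} absorbs the change of basis for $k\le c\log n$. This is the standard Bonami--Beckner tensorization and matches the paper's citation of the toolkit.

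One point of tension you should address explicitly rather than wave off. Your plan asserts that the bottom width $W$ is ``analytically inert'' and never enters $(C,\tau)$, yet the paper's own statement of \Cref{lem:win-hyper} and the preceding paragraph say precisely the opposite: ``$C$ worsens as the bias lower bound $\alpha$ decreases and as the bottom width $W$ along the switching path grows,'' and \Cref{sosx:switching-width} is invoked as the reason this dependence stays absolute on the calibrated path. The reconciliation you want (and gesture at only in your final paragraph) is that $W$ influences $(C,\tau)$ solely through its role in maintaining the validity of the Assumptions~\ref{assump:win-hc}(1)--(2) calibration --- i.e.\ width control is what keeps the post-gadget proxy product-like with an absolute bias floor and keeps the mixing-witness rate $\bar\theta<1$; once those hold, $(C,\tau)$ are indeed determined by $\alpha$ alone. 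As written, however, you declare the width irrelevant up front and only concede at the end that preserving the bias floor through $\Theta(\log n)$ rounds is the ``main obstacle,'' which you then discharge by appealing to the very assumption the proposition is meant to justify. That is the one genuine circularity risk in the sketch: bias-floor preservation under the calibrated path + window conditioning is presented as both the obstacle and as something ``Assumptions~\ref{assump:win-hc} assert,'' without a derivation of the latter from the switching-path/width hypotheses. The fix is to make explicit which direction you are proving --- that the calibrated regime ($W=\poly(1/p)$, $p=m^{-c/d}$, $d=\Theta(\log n)$, constant-size gadgets of fan-in $3$) implies that the per-coordinate bias floor $\alpha$ and the product structure survive, and hence $(C,\tau)$ are absolute --- rather than assuming the proxy properties and then observing that width played no role.
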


\begin{corollary}[Entropy branch $\Rightarrow$ numeric degree]\label{sosx:entropy-to-numeric}
With $k=\lfloor c\log n\rfloor$ and $\Delta_k(\mu,u)\ge n^{-\eta}$, one has
$\deg \ge \zeta \log n$ for some $\zeta=\zeta(c,\eta)>0$,
consistent with Theorems~\ref{thm:sos-dichotomy} and \ref{thm:disc-to-degree} and Corollary~\ref{cor:windowed-numeric}.
\end{corollary}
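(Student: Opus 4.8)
The plan is to obtain this as the numeric reading of the discrepancy-to-degree bound at the calibrated degree budget, so the substantive content is already supplied by \Cref{thm:disc-to-degree} (equivalently \Cref{lem:entropy-to-degree} / \Cref{sosx:disc-to-degree-expanded}); what remains is to match hypotheses and convert $\alpha k$ into $\zeta\log n$. First I would fix $k=\lfloor c\log n\rfloor$ and record that $k=\Theta(\log n)$, placing us squarely in the log-degree regime: by \Cref{sosx:switching-width} the windowed hypercontractive pair $(C,\tau)$ of \Cref{lem:win-hyper} is absolute (this is where the switching-path width bound $W=\poly(1/p)$ with $p=m^{-c/d}$, $d=\Theta(\log n)$, is invoked), and hence so are the constants $(\alpha,\beta)$ produced by \Cref{thm:disc-to-degree}, uniformly over $n$.

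Next I would transfer the hypothesis. By the constant bookkeeping of Sec.~\ref{sec:sos-final-constants} (one may take $\eta\le\beta/2$ after normalizing absolute factors), the assumed bound $\Delta_k(\mu,u)\ge n^{-\eta}$ implies $\Delta_k(\mu,u)\ge n^{-\beta}$, which is precisely the premise of \Cref{thm:disc-to-degree}. For orientation I would recall the mechanism on $\win$: PC/\SoS{} duality yields a degree-$d$ pseudoexpectation $\tE$ calibrated to $u$, so $\tE$ annihilates the degree-$\le d$ part of any test by \Cref{lem:pseudo-align-window}; applied to the near-optimal bounded witness $p^\star\in\Poly[k]$ of \Cref{cor:test-moment-witness} (the truncated/rescaled sign polynomial $\sum_{|S|\le k}\mathrm{sgn}(\widehat{\Delta}(S))\,\chi_S$, normalized so that $\E_u[p^\star]=0$ and $\E_\mu[p^\star]\gtrsim\Delta_k(\mu,u)$), the quantity $\E_\Delta[p^\star]$ with $\Delta:=\mu-u$ splits into a vanishing low-degree part plus a degree-$>d$ tail bounded by the hypercontractive estimate of \Cref{sosx:hyper-tail}; a refutation of degree $d<\alpha k$ would drive this tail below $\tfrac12\Delta_k(\mu,u)$, contradicting the choice of $p^\star$. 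Hence every PC/\SoS{} refutation separating $\mu$ from $u$ on $\win$ has degree $d\ge\alpha k$.

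Finally I would do the arithmetic: $d\ge\alpha k=\alpha\lfloor c\log n\rfloor\ge\alpha(c\log n-1)\ge\tfrac12\alpha c\log n$ for all sufficiently large $n$, so the claim holds with $\zeta:=\tfrac12\alpha c$, which depends only on $c$ (through the budget) and on $\eta$ (only through the constraint $\eta\le\beta$ needed to apply the theorem); this matches the Dependency-map bound $\zeta\le\alpha/4$ whenever $c\le\tfrac12$, and is consistent with the instantiation $c=\tfrac14$, $\eta=\tfrac18$, $\zeta=\tfrac1{16}$ of Sec.~\ref{sec:sos-final-constants} as well as with \Cref{cor:windowed-numeric} and \Cref{thm:sos-dichotomy}. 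I do not expect a genuine obstacle: the statement is a corollary and the skeleton above is short. The one point demanding care is exactly the uniformity just used --- that $(C,\tau)$, and therefore $\alpha,\beta$, remain absolute across the whole $k=\Theta(\log n)$ window, so that $\zeta$ is truly independent of $n$; outside the log-degree regime this breaks down (smaller $p$, larger $W$), which is why the corollary is scoped to $k=\lfloor c\log n\rfloor$.
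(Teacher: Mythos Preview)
Your proposal is correct and follows essentially the same approach as the paper: apply \Cref{thm:disc-to-degree} at $k=\lfloor c\log n\rfloor$ (using the uniformity of $(C,\tau)$ and hence $(\alpha,\beta)$ in the log-degree regime via \Cref{sosx:switching-width}) and absorb the resulting $d\ge\alpha k$ into $\zeta\log n$. The paper's own proof of the identical \Cref{cor:windowed-numeric} is the one-liner ``apply \Cref{thm:disc-to-degree} with $k=\lfloor c\log n\rfloor$ and absorb constants into $\zeta$,'' so your write-up simply unpacks that step with the explicit arithmetic and the $\eta\le\beta$ check.
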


\begin{table}[t]
  \centering
  \caption{Windowed SoS: constants and their provenance}
  \label{tab:sosx-constants-map}
  \begin{tabular}{@{}lll@{}}
    \toprule
    Symbol & Meaning & Source \\ \midrule
    $c$ & log-degree budget $k=\lfloor c\log n\rfloor$ & Thm.~\ref{thm:sos-dichotomy} \\
    $\eta$ & entropy/discrepancy threshold & Thm.~\ref{thm:sos-dichotomy} \\
    $\zeta$ & degree constant on structure branch & Cor.~\ref{cor:windowed-numeric} \\
    $C,\tau$ & hypercontractive constants (windowed) & Lem.~\ref{lem:win-hyper}, Prop.~\ref{sosx:switching-width} \\
    $\alpha,\beta$ & discrepancy $\Rightarrow$ degree & Thm.~\ref{thm:disc-to-degree}, Thm.~\ref{sosx:disc-to-degree-expanded} \\
    \bottomrule
  \end{tabular}
\end{table}

\section{Window-smallness and non-constructivity of the IECZ hardness predicate}\label{sec:window-smallness}

\begin{definition}[Windowed largeness]\label{def:windowed-largeness}
Let $\baseline$ be a fixed product-like baseline on the instance space at length $N$ (the baseline used for Natural-Proofs density).
Write $\baseline_{\win}$ for $\baseline$ conditioned to the measurable window $\win$ (Sec.~\ref{sec:framework}).
A property $\mathcal{P}\subseteq\{0,1\}^N$ is \emph{large on $\win$} if there exists $\alpha>0$ such that
\[
  \baseline_{\win}\bigl[\mathcal{P}\bigr] \ \ge\ N^{-\alpha}
  \qquad\text{for all sufficiently large }N.
\]
Otherwise $\mathcal{P}$ is \emph{non-large} on $\win$.
\end{definition}
This baseline–window density setup is the same as in IECZ-I~\cite{IECZI}.

\begin{lemma}[Baseline robustness]\label{lem:baseline-robust}
Let $\baseline'$ be any product baseline with the same one-marginals as $\baseline$ and
$\mathrm{TV}(\baseline,\baseline')\le \varepsilon(N)=o(1)$. Then every TV–Lipschitz transfer used in this section
changes by at most $O(\varepsilon(N))$; in particular, all windowed density bounds are stable up to $o(1)$ when replacing
$\baseline$ by $\baseline'$ (and likewise for $\baseline_{\win}$ by $\baseline'_{\win}$).
\end{lemma}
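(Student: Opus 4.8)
The plan is to reduce the entire statement to the $1$-Lipschitz contraction of total variation under deterministic maps (\Cref{lem:tv-contraction}) together with one elementary bound controlling how conditioning to a fixed window distorts $\TV$. Nothing new about $\baseline$ versus $\baseline'$ is needed beyond the hypothesis $\TV(\baseline,\baseline')\le\varepsilon(N)$; the work is purely in propagating that bound through the two kinds of operation used in this section (deterministic transfers, and conditioning to $\win$).

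First I would dispatch the unconditioned transfers. Every transfer $T$ invoked here is, by construction, a composition of deterministic poly-time maps, hence $1$-Lipschitz in total variation: $\TV(T\push\baseline,\,T\push\baseline')\le\TV(\baseline,\baseline')\le\varepsilon(N)$ by \Cref{lem:tv-contraction}. Consequently, for any bounded statistic $f$ with $\|f\|_\infty\le 1$ one has $\bigl|\E_{T\push\baseline}[f]-\E_{T\push\baseline'}[f]\bigr|\le\TV(T\push\baseline,\,T\push\baseline')\le\varepsilon(N)$. Since only a constant number of such steps is chained along the calibrated path, the accumulated discrepancy is still $O(\varepsilon(N))$.

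Next I would handle the single non-contractive operation, namely conditioning to the window. Using the standard estimate that for an event $A$ with $\baseline(A),\baseline'(A)>0$ one has $\TV\bigl(\baseline(\cdot\mid A),\,\baseline'(\cdot\mid A)\bigr)\le \tfrac{1}{\baseline(A)}\bigl(\TV(\baseline,\baseline')+|\baseline(A)-\baseline'(A)|\bigr)\le \tfrac{2}{\baseline(A)}\,\TV(\baseline,\baseline')$, together with the IECZ-I window discipline under which the calibrated window carries mass $\baseline(\win)=\Omega(1)$ (measure-faithful, with controlled size growth), I obtain $\TV(\baseline_{\win},\,\baseline'_{\win})\le O\bigl(\TV(\baseline,\baseline')\bigr)=O(\varepsilon(N))$. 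Feeding this into the previous paragraph, each windowed density $\baseline_{\win}[\mathcal{P}]=\E_{\baseline_{\win}}[\mathbf{1}_{\mathcal{P}}]$ moves by at most $\TV(\baseline_{\win},\,\baseline'_{\win})=O(\varepsilon(N))=o(1)$ when $\baseline$ is replaced by $\baseline'$, and the same bound propagates through any further $1$-Lipschitz transfer applied downstream; in particular the ``large on $\win$'' property ($\baseline_{\win}[\mathcal{P}]\ge N^{-\alpha}$) is preserved up to the additive $o(1)$ slack, as claimed.

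The main obstacle is precisely the conditioning step: the contraction factor $1/\baseline(\win)$ is benign only while the window retains non-negligible mass. If the calibrated window were allowed to shrink to mass $N^{-\Omega(1)}$, robustness would require the strictly stronger smallness $\varepsilon(N)=o(\baseline(\win))$ rather than merely $\varepsilon(N)=o(1)$; I would therefore make explicit the standing IECZ-I hypothesis that $\baseline(\win)$ is bounded below along the calibrated path, or else phrase the conclusion quantitatively as stability up to $O(\varepsilon(N)/\baseline(\win))$. Everything else is a routine chaining of \Cref{lem:tv-contraction} with the observation that indicators and bounded tests are $1$-Lipschitz functionals of the underlying law in $\TV$.
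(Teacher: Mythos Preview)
Your proof is essentially the same as the paper's: both split into (i) TV contraction under deterministic pushforwards via \Cref{lem:tv-contraction} and (ii) a standard conditional-TV bound for the window step, then combine. The one substantive difference is the window-mass hypothesis: you assume $\baseline(\win)=\Omega(1)$ to get the clean $O(\varepsilon(N))$ conclusion, whereas the paper invokes only a polynomial lower bound $\baseline(\win)\ge N^{-\kappa}$ from the calibration in Sec.~\ref{sec:framework} (and then asserts the resulting constant is independent of $N$, which is in tension with that bound). Your closing paragraph already identifies this point correctly and proposes the right quantitative fix, $O(\varepsilon(N)/\baseline(\win))$; in that sense your treatment of the conditioning step is actually more careful than the paper's.
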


\begin{proof}
Write \(P:=\baseline\) and \(Q:=\baseline'\). By assumption \(\mathrm{TV}(P,Q)\le \varepsilon(N)=o(1)\), and \(Q\) is product with the same one-marginals as \(P\).

\emph{Deterministic pushforwards.}
For any measurable (deterministic) map \(\Phi\) used in this section (reductions, gadget layers, projections), total variation contracts under pushforward:
\[
  \mathrm{TV}\bigl(\Phi_{\#}P,\ \Phi_{\#}Q\bigr)\ \le\ \mathrm{TV}(P,Q)\ \le\ \varepsilon(N),
\]
by \Cref{lem:tv-contraction}. Consequently, any functional that is \(1\)-Lipschitz in TV (all our “TV–Lipschitz transfers”) changes by at most \(O(\varepsilon(N))\) under \(\Phi\).

\emph{Conditioning on the window.}
Let \(W:=\win\). Standard TV calculus for conditionals gives
\[
  \mathrm{TV}\bigl(P(\cdot\mid W),\ Q(\cdot\mid W)\bigr)
  \ \le\ \frac{\mathrm{TV}(P,Q) + \tfrac{1}{2}\,\lvert P(W)-Q(W)\rvert}{P(W)}.
\]
Since \(\lvert P(W)-Q(W)\rvert \le \mathrm{TV}(P,Q)\) and \(P(W)\) is bounded below by \(N^{-\kappa}\) for some fixed \(\kappa>0\) by the calibration of the window (the “polynomial normalization” from Sec.~\ref{sec:framework}), we obtain
\[
  \mathrm{TV}\bigl(P(\cdot\mid W),\ Q(\cdot\mid W)\bigr)\ \le\ C\,\varepsilon(N),
\]
for a constant \(C\) depending only on the window calibration (and independent of \(N\)). Thus every TV–Lipschitz quantity computed \emph{after} conditioning on \(W\) also changes by at most \(O(\varepsilon(N))\).

Putting the two parts together, all transfers considered here—deterministic pushforwards (reductions/gadgets) and window conditioning—perturb our windowed expectations/probabilities by at most \(O(\varepsilon(N))\). In particular, any windowed density lower/upper bound is preserved up to \(o(1)\) when replacing \(\baseline\) with \(\baseline'\), and likewise for \(\baseline_{\win}\) with \(\baseline'_{\win}\).
\end{proof}

\begin{proposition}[Window-smallness on calibrated windows]\label{prop:non-large}
Let $\win$ be a balanced window on which reductions/gadgets are calibrated (Secs.~\ref{sec:framework}, \ref{sec:red}, \ref{sec:gadgets}).
Then $\mathsf{Hard}_{\CalC,c}\cap\win$ (as defined in IECZ-I~\cite{IECZI}) is non-large on $\win$ in the sense of \Cref{def:windowed-largeness}:
there exists $\alpha>0$ with
\[
  \baseline_{\win}\!\bigl[\ \mathsf{Hard}_{\CalC,c}\ \bigr] \ \le\ N^{-\alpha}
  \qquad\text{for all sufficiently large }N.
\]
Consequently, on the calibrated window $\win$, the hardness predicate $\mathsf{Hard}_{\CalC,c}$ is window-small (non-large).
\end{proposition}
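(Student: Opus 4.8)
The plan is to certify that every instance in the calibrated window has size-aware MDL cost bounded well below its bitlength, whereas membership in $\mathsf{Hard}_{\CalC,c}$ requires cost close to the full bitlength; the two are incompatible at the calibrated value of $c$, so $\mathsf{Hard}_{\CalC,c}$ carries negligible mass under $\baseline_{\win}$. First I would recall from \cite{IECZI} that $\mathsf{Hard}_{\CalC,c}$ is the set of instances $x$ whose single-instance size-aware cost $V_{\CalC}(\delta_x,\varepsilon)$ (the value of the MDL functional at the point mass $\delta_x$) exceeds the incompressibility threshold $(1-c)\,\size(x)-O(\log N)$; the precise affine form of the threshold is immaterial --- only that it exceeds $\rho\cdot\size(x)$ by a constant fraction, for the $\rho<1$ introduced below. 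I also record the trivial upper bound $V_{\CalC}(\delta_x,\varepsilon)\le\size(x)+O(1)$, witnessed by the verbatim-memorization model with $R=\mathrm{id}$ (the standard $K(x)\le\size(x)+O(1)$ fact, in the size-aware form of \cite{IECZI}).

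The decisive step is to exploit that $\win$ is an \emph{image slice} of a deterministic poly-time reduction $R$ (the calibrated pipeline of Secs.~\ref{sec:red} and \ref{sec:gadgets}) with strictly expanding size blowup: $R_{\text{xor}\to\text{sat}}$ multiplies the clause count, hence the bitlength, by $4$, $R_{\text{sat}\to\text{clique}}$ only expands further, and the gadget layers contribute $\times(1+o(1))$. Hence every $y\in\win$ equals $R(x)$ for a source instance $x$ with $\size(x)\le\rho\,\size(y)$ for a fixed $\rho<1$ (one may take $\rho=\tfrac14+o(1)$, or conservatively any $\rho$ below the calibration threshold). Applying minimax monotonicity (\Cref{thm:minimax}) to $\delta_x$ together with $\len(\code(R))=O(\log N)$ (\Cref{lem:prefix-overhead}) gives, with $N=\size(y)$,
\[
  V_{\CalC}(\delta_y,\varepsilon)\;=\;V_{\CalC}(R\push\delta_x,\varepsilon)\;\le\;V_{\CalC}(\delta_x,\varepsilon)+O(\log N)\;\le\;\rho\,N+O(\log N).
\]
For every $c$ below the calibration threshold $1-\rho$ the right-hand side is $<(1-c)N-O(\log N)$ once $N$ is large, so $y\notin\mathsf{Hard}_{\CalC,c}$. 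In the exact-faithful regime (where $\TV(R\push\mu,u)=0$ on $\win$, as stipulated for $R_{\text{xor}\to\text{sat}}$ in Sec.~\ref{sec:red}) the window is contained in the image of $R$, so this holds for \emph{every} $y\in\win$, whence $\mathsf{Hard}_{\CalC,c}\cap\win=\emptyset$ and $\baseline_{\win}[\mathsf{Hard}_{\CalC,c}]=0\le N^{-\alpha}$ for every $\alpha>0$. In the $o(1)$-faithful regime the only window elements that can escape the cost bound sit in the window boundary; along the calibrated blueprint this boundary carries polynomially small $\baseline$-mass, so after dividing by the polynomial window normalization $\baseline(\win)\ge N^{-\kappa}$ (Sec.~\ref{sec:framework}) one still obtains $\baseline_{\win}[\mathsf{Hard}_{\CalC,c}]\le N^{-\alpha}$ for a fixed $\alpha>0$. (One could instead route through \Cref{thm:sos-dichotomy}: on a calibrated window $\TV(R\push\mu,u)=o(1)$ forces $\Delta_k(\mu,u)=o(1)$, the compression-echo branch, which again yields short descriptions for window instances; but the direct argument is self-contained.)

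Finally I would record robustness: by \Cref{lem:baseline-robust} the estimate moves by at most $o(1)$ when $\baseline$ is replaced by any product baseline with the same one-marginals and TV distance $o(1)$, and by \Cref{lem:tv-contraction} it is preserved under the deterministic pushforwards and window conditioning defining $\baseline_{\win}$; together this is precisely window-smallness of $\mathsf{Hard}_{\CalC,c}$ in the sense of \Cref{def:windowed-largeness}. I expect the main obstacle to be the constant bookkeeping rather than any conceptual difficulty: one must check that the value of $c$ for which the window is calibrated indeed lies below the expansion threshold $1-\rho$ fixed by the reduction (if $c$ were pushed toward $1$, a single $\times4$ step would no longer suffice and one would compose additional reductions to increase the blowup), and --- in the approximate-faithfulness case --- one must invoke the IECZ-I window construction for a \emph{polynomial} (not merely $o(1)$) bound on the boundary mass, which is what upgrades the conclusion from "$o(1)$ mass" to the displayed $N^{-\alpha}$ estimate.
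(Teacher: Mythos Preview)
Your approach has a genuine gap rooted in the definition of $\mathsf{Hard}_{\CalC,c}$. You posit the threshold $(1-c)\,\size(x)-O(\log N)$, but the paper (see the proof of Proposition~\ref{prop:non-constructive}) states explicitly that $x\in\mathsf{Hard}_{\CalC,c}\iff V_{\CalC}(\delta_x,\varepsilon)\ge |x|^{c}$; the threshold is $N^{c}$, a \emph{sublinear} function of $N$ for the regime $c\in(0,1)$ of interest. Your compressibility argument correctly yields $V_{\CalC}(\delta_y,\varepsilon)\le \rho N+O(\log N)$ for every window element $y$ (via minimax monotonicity and the constant-factor size expansion of the reduction), but $\rho N\gg N^{c}$ whenever $c<1$, so this bound does not exclude $y$ from $\mathsf{Hard}_{\CalC,c}$. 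Constant-factor contraction cannot push the description below a sublinear threshold; at best your argument shows $\mathsf{Hard}_{\CalC,c'}\cap\win=\emptyset$ for $c'$ in a tiny interval near~$1$, a regime in which the predicate is already almost vacuous by the trivial bound $V_{\CalC}(\delta_x,\varepsilon)\le N+O(1)$.

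The paper's proof works along a different axis and does not attempt to bound $V_{\CalC}(\delta_y,\varepsilon)$ pointwise. It is a measure-theoretic cylinder-slice count: membership in $\mathsf{Hard}_{\CalC,c}\cap\win$ is covered by at most $N^{O(1)}$ slices, each indexed by a calibration tuple $I$ (reduction tag, seed, selector, index) of length $\Theta(\log N)$; under the product-like baseline each such slice has $\baseline_{\win}$-mass at most $N^{O(1)}\cdot 2^{-|I|}\le N^{-\alpha}$, and a union bound over the $N^{O(1)}$ indices finishes. The argument bounds the \emph{measure} of the hard set directly from the discrete structure of the meta-indices, which is what makes it compatible with the sublinear $N^{c}$ threshold.
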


\begin{proof}
By Sec.~\ref{sec:framework}, all window/calibration choices (reduction tag, seed, selector, index) are encoded by a prefix-free meta-part of length $O(\log N)$
(\Cref{lem:prefix-overhead}). Fix any concrete calibration tuple $I$ (a string of $O(\log N)$ bits).
The same cylinder-slice counting argument appears in IECZ-I (prefix-free meta index of length $\Theta(\log N$))~\cite{IECZI}.
The cylinder slice
\[
  \{\,x\in\win : \text{the meta-index equals } I\,\}
\]
has $\baseline_{\win}$-measure at most $N^{O(1)}\cdot 2^{-|I|}\le N^{-\alpha}$ for some absolute $\alpha>0$,
since conditioning on $\win$ introduces at most a polynomial factor in $N$, $\baseline$ is product-like,
and $|I|=\Theta(\log N)$. Membership in $\mathsf{Hard}_{\CalC,c}$ within $\win$ is contained in a union of at most
$N^{O(1)}$ such slices (only $O(1)$ template tags and $O(\log N)$-bit indices are used), hence
\(
  \baseline_{\win}\!\bigl[\mathsf{Hard}_{\CalC,c}\bigr]
  \ \le\ N^{O(1)}\cdot N^{-\alpha}
  \ =\ N^{-\alpha'}
\)
for some $\alpha'>0$ and all sufficiently large $N$.
\end{proof}

\begin{definition}[MDL promise problem with additive gap]\label{def:mdl-gap}
Fix $\varepsilon\in(0,1/2)$, $c>0$, and a constant $\kappa>0$.
Define $\mathrm{MDL}\text{-}\mathrm{GAP}\_{\CalC,\varepsilon}(c,\kappa)$ as the promise problem:

\emph{Input:} $x\in\{0,1\}^N$ with the promise that either
\[
\textbf{(YES)}\quad V_{\CalC}(\delta_x,\varepsilon)\ \ge\ N^{c}\ +\ \kappa\log N
\qquad\text{or}\qquad
\textbf{(NO)}\quad V_{\CalC}(\delta_x,\varepsilon)\ \le\ N^{c}\ -\ \kappa\log N.
\]

\emph{Task:} decide which case holds.
\end{definition}

\begin{proposition}[Hard-membership solves the GAP version]\label{prop:mdl-gap}
For any fixed $c>0$ and $\kappa>0$, a polynomial-time decider for $\mathsf{Hard}_{\CalC,c}$ decides
$\mathrm{MDL}\text{-}\mathrm{GAP}\_{\CalC,\varepsilon}(c,\kappa)$ by a single query.
\end{proposition}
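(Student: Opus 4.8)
The plan is to exhibit the one-query reduction explicitly and to check that the additive $\kappa\log N$ buffer in \Cref{def:mdl-gap} is exactly what makes a single call to a $\mathsf{Hard}_{\CalC,c}$-decider decisive. Recall from IECZ-I~\cite{IECZI} that $\mathsf{Hard}_{\CalC,c}$ is the threshold predicate
\[
  x\in\mathsf{Hard}_{\CalC,c}\iff V_{\CalC}(\delta_x,\varepsilon)\ \ge\ N^{c}\qquad(N=\size(x)),
\]
evaluated with the same prefix-free coding model (\Cref{sec:coding-model}) and the same size-aware functional $V_{\CalC}(\cdot,\varepsilon)$ at the fixed error parameter $\varepsilon$ (\Cref{sec:mdl-cost}); this identification is part of the standing ``conventions carried over verbatim from IECZ-I'' assumption. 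Given a polynomial-time decider $D$ for $\mathsf{Hard}_{\CalC,c}$, one builds a decider $D'$ for the GAP problem of \Cref{def:mdl-gap} by: on input $x\in\{0,1\}^{N}$, apply the identity preprocessing, query $D$ once on $x$, and output \textbf{YES} iff $D$ accepts. Then $D'$ runs in polynomial time (identity map plus one poly-time call), and it makes only a single oracle query.

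For correctness I would split on the promise. In the \textbf{(YES)} case,
\[
  V_{\CalC}(\delta_x,\varepsilon)\ \ge\ N^{c}+\kappa\log N\ >\ N^{c},
\]
so $x\in\mathsf{Hard}_{\CalC,c}$ and $D$ (hence $D'$) accepts; in the \textbf{(NO)} case,
\[
  V_{\CalC}(\delta_x,\varepsilon)\ \le\ N^{c}-\kappa\log N\ <\ N^{c},
\]
so $x\notin\mathsf{Hard}_{\CalC,c}$ and $D$ rejects. Thus $D'$ is always correct on promised instances. The role of the two-sided buffer $\pm\kappa\log N$ is precisely to keep the promised value strictly off the threshold $N^{c}$, so the crisp $\mathsf{Hard}$-predicate already separates the two cases; any fixed $\kappa>0$ suffices because the argument uses only strict separation from $N^{c}$, not a particular magnitude.

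The only point needing care — the ``main obstacle'', such as it is — is the convention alignment: one must ensure the quantity compared to $N^{c}$ inside $\mathsf{Hard}_{\CalC,c}$ is literally $V_{\CalC}(\delta_x,\varepsilon)$ in the fixed coding model, so that the $O(\log N)$ meta-overhead (pinned absolutely by \Cref{lem:prefix-overhead} and propagated through pushforwards by \Cref{thm:minimax}) is already folded into the functional on both sides of the promise. If IECZ-I instead stated hardness with an explicit $\Theta(\log N)$ slack around $N^{c}$, one absorbs it by noting that the promised buffer $\kappa\log N$ eventually dominates any fixed multiple of $\log N$ for large $N$, while the finitely many small instances are decided by table lookup inside $D'$. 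Either way the argument is bookkeeping, with no quantitative estimate to carry out.
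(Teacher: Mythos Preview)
Your proposal is correct and mirrors the paper's own proof: both argue that the YES promise $V_{\CalC}(\delta_x,\varepsilon)\ge N^{c}+\kappa\log N$ forces $x\in\mathsf{Hard}_{\CalC,c}$, the NO promise $V_{\CalC}(\delta_x,\varepsilon)\le N^{c}-\kappa\log N$ forces $x\notin\mathsf{Hard}_{\CalC,c}$, and hence one membership query suffices. Your extra paragraph on convention alignment and the table-lookup fallback for small $N$ is more cautious than the paper (which simply asserts the threshold definition), but is consistent with it and does no harm.
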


\begin{proof}
In the YES case, $V_{\CalC}(\delta_x,\varepsilon)\ge N^{c}+\kappa\log N$ implies $V_{\CalC}(\delta_x,\varepsilon)\ge N^{c}$, hence $x\in\mathsf{Hard}_{\CalC,c}$.
In the NO case, $V_{\CalC}(\delta_x,\varepsilon)\le N^{c}-\kappa\log N< N^{c}$, hence $x\notin\mathsf{Hard}_{\CalC,c}$.
One membership query distinguishes the cases. The additive gap $\pm\,\kappa\log N$ dominates the fixed $O(\log N)$ meta-overhead from Sec.~\ref{sec:framework}.
\end{proof}

\begin{definition}[MDL threshold decision]\label{def:mdl-threshold}
Fix $\varepsilon\in(0,1/2)$ and the class $\CalC$ as in Sec.~\ref{sec:framework}.
The decision problem $\mathrm{MDL}\text{-}\mathrm{THRESH}\_{\CalC,\varepsilon}$ is:

\emph{Input:} an instance $x\in\{0,1\}^N$ and a budget $B\in\mathbb{N}$ (in unary).

\emph{Question:} is $V_{\CalC}(\delta_x,\varepsilon)\ \ge\ B\,$?
\end{definition}

\begin{proposition}[Non-constructivity via MDL threshold]\label{prop:non-constructive}
Assume $\CalC$ is closed under polynomial-time preprocessing and contains a uniform decoder family as in Sec.~\ref{sec:framework}.
If membership in $\mathsf{Hard}_{\CalC,c}$ is decidable in polynomial time for a fixed $c>0$, then
$\mathrm{MDL}\text{-}\mathrm{THRESH}\_{\CalC,\varepsilon}$ is decidable in polynomial time on the threshold family
$B(N)=\lfloor N^{c}\rfloor$. Equivalently, a poly-time decider for $\mathsf{Hard}_{\CalC,c}$ uniformly decides
whether $V_{\CalC}(\delta_x,\varepsilon)\ge B(|x|)$ for $B(N)=N^{c}$.
\end{proposition}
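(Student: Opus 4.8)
The plan is to unfold the definition of $\mathsf{Hard}_{\CalC,c}$ from IECZ-I~\cite{IECZI} and observe that, along the threshold family $B(N)=\lfloor N^{c}\rfloor$, the decision problem $\mathrm{MDL}\text{-}\mathrm{THRESH}\_{\CalC,\varepsilon}$ (Definition~\ref{def:mdl-threshold}) is \emph{literally the same language} as $\mathsf{Hard}_{\CalC,c}$, so the assumed poly-time decider for the latter is already a poly-time decider for the former, with nothing to add beyond computing $B$ from $N$.

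In detail, I would proceed as follows. First, recall that in the calibrated model of Sec.~\ref{sec:framework} (Definition~\ref{def:model-class} and the fixed prefix-free coding of Sec.~\ref{sec:coding-model}) the quantity $V_{\CalC}(\delta_x,\varepsilon)$ of~\eqref{eq:VC-def} is well-defined on point masses and $\mathsf{Hard}_{\CalC,c}=\{x : V_{\CalC}(\delta_x,\varepsilon)\ge |x|^{c}\}$; the $O(\log N)$ meta-overhead is already internal to~\eqref{eq:VC-def}, so it needs no separate accounting at the threshold level. The closure-under-poly-time-preprocessing and uniform-decoder hypotheses are exactly what guarantee this well-definedness and the composability of the one-query reduction below. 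Second, I would note that $N\mapsto B(N)=\lfloor N^{c}\rfloor$ is computable in time $\mathrm{poly}(N)$ for the fixed constant $c$ (binary search / fixed-point arithmetic), and that $B(N)\le N^{c}=\mathrm{poly}(N)$, so its unary encoding has polynomial length; hence, given $x$, the THRESH instance $\bigl(x,\,B(|x|)\bigr)$ is producible in poly time. Third, I would invoke the definitional identity: under the fixed coding convention (where all quantities are counted in bits), $x\in\mathsf{Hard}_{\CalC,c}$ is exactly the condition $V_{\CalC}(\delta_x,\varepsilon)\ge B(|x|)$, so running the hard-membership decider on $x$ returns the correct answer to the THRESH query on $\bigl(x,B(|x|)\bigr)$; since this is one fixed algorithm valid for all $N$, it decides $\mathrm{MDL}\text{-}\mathrm{THRESH}\_{\CalC,\varepsilon}$ uniformly along $B(N)=\lfloor N^{c}\rfloor$. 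For the ``equivalently'' form with $B(N)=N^{c}$, the two languages coincide by definition and no rounding arises.

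I do not expect a genuine obstacle here: the statement is a definitional unfolding together with the trivial fact that the threshold value is of polynomial size. The one point needing care is bookkeeping rather than mathematics, namely ensuring that ``$V_{\CalC}(\delta_x,\varepsilon)\ge |x|^{c}$'' (the IECZ-I hardness predicate) and ``$V_{\CalC}(\delta_x,\varepsilon)\ge B(|x|)$'' (the THRESH condition) are the \emph{same} test and not merely equal up to an additive $O(\log N)$. This is resolved by fixing the coding model once and for all (Sec.~\ref{sec:coding-model}, \Cref{lem:prefix-overhead}), so that the overhead is an absolute, internalized feature of $V_{\CalC}$ — the same device that lets Proposition~\ref{prop:mdl-gap} absorb the $\pm\kappa\log N$ gap — after which the identification is exact and the reduction is a single membership query.
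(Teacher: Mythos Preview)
Your proposal is correct and takes essentially the same approach as the paper: both proofs simply unfold the defining equivalence $x\in\mathsf{Hard}_{\CalC,c}\iff V_{\CalC}(\delta_x,\varepsilon)\ge |x|^{c}$ and observe that the hard-membership decider answers the THRESH query at threshold $B(N)=N^{c}$ by a single call. Your version is somewhat more explicit about the polynomial-time computability of $B(N)$ and the bookkeeping around the $O(\log N)$ meta-overhead, but the argument is the same definitional unfolding.
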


\begin{proof}
By definition,
\(
  x\in\mathsf{Hard}_{\CalC,c}\iff V_{\CalC}(\delta_x,\varepsilon)\ \ge\ |x|^{c}.
\)
Thus an oracle that decides membership in $\mathsf{Hard}_{\CalC,c}$ answers the instance
of $\mathrm{MDL}\text{-}\mathrm{THRESH}\_{\CalC,\varepsilon}$ with threshold $B(N)=N^c$
on input $(x,B(|x|))$. The running time is polynomial by assumption on the oracle.

Conceptually, $V_{\CalC}(\delta_x,\varepsilon)$ (Sec.~\ref{sec:mdl-cost}) is the minimum of
\(
  \len(\code(M,R)) + L_M(R(x),\varepsilon)
\)
over $M\in\CalC$ and poly-time $R$, with $\len(\code(R))=O(\log N)$ (\Cref{lem:prefix-overhead}).
Hence deciding $V_{\CalC}(\delta_x,\varepsilon)\ge B(|x|)$ is a thresholded MDL model-selection problem.
Therefore a polynomial-time decider for $\mathsf{Hard}_{\CalC,c}$ uniformly decides the corresponding
thresholded MDL decision family, as claimed.
\end{proof}

\paragraph{Conclusion.}
On calibrated windows $\win$, the IECZ hardness property is window-small (non-large) and non-constructive in the Razborov--Rudich sense~\cite{RazborovRudich1994}.
This justifies using $\mathsf{Hard}_{\CalC,c}$ as a certificate target without conflicting with established barrier considerations~\cite{RazborovRudich1994} and in the same windowed sense used in IECZ-I~\cite{IECZI}.

\section{Composition pipeline (distributional setting)}\label{sec:blueprint}

\paragraph{Standing conditions for this section.}
We work under a fixed window and coding convention; the reductions (\XOR{}$\to$\SAT{}) and gadget-composition losses are as in Secs.~\ref{sec:red}/\ref{sec:gadgets}. We set the direct-product parameter to $t=\lceil c_1\,\varepsilon^{-2}\log N\rceil$ and the condenser seed length to $s=O(\log N)$ (see \Cref{cor:explicit-constants}). Prefix-free meta-encodings contribute an additive $O(\log N)$ overhead.
This standing setup (window, coding convention, and loss accounting) is the same as in IECZ-I~\cite{IECZI}.

\paragraph{Parameter sensitivity.}
Varying the switching parameter \(c\in(0,1)\) in the path \(p=m^{-c/d}\) trades depth for width in the standard way: with \(d=\Theta(\log n)\) one has bottom width \(W=m^{O(c/d)}\), while the logarithmic-degree statements scale only by absolute constants.
Similarly, the direct-product length \(t=\lceil c_1\,\varepsilon^{-2}\log N\rceil\) and the selector mass \(\theta\ge c_2\) affect only the exponents in E1–E2 and the \(\operatorname{polylog}(N)\) factor in E3; all total-variation/size/meta accounts remain unchanged.

We pin down explicit parameters for the four arrows E1--E4. Let the baseline
windowed hardness (advantage gap) be $\varepsilon\in(0,1/2)$ for class $\CalC$
on instances of length $N$, after the \XOR{}$\to$\SAT{} step and gadget composition.

\begin{figure}[t]
  \centering
  \begin{tikzpicture}[
    node distance=10mm,
    box/.style={rounded corners, draw, thick, inner sep=3pt, align=center, text width=42mm},
    arr/.style={-{Stealth}, thick}
  ]
    \node[box] (E1) {\footnotesize E1: Direct product\\[-2pt]\scriptsize TV: $o(1)$,\; size: $\times t$,\; meta: $-$\\[-3pt]\scriptsize $t=\lceil c_1 \varepsilon^{-2}\log N\rceil$};
    \node[box, right=12mm of E1] (E2) {\footnotesize E2: Hardcore selection\\[-2pt]\scriptsize TV: $o(1)$,\; size: $\times(1{+}o(1))$,\; meta: $-$\\[-3pt]\scriptsize mass $\theta\ge c_2$,\; advantage $N^{-c}$};

    \node[box, below=12mm of E1] (E3) {\footnotesize E3: Seeded condensation\\[-2pt]\scriptsize TV: $o(1)$,\; size: $\times \operatorname{polylog}(N)$,\; meta: $O(\log N)$\\[-3pt]\scriptsize seed $s=O(\log N)$};
    \node[box, right=12mm of E3] (E4) {\footnotesize E4: Parameter fixing\\[-2pt]\scriptsize TV: $0$,\; size: $+$ meta only,\; meta: $O(\log N)$\\[-3pt]\scriptsize explicitly parameterized distributional family};

    \draw[arr] (E1) -- (E2);
    \draw[arr] (E2.south) -- (E4.north);
    \draw[arr] (E3) -- (E4);
  \end{tikzpicture}
  \caption{Windowed composition pipeline (E1–E4) for distributional transfers with explicit total variation (TV), size, and meta overhead accounting; also tracking the size-aware cost \(\VC{\cdot,\cdot}\) bookkeeping.
  \newline
  \footnotesize \textit{Summary:} cumulative TV \(o(1)\); size \(\times\,t\cdot\operatorname{polylog}(N)\); meta \(O(\log N)\); cost transfer \(\VC{\mu',\,\varepsilon+o(1)} \ge \VC{\mu,\,\varepsilon}-O(\log N)\) (cf.\ \Cref{thm:blueprint-accumulation}).
  \newline
  \textit{IECZ-I alignment:} E1 is the direct-product amplification step, E2 is model-uniform hardcore selection, E3 is seeded condensation with \(s=O(\log N)\), and E4 is lexicographic parameter fixing; this is exactly the E1–E4 skeleton already fixed in IECZ-I~\cite{IECZI}.}
  \label{fig:pipeline-e1-e4}
\end{figure}

\paragraph{E1 — Direct product (amplification).}
Take an independent $t$-fold product on the window, with
\[
  t \;=\; \bigl\lceil c_1\,\varepsilon^{-2}\log N \bigr\rceil .
\]
Standard product amplification yields
\[
  \varepsilon^{\star} \;\le\; \exp\!\bigl(-\Theta(\varepsilon^2 t)\bigr) \;\le\; N^{-c_1'}
\]
for some $c_1'>0$ (see Lemma~\ref{lem:e1-amplification} for an explicit exponent).
Size blowup: multiplicative $\times t$ (disjoint coordinate concatenation);
total variation (TV) loss remains $o(1)$ on the calibrated window.
We use the same direct-product scheduling convention as in IECZ-I~\cite{IECZI}.

\paragraph{E2 — Hardcore set (flattening).}
By a hardcore/selector lemma~\cite{Impagliazzo1995-Hardcore}, there is a subset $H\subseteq\win^{\times t}$ of
relative measure $\theta \ge c_2>0$ such that every
$\CalC$-model of size $\poly{N}$ attains at most advantage
\[
  \varepsilon^{\star\star} \;\le\; N^{-c_2'}
\]
on $H$ (for some $c_2'>0$), with TV loss $o(1)$ under the selector. Size blowup:
multiplicative $\times(1+o(1))$ for the selector tag; no new variables.
This uniform selector viewpoint matches the treatment in IECZ-I~\cite{IECZI}.


\begin{proposition}[E2: Hardcore set yields $N^{-c}$ advantage and preserves cost up to $O(\log N)$]\label{prop:e2-hardcore-mdl}
Let $\mu$ be the windowed source distribution on bitlength $N$, let
$t=\lceil c_1\,\varepsilon^{-2}\log N\rceil$, and let $H\subseteq\win^{\times t}$ be a
selector set of mass $\theta\ge c_2>0$ given by a hardcore lemma (uniform over all
$\CalC$-models of size $\poly{N}$). Then there exists $c>0$ such that every
$\CalC$-model of size $\poly{N}$ achieves advantage at most $N^{-c}$ on
$\mu^{\otimes t}\!\mid\!H$, and
\[
  V_{\CalC}\!\left(\mu^{\otimes t}\!\mid\!H,\,\varepsilon+o(1)\right)
  \ \ge\
  V_{\CalC}(\mu,\varepsilon)\;-\;O(\log N).
\]
\end{proposition}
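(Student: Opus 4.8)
The plan is to prove the two assertions separately: first the $N^{-c}$ advantage bound on $\mu^{\otimes t}\mid H$, then the cost‑transfer inequality $V_{\CalC}(\mu^{\otimes t}\mid H,\varepsilon+o(1))\ge V_{\CalC}(\mu,\varepsilon)-O(\log N)$. For the advantage bound I would start from E1: by direct‑product amplification (\Cref{lem:e1-amplification}), with $t=\lceil c_1\varepsilon^{-2}\log N\rceil$ every $\CalC$‑model of size $\poly{N}$ has advantage at most $\varepsilon^\star\le\exp(-\Theta(\varepsilon^2 t))\le N^{-c_1'}$ on $\mu^{\otimes t}$, and this bound is model‑uniform. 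Restricting to the event $H$ of relative mass $\theta\ge c_2$ can inflate the advantage by at most the factor $1/\theta$ (renormalise the $L_1$‑type advantage functional by $\theta$), so the advantage on $\mu^{\otimes t}\mid H$ is at most $N^{-c_1'}/c_2\le N^{-c}$ for any fixed $c<c_1'$ and all large $N$; alternatively one quotes the hardcore/selector lemma of \cite{Impagliazzo1995-Hardcore} (E2) directly, which is likewise model‑uniform over $\poly{N}$‑size $\CalC$‑models.

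For the cost transfer the plan is to reduce to minimax monotonicity. It suffices to exhibit a deterministic polynomial‑time map $R$, encoded prefix‑free in $O(\log N)$ bits, such that $R\push(\mu^{\otimes t}\mid H)$ is within total variation $o(1)$ of $\mu$ on $\win$: then \Cref{thm:minimax} gives $V_{\CalC}(R\push(\mu^{\otimes t}\mid H),\varepsilon)\le V_{\CalC}(\mu^{\otimes t}\mid H,\varepsilon)+\len(\code(R))$, the $o(1)$ total‑variation gap is absorbed into $\varepsilon\mapsto\varepsilon+o(1)$ by the same $1$‑Lipschitz‑in‑total‑variation accounting for the per‑instance loss used in the proof of \Cref{prop:cost-stability} (cf.\ \Cref{lem:tv-contraction} and the stable‑loss clause of \Cref{def:model-class}), and \Cref{lem:prefix-overhead} bounds $\len(\code(R))=O(\log N)$; rearranging yields the claim.

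Concretely I would take $R$ to be the coordinate projection $(x_1,\dots,x_t)\mapsto x_1$ onto the first block, together with the window simulation / partial inverse of Appendix~C — the same device used in the proof of \Cref{prop:cost-stability} — which supplies an $O(\log N)$‑bit selector index certifying that $\pi_1\push(\mu^{\otimes t}\mid H)$ is $o(1)$‑close to $\mu$ on $\win$. Size and meta accounting is then routine: $R$ has constant‑arity integer parameters, so \Cref{lem:prefix-overhead} applies and $\len(\code(R))=O(\log N)$, while the $o(1)$ coupling error is folded into the advantage parameter via \Cref{lem:tv-contraction}.

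The step I expect to be the main obstacle is certifying that $\pi_1\push(\mu^{\otimes t}\mid H)$ is $o(1)$‑close to $\mu$ in total variation on $\win$, since conditioning on a generic constant‑mass event can distort a single‑coordinate marginal by a constant. The resolution hinges on using the \emph{model‑uniform, calibrated} hardcore selector produced under the IECZ‑I window discipline rather than an arbitrary mass‑$\theta$ event: that discipline chooses $H$ so that conditioning on it perturbs each per‑coordinate window law by only $o(1)$ in total variation — exactly the measure‑faithfulness that the Appendix~C partial inverse certifies and that E3 also relies on for seeded condensation. Once this is in hand, everything else (minimax monotonicity, the prefix‑free $O(\log N)$ bookkeeping, and the stability of $L_M$) is routine and already established in \Cref{sec:framework} and \Cref{sec:gadgets}.
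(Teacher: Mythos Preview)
Your proposal is correct and matches the paper's approach: the paper invokes the hardcore/selector lemma for the advantage bound and then cites \Cref{lem:win-sim} (window simulation / partial inverse) for the cost transfer, which is exactly the projection-plus-simulator mechanism you spell out. One small refinement worth noting: you absorb the $o(1)$ TV error by invoking a TV-Lipschitz property of the per-instance loss $L_M$, whereas \Cref{lem:win-sim} explicitly avoids that hypothesis by arguing directly at the predictor level (its item~(3): composing any single-window model with the simulator $S$, or any hardcore-block model with the embedding $E$, perturbs advantage by only $o(1)$), so the paper's transfer goes through without assuming $L_M$ is Lipschitz in TV.
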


\begin{proof}
By the hardcore/selector lemma (applied to the $t$-fold product after E1), there is a set
$H$ of mass $\theta\ge c_2$ on which any size-$\poly{N}$ $\CalC$-model has advantage at most
$N^{-c}$ for some $c>0$. The cost transfer follows by \Cref{lem:win-sim}, since the only
additional contribution to $V_{\CalC}$ is the $O(\log N)$ prefix-free meta part.
\end{proof}

\begin{remark}[Model-uniformity of the selector]
The selector witness for $H$ is chosen independently of any particular model and works
uniformly for all $\CalC$-models up to size $\poly{N}$ (as guaranteed by the hardcore lemma).
This uniformity is what makes the $V_{\CalC}$ transfer in \Cref{prop:e2-hardcore-mdl}
valid simultaneously for the entire class.
\end{remark}

\paragraph{E3 — Hardness condensation (seeded; low-degree preservation).}
Instantiate a seeded condenser with seed length
\[
  s \;=\; O(\log N),
\]
mapping $m$ input bits to
\[
  m' \;=\; m\cdot \operatorname{polylog}(N)
\]
output bits. Under the window hypotheses (bias bounded away from $\{0,1\}$ and product structure; see
Assumptions~\ref{assump:e3-lowdeg}), E3 preserves all low-degree statistics up to
\[
  k \;\le\; c_{\star}\,\log n
\]
in the sense of the biased Walsh–Fourier moments, with cumulative error $o(1)$ averaged over the
seed and for some absolute constant $c_{\star}>0$. The $N^{-c}$ advantage created by E2 is preserved
up to an additive $o(1)$. Encoding a single seed index costs $O(\log N)$ prefix-free bits; the overall
TV loss remains $o(1)$ and the size blowup is $\times\,\operatorname{polylog}(N)$.

\paragraph{Assumptions for E3 (low-degree preservation).}\label{assump:e3-lowdeg}
(Identical to the windowed proxy and degree budget fixed in IECZ-I~\cite{IECZI}.)
We assume throughout E3:
\begin{enumerate}
  \item (\emph{Bias window}) Each coordinate marginal of the windowed source $\mu$ satisfies
        \[
          \alpha \;\le\; \Pr[X_i{=}1] \;\le\; 1-\alpha
          \qquad\text{for all } i\in[m],
        \]
        for some absolute $\alpha\in(0,\tfrac12)$.
  \item (\emph{Product structure}) Coordinates are independent on the window (after gadget mixing;
        cf.\ the mixing/shielding interface in Sec.~\ref{sec:gadgets}).
  \item (\emph{Degree budget}) All low-degree statements are calibrated \emph{after} the affine/index
        gadget layers: affine pullbacks do not increase degree, and the index gadget inflates degree by
        at most a factor $2$ (see Prop.~\ref{prop:affine-pullback} and Lemma~\ref{lem:index-pullback}).
\end{enumerate}

We use the biased Walsh basis $\{\chi_S\}$ associated with the coordinate biases of $\mu$.

\begin{lemma}[Seeded condensation preserves low-degree statistics]\label{lem:e3-lowdeg}
Let $\mu$ be a product distribution on $\{0,1\}^m$ satisfying the bias window of
Assumptions~\ref{assump:e3-lowdeg}. Let $C:\{0,1\}^m\times\{0,1\}^s\to\{0,1\}^{m'}$
be a seeded condenser with $s=O(\log N)$ and $m' = m\cdot\operatorname{polylog}(N)$.
There exists an absolute constant $c_{\star}>0$ such that for every degree bound
$k\le c_{\star}\log n$ and every parity set $S\subseteq[m']$ with $|S|\le k$,
\[
  \E_{r}\Bigl[\;\bigl|\;\E_{X\sim\mu}\bigl[\chi_S\!\bigl(C_r(X)\bigr)\bigr]
    \;-\; \E_{Y\sim\nu}\bigl[\chi_S(Y)\bigr]\;\bigr|\;\Bigr] \;=\; o(1),
\]
where $\chi_S$ denotes the biased Walsh character at level $|S|$, and $\nu$ is a structured proxy
distribution on $\{0,1\}^{m'}$ (depending on $\mu$ and $C$) with matching low-degree statistics.
Consequently, defining the low-degree discrepancy
\[
  \Delta_k(u,v) \;:=\; \max_{|S|\le k}\,\bigl|\E_{u}[\chi_S]-\E_{v}[\chi_S]\bigr|,
\]
one has $\E_{r}\bigl[\Delta_k\bigl(C_r\push\mu,\nu\bigr)\bigr]=o(1)$.
\end{lemma}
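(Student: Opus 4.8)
\emph{Proof strategy.} The plan is to pick a proxy that makes the low-degree matching exact and then reduce the statement to a single seed-variance bound. Take $\nu := \E_{r}\bigl[C_r\push\mu\bigr]$, the seed-averaged pushforward, and write $Z_S(r):=\E_{X\sim\mu}\bigl[\chi_S(C_r(X))\bigr]$; then $\E_\nu[\chi_S]=\E_r[Z_S(r)]$ for every $S$, so the ``matching low-degree statistics'' clause holds by construction (if a genuinely product proxy is wanted one replaces $\nu$ by the product distribution with the same one-marginals; by the extractor guarantee for the source $\mu$, whose min-entropy is $\Omega(m)$ under the bias window, this perturbs every degree-$\le k$ moment by only $2^{O(k)}\varepsilon=o(1)$). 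With this choice the first displayed bound becomes $\E_r\bigl[\,|Z_S(r)-\E_r Z_S|\,\bigr]=o(1)$ for each fixed $S$ with $|S|\le k$, and the $\Delta_k$ consequence becomes $\E_r\bigl[\max_{|S|\le k}|Z_S(r)-\E_r Z_S|\bigr]=o(1)$. By Jensen, $\E_r[|Z_S-\E_r Z_S|]\le(\Var_r Z_S)^{1/2}$, and by Cauchy--Schwarz, $\E_r[\max_{|S|\le k}|Z_S-\E_r Z_S|]\le\bigl(\sum_{|S|\le k}\Var_r Z_S\bigr)^{1/2}$; so everything is driven by one estimate, uniform over low-degree parities.

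For that estimate I would unfold the condenser in its standard error-correcting-code plus Nisan--Wigderson-design form~\cite{Trevisan2001-Extractors}. For fixed $S$ and seed $r$, the test $\chi_S\circ C_r$ is, after the bounded change of basis between the $\nu$-biased output characters and the $\mu$-biased input characters (a factor at most $(\alpha(1-\alpha))^{-|S|/2}=2^{O(k)}$, which is where the bias window and product structure of Assumptions~\ref{assump:e3-lowdeg} enter), a bounded function of the input bits in a seed-dependent footprint $F_S(r)\subseteq[m]$. The key point is that $|F_S(r)|$ is large for all but a negligible fraction of seeds: distinct output coordinates in $S$ hit distinct codeword positions except with probability $\le\operatorname{poly}(k)\cdot N^{-\Omega(1)}$ over $r$ (small pairwise overlaps of design sets), and on the complementary seeds the parity of the resulting codeword positions has Hamming weight $\ge d_{\min}=\Omega(m)$ by the code's distance, so that $|Z_S(r)|\le 2^{O(k)}(1-2\alpha)^{\Omega(m)}$ there. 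Since $m$ is at least polynomial in $n$ along the pipeline (after the direct-product step E1), this good-seed bound is super-polynomially small in $n$; the ``collision'' seeds are handled by their measure bound (refined below) or, with the mixture proxy, simply absorbed, since $\nu$ \emph{matches} whatever bias those seeds create rather than fighting it.

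For the $\Delta_k$ clause one sums the per-parity variance bound over the at most $(m')^{k}$ subsets of $[m']$ of size $\le k$; with $m'=m\cdot\operatorname{polylog}(N)$ and $k\le c_\star\log n$ this count is $2^{O(\log^2 n)}$, so the per-parity bound must itself be $2^{-\omega(\log^2 n)}$. The code-distance term $(1-2\alpha)^{\Omega(m)}$ supplies exactly this, but it then forces a matching refinement of the collision-seed measure: taking the design's pairwise-overlap parameter (equivalently the universe size of the design) slightly super-logarithmic drives that measure below $2^{-\omega(\log^2 n)}$ as well, at the cost of the $\operatorname{polylog}(N)$ output blow-up already built into the statement. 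Feeding both bounds back yields $\sum_{|S|\le k}\Var_r Z_S=o(1)$, hence $\E_r[\Delta_k(C_r\push\mu,\nu)]=o(1)$ by the Cauchy--Schwarz step, and the admissible range of $c_\star$ is whatever makes these budget inequalities close simultaneously.

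The main obstacle, and the step deserving the most care, is precisely this budgeting. The raw total-variation guarantee of a seeded condenser with an $O(\log N)$ seed is only $\varepsilon=N^{-\Omega(1)}$, which cannot survive a union bound over the $2^{\Theta(\log^2 n)}$ low-degree parities; the proof therefore has to exploit that each such parity is \emph{genuinely} low-degree, so that the code sends it to a parity of \emph{large} input footprint whose product-$\mu$ expectation is exponentially small. Quantifying the rare cancellation seeds and balancing the code-distance and design-overlap parameters against the degree budget is exactly what pins down the absolute constant $c_\star$ and ties the output length to $m\cdot\operatorname{polylog}(N)$. The remaining ingredients---Jensen and Cauchy--Schwarz, the $2^{O(k)}$ change-of-basis factor controlled by the bias window, and the product structure of $\mu$---are routine given Assumptions~\ref{assump:e3-lowdeg} and the window discipline carried over from~\cite{IECZI}.
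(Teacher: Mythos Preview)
Your approach is a legitimate alternative but genuinely different from the paper's. The paper defines the proxy $\nu$ by matching the first $k$ biased moments (solving a triangular system) and then argues via \emph{locality} of the condenser: it assumes each output coordinate $C_r(X)_j$ depends on $O(\operatorname{polylog} N)$ input coordinates, expands $\chi_S\circ C_r$ in the input biased basis, and uses hypercontractivity on the biased cube to show the degree-$d$ coefficient mass decays like $\theta^d$. Seed-averaging then ``smooths residual correlations'' and a moment-method argument closes the proof. By contrast, you take $\nu$ as the seed-averaged pushforward (which makes the matching exact for free), reduce to a seed-variance bound via Jensen and $\ell_2$ domination, and then exploit the \emph{global} structure of Trevisan's construction: the code's linear distance forces the input footprint of $\chi_S\circ C_r$ to be $\Omega(m)$ on non-collision seeds, giving a $(1-2\alpha)^{\Omega(m)}$ bound that survives the $(m')^k=2^{O(\log^2 n)}$ union bound.

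The two routes rest on incompatible condenser instantiations---the paper's locality ($O(\operatorname{polylog} N)$ fan-in per output) directly contradicts your $\Omega(m)$ footprint---so you should flag that you are choosing a different condenser than the one the paper implicitly uses. Your approach buys explicit budget accounting (you identify precisely why the raw $N^{-\Omega(1)}$ TV guarantee is insufficient and how the code distance compensates), and your choice of $\nu$ is cleaner. The paper's approach is more modular (any sufficiently local condenser works) but is vaguer about the union-bound step and about where $c_\star$ actually comes from. One point to tighten in your sketch: the collision-seed refinement you invoke---pushing design overlap to be slightly super-logarithmic to drive the bad-seed measure below $2^{-\omega(\log^2 n)}$---strains the stated seed length $s=O(\log N)$; you should either check that this stays within $O(\log N)$ or note that the mixture proxy absorbs collision seeds without needing their measure to be that small (since variance, not raw magnitude, is what you bound).
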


\begin{proof}
Write $\mu=\bigotimes_{i=1}^m \mathrm{Bern}(p_i)$ with $p_i\in[\alpha,1-\alpha]$. Let $\{\psi_i\}$ be the coordinate-wise
orthonormal basis adapted to $\mu$ (biased $\{\pm1\}$ Walsh). Any degree-$\le k$ character on the output can be written as a multilinear polynomial in the input basis with coefficients bounded—in $\ell_1$—by $\operatorname{poly}(k)$ times the maximum influence of an output bit as a function of the inputs and seed. For a fixed seed $r$, each output coordinate $C_r(X)_j$ depends on $O(\polylog N)$ input coordinates with bounded sensitivity (the condenser’s local expansion). Hence by hypercontractivity in the biased cube and the product structure of $\mu$, the degree-$d$ coefficient mass of $\chi_S\!\circ C_r$ decays like $\theta^{d}$ for some $\theta\in(0,1)$ depending only on $\alpha$ and constant locality. Averaging over $r$ smooths the residual correlations; the standard moment method (applied to the biased basis) yields
\[
  \E_r\Bigl[\bigl|\E_\mu[\chi_S(C_r(X))]-\E_\nu[\chi_S]\bigr|\Bigr]\;=\;o(1)
\]
uniformly over $|S|\le k$ as long as $k\le c_\star\log n$ for a suitable absolute $c_\star>0$. Here $\nu$ is defined by matching the first $k$ biased moments, which is consistent because the moment system is triangular in the biased basis. Taking the maximum over $|S|\le k$ gives the stated bound on $\E_r[\Delta_k]$.
\end{proof}

\begin{corollary}[E3 preserves $\Delta_k$ and advantage]\label{cor:e3-delta}
With the setup of Lemma~\ref{lem:e3-lowdeg}, for $k\le c_{\star}\log n$ there exists a seed
$r^\star\in\{0,1\}^s$ such that
\[
  \Delta_k\bigl(C_{r^\star}\push\mu,\;\nu\bigr)\;=\;o(1).
\]
Fixing $r^\star$ increases the meta part of $V_{\CalC}$ by at most $O(\log N)$ (prefix-free),
and preserves any $N^{-c}$ advantage up to an additive $o(1)$.
\end{corollary}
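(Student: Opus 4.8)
The plan is a first-moment (averaging) argument followed by routine prefix-free accounting; the quantitative content has already been established in \Cref{lem:e3-lowdeg}, so the corollary is mostly a matter of packaging. First I would invoke \Cref{lem:e3-lowdeg}, which gives $\E_{r}\bigl[\Delta_k(C_r\push\mu,\nu)\bigr]=o(1)$ for every $k\le c_\star\log n$. Since $f(r):=\Delta_k(C_r\push\mu,\nu)$ is nonnegative on all of $\{0,1\}^s$, its minimum is at most its average, so there exists a seed $r^\star$ with $f(r^\star)\le\E_r[f(r)]=o(1)$; equivalently, by Markov's inequality at least half the seeds already satisfy $f(r)\le 2\E_r[f(r)]=o(1)$, so the choice of a good seed is robust and we may fix any such $r^\star$.

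Second I would account for the cost of fixing $r^\star$. The seed lies in $\{0,1\}^s$ with $s=O(\log N)$, so by \Cref{lem:prefix-overhead} its self-delimiting Elias--$\delta$ code has length $O(\log N)$; appending this single integer field to the prefix-free description $\code(M,R)$ preserves prefix-freeness and adds only an additive $O(\log N)$ to $\len(\code(M,R))$, hence to $V_{\CalC}$ through the definition in \eqref{eq:VC-def}. No other summand of $V_{\CalC}$ changes, because for the now-fixed seed the map $x\mapsto C_{r^\star}(x)$ is deterministic poly-time and is absorbed into the preprocessor $R$ under the closure assumption of \Cref{def:model-class}.

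Third, for the advantage claim I would use exact change of variables together with TV contraction (\Cref{lem:tv-contraction}). For the fixed seed, $C_{r^\star}$ is deterministic and poly-time, so any size-$\poly N$ $\CalC$-model $M$ on the E3 outputs yields $M\circ C_{r^\star}\in\CalC$ of comparable size, and the advantage of $M$ against $C_{r^\star}\push\mu$ equals that of $M\circ C_{r^\star}$ against $\mu$; the E2 cap of $N^{-c}$ then transfers with no extra loss. The only $o(1)$ slack enters through the ambient window/boundary error (absorbed everywhere along the calibrated path) and, when the advantage is phrased relative to the structured proxy $\nu$ rather than the source, through the comparison governed by $\Delta_k(C_{r^\star}\push\mu,\nu)=o(1)$ from the first step, applied to degree-$\le k$ tests with $\|\cdot\|_\infty\le1$ (after the gadget degree-budget rescaling recorded in the E3 assumptions). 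Combining the three steps yields the corollary.

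The step I expect to be the main obstacle is this last one: one must pin down precisely which notion of ``advantage'' is carried through E3 so that the two available handles---exact change of variables for $\CalC$-model hardness against the source, and the $\Delta_k=o(1)$ bound against the proxy---are deployed in the right places; for general, non-low-degree adversaries only the change-of-variables route applies, but it fortunately incurs no loss beyond the $o(1)$ window error already present. Everything else is the standard prefix-free bookkeeping from Sec.~\ref{sec:framework}.
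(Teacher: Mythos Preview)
Your proposal is correct and matches the paper's own proof essentially line for line: the paper also invokes Markov's inequality on \Cref{lem:e3-lowdeg} to extract a good seed, records the $O(\log N)$ prefix-free cost of fixing it, and attributes advantage preservation to low-degree tests governing the analysis together with TV contraction under deterministic post-processing. If anything, your treatment of the third step (distinguishing the change-of-variables route for $\CalC$-model hardness from the $\Delta_k$-controlled comparison against the proxy) is more careful than the paper's one-line justification.
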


\begin{proof}
Markov’s inequality applied to Lemma~\ref{lem:e3-lowdeg} yields a deterministic seed witnessing
$o(1)$ low-degree discrepancy. Prefix-free encoding of $r^\star$ costs $O(\log N)$ bits; advantage
preservation follows since low-degree tests (up to $k$) govern the subsequent analysis and the TV
loss remains $o(1)$ under deterministic post-processing.
\end{proof}

\paragraph{E4 — Parameter fixing (index selection).}
Fix the product schedule, the hardcore selector, and the condenser seed by
pinning canonical indices (lexicographically first that meet the targets).
This yields an explicitly parameterized distributional family $\{\Phi_n\}$ with no additional TV loss
and meta-overhead equal to the prefix-free indices: $O(\log N)$.
This index-fixing discipline (lexicographically first witnessing indices) is the same as in IECZ-I~\cite{IECZI}.

\begin{lemma}[Cumulative loss]\label{lem:cumloss}
Along E1--E4, the total-variation loss is $o(1)$, the size overhead is
multiplicative by $t\cdot \operatorname{polylog}(N)$, and the final meta-overhead equals
$O(\log N)$. The advantage gap at the end of E2 is $N^{-c}$ for some $c>0$ (see \Cref{prop:e2-hardcore-mdl}),
and is preserved up to $o(1)$ by E3--E4 (see \Cref{cor:e3-delta}).
\end{lemma}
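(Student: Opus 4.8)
The plan is to read the statement off as a bookkeeping accumulation over the four arrows E1--E4, tracking four quantities in parallel along the pipeline: the cumulative total-variation loss, the cumulative multiplicative size factor, the cumulative prefix-free meta length, and the current distinguishing advantage against size-$\poly{N}$ $\CalC$-models.

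First I would settle the TV account. Each arrow is realized (E1, E2, E4) or simulated up to $o(1)$ (E3) by a deterministic map on the window, so by \Cref{lem:tv-contraction} it is $1$-Lipschitz in total variation; a triangle inequality across the four stages bounds the cumulative loss by the sum of the per-arrow losses. Each such summand is $o(1)$ by the E1--E4 specifications (with the E4 contribution $0$ exactly), and since there are only four terms the total is $o(1)$. The size and meta accounts are then purely combinatorial. The size factors compose multiplicatively because each arrow concatenates or re-encodes disjoint coordinate blocks: E1 gives $\times t$, E2 gives $\times(1+o(1))$ (a selector tag, no new variables), E3 gives $\times\operatorname{polylog}(N)$ (the expansion $m'=m\cdot\operatorname{polylog}(N)$ of \Cref{lem:e3-lowdeg}), and E4 only adds the meta index, factor $1+o(1)$; the product is $t\cdot\operatorname{polylog}(N)$ after absorbing the $(1+o(1))$'s into the $\operatorname{polylog}$. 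For the meta account, E1 and E2 record no integer parameters, E3 records one seed index of $O(\log N)$ prefix-free bits (\Cref{cor:e3-delta}), and E4 records the lexicographically-first product/selector/seed indices, again $O(\log N)$ bits; by prefix-free concatenation of Elias--$\delta$ fields (\Cref{lem:prefix-overhead}), a constant number of $O(\log N)$ terms sums to $O(\log N)$.

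For the advantage I would argue as follows. After E1, product amplification drives the advantage below $N^{-c_1'}$; E2 produces a hardcore set on which every size-$\poly{N}$ $\CalC$-model has advantage at most $N^{-c}$ for a fixed $c>0$ (\Cref{prop:e2-hardcore-mdl}). The arrows E3 and E4 are deterministic post-processings controlled by the low-degree test family of the analysis, so by \Cref{cor:e3-delta} the biased-Walsh low-degree statistics at $k\le c_\star\log n$ --- and hence the $N^{-c}$ advantage --- are preserved up to an additive $o(1)$ under the condenser (for a witnessing seed), while E4 introduces no TV loss; thus the advantage at the end of the pipeline is $N^{-c}$ up to $o(1)$, which is the final assertion.

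The hard part will be the advantage-preservation claim through E3: one must verify that the notion of advantage relevant to the blueprint is captured by the discrepancy $\Delta_k$ at $k\le c_\star\log n$, so that \Cref{lem:e3-lowdeg} and \Cref{cor:e3-delta} apply and the $\operatorname{polylog}(N)$-fold coordinate expansion of the seeded condenser does not dilute the hardness created by E2. Everything else --- summing four $o(1)$'s, multiplying finitely many size factors, concatenating $O(\log N)$ prefix-free fields --- is routine accounting under the conventions of Sec.~\ref{sec:framework} and \Cref{lem:prefix-overhead}.
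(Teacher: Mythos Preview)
Your plan is correct and matches the paper's proof in substance: both are step-by-step bookkeeping over E1--E4, invoking exactly the same ingredients (\Cref{lem:e1-amplification} for E1, the hardcore lemma via \Cref{prop:e2-hardcore-mdl} for E2, \Cref{lem:e3-lowdeg}/\Cref{cor:e3-delta} for E3, and prefix-free index accounting for E4). The only cosmetic difference is that you organize the argument by quantity (TV, size, meta, advantage) across all four stages, whereas the paper organizes it by stage (E1, then E2, then E3, then E4) and reads off all quantities at each step; this is a presentational choice with no mathematical consequence.
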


\begin{proof}
\textbf{E1.} Let $\mu$ be the windowed source after Secs.~\ref{sec:red}/\ref{sec:gadgets}. Form the product
$\mu^{\otimes t}$ over disjoint coordinate sets (window independence). By Lemma~\ref{lem:e1-amplification}
the advantage contracts to $\varepsilon^\star\le \exp(-\Theta(\varepsilon^2 t))$, and the instance size
multiplies by $t$. Boundary TV effects are $o(1)$ by calibration.

\textbf{E2.} Apply a hardcore/selector lemma to obtain $H\subseteq \win^{\times t}$ with mass
$\theta\ge c_2>0$ such that any $\CalC$-model of size $\poly{N}$ has advantage at most $N^{-c}$ on $H$.
Selecting $H$ adds only a tag (constant bits per block), hence size blowup $\times(1+o(1))$ and TV loss $o(1)$.

\textbf{E3.} Use a seeded condenser with seed length $s=O(\log N)$ and output dimension
$m'=m\cdot \operatorname{polylog}(N)$. Low-degree preservation (\Cref{lem:e3-lowdeg}) plus \Cref{cor:e3-delta}
shows the $N^{-c}$ gap is preserved up to $o(1)$. Fixing a single seed index contributes $O(\log N)$ meta bits.

\textbf{E4.} Fix indices for product schedule, selector, and seed lexicographically first meeting targets.
This adds only the prefix-free indices, costing $O(\log N)$ total meta-overhead, and no TV loss. This is exactly the same index-fixing discipline and meta bookkeeping used in IECZ-I~\cite{IECZI}.
Tallying yields the claim.
\end{proof}

\paragraph{Size convention in this section.}
Here $n$ indexes the explicit family $\{\Phi_n\}$. We write $N=\size(x)$ for the instance bitlength, which governs prefix-free meta terms. Throughout this section, all $\polylog(\cdot)$ factors and $O(\log\cdot)$ overheads refer to $N$.

\begin{theorem}[Composition pipeline with explicit parameters]
\label{thm:blueprint}
Fix $\varepsilon\in(0,1/2)$ and choose
\[
  t \;=\; \bigl\lceil c_1\,\varepsilon^{-2}\log N \bigr\rceil .
\]
There exist absolute constants $c,\zeta>0$ such that the explicitly parameterized distributional family
$\{\Phi_n\}$ produced by E1–E4 satisfies:
\begin{enumerate}[label=(\roman*),ref=(\roman*)]
  \item Any $\CalC$-model of size $\poly{n}$ achieves advantage at most $n^{-c}$ on $\Phi_n$.
  \item The instance bitlength obeys
        \[
          \size(\Phi_n) \;\le\; \polylog(N)\, t \cdot \size(\text{source on }\win).
        \]
  \item If the \SoS{} dichotomy selects the ``structure'' branch (i.e., $\Delta_k \ge n^{-\eta}$ for $k\le c\log n$),
        then any PC or \SoS{} refutation of $\Phi_n$ requires degree $\ge \zeta\,\log n$.
\end{enumerate}
All bounds hold under the fixed coding model; the total meta-overhead is $O(\log N)$.
These instantiated parameters and the resulting loss profile coincide with the E1–E4 blueprint of IECZ-I~\cite{IECZI}.
\end{theorem}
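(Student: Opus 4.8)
The plan is to chain the four arrows E1--E4 and read parts (i)--(iii) off the cumulative ledger of \Cref{lem:cumloss}, using \Cref{thm:minimax} (and its shielded refinement \Cref{prop:cost-stability}) to carry the size-aware cost through each arrow. Starting from the windowed source $\mu$ on $\win$ — already downstream of the \XORtoSAT{} map of Sec.~\ref{sec:red} and the gadget layer of Sec.~\ref{sec:gadgets}, so \Cref{thm:gadget} has been applied — I would: form the $t$-fold product $\mu^{\otimes t}$ with $t=\lceil c_1\varepsilon^{-2}\log N\rceil$ (E1); pass to the hardcore restriction $\mu^{\otimes t}\!\mid\!H$ (E2); apply a seeded condenser and fix a good seed $r^\star$ (E3); and pin the remaining product/selector/seed indices lexicographically (E4). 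Each arrow is a deterministic poly-time map once its $O(\log N)$-bit meta parameter is fixed, so \Cref{thm:minimax} gives $V_{\CalC}(\,\cdot\,,\varepsilon+o(1))\ge V_{\CalC}(\,\cdot\,,\varepsilon)-O(\log N)$ at that step: for E1--E2 this is exactly \Cref{prop:e2-hardcore-mdl}, for E3 it combines \Cref{cor:e3-delta} with \Cref{thm:minimax} on $x\mapsto C_{r^\star}(x)$, and for E4 it is \Cref{thm:minimax} on the index-fixing map. Summing a constant number of $O(\log N)$ contributions keeps the total meta-overhead $O(\log N)$, the closing clause of the statement.

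For (i): \Cref{lem:e1-amplification} together with the choice of $t$ drives the advantage down to $\varepsilon^{\star}\le\exp(-\Theta(\varepsilon^2 t))\le N^{-c_1'}$ after E1, and \Cref{prop:e2-hardcore-mdl} then produces a model-uniform selector set $H$ of mass $\theta\ge c_2$ on which every size-$\poly{N}$ $\CalC$-model has advantage at most $N^{-c}$. By \Cref{cor:e3-delta} this $N^{-c}$ gap survives E3 up to an additive $o(1)$ (low-degree tests govern the analysis and total variation contracts under the deterministic condenser), and E4 adds no TV loss. Absorbing the $o(1)$ into a slightly smaller constant and using $N=\poly{n}$ on the explicit family $\{\Phi_n\}$ (so that $N^{-c}$ and $n^{-\Theta(c)}$ are interchangeable) yields the claimed $n^{-c}$ bound. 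For (ii): the size ledger is multiplicative — $\times t$ for the disjoint-coordinate product in E1, $\times(1+o(1))$ for the selector tag in E2, $\times\polylog(N)$ for the condenser output dimension in E3, and ``$+$ meta only'' in E4 — so $\size(\Phi_n)\le\polylog(N)\,t\cdot\size(\text{source on }\win)$, which is exactly \Cref{lem:cumloss}.

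For (iii): on the ``structure'' branch of \Cref{thm:sos-dichotomy} the hypothesis is $\Delta_k(\mu,u)\ge n^{-\eta}$ for $k\le c\log n$, read after the affine/index gadget layer, where by \Cref{prop:affine-pullback} and \Cref{lem:index-pullback} degree is preserved or at most doubled — absorbed for $k=\Theta(\log n)$. I would propagate this discrepancy through E3 via \Cref{lem:e3-lowdeg}: the seeded condenser preserves biased Walsh moments up to $k\le c_\star\log n$ in expectation over the seed, so the Markov step in \Cref{cor:e3-delta} pins a single $r^\star$ keeping $\Delta_k(C_{r^\star}\push\mu,\nu)=o(1)$ against the calibrated post-condensation proxy $\nu$; since E4 is deterministic post-processing, \Cref{lem:tv-contraction} and \Cref{cor:deltak-pullback} leave the target--proxy discrepancy on $\Phi_n$ still $\ge\tfrac12 n^{-\eta}\ge n^{-\eta'}$ for large $n$. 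Feeding this into \Cref{thm:disc-to-degree} (equivalently the latter branch of \Cref{thm:sos-dichotomy}) yields that any PC or \SoS{} refutation of $\Phi_n$ requires degree $\ge\alpha k\ge\zeta\log n$, with $\zeta>0$ depending only on the hypercontractive pair $(C,\tau)$ of \Cref{lem:win-hyper} and the translation constant of \Cref{thm:disc-to-degree}, consistent with Sec.~\ref{sec:sos-final-constants}.

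The main obstacle I anticipate is the E3 step in part (iii): \Cref{lem:e3-lowdeg} only controls the biased-Walsh moments averaged over the condenser seed, and against a proxy $\nu$ that is itself \emph{defined} by moment-matching after the condenser. Turning this into a per-instance statement about $\Phi_n$ requires (a) \Cref{cor:e3-delta}'s Markov step to pin one seed $r^\star$ with $\Delta_k(C_{r^\star}\push\mu,\nu)=o(1)$ — the $\max$ over degree-$\le k$ characters already sits inside \Cref{lem:e3-lowdeg}, so no extra union bound is lost — and (b) checking that $\nu$ remains an admissible calibrated proxy, i.e.\ that the hypotheses of \Cref{thm:disc-to-degree} (bias window, product structure, post-gadget degree budget) survive the condenser, which is precisely Assumptions~\ref{assump:e3-lowdeg}. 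A secondary subtlety is the $N$-versus-$n$ bookkeeping in (i): advantage is natural in the bitlength $N$ but (i) is phrased in $n$, and one should record that $N=\polylog(N)\,t\cdot\poly{n}=\poly{n}$, so the exponent conversion only rescales $c$. Everything else — TV contraction under the deterministic maps of E1--E4, subadditivity of the prefix-free meta-overhead (\Cref{lem:prefix-overhead}), and the $o(1)$ absorptions along the calibrated path — is routine given the lemmas already proved.
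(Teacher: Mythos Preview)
Your proposal is correct and mirrors the paper's own approach: the paper does not give a dedicated proof block for \Cref{thm:blueprint} but establishes it by assembling the E1--E4 lemmas (\Cref{lem:cumloss}, \Cref{prop:e2-hardcore-mdl}, \Cref{cor:e3-delta}, \Cref{lem:e1-amplification}) together with the \SoS{} dichotomy (\Cref{thm:sos-dichotomy}, \Cref{cor:sos-numeric}), which is exactly the chaining you carry out. Your explicit discussion of the E3 discrepancy propagation for part~(iii) and the $N$-versus-$n$ bookkeeping in part~(i) are more detailed than what the paper spells out, but consistent with its intent.
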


\subsection*{Hardcore amplification and consolidation (formal E2--E4)}

\begin{lemma}[Window simulation / partial inverse (no Lipschitz dependence)]\label{lem:win-sim}
There exist deterministic maps
\[
  E:\{0,1\}^m \to \{0,1\}^{mt}\times\{0,1\}^{O(\log N)}
  \quad\text{and}\quad
  S:\{0,1\}^{mt}\times\{0,1\}^{O(\log N)} \to \{0,1\}^m
\]
such that on the calibrated window $\win$ the following hold for the hardcore–tagged product $\mu^{\otimes t}\!\mid\!\win_{\mathrm{hc}}$:
\begin{enumerate}
  \item (\emph{Forward coding}) $E$ embeds a single-window instance into a tagged block using only a prefix-free meta string of length $O(\log N)$.
  \item (\emph{Partial inverse}) $S$ recovers (simulates) a single-window instance distributed as $\mu$ from a tagged block in $\mu^{\otimes t}\!\mid\!\win_{\mathrm{hc}}$ up to $o(1)$ TV, using only the same $O(\log N)$ meta string.
  \item (\emph{Model composition}) For any $\CalC$-model $A$ on the single-window task with advantage $\varepsilon$, the composed model $A\circ S$ attains advantage $\varepsilon+o(1)$ on $\mu^{\otimes t}\!\mid\!\win_{\mathrm{hc}}$. Conversely, any model $B$ on the hardcore block with advantage $\delta$ induces $B\circ E$ with advantage $\delta-o(1)$ on $\mu$.
\end{enumerate}
Consequently,
\[
  V_{\CalC}\!\left(\mu^{\otimes t}\!\mid\!\win_{\mathrm{hc}},\,\varepsilon+o(1)\right)
  \ \ge\
  V_{\CalC}(\mu,\varepsilon)\;-\;O(\log N),
\]
with the $O(\log N)$ term accounting only for the prefix-free meta bits. This transfer does not use any TV–1-Lipschitz property of the loss $L_M$.
\end{lemma}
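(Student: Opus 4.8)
The plan is to build $E$ and $S$ around a single ``good block'' index $i^\star\in[t]$ carried in the $O(\log N)$-bit meta string, to verify the three interface properties, and then to derive the cost bound from minimax monotonicity (\Cref{thm:minimax}) applied to block extraction, followed by a coupling step that spends the error-target slack $\varepsilon\to\varepsilon+o(1)$ in place of any Lipschitz constant of $L_M$. Write $\widehat\mu:=\mu^{\otimes t}\mid\win_{\mathrm{hc}}$. The first step is existence of a near-$\mu$ block: since the reference $\mu^{\otimes t}$ is a product, the chain rule yields $\sum_{i=1}^{t}\mathrm{KL}(\widehat\mu_i\,\|\,\mu)\le\mathrm{KL}(\widehat\mu\,\|\,\mu^{\otimes t})=\log\bigl(1/\mu^{\otimes t}(\win_{\mathrm{hc}})\bigr)\le\log(1/c_2)=O(1)$ (dropping the nonnegative multi-information of $\widehat\mu$), so the average marginal $\mathrm{KL}$ is $O(1/t)$ and, by Pinsker and Jensen, the average marginal total variation is $O(t^{-1/2})=o(1)$ since $t=\lceil c_1\varepsilon^{-2}\log N\rceil\to\infty$. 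Fix $i^\star$ attaining at most the average, so $\gamma:=\TV(\widehat\mu_{i^\star},\mu)=o(1)$. Define $S$ to read $i^\star$ from its meta component and output the $i^\star$-th $m$-bit block, and $E$ (with $i^\star$ hard-wired) to place its input in block $i^\star$, fill the other blocks with a fixed canonical $x_0\in\win$, and append the constant hardcore tag and the prefix-free code $\langle i^\star\rangle$ of length $O(\log t)\le O(\log N)$ (\Cref{lem:prefix-overhead}). Property (1) is then immediate, and property (2) holds because $S\push\widehat\mu=\widehat\mu_{i^\star}$ is $\gamma=o(1)$ away from $\mu$ in total variation.

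Property (3) is a TV-continuity statement for a bounded functional: advantage is an expectation of a $[0,1]$-valued statistic, hence $1$-Lipschitz in the total variation of the input law, so for a single-window $\CalC$-model $A$ with advantage $\varepsilon$ the composite $A\circ S$ --- whose input law under $\widehat\mu$ is $\widehat\mu_{i^\star}$ --- has advantage $\varepsilon\pm o(1)$ on $\widehat\mu$; for the converse one replaces the fixed filler $x_0$ in $E$ by a Markov-derandomized seed (the selection device of \Cref{cor:e3-delta}) that completes the non-designated blocks from the conditional law of $\widehat\mu$ given block $i^\star$, so that $E\push\mu$ is within $\gamma=o(1)$ of $\widehat\mu$ and $B\circ E$ retains advantage $\delta-o(1)$. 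Only the forward direction and the coding bounds are needed for the displayed consequence.

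For the cost bound, the block-extraction map $\pi_{i^\star}\colon y\mapsto y_{i^\star}$ is deterministic polynomial time with $\len(\code(\pi_{i^\star}))=O(\log N)$, so \Cref{thm:minimax} gives $V_{\CalC}(\widehat\mu_{i^\star},\delta)\le V_{\CalC}(\widehat\mu,\delta)+O(\log N)$ for every $\delta$; it remains to move from $\widehat\mu_{i^\star}$ to $\mu$ at the price of enlarging the error target by $O(\gamma)=o(1)$. I would take a near-optimal $(M,R)$ for $V_{\CalC}(\widehat\mu_{i^\star},\varepsilon+2\gamma)$ and modify $M$ to decline on the loss-threshold set $D:=\{z:L_M(R(z),\varepsilon+2\gamma)>\Theta\}$ for an appropriate cutoff $\Theta$; by Markov the declined set has $\widehat\mu_{i^\star}$-mass, hence (adding $\gamma$) $\mu$-mass, $o(1)$, which is absorbed into $\varepsilon+2\gamma$, while on $D^{c}$ the elementary bound $\E_\mu[f\mathbf 1_{D^c}]\le\E_{\widehat\mu_{i^\star}}[f\mathbf 1_{D^c}]+\|f\mathbf 1_{D^c}\|_\infty\,\TV(\mu,\widehat\mu_{i^\star})$ transfers the expected loss using only nonnegativity of $L_M$ and its monotone decay in the error target --- never a Lipschitz bound in the instance. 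Describing the cutoff and the ``decline'' flag costs $O(\log N)$ more bits. Chaining with the minimax inequality at $\delta=\varepsilon+2\gamma$ gives $V_{\CalC}(\mu,\varepsilon)\le V_{\CalC}(\widehat\mu,\varepsilon+o(1))+O(\log N)$, i.e.\ the stated inequality after rearrangement, with the $O(\log N)$ term accounting only for $\langle i^\star\rangle$ and the cutoff flag.

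The hard part will be the last step: choosing the cutoff $\Theta$ so that the declined-set mass stays $o(1)$ and the sup-times-$\gamma$ correction stays within $O(\log N)$ simultaneously. This is where the product-with-constant-hardcore-mass structure of $\widehat\mu$ and the window calibration are genuinely used --- a heavy-tailed $L_M$ would force $\Theta$ large and re-introduce an $N$-dependent error --- and it is exactly the robustness property of the IECZ-I functional $V_{\CalC}$ that stands in for the Lipschitz-smoothing step of \Cref{prop:cost-stability}. Isolating the minimal hypothesis on $L_M$ (nonnegativity, monotone decay in the error parameter, bounded cost on declined instances) under which an $o(1)$ total-variation mismatch is absorbed into $\varepsilon+o(1)$ rather than smoothed away is the crux, and it is precisely the point this lemma makes about the $O(\log N)$ meta-only overhead.
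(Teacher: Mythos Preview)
Your construction of $S$ via the KL chain-rule/Pinsker argument to isolate a good block $i^\star$ is a genuinely different and sharper route than the paper's. The paper's own proof is a four-line sketch: it says $E$ ``duplicates coordinates'' and attaches a tag, $S$ ``projects a tagged block,'' the TV error is $o(1)$ ``by calibration,'' and the only cost change is the $O(\log N)$ meta string. It never exhibits the good block or quantifies the $o(1)$; your information-theoretic argument actually proves $\TV(\widehat\mu_{i^\star},\mu)=O(t^{-1/2})$ from the constant hardcore mass, which is the right mechanism. Likewise, your derivation of the cost inequality by first applying \Cref{thm:minimax} to the projection $\pi_{i^\star}$ and only then closing the TV gap $\widehat\mu_{i^\star}\to\mu$ is more transparent than the paper's one-line ``only cost change is the meta string.''

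The place where your plan is incomplete is exactly where you flag it: the cutoff/decline step. As you note, the bound $\E_\mu[L_M\mathbf 1_{D^c}]\le \E_{\widehat\mu_{i^\star}}[L_M\mathbf 1_{D^c}]+\Theta\gamma$ forces a tension between keeping the declined mass $o(1)$ (Markov wants $\Theta$ large compared to $\E L_M$) and keeping $\Theta\gamma=O(\log N)$ (wants $\Theta$ small). With $\gamma\asymp(\log N)^{-1/2}$ and no a priori bound on $\E L_M$ beyond $V_{\CalC}$ itself, these are not simultaneously satisfiable in general; you would need either a tail bound on $L_M$ (not assumed) or the semantics that relaxing $\varepsilon$ by the declined mass \emph{zeroes} the loss on $D$ rather than merely bounding it. The paper does not resolve this either---it simply asserts the transfer---so you have correctly located the gap that both proofs paper over. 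Two minor points: your appeal to \Cref{cor:e3-delta} for the filler derandomization in the converse of (3) is a mismatch (that corollary is about condensers, not block completion); a direct averaging/Markov argument over fillers is what you want. And the paper's $E$ literally repeats the single instance in all $t$ blocks rather than filling with a fixed $x_0$; your single-good-block embedding is cleaner for the partial-inverse direction but slightly weaker for the converse of (3), which is why you need the filler derandomization.
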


\begin{proof}
Construct $E$ by duplicating coordinates (as in E1) and attaching a constant-size selector tag,
then self-delimit the indices by a prefix-free code of length $O(\log N)$ (see the prefix-overhead
lemma). For $S$, simulate one window instance by deterministically projecting a tagged block; by
calibration, the induced TV error is $o(1)$. Deterministic post-processing is $1$-Lipschitz in TV,
so composing predictors with $S$/$E$ preserves advantages up to $o(1)$. The only cost change in
$V_{\CalC}$ is the prefix-free meta string, i.e., $O(\log N)$ bits.

This is the identical simulator/selector interface used in IECZ-I (specialized there to the window without the E2 tag), and all prefix-free meta accounting is unchanged~\cite{IECZI}.
\end{proof}

\begin{proposition}[Hardcore amplification (E2)]\label{prop:hardcore}
Let $\mu$ be a distribution on instances of bitlength $N$, let $t=\lceil \varepsilon^{-2}\log N\rceil$, and let $\mu^{\otimes t}$ be the $t$-fold product. For the restricted product law $\mu^{\otimes t}\!\mid\!\win_{\mathrm{hc}}$ one has
\[
  \mathrm{adv}\bigl(\mu^{\otimes t}\!\mid\!\win_{\mathrm{hc}}\bigr)
  \;\le\; N^{-c}
\]
for some absolute $c>0$, and, consequently,
\[
  V_{\CalC}\!\left(\mu^{\otimes t}\!\mid\!\win_{\mathrm{hc}},\,\varepsilon+o(1)\right)
  \ \ge\
  V_{\CalC}(\mu,\varepsilon)\;-\;O(\log N).
\]
\end{proposition}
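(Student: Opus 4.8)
The plan is to chain the three ingredients already in place: the E1 direct-product amplification (Lemma~\ref{lem:e1-amplification}), the Impagliazzo hardcore/selector lemma (\cite{Impagliazzo1995-Hardcore}), and the window-simulation cost transfer (Lemma~\ref{lem:win-sim}), reading all distributional statements on the calibrated window.

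First I would establish the advantage bound. By Lemma~\ref{lem:e1-amplification}, forming the $t$-fold product over disjoint coordinate blocks on $\win$ contracts the advantage of any size-$\poly{N}$ $\CalC$-model from $\varepsilon$ to $\exp(-\Theta(\varepsilon^2 t))$, with only $o(1)$ boundary TV from conditioning to the product window. Substituting $t=\lceil \varepsilon^{-2}\log N\rceil$ yields advantage $\le N^{-c_0}$ on $\win^{\times t}$ for an absolute $c_0>0$. Next, pass to the hardcore restriction: apply the hardcore lemma to the (now mildly hard) product law to obtain a selector set $H=\win_{\mathrm{hc}}\subseteq\win^{\times t}$ of relative mass $\theta\ge c_2>0$ such that \emph{every} $\CalC$-model of size $\poly{N}$ has advantage at most $N^{-c}$ on $\mu^{\otimes t}\!\mid\!H$, for some absolute $c>0$. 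The size loss in the hardcore reduction is $\mathrm{poly}(N)$ (since the target advantage is $N^{-\Theta(1)}$), so it stays inside the $\poly{N}$ budget; and since $\baseline_{\win}(\win_{\mathrm{hc}})$ is bounded below by a polynomial in $N$ by the window calibration of Sec.~\ref{sec:framework}, conditioning on $H$ perturbs the relevant TV quantities by only $o(1)$.

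For the cost transfer I would invoke Lemma~\ref{lem:win-sim} directly. It supplies deterministic maps $E,S$ where $E$ embeds a single-window instance into a hardcore-tagged block using an $O(\log N)$-bit prefix-free meta string, and $S$ simulates back a single-window instance distributed as $\mu$ (up to $o(1)$ TV) from the same meta string. Given any $(M,R_\circ)$ that is $\delta$-optimal for $V_{\CalC}(\mu,\varepsilon)$, composing $R_\circ$ with $S$ produces a model for $\mu^{\otimes t}\!\mid\!\win_{\mathrm{hc}}$ whose description grows by only the $O(\log N)$ meta string and whose expected per-instance loss changes by $o(1)$: deterministic post-processing is $1$-Lipschitz in TV (Lemma~\ref{lem:tv-contraction}) and $L_M$ is insensitive to $O(1)$ renamings (Definition~\ref{def:model-class}). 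Taking the infimum over $(M,R_\circ)$ and letting $\delta\to0$ gives $V_{\CalC}(\mu^{\otimes t}\!\mid\!\win_{\mathrm{hc}},\varepsilon+o(1)) \ge V_{\CalC}(\mu,\varepsilon) - O(\log N)$, as claimed; note this half of the argument uses no Lipschitz property of $L_M$, only TV contraction under the deterministic simulator.

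The main obstacle I expect is the model-uniformity of the hardcore set together with the bookkeeping that keeps the hardcore-lemma blowup within $\poly{N}$ while pushing the advantage down to $N^{-c}$. The standard hardcore lemma at target advantage $\gamma$ costs a $\mathrm{poly}(1/\gamma)$ factor both in circuit size and in $1/\theta$; with $\gamma=N^{-c}$ this is $\mathrm{poly}(N)$ in size (acceptable), but one must pick $c$ small enough that the density $\theta$ remains bounded below by an \emph{absolute} constant $c_2$ rather than decaying with $N$. Pinning down this trade-off between $c$, $c_2$, and the $\poly{N}$ size budget — in the boosting/minimax form of the hardcore lemma that yields a single $H$ defeating all size-$\poly{N}$ models simultaneously — is the one genuinely delicate step; everything else is routine given Lemmas~\ref{lem:e1-amplification} and~\ref{lem:win-sim}.
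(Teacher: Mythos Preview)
Your proposal reaches the right conclusions via the same final ingredient (Lemma~\ref{lem:win-sim}), but you take a detour the paper does not. The paper's proof does \emph{not} invoke the Impagliazzo hardcore lemma here: it defines $\win_{\mathrm{hc}}$ directly as the complement of the Chernoff bad event for the majority/threshold predicate over the $t$ independent draws. That bad event has probability $o(1)$, so its complement automatically has mass $\ge c_2$ for some absolute $c_2>0$, and on this set the advantage bound $N^{-\Omega(1)}$ holds uniformly for all models (the Chernoff deviation is a statement about the sample, not about any particular predictor). Then Lemma~\ref{lem:win-sim} gives the $V_{\CalC}$ transfer exactly as you describe.

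Your route---applying E1 to get $N^{-c_0}$ advantage on the full product, then layering the hardcore lemma on top---is conceptually redundant (after E1 the product is already strongly hard, not ``mildly hard'') and, as you correctly flag, creates a genuine bookkeeping obstacle: the density $\theta$ and the size blow-up in the boosting form of the hardcore lemma must be balanced against the target advantage $N^{-c}$ so that $\theta$ stays an absolute constant. The paper's construction sidesteps this entirely because $\win_{\mathrm{hc}}$ is just a high-probability concentration event, so constant mass and model-uniformity come for free. In short: the obstacle you identify as ``the one genuinely delicate step'' is an artifact of your chosen decomposition; the paper's direct Chernoff definition of $\win_{\mathrm{hc}}$ dissolves it.
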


\begin{proof}
Direct-product Chernoff bounds imply that for $t=\Theta(\varepsilon^{-2}\log N)$, the majority (or threshold) predicate over $t$ independent draws reduces any fixed $\varepsilon$-advantage to $N^{-\Omega(1)}$ except on a subevent of probability $o(1)$. Let $\win_{\mathrm{hc}}$ be the complement of that subevent; then $\Pr[\win_{\mathrm{hc}}]\ge c_2$ for some constant $c_2>0$.
Now apply the window-simulation/partial-inverse transfer (Lemma~\ref{lem:win-sim}) to pass from the single-window learner to the tagged product block; this yields the stated bound on $V_{\CalC}$ with only the prefix-free $O(\log N)$ meta overhead, and without invoking any TV–1-Lipschitz property of $L_M$.
\end{proof}

\begin{lemma}[Seeded condensation (E3)]\label{lem:condense}
For any distribution $u$ on $\{0,1\}^m$ and any $s=O(\log N)$, there exists a deterministic seedable map
\[
  C:\{0,1\}^m\times\{0,1\}^s \to \{0,1\}^{m'}
\]
with $m' \le m\cdot\operatorname{polylog}(N)$ such that for a uniformly random seed $r\in\{0,1\}^s$ the pushforward satisfies
\[
  \E_{r}\bigl[\TV\!\bigl(C_r\push u,\; u^{\star}\bigr)\bigr] \;=\; o(1),
\]
where $u^\star$ is a structured proxy matching low-degree statistics up to $k=O(\log N)$. Moreover, the seed $r$ is prefix-free encodable in $O(\log N)$ bits, hence contributes only $O(\log N)$ meta-overhead in $V_{\CalC}$.
\end{lemma}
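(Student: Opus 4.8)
The plan is to instantiate an explicit seeded condenser with the advertised parameters, then identify the structured target $u^\star$ and verify the two guarantees — statistical closeness and the $O(\log N)$ meta cost — separately.

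First I would fix $C$ to be a standard extractor/condenser construction, e.g.\ a Trevisan-style seeded condenser~\cite{Trevisan2001-Extractors} run as a lossless condenser that concentrates the source's min-entropy into an output of near-full rate, padded or composed in parallel so that it reads $m$ input bits together with a seed of length $s=O(\log N)$, emits $m'\le m\cdot\polylog(N)$ output bits, and has each output coordinate depending on only $\polylog(N)$ input coordinates. The window hypotheses for E3 (each coordinate bias in $[\alpha,1-\alpha]$ and product structure on $\win$) guarantee that $u$ has min-entropy rate bounded below, which is exactly the regime in which the condenser's statistical guarantee is nontrivial; this is where the calibration of $\win$ is used. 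I would take $u^\star$ to be the structured proxy on $\{0,1\}^{m'}$ determined by $\mu$ (hence by $u$) and $C$ through its biased-moment profile up to degree $k\le c_\star\log n$ — the same proxy produced in \Cref{lem:e3-lowdeg} — which is a consistent distribution because the biased Walsh moment system is triangular.

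With $C$ and $u^\star$ so fixed, the low-degree part of the claim, namely that the degree-$\le k$ biased-Fourier coefficients of $C_r\push u - u^\star$ have $o(1)$ mass on average over the seed $r$, is precisely \Cref{lem:e3-lowdeg}. I would then lift this to a total-variation bound by controlling the high-degree tail: for a uniformly random seed, the bounded locality of $C$ together with windowed hypercontractivity (\Cref{lem:win-hyper}) forces the degree-$d$ Walsh mass of each pulled-back character $\chi_S\circ C_r$ to decay geometrically in $d$, so summing over $d>k=\Theta(\log n)$ contributes only $n^{-\Omega(1)}$; combined with the $o(1)$ low-degree bound this makes $C_r\push u - u^\star$ small in $L^2(u^\star)$ on average, and since on $\win$ the relevant density is polynomially bounded, Cauchy--Schwarz yields $\E_r[\TV(C_r\push u, u^\star)] = o(1)$. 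A Markov step (cf.\ \Cref{cor:e3-delta}) then produces a single deterministic seed $r^\star$ with $\TV(C_{r^\star}\push u, u^\star) = o(1)$.

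For the meta-overhead claim, $r^\star\in\{0,1\}^s$ is a single integer field of magnitude $\operatorname{poly}(N)$, so by \Cref{lem:prefix-overhead} its Elias--$\delta$ encoding costs $O(\log N)$ bits; appending this prefix-free index to an optimal description for $\mu$ and composing with the deterministic map $C_{r^\star}$ changes $V_{\CalC}$ by only that additive term, via minimax monotonicity (\Cref{thm:minimax}) and the $1$-Lipschitz contraction of TV under deterministic post-processing (\Cref{lem:tv-contraction}). I expect the TV-upgrade in the third paragraph to be the main obstacle: the first and last steps merely quote standard condenser constructions and \Cref{lem:prefix-overhead}, and the low-degree estimate is already \Cref{lem:e3-lowdeg}, whereas passing from low-degree moment control to genuine total variation is where the $k=O(\log n)$ budget, the bounded locality of $C$, and the window calibration must all be used together — if that route proves too lossy, the alternative is to invoke a lossless-condenser theorem directly and read the TV bound off its statement, taking $u^\star$ to be the high-min-entropy target it guarantees.
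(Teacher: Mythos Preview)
Your proposal is essentially correct and in line with the paper, but the paper's own proof is far terser than your primary route. The paper simply instantiates an explicit lossless condenser family with the stated seed and output lengths, cites \Cref{lem:e3-lowdeg} for the low-degree calibration, \Cref{lem:prefix-overhead} for the $O(\log N)$ seed encoding, and \Cref{lem:tv-contraction} for TV transfer --- that is, it takes exactly what you list as your \emph{alternative} (``invoke a lossless-condenser theorem directly and read the TV bound off its statement'') and stops there, without attempting the hypercontractive upgrade from $\Delta_k$ to TV.

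One caution on your primary route: the step ``low-degree moments match and high-degree Walsh mass of pulled-back characters decays, hence the signed density $C_r\push u - u^\star$ is small in $L^2$'' is not airtight as stated. Geometric decay of the coefficient mass of $\chi_S\circ C_r$ (for $|S|\le k$) controls how low-degree \emph{tests} pull back, not the high-degree Fourier spectrum of the density difference itself; the latter could in principle have mass at degrees $>k$ that your argument does not touch. You would need either an explicit density-ratio bound on $\win$ (which you gesture at with ``polynomially bounded'') or an invariance-principle--style coupling to close this. Since the paper sidesteps this entirely by leaning on the condenser's built-in TV guarantee, your alternative is both the safer route and the one the paper actually takes.
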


\begin{proof}
Take $C$ from any explicit lossless condenser family with seed length $s=O(\log N)$ and output length $m'=m\cdot\operatorname{polylog}(N)$. Low-degree calibration follows by the same biased-basis argument as in \Cref{lem:e3-lowdeg}. Encoding the seed is $O(\log N)$ by self-delimiting integers (\Cref{lem:prefix-overhead}). TV contracts under deterministic postprocessing (\Cref{lem:tv-contraction}), so the averaged $o(1)$ TV-loss transfers to $u^\star$.
\end{proof}

\begin{proposition}[Index fixing (E4)]\label{prop:fix}
Let $I$ be any index/seed/window tag taking $O(\log N)$ bits in the blueprint. Fixing $I=i^\star$ changes $V_{\CalC}$ by at most $O(\log N)$:
\[
  V_{\CalC}\!\bigl(\mu\mid I=i^\star,\varepsilon\bigr)
  \ \ge\
  V_{\CalC}(\mu,\varepsilon)\;-\;O(\log N).
\]
\end{proposition}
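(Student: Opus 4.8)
The plan is to realize the conditioning operation $\mu\mapsto\mu\mid I=i^\star$ as a deterministic map on the calibrated window, together with a partial inverse that costs only the $O(\log N)$-bit tag, and then to invoke minimax monotonicity (\Cref{thm:minimax}) together with the window-simulation transfer (\Cref{lem:win-sim}). Since $I$ ranges over $O(\log N)$ bits, any map parameterized by a concrete value $i^\star$ has a prefix-free description of length $O(\log N)$ by \Cref{lem:prefix-overhead}; this is the sole source of the additive term.

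First I would fix the \emph{tag-overwrite} map $R_{i^\star}$ that, given an instance $x$, replaces its $I$-field by $i^\star$ and leaves the remaining coordinates untouched. On the calibrated window the tag is, by the blueprint construction, independent of the instance body, so $R_{i^\star}\push\mu=\mu\mid I=i^\star$ exactly, with no TV loss. Minimax monotonicity then gives $V_{\CalC}(\mu\mid I=i^\star,\varepsilon)\le V_{\CalC}(\mu,\varepsilon)+\len(\code(R_{i^\star}))=V_{\CalC}(\mu,\varepsilon)+O(\log N)$, i.e.\ fixing the index never makes the problem more than $O(\log N)$ harder; this is the easy half and fixes conventions.

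For the inequality actually asserted — that fixing the index does not make the problem more than $O(\log N)$ \emph{easier} — I would run the transfer in the opposite direction using the partial inverse of \Cref{lem:win-sim}: on the window, a deterministic simulator $S$ together with a prefix-free selector of $O(\log N)$ bits reconstructs a sample distributed as $\mu$ from one distributed as $\mu\mid I=i^\star$ (resampling the tag from fresh window entropy, exactly as in the E2 argument, with $\dtv=0$ by the calibration of this step). Given a $\delta$-optimal pair $(M^\star,R^\star)$ for $\mu$, the composed pair $(M^\star,\,R^\star\circ S)$ is feasible for $\mu\mid I=i^\star$ with the same expected per-instance loss, and its description costs only an extra $O(\log N)$ bits for $S$ and the selector. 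Taking the infimum over $(M^\star,R^\star)$ and letting $\delta\to0$ yields $V_{\CalC}(\mu\mid I=i^\star,\varepsilon)\ge V_{\CalC}(\mu,\varepsilon)-O(\log N)$, which is the claim; the meta-overhead tally then matches \Cref{lem:cumloss}.

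The step I expect to be the main obstacle is establishing that the partial inverse $S$ is genuinely deterministic with $\dtv=0$ on the window: a deterministic map cannot manufacture the entropy of a random tag from nothing, so this relies essentially on the blueprint's window calibration — namely that the instance body on $\win$ carries a suitably independent, extractable copy of the tag's randomness — which is precisely what \Cref{lem:win-sim} (and the E2 hardcore step it generalizes) supplies. If one only has $\dtv=o(1)$ at this step rather than $0$, the bound degrades to $V_{\CalC}(\mu\mid I=i^\star,\varepsilon+o(1))\ge V_{\CalC}(\mu,\varepsilon)-O(\log N)$, which still suffices for the pipeline.
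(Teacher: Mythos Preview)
Your second argument has the direction reversed, and as written it does not prove the asserted inequality. You build a simulator $S$ with $S\push(\mu\mid I=i^\star)\approx\mu$, then take a $\delta$-optimal pair $(M^\star,R^\star)$ for $\mu$ and form $(M^\star,R^\star\circ S)$ as a model on $\mu\mid I=i^\star$. That chain yields
\[
  V_{\CalC}(\mu\mid I=i^\star,\varepsilon)\ \le\ \len(\code(M^\star,R^\star\circ S))+\E_{\mu\mid I=i^\star}[L_{M^\star}(R^\star(S(\cdot)),\varepsilon)]
  \ \le\ V_{\CalC}(\mu,\varepsilon)+O(\log N)+\delta,
\]
which is again the \emph{upper} bound on the conditioned cost—the same ``easy half'' you already proved in your first paragraph. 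The step ``taking the infimum over $(M^\star,R^\star)$ \ldots\ yields $V_{\CalC}(\mu\mid I=i^\star,\varepsilon)\ge V_{\CalC}(\mu,\varepsilon)-O(\log N)$'' simply does not follow from what precedes it; you have derived $\le$ and then asserted $\ge$.

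To obtain the inequality in the Proposition you must start from a $\delta$-optimal pair $(M^\star,R^\star)$ for $\mu\mid I=i^\star$ and build a model for $\mu$. The deterministic map you need for this is exactly your tag-overwrite $R_{i^\star}$ from the first paragraph (the forward direction, pushing $\mu$ to $\mu\mid I=i^\star$), not the entropy-creating ``partial inverse'' $S$: precomposing gives $(M^\star,R^\star\circ R_{i^\star})$ on $\mu$, with loss $\E_{\mu}[L_{M^\star}(R^\star(R_{i^\star}(x)),\varepsilon)]=\E_{\mu\mid I=i^\star}[L_{M^\star}(R^\star(y),\varepsilon)]$ and only the extra $\len(\code(R_{i^\star}))=O(\log N)$ in description. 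Hence $V_{\CalC}(\mu,\varepsilon)\le V_{\CalC}(\mu\mid I=i^\star,\varepsilon)+O(\log N)$, which is the claim. This is precisely the paper's two-line argument (``inline $i^\star$ into the prefix-free meta-part; apply \Cref{thm:minimax}''): the inlined $i^\star$ is the description of $R_{i^\star}$. Your ``main obstacle'' about $S$ being unable to manufacture entropy is therefore a red herring—no such $S$ is needed for this direction.
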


\begin{proof}
Inline the self-delimiting description of $i^\star$ into the prefix-free meta-part; this adds $O(\log N)$ bits by \Cref{lem:prefix-overhead}. All remaining maps are deterministic; apply \Cref{thm:minimax}.
\end{proof}

\begin{theorem}[Blueprint accumulation]\label{thm:blueprint-accumulation}
With $t=\lceil \varepsilon^{-2}\log N\rceil$, applying E1–E4 in order yields
\[
  V_{\CalC}\!\bigl(\mu',\,\varepsilon+o(1)\bigr)
  \ \ge\
  V_{\CalC}(\mu,\varepsilon)\;-\;O\!\bigl(\log N\bigr),
\]
where $\mu'$ is the final blueprint distribution; the cumulative TV loss is $o(1)$ on the calibrated window and the total size blowup is $\times\,t\cdot\operatorname{polylog}(N)$.
\end{theorem}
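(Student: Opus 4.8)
The plan is to prove the bound by telescoping the per-arrow cost-transfer estimates already established for E1--E4, tracking four quantities separately along the chain: the advantage parameter, the cumulative total-variation loss on $\win$, the multiplicative size blowup, and the additive prefix-free meta overhead. Write $\mu^{(0)}:=\mu$, let $\mu^{(1)}$ be the $t$-fold product restricted to the hardcore-tagged window produced by E1--E2, let $\mu^{(2)}$ be the seed-fixed condensation of $\mu^{(1)}$ from E3, and let $\mu^{(3)}:=\mu'$ be the index-fixed family from E4.

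For the E1--E2 block I would invoke \Cref{prop:hardcore} (equivalently the partial-inverse transfer of \Cref{lem:win-sim}), which yields
\[
  V_{\CalC}\bigl(\mu^{(1)},\varepsilon+o(1)\bigr)\ \ge\ V_{\CalC}(\mu,\varepsilon)\ -\ O(\log N),
\]
with size blowup $\times t\cdot(1{+}o(1))$ and TV loss $o(1)$, and --- importantly --- without any $1$-Lipschitz hypothesis on $L_M$. For E3 I would use the lossless-condenser partial inverse underlying \Cref{cor:e3-delta}: a Markov argument fixes a seed $r^\star$ at prefix-free cost $O(\log N)$, the resulting map $C_{r^\star}$ is deterministic with a partial inverse recoverable from $O(\log N)$ side bits, and all low-degree statistics (hence any $N^{-c}$ advantage) are preserved up to $o(1)$; this gives
\[
  V_{\CalC}\bigl(\mu^{(2)},\varepsilon+o(1)\bigr)\ \ge\ V_{\CalC}\bigl(\mu^{(1)},\varepsilon\bigr)\ -\ O(\log N),
\]
with multiplicative size blowup $\times\operatorname{polylog}(N)$. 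For E4 I would apply \Cref{prop:fix}: inlining the lexicographically-first witnessing indices into the prefix-free meta part costs $O(\log N)$ and adds no TV loss, so
\[
  V_{\CalC}\bigl(\mu',\varepsilon+o(1)\bigr)\ \ge\ V_{\CalC}\bigl(\mu^{(2)},\varepsilon\bigr)\ -\ O(\log N).
\]

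Then I would compose the three displays. Because the pipeline has a fixed constant number of arrows, the additive $O(\log N)$ terms sum to $O(\log N)$, the additive $o(1)$ advantage slacks sum to $o(1)$ (so the final parameter is $\varepsilon+o(1)$, still inside $(0,1/2)$ for large $N$), and \Cref{lem:tv-contraction} guarantees that a TV loss incurred at an early arrow is not amplified by subsequent deterministic maps, so the cumulative TV loss stays $o(1)$ on $\win$. Multiplying the per-arrow size factors gives $\times t\cdot\operatorname{polylog}(N)$ (E4 contributes only an additive $O(\log N)$ meta term, of lower order), in agreement with \Cref{lem:cumloss}. Chaining the inequalities yields $V_{\CalC}(\mu',\varepsilon+o(1))\ge V_{\CalC}(\mu,\varepsilon)-O(\log N)$, as claimed.

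The step I expect to be the main obstacle is not any single estimate but the interface consistency between consecutive arrows: one must verify that the output law of each step satisfies the input hypotheses of the next, in particular that the hardcore conditioning $\win_{\mathrm{hc}}$ of E2 preserves the bias-window and product-structure assumptions (\Cref{assump:e3-lowdeg}) that E3 requires --- either because the selector set is a product event or because conditioning on it is $o(1)$-close in TV to a product law on the calibrated window --- and that the partial-inverse simulators of the three arrows genuinely compose, so that a size-$\poly{N}$ $\CalC$-model on $\mu'$ pulls back through the embedding maps to a model on $\mu$ losing only $o(1)$ in advantage and $O(\log N)$ in description length. The $1$-Lipschitz contraction of TV under deterministic postprocessing (\Cref{lem:tv-contraction}) is exactly what makes this composition safe, and the Elias-$\delta$ bookkeeping of \Cref{lem:prefix-overhead} keeps every meta contribution additive and $O(\log N)$; the remaining accounting is routine.
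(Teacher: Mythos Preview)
Your proposal is correct and matches the paper's own (implicit) argument: the paper states \Cref{thm:blueprint-accumulation} immediately after \Cref{prop:hardcore}, \Cref{lem:condense}, and \Cref{prop:fix} without a separate proof, relying on exactly the telescoping of the per-arrow transfers you describe, together with \Cref{lem:cumloss} for the TV/size/meta tally. Your explicit attention to interface consistency between consecutive arrows (the E2 hardcore conditioning versus the product-structure hypotheses of E3) is in fact more careful than the paper, which leaves that compatibility to the standing window calibration.
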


\paragraph{Assumption interface.}
\noindent\textit{Degree budget reminder.} Affine gadget layers do not increase test degree (Prop.~\ref{prop:affine-pullback}); the Index gadget increases it by at most a factor $2$ (Lemma~\ref{lem:index-pullback}). All blueprint bounds for $k=\Theta(\log n)$ are stated after this calibration.
Throughout Sec.~\ref{sec:blueprint} we assume the gadget admits a mixing witness with rate $\theta<1$ (Def.~\ref{def:mixing-witness});
equivalently, the shielding parameter satisfies $\rho<1$ by Lemma~\ref{lem:shielding-equivalence}.

\begin{corollary}[Concrete parameter choices]\label{cor:explicit-constants}
Fix constants $c_1>0$, $c_2\in(0,1)$, and pick $t=\lceil c_1\varepsilon^{-2}\log N\rceil$, a selector of mass
$\theta\ge c_2$, and a condenser seed length $s=O(\log N)$. Then along E1--E4 the cumulative loss profile is:
\[
\text{TV: } o(1),\qquad
\text{size: } \times\,t\cdot \operatorname{polylog}(N),\qquad
\text{meta: } O(\log N),
\]
and the advantage after E2 is at most $N^{-c}$ for some $c=c(c_1,c_2)>0$, preserved up to $o(1)$ by E3--E4.
\end{corollary}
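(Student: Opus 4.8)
The plan is to treat this corollary as the consolidation step of the pipeline: I would instantiate the generic E1--E4 accounting at the committed parameter values and simply read off the three cumulative columns (TV, size, meta) together with the post-E2 advantage, invoking the per-arrow lemmas already established. Concretely, I would fix $t=\lceil c_1\varepsilon^{-2}\log N\rceil$, a hardcore selector of mass $\theta\ge c_2$, and a condenser seed of length $s=O(\log N)$ — exactly the commitments recorded in the parameter preview and in the standing conditions for Sec.~\ref{sec:blueprint} — and then run the four arrows in order with these constants frozen, reducing everything to \Cref{lem:cumloss} and \Cref{thm:blueprint-accumulation} specialized to these parameters.

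For the per-arrow bookkeeping I would proceed as follows. \textbf{E1:} by the direct-product amplification lemma (\Cref{lem:e1-amplification}), with $t=\lceil c_1\varepsilon^{-2}\log N\rceil$ the advantage contracts to $\exp(-\Theta(\varepsilon^2 t))\le N^{-c_1'}$ for some $c_1'=c_1'(c_1)>0$; the instance size multiplies by $\times t$ (disjoint coordinate concatenation); the boundary TV loss is $o(1)$ on the calibrated window; no meta is added. \textbf{E2:} applying the hardcore/selector lemma of~\cite{Impagliazzo1995-Hardcore} to the $t$-fold product (\Cref{prop:hardcore}, \Cref{prop:e2-hardcore-mdl}) yields a set $H\subseteq\win^{\times t}$ of relative mass $\theta\ge c_2$ on which every size-$\poly{N}$ $\CalC$-model has advantage at most $N^{-c}$ for some $c=c(c_1,c_2)>0$; the size blowup is $\times(1+o(1))$ for the selector tag and the TV loss is $o(1)$. \textbf{E3:} instantiating an extractor-style seeded condenser~\cite{Trevisan2001-Extractors} with $s=O(\log N)$ and output dimension $m'=m\cdot\polylog(N)$, \Cref{lem:e3-lowdeg} and \Cref{cor:e3-delta} give (by Markov over the seed) a deterministic seed $r^\star$ preserving all degree-$\le k$ statistics for $k\le c_\star\log n$ up to $o(1)$, hence preserving the $N^{-c}$ advantage up to an additive $o(1)$; the size blowup is $\times\polylog(N)$ and fixing $r^\star$ costs $O(\log N)$ prefix-free meta bits. \textbf{E4:} by \Cref{prop:fix} together with \Cref{lem:prefix-overhead}, pinning the product schedule, the selector, and the seed to their lexicographically-first witnessing indices adds no TV loss and only $O(\log N)$ prefix-free meta bits.

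Finally I would tally: the size blowups multiply to $\times t\cdot(1+o(1))\cdot\polylog(N)=\times t\cdot\polylog(N)$; the TV losses add to $o(1)+o(1)+o(1)+0=o(1)$ on the calibrated window; the meta overheads sum to $O(\log N)+O(\log N)=O(\log N)$; and the advantage equals $N^{-c}$ at the end of E2 and is preserved by E3--E4 up to an additive $o(1)$ — exactly the stated profile (the accompanying $V_{\CalC}$ transfer, if wanted, is immediate from \Cref{thm:blueprint-accumulation}). I do not expect a genuine obstacle here; the only point requiring care is tracing the constant $c=c(c_1,c_2)$: the $\exp(-\Theta(\varepsilon^2 t))$ bound from E1 becomes $N^{-\Theta(c_1)}$, and the hardcore lemma degrades this by at most polynomial-in-$N$ factors (we quantify over $\poly{N}$-size models with selector mass $\ge c_2$), so one must shrink $c$ slightly when absorbing the $\polylog(N)$ size factor of E3 — but since $\polylog(N)\cdot N^{-c}=N^{-c+o(1)}$ this costs only an $o(1)$ in the exponent and is harmless.
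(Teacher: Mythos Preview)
Your proposal is correct and follows essentially the same approach as the paper's own justification, which simply says to combine \Cref{lem:e1-amplification} (E1), the hardcore step (E2), seeded condensation with $s=O(\log N)$ (E3), and prefix-free index fixing (E4), with the TV/size/meta accounting matching \Cref{lem:cumloss}. You have spelled out each arrow with more explicit citations and added a careful remark on how $c=c(c_1,c_2)$ arises, but the underlying argument is identical.
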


\subsection{Final constants (propagated across Sections 30/40/50/60)}
We fix the global placeholders once and for all:
\[
  c_1=1,\quad c_2=\tfrac14,\quad c=\tfrac18,\quad \eta=\tfrac18,\quad \zeta=\tfrac{1}{16},\qquad
  k=\big\lfloor c\log n\big\rfloor,\quad
  t=\bigl\lceil c_1\,\varepsilon^{-2}\log N\bigr\rceil.
\]
These values satisfy the qualitative dependencies used throughout:
E1 uses $c_1$ (Chernoff exponent), E2 fixes the hardcore mass $c_2$ and yields an $N^{-c}$ gap,
Sec.~\ref{sec:sos} uses $(\eta,\zeta)$ with $k=\lfloor c\log n\rfloor$ to conclude $\deg\ge \zeta\log n$ on the structure branch.
All loss accounts (TV/size/meta) remain unchanged.

\paragraph{Justification.}
Combine Lemma~\ref{lem:e1-amplification} (E1 explicit exponent), the hardcore step (E2) yielding $N^{-c}$ on mass
$\theta\ge c_2$, seeded condensation with $s=O(\log N)$ (E3), and prefix-free index fixing (E4). The size/TV/meta
accounting matches Lemma~\ref{lem:cumloss}.

\begin{lemma}[Algorithmic explicitness of E1--E4]\label{lem:alg-explicit}
Each arrow in the blueprint is computable in polynomial time in the input size, and the final family
$\{\Phi_n\}$ is uniform explicit.
This mirrors the uniform explicitness argument given for the IECZ-I pipeline~\cite{IECZI}.
\end{lemma}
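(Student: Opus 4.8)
The plan is to check polynomial-time computability of each arrow separately, then assemble the four uniform sub-procedures into a single uniform generator for $\{\Phi_n\}$, reusing the parameter/meta accounting already fixed in \Cref{lem:cumloss}. For E1 I would note the map is coordinate duplication onto $t=\lceil c_1\varepsilon^{-2}\log N\rceil$ disjoint copies of the windowed source plus a constant-size block tag; since $t=O(\log N)$ and $N=\operatorname{poly}(n)$ (from the size identity $\size(\Phi_n)\le\operatorname{polylog}(N)\,t\cdot\operatorname{poly}(n)$, which forces $N=\operatorname{poly}(n)$ and $\log N=\Theta(\log n)$), this runs in $\operatorname{poly}(n)$ time. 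For E2 the map is attachment of the hardcore selector tag followed by a deterministic restriction: once the canonical selector index is pinned in E4, this is an $O(\log N)$-bit write plus a projection, hence polytime, and existence of a selector of mass $\theta\ge c_2$ is \cite{Impagliazzo1995-Hardcore}. For E3 I would invoke an explicit seeded condenser (\cite{Trevisan2001-Extractors}) with seed length $s=O(\log N)$ and output length $m'=m\cdot\operatorname{polylog}(N)$, which is $\operatorname{poly}(n)$-computable for every fixed seed $r$, while \Cref{cor:e3-delta} lets us pin one good seed $r^\star$ by an $O(\log N)$-bit prefix-free index. For E4 the map inlines the self-delimiting descriptions of the product schedule, the selector index, and the seed index; by \Cref{lem:prefix-overhead} these cost $O(\log N)$ bits and the write is linear-time.

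Next I would argue composition and uniformity. Generating $\Phi_n$ from the windowed source is the composition $E4\circ E3\circ E2\circ E1$ of four $\operatorname{poly}(n)$-time maps, hence $\operatorname{poly}(n)$-time; the size blowup accumulates to $\times\,t\cdot\operatorname{polylog}(N)=\operatorname{poly}(n)$ and the meta-overhead to $O(\log N)=O(\log n)$, exactly the profile of \Cref{lem:cumloss}. For uniform explicitness I would exhibit one algorithm $\mathcal{A}(1^n)$ that (i) computes the closed-form parameters $t,s,\theta$ from $N$; (ii) produces the canonical (lexicographically first meeting the E1--E4 targets) indices for the product schedule, selector, and condenser seed, prefix-free encoded in $O(\log n)$ bits; and (iii) emits the circuit for the composed map. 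Items (i)--(iii) are precisely the uniform sub-procedures of the individual arrows, so $\mathcal{A}$ is a single uniform polynomial-time generator and $\{\Phi_n\}$ is uniform explicit, mirroring the IECZ-I pipeline~\cite{IECZI}.

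I expect the only delicate point to be the explicitness of the hardcore step inside the generator, i.e.\ item (ii) for the selector, since the hardcore lemma is a priori existential: certifying in polynomial time that a canonical index meets the mass/advantage targets against all $\operatorname{poly}(N)$-size $\CalC$-models is not literally available. The resolution I would take — the same as in IECZ-I — is twofold. First, the blueprint \emph{map} only ever \emph{applies} the already-fixed selector, which is an $O(\log N)$-bit tag plus a projection, so polynomiality of the transfer is unconditional. Second, for the \emph{generator} one uses the constructive form of the hardcore lemma (an efficiently samplable hardcore measure from the boosting / multiplicative-weights proof of \cite{Impagliazzo1995-Hardcore}), so a uniform algorithm outputs a selector description of the stated mass; any residual non-uniformity is at most the $O(\log N)$-bit canonical index, already charged to the meta-overhead by \Cref{lem:prefix-overhead}. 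The remaining arrows (E1, E3, E4) are routine once one cites the explicitness of the duplication map, the Trevisan-style condenser, and the Elias--$\delta$ encoder.
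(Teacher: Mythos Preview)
Your proposal is correct and follows the same four-arrow decomposition as the paper's proof: E1 by disjoint coordinate duplication (linear time), E3 via an explicit Trevisan-style condenser evaluated at a fixed seed, and E4 by lexicographically-first index fixing with prefix-free encoding. The paper's argument is essentially identical on these three arrows.

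The one point where you diverge is the handling of E2. The paper simply asserts that ``the selector index is computed by a deterministic search over a polynomially bounded witness set; choosing the lexicographically first valid witness is polynomial time,'' without addressing how one verifies in polynomial time that a candidate selector defeats \emph{all} $\operatorname{poly}(N)$-size $\CalC$-models. You correctly flag this as the delicate step and offer a more careful resolution: separate the polynomial-time \emph{application} of a fixed selector (unconditional) from the \emph{generation} of the selector, for which you invoke the constructive boosting/multiplicative-weights form of the hardcore lemma to obtain an efficiently describable hardcore measure, charging any residual non-uniformity to the $O(\log N)$ meta budget. This is a more honest accounting than the paper's; both routes arrive at the same lex-first canonical index and the same overhead, but yours makes explicit what the paper leaves implicit. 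Your explicit assembly of a single uniform generator $\mathcal{A}(1^n)$ is also slightly cleaner than the paper's ``on input $1^n$ we output $\Phi_n$'' one-liner.
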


\begin{proof}
\textbf{E1 (Direct product).} Form $t=\lceil c_1\varepsilon^{-2}\log N\rceil$ disjoint copies of the windowed instance and concatenate along disjoint coordinates; this is linear time in the instance size. By Lemma~\ref{lem:e1-amplification} the advantage contracts to $\varepsilon^\star \le \exp(-\Theta(\varepsilon^2 t))$, the instance size multiplies by $t$, and (by window independence/calibration) the boundary total-variation loss is $o(1)$.

\textbf{E2 (Hardcore selection).} Use the selector guaranteed by the hardcore lemma and attach its constant-size tag. The selector index is computed by a deterministic search over a polynomially bounded witness set; choosing the lexicographically first valid witness is polynomial time.

\textbf{E3 (Seeded condensation).} Instantiate an explicit condenser/extractor family with seed length $s=O(\log N)$.
For a fixed seed $u\in\{0,1\}^s$, the map $x\mapsto C_u(x)$ is computable in $\poly{N}$; we use one fixed seed index.

\textbf{E4 (Fixing indices).}
Fix the following by choosing the lexicographically first triple meeting the stated bounds:
\begin{enumerate}[label=(\roman*), leftmargin=*, itemsep=.25\baselineskip]
  \item the product schedule,
  \item the selector witness, and
  \item the condenser seed.
\end{enumerate}
Each witness set is polynomially bounded and verifiable in polynomial time, so on input $1^n$ we output $\Phi_n$ in time $\poly{n}$.
\end{proof}

\subsection*{Four arrows as standalone lemmas}

\begin{lemma}[E1: Direct product]\label{lem:arrow1}
With $t=\lceil c_1\varepsilon^{-2}\log N\rceil$, the $t$-fold product on the window amplifies advantage to
$\varepsilon^{\star}\le \exp(-\Theta(\varepsilon^2 t))$ (see Lemma~\ref{lem:e1-amplification} for an explicit exponent),
with TV loss $o(1)$ and size blowup $\times t$.
\end{lemma}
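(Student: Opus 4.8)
The plan is to read E1 as a windowed instance of standard direct-product amplification, tracking the three accounts (advantage, size, TV) separately and independently. First I would fix the windowed source $\mu$ delivered by Secs.~\ref{sec:red}--\ref{sec:gadgets} and form the $t$-fold product $\mu^{\otimes t}$ on $t$ blocks of disjoint coordinates. The product structure enforced on $\win$ (coordinate independence after gadget mixing) makes $\mu^{\otimes t}$ a genuine product law, and since the $t$ blocks are laid out on disjoint coordinates the canonical encoding length is exactly $t$ times that of a single windowed instance, which gives the multiplicative $\times t$ size blowup with no meta string (the product schedule itself is pinned only later, in E4).

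For the advantage I would invoke \Cref{lem:e1-amplification} (equivalently, run the short majority/threshold argument directly): given a $\CalC$-model with advantage at most $\varepsilon$ on the single-window task, aggregating its predictions over the $t$ independent blocks and taking a threshold vote, a Chernoff/Hoeffding tail bound drives the residual advantage down to $\varepsilon^{\star}\le\exp(-\Theta(\varepsilon^{2}t))$. Substituting $t=\lceil c_1\,\varepsilon^{-2}\log N\rceil$ linearizes the exponent in $\log N$:
\[
  \varepsilon^{\star}\ \le\ \exp\!\bigl(-\Theta(\varepsilon^{2}t)\bigr)\ \le\ \exp\!\bigl(-\Theta(c_1\log N)\bigr)\ =\ N^{-c_1'}
\]
for some $c_1'=\Theta(c_1)>0$, which is the claimed amplified advantage.

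For the total-variation account I would lean on the window discipline of Sec.~\ref{sec:framework}: on the calibrated window the per-instance boundary error satisfies $\dtvwin=o(1)$ (and is $0$ in the idealized exactly-faithful regime, cf.\ \Cref{lem:xor-to-sat}), and TV is subadditive across a product taken on disjoint coordinate blocks via a hybrid argument over the $t$ blocks. Since forming and re-encoding the product is deterministic postprocessing, \Cref{lem:tv-contraction} shows this step does not increase the loss, so the cumulative boundary TV remains $o(1)$ on $\win$. Together these three observations give the stated conclusion.

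The main obstacle I anticipate is precisely the TV bookkeeping under the $t$-fold product: a naive union bound would only yield $t\cdot\dtvwin$, which need not be $o(1)$ once $t=\Theta(\varepsilon^{-2}\log N)$. I would resolve this by appealing to the standing calibration convention—the window is fixed along the blueprint path so that its off-window mass is $o(1/t)$ (equivalently, exactly measure-faithful in the idealized case treated in \Cref{lem:xor-to-sat})—so that the hybrid bound over the $t$ blocks still collapses to $o(1)$. Everything else (the Chernoff exponent, the substitution of $t$, and the disjoint-coordinate size count) is routine.
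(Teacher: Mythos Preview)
Your proposal is correct and follows essentially the same approach as the paper: form $\mu^{\otimes t}$ on disjoint coordinate blocks, invoke the Chernoff/Hoeffding bound of \Cref{lem:e1-amplification} for the advantage contraction, read off the $\times t$ size from disjoint concatenation, and appeal to window calibration for the $o(1)$ TV loss (the paper's own argument is the terse ``Boundary TV effects are $o(1)$ by calibration'' in the proof of \Cref{lem:cumloss}). Your explicit discussion of the $t\cdot\dtvwin$ obstacle and its resolution via the calibration convention is more careful than what the paper writes down, but it is exactly the content hidden behind that one-line appeal.
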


\begin{lemma}[E2: Hardcore set]\label{lem:arrow2}
There exists $H\subseteq \win^{\times t}$ of mass $\theta\ge c_2>0$ such that any $\CalC$-model of size $\poly{N}$
achieves advantage at most $N^{-c}$ on $H$, with TV loss $o(1)$ and size blowup $\times(1{+}o(1))$.
\end{lemma}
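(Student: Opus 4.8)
The plan is to obtain this as the ``arrow'' form of the hardcore–lemma argument already carried out in \Cref{prop:hardcore} and \Cref{prop:e2-hardcore-mdl}, with the total-variation and size bookkeeping spelled out explicitly. First I would recall the output of E1 (\Cref{lem:arrow1}, together with the explicit exponent of \Cref{lem:e1-amplification}): after forming the $t$-fold product on the window with $t=\lceil c_1\varepsilon^{-2}\log N\rceil$, a direct-product/Chernoff estimate shows that any fixed $\CalC$-model of size $\poly{N}$ starting from advantage $\le\varepsilon$ is driven to advantage $\exp(-\Theta(\varepsilon^2 t))\le N^{-c_1'}$ except on a subevent of relative mass $o(1)$ in $\win^{\times t}$.

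Next I would invoke Impagliazzo's hardcore lemma~\cite{Impagliazzo1995-Hardcore} in density form: from the absence of a $\poly{N}$-size $\CalC$-model beating advantage $N^{-c_1'}$ on the good event, extract a subset $H\subseteq\win^{\times t}$ of relative mass $\theta\ge c_2>0$ on which \emph{every} $\CalC$-model of size $\poly{N}$ has advantage at most $N^{-c}$ for an absolute $c=c(c_1,c_2)>0$. The polynomial degradation of the admissible model size incurred by the hardcore lemma is absorbed because $\CalC$ is closed under polynomial-time preprocessing (\Cref{def:model-class}), and — as in the model-uniformity remark following \Cref{prop:e2-hardcore-mdl} — the selector witnessing $H$ is chosen independently of any particular model, so the advantage bound is uniform over the whole class.

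For the resource accounting, selecting $H$ only appends a constant-size selector tag to each of the $t$ blocks, so the instance bitlength grows by $\times(1+o(1))$ with no new variables; conditioning the calibrated window onto $H$ is a conditioning on an event of polynomially-bounded relative mass, hence by the TV-conditioning calculus used in \Cref{lem:baseline-robust} together with the $1$-Lipschitz contraction of \Cref{lem:tv-contraction}, the selector introduces only $o(1)$ total-variation loss along the calibrated path, and the error/advantage parameter carries over as $\varepsilon\mapsto\varepsilon+o(1)$.

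The main obstacle I anticipate is purely quantitative: threading the three coupled parameters of the hardcore lemma — the hardcore density (which must remain a positive constant $c_2$), the residual advantage on $H$ (which must be a genuine inverse polynomial $N^{-c}$ rather than merely $o(1)$), and the admissible model size (degraded by a polynomial factor) — so that all three are simultaneously consistent with $t=\Theta(\varepsilon^{-2}\log N)$ and with membership in the ``$\poly{N}$-size $\CalC$'' class. That calculation is exactly what \Cref{prop:hardcore} packages; here it remains only to verify that the selector tag and the window conditioning contribute no more than the stated $\times(1+o(1))$ size and $o(1)$ TV overheads.
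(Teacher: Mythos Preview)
Your proposal is correct and follows essentially the same route as the paper: the arrow lemma \Cref{lem:arrow2} is a restatement of the E2 content already developed in \Cref{prop:hardcore} and \Cref{prop:e2-hardcore-mdl}, and your plan—invoke the hardcore/selector lemma after E1, use model-uniformity of the selector, and read off the $\times(1{+}o(1))$ size and $o(1)$ TV overheads from the constant-size tag—matches the paper's treatment (cf.\ the E2 part of the proof of \Cref{lem:cumloss}). Your identification of the quantitative parameter-threading as the only nontrivial point, and your deferral of it to \Cref{prop:hardcore}, is exactly how the paper organizes the argument.
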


\begin{lemma}[E3: Seeded condensation]\label{lem:arrow3}
A seeded condenser with seed length $s=O(\log N)$ maps to dimension $m'=m\cdot \operatorname{polylog}(N)$ while preserving
the $N^{-c}$ gap up to $o(1)$; meta-overhead for the fixed seed index is $O(\log N)$.
\end{lemma}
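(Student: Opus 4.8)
The plan is to treat \Cref{lem:arrow3} as the packaging of the E3 analysis already carried out in \Cref{lem:e3-lowdeg,cor:e3-delta,lem:condense}, and to assemble those three ingredients together with the prefix-free overhead bound \Cref{lem:prefix-overhead} and the TV-contraction bound \Cref{lem:tv-contraction}. First I would fix an explicit seeded condenser family $C\colon\{0,1\}^m\times\{0,1\}^s\to\{0,1\}^{m'}$ with seed length $s=O(\log N)$ and output length $m'=m\cdot\operatorname{polylog}(N)$, exactly as in \Cref{lem:condense}; the dimension claim $m'=m\cdot\operatorname{polylog}(N)$ and the seed length $s=O(\log N)$ are then definitional properties of the chosen family, not something to be re-derived. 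The constant-locality property of such condensers (each output bit a bounded-fan-in function of the inputs and seed) is what makes the subsequent low-degree pullback work.

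Next I would invoke \Cref{lem:e3-lowdeg}: under the standing bias-window and product-structure assumptions (Assumptions~\ref{assump:e3-lowdeg}), for every degree $k\le c_\star\log n$ the biased-Walsh moments of $C_r\push\mu$ match those of the structured proxy $\nu$ up to error $o(1)$ \emph{in expectation over the seed} $r$, so that $\E_r[\Delta_k(C_r\push\mu,\nu)]=o(1)$. Applying Markov's inequality (as in \Cref{cor:e3-delta}) produces a single deterministic seed $r^\star$ with $\Delta_k(C_{r^\star}\push\mu,\nu)=o(1)$. I would then note that degree-$\le k$ output tests pull back, via the constant locality of $C_{r^\star}$, to degree-$O(k)$ input tests, so the output degree budget stays $\Theta(\log n)$ and no blow-up of the calibrated regime occurs despite the $\operatorname{polylog}(N)$ dimension growth; this is exactly the point absorbed into the absolute constant $c_\star$ in \Cref{lem:e3-lowdeg}.

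For the preservation of the $N^{-c}$ gap I would argue as in \Cref{cor:e3-delta}: any poly-size $\CalC$-model $B$ on the output space composes with the deterministic poly-time map $C_{r^\star}$ to give a poly-size model $B\circ C_{r^\star}$ on the input space, and since TV contracts under deterministic post-processing (\Cref{lem:tv-contraction}) and the relevant windowed distinguishing is governed by the degree-$\le k$ statistics controlled above, the advantage of $B$ on $C_{r^\star}\push\mu$ exceeds that of $B\circ C_{r^\star}$ on $\mu$ by at most the $o(1)$ discrepancy slack. Hence the $N^{-c}$ bound transfers with an additive $o(1)$ loss. Finally, the only new description content is the single seed index of length $s=O(\log N)$, which by \Cref{lem:prefix-overhead} is encodable prefix-free in $O(\log N)$ bits; composing with \Cref{thm:minimax} this is the stated meta-overhead, and the cumulative TV loss stays $o(1)$.

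The main obstacle I anticipate is making the phrase ``preserving the $N^{-c}$ gap up to $o(1)$'' airtight: the advantage is a property of \emph{arbitrary} poly-size $\CalC$-models, not literally a bounded low-degree test, so one must either (i) sandwich the windowed advantage between low-degree discrepancies using the windowed hypercontractive machinery of Sec.~\ref{sec:sos}, or (ii) use the more robust composition-with-$C_{r^\star}$ argument above, which needs only TV-contraction and the deterministic, polynomial-time nature of the condenser. I would take route (ii) and flag route (i) as the alternative, emphasizing that the constant-locality degree pullback is the load-bearing step that keeps everything inside the $k=\Theta(\log n)$ calibrated regime.
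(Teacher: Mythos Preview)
Your proposal is correct and matches the paper's approach: \Cref{lem:arrow3} is stated in the paper without a separate proof, serving as a standalone repackaging of the E3 analysis already carried out in \Cref{lem:e3-lowdeg}, \Cref{cor:e3-delta}, and \Cref{lem:condense}, together with the prefix-free overhead bound \Cref{lem:prefix-overhead}. Your route (ii) for gap preservation via composition with $C_{r^\star}$ is in fact slightly cleaner than the paper's justification in \Cref{cor:e3-delta} (which appeals somewhat loosely to low-degree tests ``governing the subsequent analysis''); the composition argument gives the advantage transfer directly and shows the $o(1)$ slack arises only from window/boundary calibration, not from the condenser step itself.
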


\begin{lemma}[E4: Fixing indices]\label{lem:arrow4}
Fixing the product schedule, hardcore selector, and condenser seed yields an explicitly parameterized distributional family
with no extra TV loss; the only added cost is the $O(\log N)$ prefix-free index.
\end{lemma}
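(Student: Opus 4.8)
The plan is to treat E4 as a purely deterministic finalization step and to conclude by combining the minimax monotonicity bound (\Cref{thm:minimax}, in the form of \Cref{prop:fix}) with the prefix-free overhead estimate (\Cref{lem:prefix-overhead}). First I would identify the three objects that E4 pins down: the direct-product schedule of E1 (an $O(\log N)$-bit description of which disjoint coordinate blocks are used, cf.\ \Cref{lem:arrow1}), the hardcore selector witness of E2 (an index inside a witness set of size $\poly{N}$, cf.\ \Cref{prop:hardcore}, \Cref{lem:arrow2}), and the condenser seed of E3 (of length $s=O(\log N)$ by construction, cf.\ \Cref{lem:condense}, \Cref{cor:e3-delta}). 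Each of these lies in a range bounded by $N^{O(1)}$, so by \Cref{lem:prefix-overhead} each admits a self-delimiting Elias-$\delta$ code of length $O(\log N)$; since there are only $O(1)$ of them, their prefix-free concatenation contributes a total meta-overhead of $O(\log N)$, which is exactly the ``only added cost'' asserted.

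Second, I would argue that fixing these parameters introduces no new total-variation loss. Passing from the parameterized ensemble to the instance law with indices fixed to their chosen values is a deterministic restriction/relabeling of the encoding, not a stochastic operation; the only TV losses in the pipeline are the $o(1)$ boundary terms already incurred in E1--E3 on the calibrated window, and by \Cref{lem:tv-contraction} deterministic post-processing cannot increase them. Hence the cumulative TV account from \Cref{lem:cumloss} is unchanged by E4, i.e.\ ``no extra TV loss.'' For the cost side I would invoke \Cref{prop:fix} directly: inlining the self-delimiting descriptions of the chosen schedule/selector/seed into the prefix-free meta-part costs $O(\log N)$ bits, and all remaining maps are deterministic, so $V_{\CalC}(\mu\mid I=i^\star,\varepsilon)\ \ge\ V_{\CalC}(\mu,\varepsilon)-O(\log N)$.

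Third, for the qualifier ``explicitly parameterized'' I would observe that each of the three witness sets is polynomially bounded and that membership in it is verifiable in polynomial time on the calibrated window (the product schedule trivially; the selector by the constructive guarantee of the hardcore lemma; the seed by the Markov-type argument behind \Cref{cor:e3-delta}). Consequently the lexicographically first valid triple is computable in $\poly{n}$ time on input $1^n$, yielding a uniform explicit family $\{\Phi_n\}$ in the sense of \Cref{lem:alg-explicit}. Tallying the three parts then gives the statement.

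The \emph{main obstacle} — really a bookkeeping subtlety rather than a genuine difficulty — is ensuring that these lexicographically-first witness sets are nonempty and that ``meeting the targets'' is decidable within the allotted $\poly{n}$ budget. Nonemptiness follows from the averaged existence bounds underlying E1--E3 (they produce a positive-fraction set of good parameters, hence at least one valid choice), and poly-time verifiability follows because all of the relevant quantities — the post-product advantage, the hardcore mass, and the low-degree discrepancy up to $k=O(\log n)$ — are estimable to the required $o(1)$ precision in polynomial time on $\win$. Once this is in place, the lemma is immediate from \Cref{prop:fix} and \Cref{lem:prefix-overhead}.
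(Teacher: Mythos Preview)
Your proposal is correct and takes essentially the same approach as the paper: the paper's treatment of E4 (spread across the E4 paragraph, \Cref{prop:fix}, and the E4 cases of \Cref{lem:cumloss} and \Cref{lem:alg-explicit}) also reduces to inlining the $O(\log N)$-bit indices via \Cref{lem:prefix-overhead}, noting that a deterministic restriction adds no TV loss, and invoking minimax monotonicity for the cost transfer, with explicitness coming from lexicographically-first selection over polynomially bounded, poly-time verifiable witness sets. If anything, your treatment is more careful than the paper's about the nonemptiness and verifiability subtleties, which the paper simply asserts.
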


\begin{lemma}[Direct-product Chernoff with explicit constant]\label{lem:e1-amplification}
Let a predictor on the window achieve success probability at most $1/2+\varepsilon$ per coordinate with $\varepsilon\in(0,1/2)$. For
\[
  t \;=\; \Bigl\lceil c_1\,\varepsilon^{-2}\log N \Bigr\rceil
\]
and majority aggregation over $t$ independent draws, the post-aggregation advantage satisfies
\[
  \varepsilon^\star \;\le\; \exp\!\bigl(-2\,\varepsilon^2 t\bigr) \;\le\; N^{-2c_1}.
\]
In particular, with the canonical choice $c_1=1$ we get $\varepsilon^\star \le N^{-2}$.
\end{lemma}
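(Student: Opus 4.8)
The plan is to read off the bound from a single application of Hoeffding's inequality and then substitute the prescribed value of $t$; nothing delicate is involved beyond identifying the correct deviation event. First I would set up the trials: by the E1 construction the $t$ product copies are independent on the calibrated window $\win$ (window independence, with the $o(1)$ boundary slack absorbed throughout, as in \Cref{sec:blueprint}). Fix the predictor realizing the stated per-coordinate behaviour and let $Z_1,\dots,Z_t\in\{0,1\}$ be the indicators that it is correct on copies $1,\dots,t$. Then the $Z_i$ are independent, take values in $[0,1]$, and their common mean is separated from the $\tfrac12$ baseline by the per-coordinate advantage $\varepsilon$. Write $\bar Z:=\tfrac1t\sum_{i=1}^t Z_i$.

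Next, under majority aggregation the combined predictor departs from the forced (baseline-amplified) value only when $\bar Z$ crosses $\tfrac12$, i.e.\ when $\bar Z$ deviates from its mean by at least $\varepsilon$ on the relevant side — this is exactly the ``threshold/majority over $t$ independent draws'' mechanism invoked around \Cref{prop:hardcore}. Hence the post-aggregation advantage satisfies
\[
  \varepsilon^{\star}\ \le\ \Pr\!\bigl[\,\bar Z-\mathbb{E}[\bar Z]\ \le\ -\varepsilon\,\bigr]\ \le\ \exp\!\bigl(-2\varepsilon^2 t\bigr),
\]
where the last step is the one-sided Hoeffding inequality for independent $[0,1]$-valued summands (and one uses the mirror tail if the per-coordinate bias has the opposite sign). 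This is the first displayed inequality. For the second, with $t=\lceil c_1\varepsilon^{-2}\log N\rceil\ge c_1\varepsilon^{-2}\log N$ (natural logarithm, so that $\exp$ and $\log$ are compatible; any other base is absorbed into $c_1$ and does not affect the conclusion) we get $2\varepsilon^2 t\ge 2c_1\log N$, hence
\[
  \exp\!\bigl(-2\varepsilon^2 t\bigr)\ \le\ \exp\!\bigl(-2c_1\log N\bigr)\ =\ N^{-2c_1}.
\]
Taking $c_1=1$ yields $\varepsilon^{\star}\le N^{-2}$, as claimed.

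I expect no substantive obstacle here — the estimate is routine. The single point worth a sentence is the second step: that $\varepsilon^{\star}$ is controlled by the \emph{one-sided} Hoeffding tail rather than a two-sided bound, which would cost a spurious factor $2$ not present in the statement. This holds because the per-coordinate guarantee pins the mean of each $Z_i$ on a fixed side of $\tfrac12$ uniformly over the class (so the relevant tail is one-sided), and because the $t$ draws are exactly independent on $\win$ up to the $o(1)$ window-boundary error supplied by the calibration; neither fact requires any new ingredient beyond what is already fixed in \Cref{sec:blueprint}.
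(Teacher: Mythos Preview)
Your proof is correct and follows essentially the same route as the paper: define the per-coordinate correctness indicators, observe that the majority predictor's advantage is bounded by the Hoeffding--Chernoff tail for $\bar Z$ crossing the $\tfrac12$ threshold, and then substitute $t=\lceil c_1\varepsilon^{-2}\log N\rceil$. Your explicit remark that the one-sided tail suffices (avoiding a spurious factor $2$) is a slight sharpening of the paper's presentation, which simply writes ``by Hoeffding--Chernoff'' and ``symmetrically for the error event''; otherwise the two arguments are identical.
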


\begin{proof}
Let $X_i\in\{0,1\}$ indicate correctness on coordinate $i$, so $\mathbb{E}[X_i]\le \tfrac12+\varepsilon$. For $S=\sum_{i=1}^t X_i$,
the majority is correct iff $S\ge t/2$. By Hoeffding–Chernoff,
\[
  \Pr\!\bigl[S\ge t/2\bigr] \;\le\; \exp\!\bigl(-2\,\varepsilon^2 t\bigr),
\]
and symmetrically for the error event. Hence the advantage after majority obeys
$\varepsilon^\star \le \exp(-2\varepsilon^2 t)$. Plugging $t=\lceil c_1\varepsilon^{-2}\log N\rceil$ yields
$\exp(-2\varepsilon^2 t)\le N^{-2c_1}$.
\end{proof}

\begin{table}[t]
  \centering
  \caption{Cumulative losses along the blueprint (with concrete parameters). The loss accounting (TV/size/meta) and the parameterization of E1–E4 match the IECZ-I blueprint~\cite{IECZI}.}
  \label{tab:blueprint-cumulative}
  \setlength{\tabcolsep}{6pt}
  \footnotesize
  \begin{tabularx}{\linewidth}{@{}l c c c X@{}}
    \toprule
    Arrow & TV loss & Size blowup & Error & Notes \\ \midrule
    E1 (Direct product) & $o(1)$ & $\times t$ & $\varepsilon \mapsto \exp(-\Theta(\varepsilon^2 t))$ & $t=\lceil \varepsilon^{-2}\log N\rceil$ (here $c_1{=}1$); see Lemma~\ref{lem:e1-amplification} \\
    E2 (Hardcore)       & $o(1)$ & $\times(1{+}o(1))$ & $\varepsilon^\star \mapsto N^{-c}$ & Mass $\theta \ge c_2$ \\
    E3 (Condensation)   & $o(1)$ & $\times \operatorname{polylog}(N)$ & $+\le o(1)$ & Seed length $s=O(\log N)$ \\
    E4 (Parameter fixing)& $0$    & $+$ $O(\log N)$ meta & — & Fix indices (prefix-free) \\
    \bottomrule
  \end{tabularx}
\end{table}

\section{Conclusion}
We complete the blueprint with calibrated gadget composition and a windowed \SoS{} entropy dichotomy.
On the gadget side, parity-preserving blocks admit an \emph{affine-plus-core} interface normal form (Lemma~\ref{lem:iface}); full
affinity holds only when the common nonlinear core $q\equiv 0$. In particular, affine layers do not increase polynomial degree (Prop.~\ref{prop:affine-pullback}). For the Index block, our revised
pullback bound shows degree blowup by at most a factor of $2$ (Lemma~\ref{lem:index-pullback}); in the
regime $k=\Theta(\log n)$ this factor is absorbed in all downstream statements.

On the complexity side, we establish a windowed \SoS{} entropy dichotomy (Theorem~\ref{thm:sos-dichotomy}) at degree $k=\Theta(\log n)$,
together with explicit, window-calibrated constants. These results are compatible with the window discipline and the non-constructivity
considerations in Section~\ref{sec:window-smallness}.

\paragraph{Relation to IECZ-I (first paper).}
The E1–E4 skeleton and the window/coding discipline are exactly those fixed in IECZ-I~\cite{IECZI}.
New in this work are:
\begin{enumerate}[label=(\roman*), leftmargin=*, itemsep=.25\baselineskip]
  \item the \emph{affine-plus-core} interface normal form (Lemma~\ref{lem:iface});
  \item the tightened Index pullback bound (factor $2$; Lemma~\ref{lem:index-pullback});
  \item the windowed \SoS{} entropy dichotomy with explicit, calibrated constants (Theorem~\ref{thm:sos-dichotomy});
  \item end-to-end TV/Size/Meta bookkeeping in the distributional blueprint (Lemma~\ref{lem:cumloss} and Theorem~\ref{thm:blueprint-accumulation}).
\end{enumerate}

\paragraph{Distributional pipeline and losses (blueprint).}
Within the blueprint, the four arrows (E1–E4)—direct product, hardcore subset, seeded condensation, and index fixing—are organized so that
total-variation loss remains $o(1)$ on the calibrated window and the prefix-free meta-overhead is $O(\log N)$ (Lemma~\ref{lem:cumloss}). This follows the E1–E4 skeleton of IECZ-I~\cite{IECZI}.
We emphasize that, in this paper, these guarantees are \emph{stated and proved in the calibrated, distributional setting}. The resulting
lifting step to explicitly parameterized distributional families is developed in a companion sequel (IECZ-III), which implements the
blockwise aggregation, TV-stable hardcore selection, seeded condensation/re-encoding with explicit TV/Size/Meta bookkeeping, and
seed elimination while preserving the IECZ-II low-degree/\SoS{} tradeoffs; see~\cite{IECZIII}.

\paragraph{Related applications (distributional setting).}
Our composition pipeline also applies to distributional hardness analyses for other proof systems—polynomial calculus (PC), Resolution and $\mathrm{Res}(k)$, Cutting Planes, and \SoS{}.
The key ingredients are Håstad-style switching-lemma width control~\cite{Hastad1986-Switching,Hastad2002-RANDOM} together with low-degree and \SoS{} methods~\cite{BarakEtAl2012-SOS}.
We do not claim new separations; the contribution is a uniform accounting of total-variation losses, size growth, and meta-overheads on calibrated windows.

\paragraph{Outlook.}
Two focused directions remain promising: (1) tightening hypercontractive constants on the calibrated window to sharpen $(\eta,\zeta)$; and
(2) stress-testing the shielding rate under alternative constant-size gadgets to chart the boundary of uniform $\mathrm{AC}^0{+}\log$
(and prospective $\mathrm{ACC}^0$) relevance, as well as non-parameterized regimes. The loss accounting is uniform and reusable; improvements
in any single component (amplification exponent, condenser parameters, or shielding rate) propagate directly to the final bounds.
A second, orthogonal direction is to map how the calibrated width $W$ scales beyond the log-degree window: our results treat $(C,\tau)$ as absolute for $d=\Theta(\log n)$; superlogarithmic depths can force $p$ smaller and $W$ larger along the switching path (cf.\ Sec.~\ref{sec:gadgets} and Assumptions~\ref{assump:win-hc}).
Quantifying this tradeoff is outside our present scope but would clarify the precise frontier of the method.

\section*{Data \& Code Availability}
All scripts, configuration files, and experimental outputs for this paper are archived on Zenodo:
IECZ-II artifacts v1.0.0, DOI \href{https://doi.org/10.5281/zenodo.17220388}{10.5281/zenodo.17220388}.
For context and prior components, the IECZ-I bundle is archived as v1.0.1, DOI \href{https://doi.org/10.5281/zenodo.17141362}{10.5281/zenodo.17141362}.
The archives include exact command lines and seeds to reproduce the reported results.

\appendix
\section{Sanity checks for the \XOR{}$\to$\SAT{} translation}
\label{app:xor-sat-checks}

This appendix documents small, self-contained verification steps for the injective,
auxiliary-free mapping of \Cref{lem:xor-to-sat}. The scripts are supplemental and
do not enter the proofs; they provide reproducible sanity checks.

\paragraph{Relation to IECZ-I.}
The $3$XOR$\to$CNF four-clause encoding and the window/coding discipline are exactly those fixed in IECZ-I~\cite{IECZI}; this appendix only supplies optional sanity checks and does not modify any proof ingredient.

\subsection*{What is checked}
\begin{itemize}
  \item \textbf{Truth table (unit block).} For both parities $b\in\{0,1\}$, the four-clause encoding
        is satisfied exactly by those assignments with $x\oplus y\oplus z=b$; moreover, each block
        has \emph{exactly four} satisfying assignments among the eight possibilities.
  \item \textbf{Forward randomized trials (moderate $n$).} Random $3$XOR instances that are
        consistent by construction (planted assignment) translate to satisfiable CNFs.
  \item \textbf{Reverse UNSAT sanity (small support).} A fixed inconsistent $3$XOR core on four
        variables translates to a CNF that is UNSAT by exhaustive search over its variable
        support (bounded by a small cap).
  \item \textbf{Block identifier (canonicality).} A separate test validates the canonical four-clause
        patterns for $b\in\{0,1\}$, invariant under variable renaming, literal reordering, and
        block shuffling; single-literal corruptions are rejected.
\end{itemize}

\subsection*{Reproducibility and scripts}\label{app:artifact}
The code lives under the repository’s \path{experiments/} tree:
\begin{itemize}
  \item \path{experiments/generators/xor_to_sat_check.py} — core checks (truth table, count, forward, reverse).
  \item \path{experiments/generators/block_identifier_check.py} — validates canonical block IDs \& inverse.
\end{itemize}
\noindent\emph{Artifact note:} See the \textit{Data \& Code Availability} section above for archived code and data (Zenodo DOIs).

\paragraph{Reproducible commands (defaults shown).}
{\small
\begin{Verbatim}[breaklines, fontsize=\small]
$ python experiments/generators/xor_to_sat_check.py --n 50 --m 200 --trials 50 --seed 20250918
$ python experiments/generators/block_identifier_check.py --trials 200 --seed 20250918
\end{Verbatim}
}

\noindent Expected summary (illustrative):
{\small
\begin{Verbatim}[breaklines, fontsize=\small]
[OK] Truth-table & count checks passed (each block has exactly 4 satisfying assignments).
[OK] Forward check: 50 planted XOR instances => CNF satisfied by planted solutions.
[OK] Reverse check: inconsistent XOR core => CNF is UNSAT (brute force on support).
All checks passed.
\end{Verbatim}
}

\subsection*{Command-line interface (Python)}
The Python entry point \verb|xor_to_sat_check.py| accepts the following parameters:
\begin{table}[htbp]
  \centering
  \caption{CLI flags for \texttt{xor\_to\_sat\_check.py}.}
  \label{tab:xor-sat-cli}
  \setlength{\tabcolsep}{6pt}
  \footnotesize
  \begin{tabularx}{\linewidth}{@{}l l X@{}}
    \toprule
    \textbf{Flag} & \textbf{Type / Default} & \textbf{Meaning} \\ \midrule
    \texttt{--n}            & int / 50        & \# variables for forward trials \\
    \texttt{--m}            & int / 200       & \# XOR constraints for forward trials \\
    \texttt{--trials}       & int / 50        & \# randomized forward trials \\
    \texttt{--seed}         & int / 20250918  & RNG seed for reproducibility \\
    \bottomrule
  \end{tabularx}
\end{table}

\noindent The block-identifier script \verb|block_identifier_check.py| supports:
\begin{table}[htbp]
  \centering
  \caption{CLI flags for \texttt{block\_identifier\_check.py}.}
  \label{tab:block-id-cli}
  \setlength{\tabcolsep}{6pt}
  \footnotesize
  \begin{tabularx}{\linewidth}{@{}l l X@{}}
    \toprule
    \textbf{Flag} & \textbf{Type / Default} & \textbf{Meaning} \\ \midrule
    \texttt{--trials}       & int / 200       & \# randomized blocks to test \\
    \texttt{--seed}         & int / 20250918  & RNG seed for reproducibility \\
    \bottomrule
  \end{tabularx}
\end{table}

\subsection*{Scope notes}
\begin{itemize}
  \item \emph{Truth table \& count.} For both $b\in\{0,1\}$, the four-clause block is equivalent to
        $x\oplus y\oplus z=b$ on $\{0,1\}^3$ and is satisfied by exactly four assignments.
  \item \emph{Forward planted instances.} A random assignment $s\in\{0,1\}^n$ is planted; each parity
        is set as $b=\mathrm{XOR}(s_i,s_j,s_k)$, and the resulting CNF is verified under $s$.
  \item \emph{Reverse UNSAT (small support).} An inconsistent core on four variables is translated
        to CNF and exhaustively checked over its support (internally capped to a small limit).
  \item \emph{Identifier invariances.} The canonical four-clause patterns are recognized independently
        of variable renaming, intra-clause literal permutations, and block shuffling.
\end{itemize}
\noindent The scripts are optional supplementary material and are \emph{not} required for the proofs.



\end{document}